\documentclass[10pt]{article}
\usepackage[utf8]{inputenc}
\usepackage{subfiles}
\usepackage[english]{babel}
\usepackage[a4paper,top=3 cm,bottom=3 cm,inner=2.5 cm,outer=2.5 cm]{geometry}
\usepackage{graphicx}
\usepackage{subfigure, tikz}
\usetikzlibrary{patterns}
\usetikzlibrary{positioning}
\usetikzlibrary{decorations.markings}
\usetikzlibrary{decorations.pathmorphing, decorations.markings, arrows.meta}

\usepackage{bm}
\usepackage{mathtools}
\usepackage{amsmath, amssymb, amsthm} 
\usepackage{amsfonts}
\usepackage{mathbbol}
\usepackage{enumitem}
\usepackage{cancel}
\usepackage{indentfirst}
\usepackage{tikz-cd}

\newtheorem{theorem}{Theorem}[section] 
\newtheorem{lemma}[theorem]{Lemma}
\newtheorem{proposition}[theorem]{Proposition}
\newtheorem{definition}[theorem]{Definition}
\newtheorem{corollary}[theorem]{Corollary}
\newtheorem{rem}[theorem]{Remark}

\theoremstyle{definition}
\setlist[itemize]{label=\textbullet}
\usepackage{hyperref}
\usepackage{eurosym}

\usepackage{wasysym}
\usepackage{url}
\usepackage{titlesec}
\usepackage{mathrsfs}
\usepackage{leftidx}
\usepackage{braket}
\usepackage{physics}

\usepackage{csquotes}
\usepackage[
backend=biber,
giveninits=true,
style=alphabetic,
isbn = false,
url = false,
maxbibnames = 4,
]{biblatex}
\newcommand{\di}{\text{d}}

\newcommand{\dvol}[1]{\mathrm{d}#1}

\begin{document}
\par 
\bigskip 
\LARGE 
\noindent 
\textbf{Equilibrium states for non-relativistic Bose gases and the Gross-Pitaevskii limit} 
\bigskip \bigskip
\par 
\rm 
\normalsize 
 
\large
\noindent 
{\bf Stefano Galanda$^{1,a}$}, {\bf Nicola Pinamonti$^{2,3,b}$}\\
\par
\small

\noindent$^1$ Department of Mathematics, 
University of York, York YO10 5GH, United Kingdom. \smallskip

\noindent$^2$ Dipartimento di Matematica, 
Universit\`a di Genova - Via Dodecaneso, 35, I-16146 Genova, Italy. \smallskip

\noindent$^3$ Istituto Nazionale di Fisica Nucleare - Sezione di Genova, Via Dodecaneso, 33 I-16146 Genova, Italy. \smallskip
\smallskip

\noindent E-mail: 
$^a$stefano.galanda@york.ac.uk, 
$^b$nicola.pinamonti@unige.it
\\

\normalsize
${}$ \\ \\
 {\bf Abstract}\\
In this paper, we present the construction of equilibrium states for a gas of weakly interacting non-relativistic bosons, focusing on the case of a non-trivial background field in infinitely extended space. Building upon a method introduced by Araki and further developed by Fredenhagen and Lindner, we derive the generating function of the correlation functions of the theory as a suitable series. By applying a Hubbard-Stratonovich transformation, we rewrite this quantity into a more mathematically tractable form, allowing us to establish the convergence of the corresponding loop vertex expansion in certain intermediate regimes. Furthermore, we isolate the tree diagrams that produce the scattering length in the dispersion relations of the two-point function of the state within the Gross-Pitaevskii regime. Finally, we use this scattering length to  renormalise the background and the two-point function of the fluctuations and we discuss convergence of the generating function of the connected correlation functions of the renormalised theory in the limit of vanishing temperature.

\tableofcontents

\section{Introduction}

We study a gas of weakly interacting charged bosons in an infinitely extended space at finite temperature and in states where the internal $U(1)$ symmetry is spontaneously broken. 
This is equivalent to a gas of particles in an equilibrium state at finite temperature  where the  phenomenon of Bose-Einstein condensation (BEC) occurs
\cite{Bose, Einstein}. 
For the case of fixed $\mathcal{N}$ particle states, 
the Hamiltonian of the system,
is described by 
an operator acting on the Hilbert space
$\otimes_s^{\mathcal{N}} L^2(\mathbb{R}^3)$ and it takes the form 
\[
\mathsf{H}_\mathcal{N} = \sum_{i=1}^\mathcal{N}\left(-\frac{\Delta_i}{2m} - \mu\right) 
+ \sum_{1\leq i\leq \mathcal{N}} V(x_i)
+ \sum_{1\leq i<j\leq \mathcal{N}} v(x_i-x_j)
\]
where $\Delta$ is the ordinary Laplace operator on $\mathbb{R}^3$,
$m$ is the particle mass and $\mu$ the chemical potential of the system.
$V$ is a trapping potential and $v$ is the pair potential which describes the force among pairs of particles. 
We assume, for simplicity,
that $v$ is a smooth compactly supported function whose Fourier transform $\hat{v}$ is everywhere positive. 

We do not fix the number of particles a priori and we choose to work with the corresponding quantum field theory.
The corresponding Hamiltonian density given in terms of the quantum fields $\Phi$ and $\Phi^*$ and $|\Phi|^2=\Phi^*\Phi$ is thus
\begin{equation}\label{eq:Hamiltonian-density-field}
\mathcal{H} = \frac{1}{2} 
\Phi^* K \Phi
+
\frac{1}{2} K\Phi^* \Phi
-
\tilde{\mu}\Phi^*\Phi
+ 
V\Phi^*\Phi+
\frac{1}{2}
|\Phi|^2 (v*|\Phi|^2)
\end{equation}
where $v*|\Phi|^2$ denotes the ordinary convolution of $v$ with $|\Phi|^2$ and
the operator 
$K=-(2m)^{-1}\Delta$. Furthermore, 
$\tilde{\mu}$ is a suitable renormalisation of the chemical potential (we shall use the operator $K_{-\tilde{\mu}} \coloneqq {K}-\tilde{\mu}$ when it has positive spectrum).
Furthermore, we shall remove the trapping potential $V$ keeping the chemical potential finite and positive in order to allow for a non trivial bound state for the system to exist. We denote by $H$ the corresponding Hamiltonian.

For weakly interacting dilute trapped bosons the existence of BEC  has been proved by Lieb and Seiringer \cite{LiebSeiringer}
in the Gross-Pitaevskii regime.
The {\bf Gross-Pitaevskii (GP)} regime is obtained considering a suitable limit where the interaction potential  described by $v$
scales as
\begin{equation}\label{eq:vn}
v_N(x) = N^2 v(xN),
\end{equation}
for large $N$, where the condensate density $\rho$ scales also as $N$ in such a way that $\rho \hat{v}_N(0)$ is kept constant and where also the trapping potential $V$ does not depend on $N$. 

In that context, the quantum fields $\Phi$ and $\Phi^*$ acquire a non vanishing background expectation value which scales as
$\sqrt{N}$. Similarly, the Hamiltonian density $\mathcal{H}$ scales also as $N$ and, in the limit of $N$ to infinity, the rescaled Hamiltonian ${H}/N$ is dominated by celebrated Gross-Pitaevskii energy functional 
\begin{equation}
\label{eq:gross-pitaevskii0energy-density}
{H}_{GP}=\left|\bar{\varphi}_0 \frac{\Delta}{2m} \varphi_0\right|+V|\varphi_0|^2+ 4\pi \mathfrak{a}_0|\varphi_0|^4
\end{equation}
where $\varphi_0$ is a solution of the Gross-Pitaevskii equation and 
$\mathfrak{a}_0$ is the scattering length of the unconstrained pair potential $v$ (see the appendix \ref{se:scattering-length}).
We refer to the work of Lieb, Yngvason and Seiringer \cite{Lieb-Seiringer-Yngavson} for a rigorous derivation of the energy functional, see also e.g. \cite{LY-ground}.

Notice that in the limit of large $N$, 
$N v_N$ 
converges to a distribution proportional to the Dirac delta function and the constant of proportionality is $\hat{v}(0)$,
essentially because
$N \hat{v}_N(0) = \hat{v}(0)$.
Other regimes which we shall analyse and where $Nv_N$ and also $\rho v_N$ converge to a constant are realised taking
\begin{equation}\label{eq:vnb}
v_{N,b}(x) = N^{3b-1} v(xN^b)
\end{equation}
for $b\in (0,1]$ at the place of $v$ and analysing the limit of large $N$. The case $b\to0$ is sometime called {\bf mean field regime}.

\vspace{0.5cm}
In this paper we study the homogenous case $V=0$ on $\mathbb{R}^3$ but keeping a non vanishing chemical potential $\tilde\mu$, which is assumed to be positive in order to allow for a classical non trivial background value for the fields to describe a spontaneous breaking of the $U(1)$ internal symmetry.  
Actually, the Hamiltonian density for the corresponding quantum field theory is then given as in \eqref{eq:Hamiltonian-density-field}. 
Following closely a strategy used in \cite{BogoliubovBEC}, 
and later further developed in \cite{Beliaev1, Beliaev2},
we start considering the case where the fields $\Phi$ and $\Phi^*$ are decomposed by their classical background component $\phi_0$, assumed to be real valued, plus its quantum fluctuations $\Psi$ 
\[
\Phi = \phi_0 + \Psi, \qquad
{\Phi}^* = \phi_0 + \Psi^*.
\]
Furthermore, at the beginning, the background part of the field $\phi_0$ is considered to be a real homogeneous solution of the classical equation of motion, hence
\[
\phi_0^2 = \frac{\tilde{\mu}}{\hat{v}(0)}
\]
where the renormalised chemical potential $\tilde{\mu} =\mu + v(0)/2 -q_\beta
{\hat{v}(0)}
$ is considered. This renormalisation, through  $q_{\beta}$, takes into account the normal ordering with respect to the thermal reference state of the fluctuations over which the the interaction Hamiltonian perturb the evolution.  
In particular, $q_\beta$ vanishes in the limit where the inverse temperature $\beta$ diverges, see e.g. \eqref{eq:qbeta} below.

In this context the GP limit is realised considering $v_N$ at the place of $v$ and considering the limits where $N$ tends to infinity keeping the renormalised chemical potential $\tilde\mu$ finite. 
Notice that in this case $\phi_0^2$ scales as $N$, and  $Nv_N$ converges to a delta function multiplied by $\hat{v}(0)$ for large $N$.
Then, a naive approach would lead to conclude that the background value of the energy density is dominated by $\phi_0$ and that the energy density of $\phi_0/\sqrt{N}$ is described by an expression similar to \eqref{eq:gross-pitaevskii0energy-density} but with $\hat{v}(0)$ in front of the $|\phi_0|^4$ term. However, $\hat{v}(0)$ is only the first term in the Born series which describes the scattering length, see e.g. equation \ref{eq:born-series}.

For the case of constrained dilute particle system in a box of dimension $\Lambda$ and considering an $N$ particle system at vanishing temperature 
it is  known that considering the plain constant background value of the field $\phi_0$ overestimates the background energy of the system. Instead, see e.g. the work of 
 Lieb and Yngvason \cite{LY-ground}, of Lieb, Yngvason and Seiringer \cite{Lieb-Seiringer-Yngavson},
 of Nam, Rougerie and Seiringer
\cite{NRS} and of 
Boccato, Brennecke, Cenatiempo and Schlein 
\cite{BBCS18}, at leading order the background energy is
\[
E_N  = 4\pi (N-1) \mathfrak{a}_0 \phi_R^2+O(1)
\]
where $\mathfrak{a}_0$ is the scattering length of the potential $v$, equal to $\hat{v}(0)/(8\pi)$ only in the lowest order Born approximation, and $\phi_R^2$ is the renormalised background density (which is sometime set to $1$).
More recently, it has been proven by
Boccato, Brennecke, Cenatiempo and Schlein in \cite{BoccatoBrenneckeCenatiempoSchlein},
see also \cite{HainzlSchleinTriay},
that in the Gross-Pitaevskii regime also subleading contributions of the energy density are given in terms of the scattering length. 
More precisely, the obtained ground state energy as a function of $N$ (with renormalised condensate density equal to $\phi_R^2$) for the vanishing temperature case  obtained in \cite{BoccatoBrenneckeCenatiempoSchlein}  is
\[
E_N = 4\pi (N-1) \mathfrak{a}_0 \phi_R^2+e_\Lambda \mathfrak{a}_0^2\phi_R^2
-\frac{1}{2} \sum_{p\in \Lambda_+^*} \left(p^2 + 8\pi \mathfrak{a}_0 \phi_R^2 - \sqrt{p^4 + 16\pi \mathfrak{a}_0 \phi_R^2 p^2} - \frac{(8\pi \mathfrak{a}_0\phi_R^2)^2}{2p^2}\right) + O(N^{-1/4})
\]
where $e_\Lambda$ is a suitable constant depending on the dimension of the trapping region and the sum is over all non vanishing discrete momenta. Furthermore the dispersion relation of the fluctuations over the ground state at momentum $p$ are
\begin{equation}\label{eq:dispoersion-relations}
e(p)=\sqrt{p^4 + 16\pi \mathfrak{a}_0 \phi_R^2 p^2}.
\end{equation}
For the case of finite temperature, recently, some bounds satisfied by the free energy have been obtained in \cite{BoccatoBastiCenatiempoDeuchert}. The case of infinitely extended systems in the mean field regime has been studied by Derezi\'nski and Napi\'orkowski in \cite{DerNap2014}. Recently, Derezi\'nski and Pettinari have carefully analysed the form of the dispersion relations at finite temperature in  \cite{DerPet26}.

We notice that, in the limits of vanishing temperature and for states with a large number of particles (large $N$), the ground state energy contains both the renormalised background density $\phi_R^2$ and the scattering length $\mathfrak{a}_0$ instead of $\phi_0^2$ and $\hat{v}$.
In the limit of a large number of particles, the background field, the zero mode is the most relevant contribution and linearised perturbations dominate over other contributions. 

In this paper we analyse role of these corrections in the construction of an interacting quantum field theory at finite temperature in an infinite spacetime, from the point of view of quantum field theory.
We shall see that it is thus necessary to renormalise both the background value of the field and the propagators of the linearised theory to correctly estimate the form of the quantum state also at finite temperature.
We argue that only after taking into account these renormalisations, the fluctuations and the non linear contributions can be controlled to be small.
Furthermore, these renormalisations, are necessary to control the adiabatic limit for non vanishing temperature. 
We shall see that the two-point function of the theory  can be computed for all values of the temperature and in the limit of vanishing teperature it behaves in the way predicted by Bogoliubov in \cite{BogoliubovBEC}. 
In passing, we shall identify the diagrams relevant for the  renomalization of $\hat{v}(0)$ to $\mathfrak{a}_0$ in the dispersion relations
of the fluctuations as 
those arising from a Dyson equation with an effective interaction kernel (the self-energy in the quantum field theory language) obtained as the sum of the so called ladder diagram formed with $v$ and the free propagators. 
Following this strategy, we prove that the background value of the field gets renormalised to a solution of the Gross-Pitaevskii equation. Moreover, in the limit of vanishing temperature, we prove that the dispersion relations read from the two-point function of the theory contains the scattering length and agrees with \eqref{eq:dispoersion-relations}. 
Finally, an estimate of the critical temperature above which no condensate can occur is derived.

\subsection{Strategy and results}

We use various techniques coming from various research areas to estimate 
the generating function of the correlation functions of the considered interacting theory at finite temperature as a suitable power series and to prove its convergence in certain regime. 
In particular, we start from methods and techniques of AQFT \cite{HaagLQP, HaagKastler} and of pAQFT \cite{HollandsWald2001, HW02, BDF09, BrunettiFredenhagen00, Fredenhagen2015, KasiaBook, Dutsch:2019aa} to obtain, a priori as a formal power series, the generating function of the thermal correlation functions for the interacting theory.

Then, we use techniques of constructive QFT developed by Brydges and Kennedy \cite{BrydgesKennedy}
and by Rivasseau Abdesselam, Gurau, Mangnen and Seneor \cite{AbdesselamRivasseau, RivasseauGurau, MagnenRivasseauSeneor} to analyse the logarithm of the generating function with an Hubbard Stratonovich transformation and to prove that the series defining the generating function of the correlation function admits a loop vertex expansion which  convergences  in some particular regime.
The range of parameters for which convergence is obtained, though improved wrt the results of 
\cite{GalandaPinamonti},
it is not sufficient to reach the Gross-Pitaevskii regime,
because of the scaling as $N$ of the condensate density $\phi_0^2$.
For that regime, at this first stage, only weaker notions of summability are obtained. 

In order to improve the convergence, and to understand the behaviour of the system in the GP regime, we explicitly analyse the most relevant tree diagrams arising from the loop vertex expansion in the limit of vanishing temperature. 
We identify the graphs which give origin to all the corrections of the scattering length in the GP regime to the first Born approximation, as particular ladder diagrams. 
These correspond to all the higher loop corrections for adjacent vertices present in the previously derived loop vertex expansion. 
We also recognize the class of tree diagrams which are relevant to the renormalisation of the background value of the field, relating the original beackground $\phi_0$ to the solution of the Gross-Pitaevskii equation containing the scattering length $\mathfrak{a}_0$ at the place of $\hat{v}(0)$ and denoted $\phi_R$. 

Finally, after renormalising the background to $\phi_R$ and considering partial resummations to get the dominant two-point function involving the scattering length (with dispersion relations $\sqrt{p^4 +  16\pi  \phi_R^2 \mathfrak{a}_0 p^2}$ in the limit of vanishing temperature), we prove that the remaining diagrams in the loop vertex expansion can be summed. Hence, showing that after the above steps, all the remaining contributions are less dominant in the considered regime.  

Below, further details on the methods and on the precise statements of our results are presented. 

\subsubsection{AQFT methods}
We directly work with the infinitely extended system. Therefore, equilibrium states cannot be described by a density operator, a trace class operator, on the bosonic Fock space of the vacuum representation \cite{KMS}. 
Nevertheless, the observables of the theory are still well defined and the commutation relations among them remain meaningful. 
In other words, the $*$-algebra of field observables is well defined also in this context. Therefore, in order to investigate the form of the equilibrium state in the presence of a non-vanishing background phase, we work with the
formalisms of AQFT \cite{HaagKastler} and with its more modern application used to
rigorously 
treat interacting quantum fields perturbatively
\cite{HollandsWald2001, HW02, BDF09, BrunettiFredenhagen00, Fredenhagen2015, KasiaBook, Dutsch:2019aa}. 
In this setting, the equilibrium states are characterised by the celebrated KMS condition \cite{KMS} which
extends the notion of Gibbs states to the case of infinitely extended systems.
Indeed, the failure of existence of a trace class operator to describe the density matrix at equilibrium on the vacuum representation in the infinite volume limit, is a manifestation of the fact that thermal equilibrium states and the vacuum state are not normal states. Namely, they correspond to different phases of the same system. 

\subsubsection{KMS state and Araki's perturbation theory}
If the generators of the perturbation of the free Hamiltonian are within the considered observables algebra, it is possible to analyse the form of the generating function of the thermal correlation function of the interacting theory perturbatively. 
 Actually, if we consider a particular system where
 $\rho_\beta^0 = e^{-\beta H_0}$, the density matrix of the free Gibbs state, and $\rho_\beta = e^{-\beta H}$, the density matrix of the interacting Gibbs state,
 can be given as suitable trace class operators on some suitable Hilbert space, we notice that they are related by the relative partition function
 \[
 \rho_\beta =e^{-\beta H}= e^{-\beta {H}} e^{\beta {H}_0} e^{-\beta {H}_0} = U(\mathrm{i}\beta) \rho_\beta^0.
 \]
In this case the relation of $\rho_\beta$ to $\rho_{\beta}^0$ is given by $U(i\beta)$, the relative partition function, which can be constructed as a suitable analytic continuation of the time ordered exponential of the interaction Hamiltonian $H-H_0$. 

Araki in \cite{Araki} showed that even when the equilibrium state cannot be described by a trace-class density matrix, a similar idea still holds. Actually, if it is  possible to describe the cocycle $U(t)$ which intertwines the free and interacting time evolution as the time ordering exponential of the interaction Hamiltonian $H_I$, an interacting equilibrium state can then be obtained by analytic continuation under mild technical assumptions. 
A similar construction can be used in the context of interacting quantum fields, as shown for the first time by Fredenhagen and Linder \cite{FredenhagenLindnerKMS_2014}, with the caveat that the state is there given as a formal power series in the coupling constant which multiplies the interaction Hamiltonian.
More precisely, denoting by $\omega^\beta$ the KMS state of the background (free) theory at inverse temperature $\beta$, the KMS state of the interacting theory used to evaluate a generic observable $O$  can be expressed as
\[
\omega^{\beta I} (O) = \frac{\omega^\beta(O T e^{- \int_0^\beta  H_I(u) du } )}{\omega^\beta( T e^{- \int_0^\beta  H_I(u) du } )},
\]
where now $T$ is a suitable time ordering in the imaginary time, which becomes meaningful thanks to the analytic properties of the KMS state $\omega^\beta$.

In this work, in order to use this expression, we 
farther elaborate on this 
analytic continuation 
and we introduce an auxiliary abelian parabolic theory, where fields depend also on a parabolic time $u$ and are periodic with period $\beta$. Then, we discuss how to derive the correlation functions of the original quantum field theory from this auxiliary model.
In particular, the generating function of the correlation functions is obtained as 
\[
Z(J,{J}^*) = \omega^{\beta I}(e^{\mathrm{i} \Psi(J)+\mathrm{i} \Psi^*({J^*})})
\]
where $J$ and $J^*$ are seen as independent and arbitrary smearing functions of the fields.

The analogy of this expression with the construction of interacting fields by means of stochastic quantization is manifest. Actually, $\omega^\beta( T e^{- \int_0^\beta  H_I(u) du } )$ can be understood as the partition function of a stochastic process of periodic parabolic time \cite{Hairer, Gubinelli1,Gubinelli2} see also the recent developments in \cite{DuchGubinelliRinaldi}. 
In the context of interacting quantum field theory, rigorous derivations of the measures which describe the quantum states, for boson with weakly non local interaction have been recently obtained in \cite{FroehlichKnowlesSchleinSohinger2, FKSS3, LPR05}. 
Instead, for the case of weakly interacting fields, with a non local interaction and in presence of a non trivial background \cite{Benfatto} has initiated an analysis, see also \cite{Salmhofer}, and more recently also the case of two space dimensions at vanishing temperature has been developed in \cite{SerenaGiuliani}.

\subsubsection{The loop vertex expansion (LVE)}
The interaction Hamiltonian density we are considering is fourth order in the fields.  Hence, in presence of a non-vanishing background it is such that
\[
\mathcal{H}_I= \phi_0^2
(\Psi+\Psi^*)(v*
(\Psi+\Psi^*)
+
\phi_0 (\Psi+\Psi^*)(v*|\Psi\Psi^*|)
+
\phi_0 |\Psi\Psi^*| (v*(\Psi+\Psi^*))+   \frac{1}{2}|\Psi\Psi^*| (v*|\Psi\Psi^*|).
\]
In order to treat the non linearities, and simplify them, we use an Hubbard-Stratonovich transformation.

This kind of transformations have been extensively used and discussed in the context of constructive interacting quantum field theory by Rivasseau \cite{RivasseauBook} and collaborators
\cite{AbdesselamRivasseau,RivasseauGurau}.
This corresponds to adding an extra auxiliary field $A$, the Hubbard-Stratonovich field, whose correlation functions are described by $-\delta(u)v(x)$. With this extra field at disposal, the interaction Hamiltonian density is 
\[
\mathcal{H}_I^A =  \phi_0 A 
(\Psi+\Psi^*)
+
A |\Psi\Psi^*| .
\]
Furthermore, since this is quadratic in the fields $\Psi$ and $\Psi^*$ and there is no direct backreaction of $\Psi$ to $A$, we can take into account its effect in the corresponding linear theory. 

Therefore, following the above reasoning, the interacting intermediate state which results from this procedure is 
\[
\omega^{\beta A} (O) 
 = \frac{\omega^\beta(O T e^{- \int_0^\beta  H^A_I(u) du } )}{\omega^\beta( T e^{- \int_0^\beta  H^A_I(u) du } )}
\]
where $H_I^A$ is the interaction Hamiltonian associated to the density $\mathcal{H}_I^A$,  
and for the reasons explained above it can be computed exactly.
In particular, the  generating function where $J$ and $J^*$ are independent smooth and compactly supported complex valued function is 
\[
Z_A(J,J^*) = \omega^{\beta A} (e^{\mathrm{i} \Psi(J^*)+\mathrm{i} \Psi^*({J})}).
\]
As we shall prove in section \ref{se:LVE}, the latter is such that 
\[
 \log Z_A(J,J^*)  = 
- {2} \Tr \int_0^{\sqrt{\lambda} } \dvol s A 
\left(\mathsf{G}_\beta^{ s A }-\mathsf{G}_\beta\right)
{+2}
   \langle (\mathrm{i}J+\phi_0 A), \mathsf{G}_\beta^{\sqrt{\lambda}A}  (\mathrm{i}J^*+\phi_0 A) \rangle.
\]
where $\mathsf{G}^A_\beta$ and $\mathsf{G}_\beta$ are respectively the fundamental solutions of $-\partial_u+{K}+A$ and $-\partial_u+{K}$ for functions in $\mathbb{R}^3\times [0,\beta]$ extended by periodicity to $\mathbb{R}^4$. If $A$ is of compact support, the trace on $\mathbb{R}^3\times [0,\beta]$ is well defined and the pairing is for periodic functions.

The last step to obtain $Z(J,J^*)$ consists in evaluating the auxiliary field in a state whose covariance is $-\delta \otimes v$ 
(to be more precise, $\mathrm{i}A$ is a Gaussian stochastic field and the covariance of the Gaussian state is $\delta \otimes v$). 
We realise this evaluation by applying the operator $e^{-\Gamma}$, with $\Gamma = \frac{1}{2}\langle \delta \otimes v,\delta^2/(\delta A \delta A) \rangle_2$, and afterwards setting $A=0$.
The generating function is then such that
\[
Z(J,J^*)=\left.\frac{e^{-\Gamma}Z_A(J,J^*))}{e^{-\Gamma}Z_A(0,0)}\right|_{A=0}.
\]
Proving that this evaluation furnishes a convergent expression is a priori not obvious. Actually, this is the step where the difficulties of the interacting theory reappear.

In \cite{GalandaPinamonti} we have shown that this procedure produces finite results when the parameters of the theory satisfy certain bounds. 
In this paper we improve the obtained bounds analysing 
$\log Z(J,J^*)-\log Z(0)$ 
by means of the loop vertex expansion (LVE) discussed extensively by Rivasseau and Abdesselam \cite{AbdesselamRivasseau,Rivasseau} and which takes its root in the seminal paper of Brydges and Kennedy \cite{BrydgesKennedy}.

According to this result $\log Z(J,J^*)$ can be expressed as a sum over the tree graphs among an arbitrary number of vertices and rooted in $1$.
Written in terms of $V=e^{-\Gamma} W$  where $W=-\log Z_A$ this is
\begin{equation}\label{Beliaev}
\log Z(J,J^*) =   -V
- \sum_{n=2}^\infty \frac{1}{n!}
\sum_{T\in\mathcal{T}_n}
\mathcal{M}
\prod_{b\in E(T)}\int_{0}^1  \dvol w_b\;  {\Gamma}_b
\exp\left[-\sum_{k< l} w_T(k,l) {\Gamma}_{kl}\right]
 \underbrace{V \otimes \dots \otimes V}_n \left. \right|_{A=0}
\end{equation}
where $\mathcal{T}_n$ denotes the set of tree graphs among $n$ vertices, $\mathcal{M}$ maps the $n$-th fold tensor product to the pointwise product, $E(T)$ denotes the set of edges of the tree $T$,  $\Gamma_b=\Gamma_{b_1b_2}$ and $\Gamma_{ij}=\langle \delta \otimes v ,\delta/{\delta A_i}\otimes \delta/{\delta A_j} \rangle_2$. Finally 
$w_T(k,l)$ is the minimum of the weights $w_b$ associated to the edges in the unique path in $T$ joining $k,l$.
We show in Section \ref{se:LVE} that the problem we are considering reduces to this form and for a proof of the validity of that formula, which is usually employed to estimate the coefficient of the Mayer expansion of the connected correlation functions of a generic field theory, we refer to the original paper \cite{BrydgesKennedy} and to \cite{AbdesselamRivasseau}.
We call it (BKAR) tree graph expansion.

\subsubsection{The convergence of LVE}

In Section \ref{se:convergence} we analyse the convergence of the loop vertex expansion of $Z(J,J^*)$, and we obtain the following result which corresponds to Theorem \ref{thm:converge} in the main text
\begin{theorem}\label{thm:1}
Consider the equilibrium state at inverse temperature $\beta$. Consider also the $\epsilon$ regularised operator $K_\epsilon$ as in \eqref{eq:Hamiltonian-density-field}, $\epsilon>0$. If $\phi_0$, $v$,  $J$ and $J^*$ used to construct $Z(J,J^*)$ are such that 
\[
\frac{r}{{4}}:=\left(\frac{\|v\|_1}{\epsilon} \right) \left( \frac{2(2\pi m)^{\frac{3}{2}}}{\sqrt{\beta}}+\left(\|\phi_0\|_\infty+\frac{\|J\|_{K,1}+\|J^*\|_{K,1}}{{2}}\right)^2 \right)< \frac{1}{16}
\]
where $\|\cdot\|_K$ is a suitable $K$ dependent norm given in \eqref{eq:normJK}, then the loop vertex expansion given in \eqref{Beliaev} and used to define
$
\log Z(J,J^*) - \log Z(0,0) 
$
is absolutely convergent. 
\end{theorem}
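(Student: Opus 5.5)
The plan follows the standard loop vertex expansion convergence scheme of Rivasseau and Abdesselam, adapted to the vertex $V=e^{-\Gamma}W$ with $W=-\log Z_A$ given explicitly in terms of $\mathsf{G}^{sA}_\beta$.

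\textbf{Step 1: apply $n$ derivatives to one vertex.} In \eqref{Beliaev}, each edge operator $\Gamma_b=\langle\delta\otimes v,\delta/\delta A_{b_1}\otimes\delta/\delta A_{b_2}\rangle$ differentiates two vertices. Hence a vertex of coordination $d_i$ in the tree receives $d_i$ functional derivatives in $A$. I will compute these derivatives explicitly. Differentiating the trace term $\Tr\int_0^{\sqrt\lambda} sA(\mathsf{G}^{sA}_\beta-\mathsf{G}_\beta)$ $k$ times yields, by the resolvent identity $\delta \mathsf{G}^{A}/\delta A=-\mathsf{G}^{A}\,\delta\, \mathsf{G}^{A}$, a single loop: a cyclic product of $k$ propagators $\mathsf{G}^{sA}_\beta$ interlaced with $k$ insertion points, up to a factor of $k$. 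Differentiating the source term $\langle \mathrm{i}J+\phi_0A,\mathsf{G}^{\sqrt\lambda A}(\mathrm{i}J^*+\phi_0A)\rangle$ yields open chains. Each chain carries two external insertions, each either $\phi_0$ (when a derivative hits an explicit $A$) or $J,J^*$.

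\textbf{Step 2: bound uniformly in $A$ and in the interpolated covariance.} This is the key step. The Gaussian measure with covariance $\sum w_T(k,l)\Gamma_{kl}$ is positive, since $w_T$ is a positive matrix by the BKAR lemma. Moreover $\mathrm{i}A$ is real, so $A$ is purely imaginary. Therefore $-\partial_u+K_\epsilon+\sqrt{\lambda}A$ is a contraction-generating perturbation, and I expect $\|\mathsf{G}^{sA}_\beta\|\le\|\mathsf{G}_\beta\|$ in operator norm for every realisation of $A$. This is the point where the $\epsilon$ regularisation, which gives the spectral gap $\epsilon$ of $K_\epsilon$, enters: it yields $\|\mathsf{G}_\beta\|\le 1/\epsilon$. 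The main obstacle is that the loop contains a trace. I will control it by bounding one propagator in the loop by its kernel in $L^\infty$ on the diagonal; the thermal heat kernel gives the factor $(2\pi m)^{3/2}/\sqrt\beta$-type bound of the hypothesis. The remaining propagators are bounded in operator norm, and each $v$ from an edge contributes $\|v\|_1$ via Young's inequality. For the chains, the external legs produce $(\|\phi_0\|_\infty+\|J\|_{K,1}/2+\|J^*\|_{K,1}/2)^2$ through the norm \eqref{eq:normJK}. Altogether each derivative costs at most a factor $\|v\|_1/\epsilon$ times the bracket, that is, $r/4$ per edge end (up to the constants in the hypothesis). A vertex of degree $d_i$ is then bounded by $d_i!\,(r/4)^{d_i}$-type terms, where the factorial arises from distributing the derivatives among the propagators of the loop or chain.

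\textbf{Step 3: sum over trees.} After Step 2 the weight integrals $\int_0^1dw_b$ are bounded by $1$. The exponential interpolating Gaussian is a probability measure, so it costs at most $1$ under the uniform bound. The sum over trees with fixed degrees is given by Cayley's formula, $(n-2)!/\prod(d_i-1)!$. Combined with the per-vertex factorials $d_i!$ and the prefactor $1/n!$, this leaves $\prod_i d_i\le 2^{2(n-1)}$ times $(r/4)^{2(n-1)}$ per tree, up to a constant $C^n$. Summing over degree sequences gives another factor of at most $4^n$. The resulting series is geometric in $16\cdot r/4$, which is why the threshold $r/4<1/16$ appears, and it converges absolutely. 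The delicate points are Step 2's trace estimate and the combinatorial constants. I would first check the claimed contraction bound for imaginary $A$ via the Feynman--Kac or Trotter formula.
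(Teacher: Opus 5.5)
Your architecture matches the paper's: the Brydges--Kennedy tree formula with a positive interpolated covariance, domination of $\mathsf{G}^{sA}_\beta$ by $\mathsf{G}_\beta$ because $A$ is imaginary, factorial bounds per vertex, and a final tree count (yours via Cayley's degree formula rather than the generating function of Lemma \ref{le:EGF-convergence}, which is a legitimate alternative). However, Step~2 fails as stated. The vertices $W_{J\phi_0}$, $W_{\phi_0J^*}$ and $W_{\phi_0\phi_0}$ contain \emph{explicit} factors of $A$, not only $A$ inside propagators. If no tree derivative hits such a factor, it does not drop out: $d\mu_{\Upsilon(w_T)}$ contracts it with the $A$-dependence of the propagators. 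Because the covariance is $\delta(u-u')v(x-x')$, $A$ is white noise in $u$. So there is no bound uniform in $A$, and the argument ``the measure is a probability measure, hence costs $1$'' is not available. The missing step is the rewriting in the proof of Lemma \ref{le:loops-legs}. Write $\phi_0=\mathsf{G}_\beta(\partial_u+K)\phi_0$ and use the resolvent identity to turn $\sqrt{\lambda}\,\mathsf{G}^{\sqrt{\lambda}A}_\beta A\,\mathsf{G}_\beta$ into the difference $\mathsf{G}^{\sqrt{\lambda}A}_\beta-\mathsf{G}_\beta$. This gives, for example, $W_{J\phi_0}=-\tfrac{2}{\sqrt{\lambda}}\langle \mathrm{i}J,(\mathsf{G}^{\sqrt{\lambda}A}_\beta-\mathsf{G}_\beta)K\phi_0\rangle$, and likewise \eqref{eq:Wphi0phi0}, whose leftover term is linear in $A$ with only a free propagator. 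Only after this does $A$ enter solely through propagators. It is also where the $\|\phi_0\|_\infty/\sqrt{\lambda}$ in $r$ comes from.

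Second, your bounds ignore that you are estimating $\log Z(J,J^*)-\log Z(0,0)$ in infinite volume. With $g\to1$ everything is translation invariant on $\mathbb{R}^3\times[0,\beta]$. A trace of products of propagators over $L^2(\mathbb{R}^3\times[0,\beta])$ therefore diverges, and every $J$-free tree is proportional to the volume; operator norms cannot cure this. The paper handles it as follows. It expands $\prod_k W(J,J^*,A_k)-\prod_k W(0,0,A_k)$, so that every surviving term has a vertex of type $JJ^*$, $J\phi_0$ or $\phi_0J^*$, and makes that vertex the root. It bounds every other vertex uniformly in the base point where its parent edge attaches, and cuts leaves so that each edge becomes an insertion of $\|v\|_1$ in the parent. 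The last spatial integral is then controlled by $J$ at the root, giving the prefactor $\beta(\|J\|_1+\|J\|_{K,1}+\|J^*\|_1+\|J^*\|_{K,1})$. This requires pointwise domination of kernels, $|\mathsf{G}^{A}_\beta(x,u;y,u')|\le \mathsf{G}_\beta(x,u;y,u')$ from Feynman--Kac, not an operator-norm bound.

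Three smaller points in your proposal are also off. The $1/\sqrt{\beta}$ does not come from the diagonal of a single propagator, which diverges like $u^{-3/2}$ as $u\to0^+$. It comes from the winding factor $e^{-\beta K_\epsilon}$ of a closed loop, together with one retained $(1-e^{-\beta K_\epsilon})^{-1}$. The correct bookkeeping is one bracket per vertex and one $\|v\|_1/\epsilon$ per edge, not $r/4$ per edge end. Finally, your unspecified $C^n$ gives convergence for small $r$ but not the stated threshold $1/16$, which the paper obtains from the exact $E(v)=(1-\sqrt{1-4v})/2$.
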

We stress that, in order to obtain this result, the internal $U(1)$ symmetry of the original system ($\Phi\to e^{\mathrm{i}\theta}\Phi$, $\Phi^*\to e^{-\mathrm{i}\theta}\Phi^*$) has been explicitly broken by the $\epsilon$ (infrared) regulator used to perturb the kinetic term of the Hamiltonian describing the evolution of the fluctuations $\Psi$ and $\Psi^*$.
In the state we aim to construct this $U(1)$ symmetry is however only spontaneously broken, hence, limits where the $\epsilon$-regularization is removed need to be considered.
On the other hands, the necessity of studying the limits where the spectrum of the operator governing the fluctuations become gapeless is a consequence of the Goldstone theorem, see  \cite[Theorem 15.1]{Strocchi}, and not an artifact of the used method. 
For the case of trapped bosons we refer to 
\cite{LSY-07} and for a discussion of the Goldstone theorem in relativistic systems at finite temperature to \cite{BFP21}. 
The employed range of parameters allows to keep convergence in the limits where $\epsilon$ vanishes, provided that also $v$ tends to zero with analogous rate while keeping $\|v\|_1/\epsilon$ small and at the same time $\phi_0^2$ small.

Unfortunately, when $v$ is substituted by $v_N$ in \eqref{eq:vn}, it is true that $\|v_N\|_1$ vanishes in the limit of large $N$ however, keeping fixed the chemical potential $\tilde{\mu}$, under the very same limit $\phi_0^2$ grows as $N$.
For this reason, the hypotheses of this Theorem are not satisfied in the Gross-Pitaevskii regime for large $N$
where $\phi_0^2$ grows as $N$ and where the quantity which remains constant is $\hat{v}(0)\phi_0^2$.
In order to overcome this problem and to get information about that regime, we analyse the terms of the loop vertex expansion in the limit of vanishing temperature. 
Indeed, in this limit the vertex corresponding to the loop, responsible of the term $C/\sqrt{\beta}$ in $r$, is of lower order in $N$
and can thus be discarded. Only the contributions involving $\phi_0$ matters. 
We start our analysis discarding the effect of $e^{-\sum w_T \Gamma}$ in the loop vertex expansion in Section  \ref{se:scattering-length-corrections},
where we see that the tree graphs contributions of the loop vertex expansion with only the leg vertex 
$ \langle (\mathrm{i}J+\phi_0 A), \mathsf{G}_\beta^{\sqrt{\lambda}A}  (\mathrm{i}J^*+\phi_0 A) \rangle$ in $W(J,J^*)$
can be summed. This gives origin to an expression for the truncated two-point function which is well defined in the limit  of $\epsilon\to 0$ also for large $\phi_0$.  

We see in \eqref{eq:omegaJJ} and \eqref{eq:omegaJJ1}, and according to the analysis of Section \ref{se:pretrubed-propagators}, that the propagators of the theory in this approximation are very close to those predicted by Bogoliubov \cite{BogoliubovBEC}. 
They are characterised by dispersion relations of the form $\sqrt{p^4 +2 \hat{v}(0)\phi_0^2 p^2}$ and obtained from an interaction Hamiltonian density of the form 
\begin{equation}
\label{eq:ren-H}
\mathcal{H}_{Bog}=(\Psi+\Psi^*) \phi_0^2 \hat{v}(0)  (\Psi+\Psi^*)
=(\Psi+\Psi^*) \hat{\mathcal{V}}_0(0,0) (\Psi+\Psi^*)
\end{equation}
where $\hat{\mathcal{V}}_0(0,0)= \phi_0^2 8\pi \mathfrak{a}_0^{(0)} = \phi_0^2 \hat{v}(0)$ with $\mathfrak{a}_0^{(0)}$ the lowest order contribution to the Born series defining the scattering length of the potential $v$, see Appendix \ref{se:scattering-length} and equation \eqref{eq:born-series}.

We then proceed in Section \ref{se:higher-loop-corrections} to analyse higher loop corrections attached to the tree graphs, producing the Bogoliubov transformation of the correlation functions discussed above by the action of $e^{-\sum w_T\Gamma}$ in the loop vertex expansion.
We prove that the most relevant contributions in the Gross-Pitaevskii regime at very low temperature, arises when the loop diagrams are formed applying 
$e^{-\sum w_T\Gamma}$ to adjacent vertices only. 
We see in Section \ref{se:higher-loop-corrections}
that the diagrams which are relevant, 
in the limit of vanishing temperature
are particular ladder diagrams and the effect they have, in the Dyson series defining the linearised propagator, is to effectively renormalise the interaction Hamiltonian with an  effective vertex
\begin{equation}\label{eq:Vintro}
 \hat{\mathcal{V}}(0,0)=\sum_{n}\hat{\mathcal{V}}_n (0,0) =  \phi_0^28\pi \mathfrak{a}_0 
\end{equation}
where all the contributions in the Born series defining the scattering length appear.

In Section \ref{se:one-point-function-renormalisation} we see how this renormalised effective vertex modifies also the background. In particular, the background value of the field $\phi_0$ gets also renormalised to $\phi_R$ which is now a solution of the Gross-Pitaevskii equation containing the whole scattering length $\mathfrak{a}_0$.

We then consider the renormalisation of the background value of the field to $\phi_R$, and in the LVE \eqref{Beliaev} we consider the partial resummations of the tree diagrams which produce the propagators with the scattering length.
We prove in Theorem \ref{thm:partial-sum} that in the limit $N\to\infty$ at fixed $\epsilon$ the LVE with the 
partial resummations coincide with the original LVE \eqref{Beliaev}. 
Finally, revisiting the decomposition and the convergence proof of the whole loop vertex expansion we get the following theorem, which corresponds to 
\ref{thm:1} in the text.
\begin{theorem}
Consider the loop vertex expansion of $\log Z(J,J^*) - \log Z(0)$ obtained with a decomposition
of the field $\Phi = \phi_R+X$, 
$\Phi^* = \phi_R+X^*$
with respect to the renormalised background $\phi_R$ which is now a constant solution of the Gross-Pitaevskii equation \eqref{eq:GR-ren-background}.
Consider explicitly the result of the partial resummations of the tree graphs which produce in the leg vertices $\tilde{\mathsf{G}}_{\mathcal{L}}^A$, which is the fundamental solution of 
$\textrm{diag}{(-\partial_u + K+ A,\partial_u + K+ A)}+\mathcal{L}^\beta\mathsf{J}_2$
where $\mathsf{J}_2$ is the matrix with all entries equal to $1$ and $\mathcal{L}^\beta$ is operator implicitly given in \eqref{eq:ell-scattering} (which reduces to $\mathcal{V}$ in \eqref{eq:Vintro} with $\phi_0$ replaced by $\phi_R$ in the case of vanishing temperature). 
The remaining 
loop vertex expansion of $\log Z(J,J^*)-\log Z(0)$ converges absolutely for arbitrarily large $N$, for $\epsilon\sim N^{-1}$ and with $\mathfrak{a}_0 \phi_R^2$ kept constant, if $\beta\to \infty$. 
\end{theorem}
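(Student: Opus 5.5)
The plan is to rerun the convergence proof of Theorem \ref{thm:1} for the renormalised expansion, changing only the reference propagator and the size of the perturbing vertices. Write $\Phi=\phi_R+X$ and expand the interaction about $\phi_R$. Because $\phi_R$ is a constant solution of the Gross-Pitaevskii equation \eqref{eq:GR-ren-background}, the terms linear in $X$ cancel up to a remainder. That remainder is proportional to $(\hat v(0)-8\pi\mathfrak{a}_0)\phi_R$, and it is exactly what the ladder resummation absorbs.

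After the Hubbard-Stratonovich transformation, $W=-\log Z_A$ again splits into a loop vertex and a leg vertex. The difference is that the resolvents are now built from $\tilde{\mathsf{G}}^A_{\mathcal{L}}$, the fundamental solution of $\mathrm{diag}(-\partial_u+K+A,\partial_u+K+A)+\mathcal{L}^\beta\mathsf{J}_2$, and no longer from the free $\mathsf{G}^{sA}_\beta$. The ladder diagrams already resummed into $\mathcal{L}^\beta$ must be subtracted from the vertices. I will use Theorem \ref{thm:partial-sum} to guarantee that this bookkeeping does not change the formal expansion. The BKAR formula \eqref{Beliaev} then still applies, with $V=e^{-\Gamma}W$ and the same tree sum.

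The bounds proceed as in the proof of Theorem \ref{thm:converge}. The interpolation factor $\exp[-\sum w_T\Gamma_{kl}]$ is a positive Gaussian integration, because $\hat v>0$ and $\mathrm{i}A$ is a real Gaussian field. I will therefore bound each tree by its factorised amplitude. Each edge $\Gamma_b$ contributes a factor $\|v_N\|_1$. Each vertex contributes a resolvent norm, which I must now bound for the dressed propagator rather than the free one.

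The key input is a uniform estimate, in $A$ along the imaginary directions, on $\tilde{\mathsf{G}}^{A}_{\mathcal{L}}$. At $\beta=\infty$ its symbol has Bogoliubov form with dispersion $\sqrt{(p^2/2m+\epsilon)^2+16\pi\mathfrak{a}_0\phi_R^2(p^2/2m+\epsilon)}$. From this I expect:
\begin{enumerate}
\item a gap of order $\sqrt{\epsilon\,\mathfrak{a}_0\phi_R^2}$ from the $\epsilon$ regulator;
\item an $L^1$-type norm for the leg vertex scaling like $\phi_R^2\|v_N\|_1$ times an inverse Bogoliubov energy, which stays $O(1)$;
\item a loop term whose temperature dependence vanishes as $\beta\to\infty$.
\end{enumerate}
With these, the analogue of $r$ in Theorem \ref{thm:1} becomes a combination of two kinds of term. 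The first kind is $\|v_N\|_1/\sqrt{\epsilon\,\mathfrak{a}_0\phi_R^2}$, which is $O(N^{-1/2})$ for $\epsilon\sim N^{-1}$. The second kind is what remains of the subtracted ladder pieces. The remaining steps are then to verify $r<1/4$ for large $N$ and to invoke the Cayley count $n^{n-2}/n!$ for geometric convergence.

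I expect the main obstacle to be the subtracted vertices. The ladder counterterms have size $\phi_R^2\hat v_N(0)=O(1)$, not small, so smallness must come from cancellation. I would first try to show that the difference between a leg vertex containing $A$ and its ladder-resummed counterpart is of order $\phi_R^2\|v_N\|_1^2\times(\text{loop integral})$. The loop integral is controlled by the scattering equation: the Born remainder beyond the resummed ladders is bounded by $\|v_N\|_1\,\|v_N\|_{3/2}$-type quantities, which are $O(N^{-1})$ under the GP scaling \eqref{eq:vn}. Only after this cancellation does each tree edge carry a factor that decays in $N$, yielding absolute convergence for arbitrarily large $N$ in the $\beta\to\infty$ limit.
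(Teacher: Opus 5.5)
Your argument breaks at the step meant to produce smallness, the per-edge factor $\|v_N\|_1/\sqrt{\epsilon\,\mathfrak{a}_0\phi_R^2}$. That factor assumes the dressed propagator is bounded by the inverse Bogoliubov gap, and this is false for its individual entries.

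By \eqref{eq:two-point-function}, with $K_\epsilon$ and $\mathsf{v}\simeq 8\pi\mathfrak{a}_0\phi_R^2$, the entries of $\tilde{\mathsf{G}}_{\mathcal{L}}$ at zero Matsubara frequency are $(K_\epsilon+\mathsf{v})/(K_\epsilon^2+2K_\epsilon\mathsf{v})$ and $-\mathsf{v}/(K_\epsilon^2+2K_\epsilon\mathsf{v})$. Near $p=0$ both are $\approx\pm 1/(2\epsilon)$. The gap $E_0=\sqrt{\epsilon^2+2\epsilon\mathsf{v}}$ fixes only the decay rate in $u$. The amplitude of this (phase, Goldstone) channel is of order $\mathsf{v}/E_0$, so the $u$-integrated kernel is of order $\mathsf{v}/E_0^2\approx 1/(2\epsilon)$. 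Hence edges at vertices built from individual entries still cost $\|v_N\|_1/\epsilon=O(1)$ for $\epsilon\sim N^{-1}$. This covers the $W'_{J(i)J(j)}$, and your loop vertex if you dress it. That $O(1)$ quantity is exactly the prefactor surviving in the paper's final $r$.

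Only the combinations $\mathsf{M}$ and $\mathsf{F}_i$ of \eqref{eq:resummed-vertices} are bounded uniformly in $\epsilon$, in $L^1$ by $2(16\pi\phi_R^2\mathfrak{a}_0)^{-1}$. These are the combinations through which the real constant background couples, via $A\phi_R(X+X^*)$. Even they give no smallness by power counting alone. A condensate vertex of valence $n$ has size $\phi_R^2(\|v_N\|_1\|\mathsf{M}\|_{L^1})^{n-1}\sim N^{2-n}$, which is the paper's $N^{-(n-2)}$. This is small for $n\ge 3$, but $O(1)$ for $n=2$ and $O(N)$ for leaves. So power counting cannot make your $r$ small.

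The paper's proof consists precisely of treating valences one and two, case by case in each tree:
\begin{itemize}
\item A valence-one $W'_{\phi_R\phi_R}$ is brought down to $O(N^{-1})$ by a cancellation. The new normal-ordering counterterm $2A\phi_R^2v^{-1}(\hat{v}(0)-8\pi\mathfrak{a}_0)$ cancels against the ladder contributions, using $\mathcal{L}^\beta\to 8\pi\mathfrak{a}_0\phi_R^2$ (Lemma \ref{le:one-point}).
\item A valence-two vertex between leg vertices is controlled by the resummation remainder of Lemma \ref{le:reminder-resummation}.
\item A valence-two vertex attached to loop vertices is still $O(1)$ after the resummation. The paper absorbs it into a renormalised edge $v-v(K+16\pi\phi_R^2\mathfrak{a}_0)^{-1}v$, which still scales like $v_N$. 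Your scheme has no counterpart for this case.
\end{itemize}
Your remarks on the GP remainder and on the subtracted ladder pieces point at the first two cancellations. Without organising trees by the valence of the condensate vertices, however, they do not bound a general tree.

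Relatedly, the paper keeps the loop vertex built from the undressed $\mathsf{G}^{sA}_\beta$, see \eqref{eq:WREN}, so that its bound $C/\sqrt{\beta}$ vanishes as $\beta\to\infty$. The entries of $\tilde{\mathsf{G}}_{\mathcal{L}}$ are not retarded in $u$ at zero temperature, unlike $\mathsf{G}_\beta$. A loop built from them therefore keeps a non-vanishing zero-temperature part, so your item (3) does not suffice.

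Finally, the bare Cayley count does not absorb the factors $(e(k)-1)!$ generated by repeated $A$-derivatives of resolvent vertices. You need the weighted tree count of Lemma \ref{le:EGF-convergence}.
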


It is interesting to notice that the estimate of the remaining graphs are convergent if the following inequality holds
\[
 \left(\frac{\|v_N\|_1}{\epsilon} \right) \left( \frac{2(2\pi m)^{\frac{3}{2}}}{\sqrt{\beta}}+\left(\frac{\|8\pi \mathfrak{a}_0\phi_R\|_\infty}{\sqrt{N}}+\frac{\|J\|_{K,1}+\|J^*\|_{K,1}}{2}\right)^2 \right)< \frac{1}{16}
\]
which improves the result of Theorem  \ref{thm:1} above and permits to cover also the interesting case of the Gross-Pitaevskii regime at least when the temperature tends to $0$.

Finally, we observe that a first bound satisfied by the critical temperature $T_c$ above which no condensation occurs is obtained imposing finite density of the renormalised condensate in such a way that ($\phi_R^2=\tilde\mu/(8\pi \mathfrak{a}_0)>0$) and thus
\begin{equation*}
    \beta^{-3/2}\mathrm{Li}_{\frac{3}{2}}(e^{-\beta\epsilon  })< \left(\frac{2 \pi}{m}\right)^{\frac{3}{2}} \frac{\tilde\mu_\infty}{ \hat{v}(0)}
\end{equation*}
where $\tilde\mu_\infty=\tilde\mu+q_\beta
\hat{v}(0)
$ and $\mathrm{Li}_{\frac{3}{2}}$ is the polylogarithmic function  of order $3/2$ given in integral form in equation \eqref{eq:qbeta}.
We observe that in the limit $\epsilon\to 0$ and for $k_B=1$, the previous inequality gives for the critical temperature
\[
T_c<\frac{4\pi}{2m} {\zeta\left(\frac{3}{2}\right)^{-\frac{2}{3}}}\rho_\infty^{\frac{2}{3}}
\]
where $\zeta$ is the Riemann zeta function, 
$\rho_\infty=\frac{\tilde\mu_\infty}{ \hat{v}(0)}$ is the maximal value of the density of the background for fixed $\tilde{\mu}_\infty$ and $\hat{v}(0)$.
The quantity at the right hand side coincides with $T_{fc}$, the critical temperature of the free Bose Gas at density $\rho_\infty$, see e.g. \cite{NapiorkowskiSolovej}.

\subsection{Content of the paper}
In Section \ref{se:framework} we present the analysis of the observable algebra for a complex charged non-relativistic field and
the extension of this algebra to encompass Wick squares. We recall the form of the KMS state and the construction of the interacting KMS state in the associated parabolic theory.
In Section \ref{se:hubbard-stratonovich} we discuss the Hubbard-Stratonovich transformation and how it modifies  the generating function of the correlation functions of the theory up to the evaluation of the auxiliary field in a suitable Gaussian state.
Section \ref{se:partitionfunction-loopvertex} contains the discussion about the evaluation of the auxiliary field in the Gaussian state and the combinatoric which gives origin to the loop vertex expansion of the logarithm of the partition function. In Section \ref{se:convergence} we present the restults about convergence of this loop vertex expansion in suitable regimes with the corresponding estimates. Section \ref{se:scattering-length-corrections} contains the analysis of the diagrams which are relevant to renomalize the two-point function of the fluctuations and of the one point function of the theory.
Then, in Section \ref{se:renormalisation} we discuss the corresponding renormalisation of the loop vertex expansion proving its convergence in the GP regime. 
Finally, in Appendix \ref{se:thermal-propagators-appendix} we collect some technical results about thermal propagators and their perturbation and in Appendix \ref{se:scattering-length} some basic facts about the scattering length.

\section{Framework}\label{se:framework}

\subsection{Complex non relativistic fields}

We are considering a charged scalar field theory that, due to the $U(1)$ symmetry, is described by a complex valued scalar field. We follow \cite{ArakiWoods} to describe the finite temperature case.
In particular, the observables of the linear theory are generated by the linear fields 
and are denoted by 
\[
\Phi(f) = \int  \Phi(x) f(x) d^3x, \qquad  \Phi^{*}(f) = \int \Phi^*(x) f(x)  d^3x, \qquad f\in \mathcal{S}(\mathbb{R}^{3}).
\]
They satisfy the {\bf commutation relations}
\[
[\Phi(f),\Phi^*(h)] = \int f h d^3x = ( \bar{f},h )
\]
where $( \cdot,\cdot )$ denotes the ordinary sesquilinear product of $\mathfrak{H}=L^2(\mathbb{R}^3)$, which contains $\mathcal{S}(\mathbb{R}^{3})$ so that the extension to $\mathfrak{H}$ is straightforward.
The {\bf antilinear involution} on the algebra is
\[
(\Phi(f))^* = \Phi^*(\overline{f})
\] 
which is compatible with the commutation relations. The linear fields together with the identity and the operations enumerated above generate the algera of quantum fields $\mathcal{A}_F$. 
The free {\bf time evolution} is generated by a suitable selfadjoint operator $\mathsf{K}$ which acts on a suitable domain in $\mathfrak{H}$. The generator $\mathsf{K}$ is further assumed to be real, namely it commutes with the complex conjugation
$\overline{\mathsf{K}f} = {\mathsf{K}\overline{f}}$.
The free time evolution is then a one parameter group of $*$-automorphisms $\tau_t$ of $\mathcal{A}_F$ and its action on the generators is
\[
\tau_t (\Phi(f)) = \Phi(e^{-\mathrm{i} t \mathsf{K}}f), \qquad \tau_t (\Phi^*(f)) = \Phi^*(e^{\mathrm{i} t \mathsf{K}}f).
\]

At least when the real selfadjoint operator $\mathsf{K}$ is strictly positive we can construct the corresponding unique 
equilibrium state $\omega^\beta$ at inverse temperature $\beta$.
If the spectrum of $\mathsf{K}$ reaches $0$ condensation might occur.
The thermal two-point functions with respect the free time evolution of the free theory are thus
\begin{equation}\label{eq:Boperators}
\begin{aligned}
\omega^\beta(\Phi(f)\Phi^*(h)) &= \mathsf{B}_-^\beta(f,h) = \int f(x) \left(\frac{1}{1-e^{-\beta \mathsf{K}}}h\right)(x) d^3x = ( \bar{f}, B_- h)
\\
\omega^\beta(\Phi^*(h)\Phi(f)) &= \mathsf{B}_+^\beta(h,f) = \int f(x) \left(\frac{e^{-\beta \mathsf{K}}}{1-e^{-\beta \mathsf{K}}}h\right)(x) d^3x = ( \bar{f}, B_+ h)
\end{aligned}
\end{equation}
while $\omega^\beta(\Phi(h)\Phi(f))=0$ and $\omega^\beta(\Phi^*(h)\Phi^*(f))=0$.
Notice that the operators $B_\pm$ are selfadjoint and if $\mathsf{K}$ is strictly positive also bounded. Since $\mathsf{K}$ is real, it commutes with complex conjugation, also $B_\pm$ are real and
$(\bar{f},B_\pm h) = (\overline{B_\pm f}, h)$. 
The compatibility of the two-point functions with the commutation relations follows from the observation that $B_--B_+=\mathbb{1}$ and thus also the corresponding two-point distributions $\mathsf{B}_\pm^\beta$ are such that 
\[
\mathsf{B}^\beta_-(x,y)-\mathsf{B}^\beta_+(y,x)=\delta(x,y).
\]
The quasi-free state constructed in this way satisfies the KMS condition \cite{KMS} at inverse temperature $\beta$ with respect to $\tau_t$. 

The limit $\beta\to\infty$ gives back the vacuum $\omega^\infty$.
We also observe that the ordinary description in terms of operators on suitable Hilbert spaces is recovered making use of the Gelfand-Naimark-Segal (GNS) representation. It is however important to notice that the GNS representations of $\mathcal{A}_F$ furnished by the GNS construction corresponding to $\omega^\beta$ and $\omega^\infty$ are not equivalent. Namely, it is not possible to describe one of the two states with a trace class operators on the Hilbert space of the other.

In this paper, we shall consider as free time evolution the one parameter group of $*$-automorphisms of the algebra $\mathcal{A}_F$ generated by 
$\mathsf{K}=K_{\epsilon}= -\frac{\Delta}{2m}+\epsilon$, with $\epsilon>0$, defined on $D(K_{\epsilon}) = H_s^2 \subset L^{2}(\mathbb{R}^3)$. It is immediate to check that $K_{\epsilon}$ is selfadjoint, strictly positive and real. Then, the two-point function of the theory admits a simple representation in the Fourier domain as
\begin{equation}\label{eq:Bpm}
\begin{aligned}
\omega^\beta(\Phi(f)\Phi^*(h)) &= \mathsf{B}_-^\beta(f,h) 
= \frac{1}{(2\pi)^3}\int 
{\hat{{f}}(-p)} \frac{1}{1-e^{-\beta \left(\frac{p^2}{2m}+\epsilon\right)}} \hat{h}(p)d^3p
\\
\omega^\beta(\Phi^*(h)\Phi(f)) &= \mathsf{B}_+^\beta(h,f) 
= \frac{1}{(2\pi)^3}\int 
{\hat{{h}}(-p)} \frac{e^{-\beta \left(\frac{p^2}{2m}+\epsilon\right)}}{1-e^{-\beta \left(\frac{p^2}{2m}+\epsilon\right)}} \hat{f}(p)d^3p.
\end{aligned}
\end{equation}
Finally, we observe that, as distributions, $\mathsf{B}_\pm^\beta$ are symmetric. Namely, it holds
$\mathsf{B}_+^\beta(f,h) =\mathsf{B}_+^\beta(h,f)$ and
$ \mathsf{B}_-^\beta(f,h) =\mathsf{B}_-^\beta(h,f)$.

\subsection{Extension}

To describe the generator of the interacting time evolution in a $*$-algebra, we extend it encompassing objects like the normal ordered Wick squares in the algebra.
Namely, we need to add to the set of generators also normal ordered Wick squares, formally denoted 
\begin{align*}
:|\Phi|^2:(f) 
&= \int \Phi^*(x)\Phi(x)  f(x)d^3x
\end{align*}
where $f$ is smooth and compactly supported.
More precisely, the normal ordered Wick square is obtained as 
\begin{align*}
:\Phi^*\Phi:(A)  = \frac{1}{2} \langle  
\Phi^*\Phi + \Phi\Phi^* - \omega^\infty(\Phi^*\Phi + \Phi\Phi^*) , A \rangle 
&= \int \Phi^*(x)  a(x,y) \Phi(y)d^3x d^3y
\end{align*}
where $a$ is a symmetric distribution which generates a bounded operator $A$ on $\mathfrak{H}$.
For our purposes it is necessary to have operators of the form $ A=e^{-itK_{\epsilon}} f e^{itK_{\epsilon}}$ 
where $f$ is a real valued compactly supported smooth function used as a multiplicative operator. 
These operators give origin to well defined commutation relations. Denoting by $\mathcal{B}_s(\mathfrak{H})$ the bounded self-adjoint operators on $\mathfrak{H}=L^2(\mathbb{R}^3)$, the commutation relations satisfied by these abstract generators are
\begin{equation} \label{eq:extended-commutation}
    \begin{aligned}
    [: \Phi^*\Phi :(A),:\Phi^*\Phi:(B)] &= -\mathrm{i}:\Phi^*\Phi:(\mathrm{i}[A,B]), \qquad \qquad & A,B \in \mathcal{B}_s(\mathfrak{H})\\
    [: \Phi^*\Phi :(A),\Phi(f)] &= \Phi(A f), \qquad \qquad & A \in \mathcal{B}_s(\mathfrak{H}), \quad f \in \mathfrak{H}
    \\
    [: \Phi^*\Phi :(A),\Phi^*(f)] &= \Phi^*(A f), \qquad \qquad & A \in \mathcal{B}_s(\mathfrak{H}), \quad f  \in \mathfrak{H}.
    \end{aligned}
\end{equation}
The $*$-operation acts on 
$:\Phi^*\Phi(A):$ 
in the following way
\begin{equation}\label{eq:extended-*}
    (:\Phi^*\Phi:(A))^* =
    :\Phi^*\Phi:(A^\dagger). 
\end{equation}
We notice that $\{1,\Phi(f), \Phi^*(h), :\Phi^*\Phi:(A)\}$ where $f,h\in\mathfrak{H}$ and $A$ is in $\mathcal{B}_s(\mathfrak{H})$ generate a $*$-algebra. 
However, this algebra is too large, in fact it cannot be tested on $\omega^\beta$, actually
\begin{align*}
\omega^\beta (:\Phi^*\Phi:(A)) &= \Tr (B_+ A).
\end{align*}
and if $B_+A$ is not trace class the latter is not well defined. 
We furthermore observe that for $A = e^{itK_{\epsilon}} fe^{-itK_{\epsilon}}$ and for $K_{\epsilon} = -\Delta/(2m) +\epsilon$, with $\epsilon>0$, it holds that 
\[
\Tr (B_+ A) = \int\frac{e^{-\beta \left(\frac{p^2}{2m}+\epsilon\right)}}{1-e^{-\beta \left(\frac{p^2}{2m}+\epsilon\right)}} \hat{f}(p)d^3p
\]
which is well defined.
Therefore, the set of operators we are working with is
\[
O =\{A\in \mathcal{B}(\mathfrak{H}), A^\dagger=A, \Tr (B_+A)<\infty \},
\]
the set of weighted trace class operators.
We then use as generators $\{1,\Phi(f), \Phi^*(h), :\Phi^*\Phi:(A) \}$ where $f,h\in\mathfrak{H}$ and $A \in O$.
We denote this algebra by $\mathcal{A}$ and the extension of the quasi-free state $\omega^\beta$ defined above in \eqref{eq:Boperators} to this extended algebra is then straightforward.
We describe this procedure in a concrete realisation of $\mathcal{A}$ presented in the next subsection.

\subsection{Thermal representation}

We construct a concrete representation of $\mathcal{A}$ and of the quasi-free state $\omega^{\beta}$ on $\mathcal{A}$ using the functional formalisms introduced by \cite{BDF09} see also \cite{KasiaBook, Dutsch:2019aa}.
The representation
$\mathcal{A}_\beta$ is given as a subset of the set of smooth functionals $\mathcal{F}$ over the off shell field configurations $(\varphi,\varphi^*)\in \mathcal{C} = (C^{\infty}(M)\cap L^2(\mathbb{R}^3))^{\times 2}$. Here we require the field configurations to be in $L^2$ to be compatible with operators used to smear the Wick squares.

The {\bf product} in $\mathcal{F}$ is constructed by means of $B_{\pm}$ given in \eqref{eq:Boperators} and $\mathsf{B}^\beta_{\pm}$ 
in 
\eqref{eq:Bpm}
as follows
\[
F_1\star_\beta F_2 = \mathcal{M} e^{ D_{12}}
 F_1\otimes F_2,
\]
where $\mathcal{M}$ maps the tensor product into the pointwise product, and the operator $D_{12}$ acts on a tensor product of functionals defined as 
\begin{equation}\label{eq:Dij}
D_{ij}  =\int  d^3x d^3y \mathsf{B}^\beta_-(x,y)  \frac{\delta}{\delta \varphi_i(x)}\otimes \frac{\delta}{\delta \varphi_j^*(y)}+\int  d^3x d^3y \mathsf{B}^\beta_+(x,y)  \frac{\delta}{\delta \varphi_i^*(x)}\otimes \frac{\delta}{\delta \varphi_j(y)}  
\end{equation}
where the functional derivatives $\frac{\delta}{\delta \varphi_i}$, $\frac{\delta}{\delta \varphi^*_i}$ act on the $i$-th element of the tensor product.

The
{\bf $*$-operation} is given in terms of the complex conjugation and it acts on $F$ as 
\[
F^*(\varphi,\varphi^*)= \overline{F(\overline{\varphi^*},\overline{\varphi})}.
\]
The map $\alpha_\beta$, which realises the representation $\mathcal{A} \mapsto \mathcal{A}_\beta$, is a $*$-homomorphism of algebras defined on the generators of the algebra in terms of $\mathsf{B}^{\beta}_\pm$ (with symmetric integral kernel) in the following way
\[
\phi(f):=\alpha_{\beta} \Phi(f) = \int f(x) \varphi(x) d^3x, \qquad  
\phi^*(f):=\alpha_{\beta} \Phi^*(f) = 
 \int f(x) \varphi^*(x) d^3x,
 \]
and, denoting by $a(x,y)$ the 
symmetric and real
kernel of $A$,
 \begin{align*}
\alpha_\beta :\Phi^*\Phi(A): 
&= \int a(x,y)\varphi^*(x)\varphi(y) d^3xd^3y + \int a(x,y)B_{+}(x,y) d^3xd^3y
\\
&= (\overline{\varphi^*},A \varphi) + \Tr(B_+ A)\\
&= \alpha_\beta:\Phi^*\Phi(A):_\beta +  \Tr(B_+ A).
 \end{align*}
Where we introduced the notation
\begin{align*}
\alpha_\beta:\Phi^*\Phi(A):_\beta = :\phi^*\phi:_\beta(A):= (\overline{\varphi^*},A \varphi) 
 \end{align*}
This leads to the following definition
\begin{definition}
$\mathcal{A}_\beta$ is the smallest subset of $\mathcal{F}$ which contains 
\[
\{1,\phi(f),\phi^*(f),:\phi^*\phi:_\beta(A)\},
\]
where $f\in \mathfrak{H}$ and $A\in O$,
and which is closed under the $\star_\beta$ product and the $*$-operation.
\end{definition}
The $*$-homomorphism $\alpha_\beta$, defined on the generators, extends to a $*$-isomorphisms of algebras
\[
\alpha_\beta \mathcal{A} \to \mathcal{A}_\beta.
\]
This follows because, as it is possible to check,
\[
\alpha_\beta(AB) = (\alpha_\beta A)\star_\beta (\alpha_\beta B), \qquad (\alpha_\beta A)^* = \alpha_\beta (A^*).
\]

Finally, notice that the expectation values of the elements of $\mathcal{A}$ in the KMS state $\omega_\beta$, correspond to the evaluation on the vanishing field configurations in $\mathcal{A}_\beta$. Actually, 
\[
\omega^\beta (A) = \mathrm{ev}_0{(\alpha_\beta A}) := \left. \alpha_\beta A \right|_{\varphi=0,\varphi^*=0}.
\]

\subsection{Thermal states for the interacting theory}\label{se:thermal-state-interacting-theory}

\subsubsection{Interaction Hamiltoninan and chemical potential renormalisation}

Suppose now to have a perturbation of the free time evolution given by an interaction Hamiltonian in $\mathcal{A}$ formed by two contributions: a quadratic term, that we wish to treat perturbatively and that is proportional to the chemical potential $\mu$, and a second contribution which is quartic in the fields.
In total, it takes the form
\begin{align*}
{H}^r_I(g)
&= \frac{1}{2}\int :|\Phi|^2(x)|\Phi|^2(y): v(x-y)    {g(x)g(y)} d^3 x\; d^3y
- \mu :|\Phi|^2:(g^2)
 \\
&= \frac{1}{2}\int  :|\Phi|^2:(x)  v(x-y) :|\Phi|^2:(y)  {g(x)g(y)}d^3 x\; d^3y
- \left(\frac{v(0)}{2}+\mu\right) :|\Phi|^2:(g^2)
\end{align*}
where $g$ is an adiabatic cutoff which is eventually removed. In the limit $g\to1$, the contribution $-v(0) :|\Phi|^2:(g^2)/2$ results in a renormalisation of the chemical potential $\mu$. Thus, we work with a renormalised chemical potential of the form 
\[
\tilde{\mu}_{\infty} = \mu+\frac{v(0)}{2}. 
\]
Since $v$ is smooth and compactly supported,  up to a completion, $H^r_I(g)$ is an element of $\mathcal{A}$.

We now show how the representation of these elements in $\mathcal{A}_\beta$ induces an additional chemical potential renormalisation. We recall that 
\begin{align*}
\alpha_\beta (H_I^r) 
=&  \frac{1}{2}\int (\alpha_\beta :|\Phi|^2:(x)) \star_\beta (\alpha_\beta :|\Phi|^2:(y)) v(x-y) {g(x)g(y)}  d^3 x\; d^3y
\\
=&  \frac{1}{2}\int :|\phi|^2:_\beta(x) \star_\beta  :|\phi|^2:_\beta(y) v(x-y) {g(x)g(y)}  d^3 x\; d^3y
\\&+
\int :|\phi|^2:_\beta(x)   q_\beta(y) v(x-y) {g(x)g(y)} d^3 x\; d^3y
+
\frac{1}{2}\int q_\beta(x) q_\beta(y) v(x-y) {g(x)g(y)}  d^3 x\; d^3y
\end{align*}
where $q_\beta(x) = B_{+}(x,x)$
and thus it depends on the choice of the considered linear theory.
If the dynamics of the linear theory is determined by $K_{\epsilon} = -\frac{\Delta}{2m} +\epsilon$, for strictly positive $\epsilon$, it is equal to
\begin{equation}\label{eq:qbeta}
q_\beta = \frac{1}{(2\pi)^3} \int 
\frac{e^{-\beta \left(\frac{p^2}{2m}+\epsilon\right)}}{1-e^{-\beta \left(\frac{p^2}{2m}+\epsilon\right)}} d^3p = \left(\frac{m}{2 \pi \beta}\right)^{\frac{3}{2}} \mathrm{Li}_{\frac{3}{2}}(e^{-\beta \epsilon}).
\end{equation}
We observe that this second contribution can be understood, in the limit $g \to 1$, as further, state dependent, chemical potential renormalisation which appears when we represent the interaction Hamiltonian in $\mathcal{A}_\beta$
\begin{equation}\label{eq:ren-chem-pot}
\tilde{\mu}_\beta = \mu + \frac{v(0)}{2} - q_\beta\hat{v}(0).
\end{equation}
Moreover, $\tilde{\mu}_\beta-\tilde{\mu}_\infty$ vanishes in the limit $\beta\to\infty$. 

The last contribution in $\alpha_\beta(H^r_I)$ is a constant that can be discarded. Therefore, up to this temperature dependent renormalisation of the chemical potential, we use 
\begin{equation}\label{eq:HIbeta}
H_I = \frac{1}{2} \int :|\Phi|^2:_\beta(x) v(x-y) :|\Phi|^2:_\beta(y) g(y) g(x)  d^3 x\; d^3y -\tilde{\mu}_\beta :|\Phi|^2:(g^2)
\end{equation}
as interaction Hamiltonian of our system. 
Notice that if $\tilde{\mu}_\beta$ is negative, by the principle of perturbative agreement \cite{HW05, DragoHackPinamonti}, we may shift the corresponding contribution to the linear theory removing it from the interaction Hamiltonian. 

However, in order to have a non trivial condensate,
we shall require $\tilde{\mu}_{\beta} > 0$. In this case, we shall use a non trivial background value of the field to be able to shift the part proportional to the chemical potential in the free theory keeping positivity of its spectrum.
As we shall see below, the requirement of a positive renormalised chemical potential furnishes a first bound on  the critical temperature above of which no condensate can occur.

\subsubsection{Araki state for interacting theory}
Up to the temperature dependent renormalisation of the chemical potential \eqref{eq:qbeta}, we may thus consider the interaction Hamiltonian $H_I$ given in \eqref{eq:HIbeta}. This is an an element of $\mathcal{A}$ (or of $\mathcal{A}_\beta$ if we work in the thermal representation introduced above),
hence we may use it to perturb the time evolution at the algebraic level finding, at least in the sense of formal power series, the one parameter group of $*$-automorphisms of $\mathcal{A}$ describing the interacting time evolution $\tau_t^I$. We have that for every element of $\mathcal{A}$
\[
\tau^I_{t} (A) = U(t) \tau_t(A) U(t)^*
\]
where $U(t)$ is the cocycle which intertwines the free and interacting time evolution and it is given in terms of the interaction Hamiltonian formally as
\[
U(t) = e^{ i t (H_0+H_I)}e^{ -i t H_0}
\]
less formally it holds that 
\[
U(0)=1,\qquad U(t+s)=U(t) \tau_t( U(s)), \qquad  -\mathrm{i}\frac{d}{d t} U(t)=U(t) \tau_t({H}_I)
\]
hence
\[
-\mathrm{i}\frac{d}{dt} \tau_t^I(A) =  \tau^I_t([H_I,A]), \qquad A\in \mathcal{A}.
\]
Using recursively this relation we can give $H_I$ and $U(t)$ as a formal power series with coefficients in $\mathcal{A}$. 

With $U(t)$ at disposal, after analysing its analytic continuation, we can construct an equilibrium state for the interacting theory following the theory of Araki \cite{Araki} (see also \cite{FredenhagenLindnerKMS_2014} for $*$-algebras). The resulting state is 
given as
\[
\omega^{\beta I}(A) 
= \frac{\Tr (e^{-\beta H} A) }{\Tr (e^{-\beta H} )} = 
\frac{\omega^\beta(e^{-\beta H}e^{\beta H_0} A)}{\omega^\beta(e^{-\beta H}e^{\beta H_0})} 
=
\frac{\omega^\beta(U(\mathrm{i}\beta) A)}{\omega^\beta(U(\mathrm{i}\beta))}. 
\]
More rigorously, 
if $U(\mathrm{i}\beta)$ is known as a formal power series with  coefficients in $\mathcal{A}$, 
the formula in the last equality above acquires also meaning at least as formal power series with coefficients under control.
Actually, the following expansion holds
\begin{equation}\label{eq:thermal-correlations}
\omega^{\beta I} (A) = \sum_{n\geq 0} (-1)^n \int_{\beta S_n} \omega^{\beta}_C (A \otimes \tau_{i u_1} H_I \otimes \dots \otimes \tau_{i u_n} H_I)  du^n 
\end{equation}
where $\beta S_n$ is the $n$ dimensional simplex of edge $\beta$, namely 
\[
\beta S_n = \{t\in \mathbb{R}^n | 0< t_1 <\dots <t_n < \beta\}.
\] 
Furthermore $\omega^\beta_C(O_1\otimes \dots \otimes O_n)$ denotes the {\bf connected correlation functions} of $\omega^\beta$, they can be computed in the algebra of observables following e.g. Proposition 2.5 in   \cite{GalandaPinamonti}.
In particular, representing  $O_j \in \mathcal{A}_\beta$ by means of $\alpha_\beta$, the connected correlation function are computed as
\[
\omega^\beta_C (O_1\otimes\dots \otimes O_n) = \sum_{G\in \mathcal{G}^{c}_n}\left. \mathcal{M} \prod_{\{i,j\}\in E(G) \atop i<j}(e^{D_{ij}}-1)
\left(\alpha_\beta O_1\otimes \dots \otimes  \alpha_\beta O_n \right) \right|_{\phi=0,\phi^*=0},  \quad O_i\in \mathcal{A}, \quad \alpha_\beta O_i \in \mathcal{A}_\beta
\]
where $\mathcal{G}^c_n\subset \mathcal{G}_n$ is the set of simple connected graphs among $n$ vertices and the $i$-th and $j$-th functional derivative in $D_{ij}$ act respectively on $O_i$ and $O_j$ and $D_{ij}$ is given in \eqref{eq:Dij}. The extension to the imaginary time of $F(t) = \omega^\beta_C (\tau_{t_1} O_1\otimes\dots\otimes \tau_{t_n} O_n)$, for $\Im(t)  \in \beta S_n$, is performed extending the corresponding correlation functions.

\subsubsection{Parabolic Time ordering}

The form of the expectation values given in \eqref{eq:thermal-correlations} suggest to use an auxiliary parabolic theory to obtain the correlation function or the generating functions of the equilibrium theory.
To obtain this new representation we observe that 
\eqref{eq:thermal-correlations}
simplifies to the evaluation of a suitable time ordered exponential. 
To be more precise, we introduce a {\bf parabolic (anti) time ordering} operator for $u_i\in [0,\beta)$
\[
T(\tau_{\mathrm{i}u_1} O_1\otimes \dots \otimes \tau_{\mathrm{i}u_n} O_n) = \tau_{\mathrm{i}u_{j_1}} O_{j_1}\otimes \dots  \otimes \tau_{\mathrm{i}u_{j_n}} O_n
\]
where $j:\{1,\dots, n\}\to\{1,\dots, n\} $ is the permutation constructed in such a way that
\[
u_{j_1} <  u_{j_2} < \dots < u_{j_n}.  
\] 
The extension to $u_i \in \mathbb{R}$ is done by periodicity. Namely, for every $j$ 
\[
T(\tau_{\mathrm{i}u_1} O_1\otimes \dots \otimes \tau_{\mathrm{i}u_j}O_j \otimes \dots \otimes  \tau_{\mathrm{i}u_n} O_n) = T(\tau_{\mathrm{i}u_1} O_1\otimes \dots \otimes \tau_{\mathrm{i}(u_j+\beta)}O_j \otimes \dots \otimes  \tau_{\mathrm{i}u_n} O_n).
\]
With this map at disposal we have that the expectation values in the interacting thermal state given in \eqref{eq:thermal-correlations} takes the form 
\begin{align*}
\omega^{\beta I}(A) 
&= 
 \sum_{n\geq 0} \frac{(-1)^n}{n!} \int_{ (0,\beta)^n} \omega^{\beta}_C (T(A\otimes \tau_{\mathrm{i} u_1} H_I \otimes  \dots \otimes \tau_{\mathrm{i} u_n} H_I))  du_1 \cdots du_n  
 \\
&= 
 \sum_{n\geq 0} \frac{(-1)^n}{n!} \frac{1}{\beta}\int du_0\int_{ (0,\beta)^n} \omega^{\beta}_C (T(\tau_{\mathrm{i}u_0}A\otimes \tau_{\mathrm{i} u_1} H_I \otimes  \dots \otimes \tau_{\mathrm{i} u_n} H_I))  du_1 \cdots du_n 
\end{align*}
where the last equality holds thanks to the KMS condition, and in particular thanks to the invariance under translations and the periodicity.
Therefore, introducing the notations
\[
\tilde{H}_I(u) = \tau_{\mathrm{i} u} H_I(u)
\]
where $H_I$ could depend locally on $u$, and the notation
\[
\tilde{A} = \frac{1}{\beta} \int_0^\beta \tau_{\mathrm{i} u} A du,
\]
where now $A$ does not explicitly depend on $u$, equation \eqref{eq:thermal-correlations} can be written in the following compact form
\begin{equation}\label{eq:thermal-state:1}
\omega^{\beta I}(A) = \frac{\omega^{\beta}(T( \tilde{A} \exp_T (-\int_0^\beta \tilde{H}_I(u) du) )}{
\omega^{\beta}(T( \exp_T (-\int_0^\beta \tilde{H}_I(u) du) )}.
\end{equation}

\subsection{Formulation as a parabolic problem}\label{se:formulation-as-parabolic-problem}

In this section we analyse the abelian $*$-algebra of field observables which realises the parabolic anti-time ordering introduced above. This abstract $*$-algebra is denoted by $\mathcal{P}$. 
We shall however work again directly with a realisation of this algebra, denoted $\mathcal{P}_\beta$, given in terms of functionals over the independent {\bf field configurations} $\psi$ and $\psi^*$ in $\mathcal{C}:=C^\infty (\mathbb{R}^3\times[0,\beta])$ 
extended by periodicity to the whole $\mathbb{R}^4$.
The generators of the field algebra we are considering are denoted by
\begin{align*}
\Psi(f) &= \int_0^\beta \int_{\mathbb{R}^3} \psi(x,u) f(x,u) d^3x du,
\\
\Psi^*(f) &= \int_0^\beta \int_{\mathbb{R}^3} \psi^*(x,u) f(x,u) d^3x du
\\
|\Psi|^2(f) &= \int_0^\beta \int_{\mathbb{R}^3} \psi^*(x,u)\psi(x,u) f(x,u) d^3x du
\end{align*}
where $f\in L^2(\mathbb{R}^3\times [0,\beta], d^3x du)$
and where $\psi$ and $\psi^*$ are the independent field configurations. 

The $*$-operation is such that 
\begin{equation}\label{eq:star-parabolic}
\Psi(f)^* = \Psi^*(\overline{f}), \qquad  
\Psi^*(f)^* = \Psi(\overline{f}), 
\qquad  
|\Psi^2|(f)^* = |\Psi^2|(\overline{f}).
\end{equation}
The product is given by means of the map over sufficiently regular functionals in $\mathcal{F}$
over the field configurations $\mathcal{C}$ which contains $\mathcal{P}_\beta$. In particular, it is given in terms of the thermal propagator $\mathsf{G}_\beta$
\begin{equation}\label{eq:gamma}
e^{\gamma}:\mathcal{F}\to\mathcal{F}, \qquad \gamma := \int  d^3x du d^3x' du' \mathsf{G}_\beta(x,u;x',u')
\frac{\delta^2}{\delta \psi(x,u)\delta \psi^*(x',u')}
\end{equation}
and the product is such that 
\begin{equation}\label{eq:product-parabolic}
A \cdot_T  B = 
e^{\gamma}((e^{-\gamma}A)(e^{-\gamma}B))
\end{equation}
where the product at the right hand side is the pointwise product of functionals. 

\begin{definition}
The abelian $*$-algebra of the parabolic theory $\mathcal{P}_\beta$ is the smallest set of functionals over $\mathcal{C}$ which 
contains
\[
\{1,\Psi(f_1),\Psi^*(f_2), |\Psi|^2(f_3)\}, \qquad f_i\in C^\infty_c(\mathbb{R}^3\times[0,\beta])
\]
and which is  closed under the product introduced in \eqref{eq:product-parabolic} and the $*$-operation given in \eqref{eq:star-parabolic}.
\end{definition}

The following proposition connects the evaluation of  observables in the interacting equilibrium state with those of the auxiliary parabolic theory.
The rules we shall employ to make this connection are spelled out as the following  formal relation on the  generators of $\mathcal{A}_\beta$ and of $\mathcal{P}_\beta$ \begin{equation}\label{eq:connection-rules}
\Psi(x,u) = \tau_{\mathrm{i}u}\Phi(x) \qquad \text{and}\qquad \Psi^{*}(x,u) = \tau_{\mathrm{i}u}\Phi^*(x)
\qquad \text{and}\qquad |\Psi|^2:= \tau_{\mathrm{i}u} :|\Phi|^2:_\beta.
\end{equation}
With this rules at disposal, we have the following 
\begin{proposition}\label{pr:extension}
Let $O_0,O_1, \dots, O_n \in \mathcal{A}$, the corresponding elements of $\alpha_\beta O_0, \alpha_\beta O_1, \ldots, \alpha_\beta O_n \in \mathcal{A}_\beta$ and their evaluation on the state
\[
F = \int_{[0,\beta]^n}\omega^\beta(\mathcal{M}T( \tau_{\mathrm{i}u_1} O_1 \otimes \dots \otimes \tau_{\mathrm{i}u_n} O_n)) du_1 \cdots du_n.
\]
Consider the elements of $\mathcal{P}_\beta$ 
\[
\tilde{O}_j = \int_0^\beta O_j(u_j) du_j,   
\]
where $O_j(u_j)$ is obtained from $\tau_{iu_j}\alpha_\beta O_j$
 applying the rules \eqref{eq:connection-rules}.
It holds that 
\[
F= \left. e^{\gamma} (\prod_{j=1}^n (e^{-\gamma}\tilde{O}_j))
\right|_{\psi=\psi^*=0}.
\]
Hence, for $O_0 \in \mathcal{A}$, it holds that \eqref{eq:thermal-state:1} takes the form 
\begin{equation}\label{eq:thermal-state:2}
\omega^{\beta I}(O_0) = 
\left.
\frac{1}{\beta}\frac{e^{\gamma} \left( (e^{-\gamma }\tilde{O}_0)  \exp \left( - e^{-\gamma}  \tilde{H}_I
 \right)\right)}
{
e^{\gamma} \exp \left(- e^{-\gamma} \tilde{H}_I\right) 
}
\right|_{\Psi=\Psi^*=0}
\end{equation}
where similarly to $\tilde{O}_0$, 
\[
\tilde{H}_I = \int_0^\beta H_I(u) du
\]
and $\tilde{H}_I(u)$ is obtained from $\tau_{iu}\alpha_\beta H_I$ by means of the rules  \eqref{eq:connection-rules}.
\end{proposition}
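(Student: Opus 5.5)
The plan is a Wick/graph-expansion comparison. Both sides of the claimed identity will be expanded into sums over graphs whose lines carry two-point kernels, and I will check that these sums coincide term by term.

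\emph{Step 1: the left-hand side $F$.} Expand $\omega^\beta(\mathcal{M}T(\cdots))$ using the $\star_\beta$ product. For fixed times with $u_{j_1}<\dots<u_{j_n}$ the time-ordered product is an ordinary $\star_\beta$ product of $n$ factors. Its evaluation at $\varphi=\varphi^*=0$ is
\[
\mathcal{M}\prod_{i<j} e^{D_{ij}}\,(\alpha_\beta O_{j_1}\otimes\dots\otimes\alpha_\beta O_{j_n})\big|_0 ,
\]
where the $D_{ij}$ are now built from the imaginary-time-translated kernels. Since $\tau_{\mathrm{i}u}$ acts on the generators through $e^{\pm uK_\epsilon}$, each contraction between a $\varphi$ at $(x,u)$ and a $\varphi^*$ at $(x',u')$ produces a kernel depending only on $u-u'$. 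It equals $B_-$ or $B_+$ smeared with $e^{-(u-u')K_\epsilon}$, depending on the sign of $u-u'$.

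\emph{Step 2: identify the kernel with $\mathsf{G}_\beta$.} The combined kernel is $\theta(u-u')e^{-(u-u')K}B_-+\theta(u'-u)e^{-(u-u')K}B_+$. The identities $B_--B_+=\mathbb{1}$ and $e^{-\beta K}B_-=B_+$ show that it is $\beta$-periodic and solves $(-\partial_u+K)\mathsf{G}=\delta$ with the jump condition. Hence it is the periodic fundamental solution $\mathsf{G}_\beta$ of \eqref{eq:gamma}, up to orientation and sign conventions that I will fix at this point. This makes the time-ordered expression independent of the ordering permutation, so summing over orderings and integrating over $[0,\beta]^n$ gives a symmetric expression $\mathcal{M}\prod_{i<j}e^{\gamma_{ij}}(O_1(u_1)\otimes\cdots)\big|_0$. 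Here $\gamma_{ij}$ denotes the $\mathsf{G}_\beta$-contraction between factors $i$ and $j$ in both directions.

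\emph{Step 3: the right-hand side.} Use the standard identity
\[
e^{\gamma}\prod_j F_j=\mathcal{M}\,e^{\sum_{i<j}\gamma_{ij}+\sum_i\gamma_{ii}}\,F_1\otimes\dots\otimes F_n ,
\]
valid because $\gamma$ is a second-order constant-coefficient operator (Leibniz rule). Applying it with $F_j=e^{-\gamma}\tilde O_j$ cancels the self-contractions $\gamma_{ii}$ and leaves exactly $\mathcal{M}\prod_{i<j}e^{\gamma_{ij}}\tilde O_1\otimes\dots\otimes\tilde O_n\big|_0$. This matches Step 2, with no tadpoles inside a single factor.

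\emph{Main obstacle: the Wick squares.} The step I expect to be delicate is consistency for the Wick squares. On the $\mathcal{A}_\beta$ side they appear as $:\phi^*\phi:_\beta$, which carries no self-contraction: the constant $\Tr(B_+A)$ has been split off in $\alpha_\beta$. On the $\mathcal{P}_\beta$ side the rule \eqref{eq:connection-rules} maps them to $|\Psi|^2$, whose self-contraction is removed precisely by $e^{-\gamma}$. I must also check that the equal-time value $\mathsf{G}_\beta(x,u;x,u)$ entering $e^{-\gamma}|\Psi|^2$ matches $B_+(x,x)$. This fixes the convention $\theta(0)=0$ in the definition of $\mathsf{G}_\beta$, and with it the formula for $F$ is proved.

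\emph{The formula for $\omega^{\beta I}$.} Expand \eqref{eq:thermal-state:1} as a power series. Each term is of the form $F$ with $O_0$ averaged over $u_0$, which accounts for the factor $1/\beta$. The $1/n!$ from the exponential reassembles into $\exp(-e^{-\gamma}\tilde H_I)$, since $e^{-\gamma}$ is applied factorwise. The step from connected to full correlation functions is taken care of by the normalising denominator, and the identity \eqref{eq:thermal-state:2} holds as an equality of formal power series.
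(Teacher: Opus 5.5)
Your proposal is correct and follows essentially the same route as the paper's proof. You expand the time-ordered $\star_\beta$ product through $e^{D_{ij}}$, recognise the kernels obtained from $B_\pm$ after imaginary-time translation as the two branches of the periodic propagator $\mathsf{G}_\beta$ (which makes the expression independent of the ordering), and use $e^{\gamma}\mathcal{M}=\mathcal{M}\,e^{\sum_i\gamma_{ii}+\sum_{i<j}\gamma_{ij}}$ so that the self-contractions cancel against the factors $e^{-\gamma}$.

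Your equal-time check on $\mathsf{G}_\beta(x,u;x,u)$ is superfluous, and no convention at $u=u'$ is needed. The self-contraction removed by $e^{-\gamma}$ is restored exactly by $e^{\gamma}$, and the constant $\Tr(B_+A)$ is simply carried along by $\alpha_\beta$ and the rules \eqref{eq:connection-rules}.
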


\begin{proof}
We start from
\begin{align*}
F = \int_{ (0,\beta)^n} \omega^{\beta} (\mathcal{M}T( \tau_{\mathrm{i} u_1} O_1 \otimes  \dots \otimes \tau_{\mathrm{i} u_n} O_n))  du_1 \cdots du_n  
\end{align*}
for $O_i\in \mathcal{A}$. Recall that a realisation of $\mathcal{A}$ is given by the map $\alpha_\beta$ and that the time ordering acts
\[
T(\tau_{\mathrm{i}u_1} O_1\otimes \dots \otimes \tau_{\mathrm{i}u_n} O_n) = \tau_{\mathrm{i}u_{j_1}} O_{j_1}\otimes \dots  \otimes \tau_{\mathrm{i}u_{j_n}} O_n
\]
where $j$ is the permutation of $\{1,\dots, n\}$ which realises the time ordering
\[
u_{j_1} < \dots <u_{j_n}.
\]
Hence,   
\[
\omega^\beta (T\mathcal{M}\tau_{\mathrm{i}u_1}O_1\otimes\dots \otimes \tau_{\mathrm{i}u_n}O_n) = 
\left. \mathcal{M} \prod_{i<j}e^{D_{ij}}
\tau_{\mathrm{i}u_{j_1}}\alpha_\beta O_{j_1} \otimes \dots \otimes  \tau_{\mathrm{i}u_{j_n}}\alpha_\beta O_{j_n} \right|_{\phi=0,\phi^*=0},
\]
where $D_{ij} $ is given in \eqref{eq:Dij} in terms of the operators $B_{\pm}$ with integral kernels $\mathsf{B}^\beta_\pm$ given in \eqref{eq:Boperators}.
We recall that 
\[
\tau_{\mathrm{i}u} \Phi(f) = \Phi( e^{u K} f) , \qquad \tau_{\mathrm{i}u} \Phi^*(f) = \Phi^*(e^{-uK}f) 
\]
and these operations are well defined when the fields are evaluated on $\omega_\beta$. Indeed, when $\mathrm{ev}_0( \exp\{{D_{ij}}\} (\cdot) ) $
is considered, the operators $e^{uK} $ and $e^{-uK}$, inserted by $\tau_{iu}$, are composed with $\mathsf{B}_{\pm}^\beta$. Then, thanks to the time ordering these compositions give origin to well defined operators. See \eqref{eq:thermal-1} for further details.
We now observe 
that for $u_i<u_j$
the action of $D_{ij}$ on $\tau_{iu_i}(\alpha_{\beta}O_i)$, $\tau_{iu_j}(\alpha_{\beta}O_j)$ is equivalent to the action of 
\begin{align*}
\tilde{D}_{ij}  =&\int  d^3x d^3x'dudu' \mathsf{B}^\beta_-(x,x')e^{(u-u')K}  \frac{\delta}{\delta \psi_i(x,u)}\otimes \frac{\delta}{\delta \psi_j^*(x',u')}
\\
&+\int  d^3x d^3x'dudu' \mathsf{B}^\beta_+(x,x') e^{-(u-u')K} \frac{\delta}{\delta \psi_i^*(x,u)}\otimes \frac{\delta}{\delta \psi_j(x',u')}  
\end{align*}
when the identifications
$\tau_{iu} \varphi(x) = \psi(x,u)$ and
$\tau_{iu} \varphi^*(x) = \psi^*(x,u)$
are taken into account. 
From \eqref{eq:operators-thermals-relations} and \eqref{eq:kernel-thermal} this is equivalent to 
\begin{align*}
\tilde{D}_{ij}  =&\int  d^3x d^3x'dudu' \mathsf{G}_\beta(x,u,x',u')  \frac{\delta}{\delta \psi_i(x,u)}\otimes \frac{\delta}{\delta \psi_j^*(x',u')}
\\
&+\int  d^3x d^3x'dudu' \mathsf{G}^\dagger_\beta(x,u,x',u')  \frac{\delta}{\delta \psi_i^*(x,u)}\otimes \frac{\delta}{\delta \psi_j(x',u')}  
\end{align*}
and furthermore, in view of the definition of $\gamma$ given in \eqref{eq:gamma} and of the property $\mathsf{G}_\beta^\dagger(u)=\mathsf{G}_\beta(-u)$ explained in Appendix \ref{se:thermal-propagators-appendix}, we have for $u_1<u_2$ that
\[
\mathcal{M} (e^{D_{12}}\tau_{\mathrm{i}u_1}O_1\otimes \tau_{\mathrm{i}u_1} O_2) = 
e^{\gamma} (e^{-\gamma}(\tau_{\mathrm{i}u_1}O_1)e^{-\gamma}(\tau_{\mathrm{i}u_2}O_2)).
\]
We observe that the time ordering map $T$ is symmetric and so is the product $\cdot_T$ given in terms of $e^{\gamma}$. Hence
\[
F= \int_{ (0,\beta)^n}
e^{\gamma}
(e^{-\gamma}\tilde{O}_1(u_1))
\dots
(e^{-\gamma}\tilde{O}_n(u_n))
du_1 \cdots du_n  
\]
where $\tilde{O}_j(u_j) = \alpha_\beta \tau_{iu_j}O$.
The last statement of the proposition follows applying the obtained result to $\omega^{\beta I}(O_0)$ in the form given in \eqref{eq:thermal-state:1}.
\end{proof}

Discarding the term proportional to the renormalised chemical potential (setting for the moment $\tilde\mu_\beta=0$),
we have that $H_I$ given in \eqref{eq:HIbeta}
is constructed in terms of a non local product of Wick squares. 
Up to a completion it is thus an element of $\mathcal{A}$ and 
such remains in $\mathcal{A}_\beta$ once represented
\[
\alpha_\beta {H}_I =  \frac{1}{2}\int d^3x d^3x' {g(x)g(x')}  v(x-y) (\alpha_\beta:\!\!|\Phi|^2\!\!:_\beta(x))\star_\beta (\alpha_\beta:\!\!|\Phi|^2\!\!:_\beta(x')).
\]
We see that this representation involves a suitable product of elements of $\mathcal{P}_\beta$, even if from the point of view of 
$\mathcal{P}_\beta$ they appear to have the same parabolic time. Let us argue why it is consistent with the result of the above proposition. A direct application of the rules \ref{eq:connection-rules} to $\tau_{iu} \alpha_\beta (H_I) = \tilde{H}_I$  
gives 
\begin{align*}
\tilde{H}_I &=  \frac{1}{2}
\int 
|\Psi|^2(x,u)
v(x-x')
  |\Psi|^2(x',u') \delta(u-u')
  d\mu_g
\\&+ 
\frac{1}{2}
\int 
\Psi^*(x,u)
\mathsf{B_-^\beta}(x,x')
v(x-x')
  \Psi(x',u') \delta(u-u')
  d\mu_g
\\&+ 
\frac{1}{2}
\int 
\Psi(x,u)
\mathsf{B_+^\beta}(x,x')
v(x-x')
  \Psi^*(x',u') \delta(u-u')
  d\mu_g
\\&+ 
\frac{1}{2}
\int 
\mathsf{B_+^\beta}(x,x')
\mathsf{B_-^\beta}(x,x')
v(x-x')
\delta(u-u')d\mu_g
\end{align*}
where $\int  f d\mu_g =
\int_0^\beta du 
\int_0^\beta du'
\int_{\mathbb{R}^3}d^3x
\int_{\mathbb{R}^3}d^3x'
{g(x)g(x')} 
  f(x,u,x',u')
$
and the latter can be rewritten as
\begin{align*}
\tilde{H}_I 
&=  \lim_{\epsilon\to 0}
\frac{1}{2} e^\gamma
\int 
e^{-\gamma}|\Psi|^2(x,u)
v(x-x')
  e^{-\gamma}|\Psi|^2(x',u') \delta(u-u' +\epsilon) d\mu_g.
\end{align*}
We observe that due to the symmetries of the thermal propagator, the limit of vanishing $\epsilon$ does not depend on the direction and we can thus take it implicitly leading to
\begin{equation}\label{eq:tildeHI}
\tilde{H}_I= 
\frac{1}{2} 
\int
|\Psi|^2(x,u)
\cdot_T
|\Psi|^2(x',u') v(x-x')
\delta(u-u') 
d\mu_g
\end{equation}
Finally we recall that the limit where $g$ tends to $1$ is eventually taken.

Following the framework introduced in this section,
the correlation function of the original theory are then obtained by means of suitable limit procedure.
Namely, by considering the expectation value of products of fields whose support in the parabolic time shrinks to $0$, with the caveat that the direction in which this limit is taken matters.
We also observe that the equation of motion satisfied by the interacting theory for $\Psi$ is a non linear heat equation for a $\phi^4_3$ regularised interaction which is thus close to the analysis performed in the theory of stochastic quantization in \cite{Hairer, Gubinelli1,Gubinelli2}, see also 
\cite{DuchGubinelliRinaldi}.
Although, the difference with respect to those works, lies in the fact that, in order to get a state at finite temperature (KMS state), we require periodicity in the parabolic time.

\section{Hubbard-Stratonovich transformation}\label{se:hubbard-stratonovich}

The framework discussed in the previous section and in particular, the construction of the thermal equilibrium state for an interacting theory presented in Section \ref{se:thermal-state-interacting-theory} and its formulation as a parabolic problem as in Section \ref{se:formulation-as-parabolic-problem} for a generic $H_I$ can be utilized also when the quantum field has a non trivial background.
In this case, after decomposing the fields as 
\[
\Phi= \phi_0+\Psi, \qquad \Phi^*=\phi_0+\Psi^*
\]
where $\phi_0$ is the real valued classical background we shall specify below and $\Psi$ the fluctuations, 
we shall apply the results of the previous sections directly to $\Psi$.
In order to do it, we decomposed the Hamiltonian density of the system in contributions homogeneous in powers of $\Psi$ and $\Psi^*$. We work directly in
$\mathcal{P}_\beta$ where the product is constructed with respect to part of the quadratic contribution of the Hamiltonian density which takes the form
\[
{\tilde{\mathcal{H}}}_{20} = \Psi^* ({K}-\tilde{\mu}) \Psi + |\Psi|^2v*(\phi_0^2)
\]
where, in order to simplify the notation, we denoted by $\tilde{\mu} \equiv \tilde{\mu}_{\beta}$ the renormalised chemical potential close to \eqref{eq:ren-chem-pot}, 
\[
{K} = -\frac{\Delta}{2m}
\]
and we have the constraint on $\phi_0$
\[
v*(\phi_0^2)-\tilde{\mu}\geq 0.
\]
The decomposed Hamiltonian is
\[
\tilde{\mathcal{H}}=\tilde{\mathcal{H}}_0+\tilde{\mathcal{H}}_1+\tilde{\mathcal{H}}_2+\tilde{\mathcal{H}}_3+\tilde{\mathcal{H}}_4
\]
where
\begin{align*}
\tilde{\mathcal{H}}_0&= \phi_0({K}-\tilde\mu)\phi_0 + \frac{1}{2}\phi_0^2 v *(\phi_0^2 )    
\\
\tilde{\mathcal{H}}_1&= \phi_0({K}-\tilde\mu)(\Psi+\Psi^*) + \phi_0^2 v *(\phi_0(\Psi+\Psi^*))
\\
\tilde{\mathcal{H}}_2&= \Psi^* ({K}-\tilde\mu) \Psi + |\Psi|^2 v*(\phi_0^2) + \frac{1}{2}  \phi_0 (\Psi+\Psi^*) v* (\phi_0(\Psi+\Psi^*))
\\
\tilde{\mathcal{H}}_3&= |\Psi|^2 v* ((\Psi+\Psi^*)\phi_0)
\\
\tilde{\mathcal{H}}_4&= \frac{1}{2} |\Psi|^2 v* |\Psi|^2.
\end{align*}
The convolution involves only the equal parabolic time fields and all the components of these densities are at equal parabolic time. Furthermore, the product of the various fields is the time ordered product mentioned above. Namely,
\[
\tilde{\mu} = \mu + \frac{v(0)}{2}-q_\beta\hat{v}(0)
\]
where the $q_\beta$ vanishes in the limit $\beta\to\infty$.

We observe that the easiest way to treat this extra chemical potential renormalisation is to take it into account from the beginning, affecting in this way $\tilde{\mathcal{H}}_1$ and $\mathcal{\tilde{H}}_0$. Actually, due to background independence, taking the decomposition of the fields $\Phi$ to $\phi_0+\Psi$ before or after considering any normal ordering which
renormalises $\mu$ 
is equivalent for the final theory. 
In other words, by direct computation we can confirm that taking into account this renormalisation for $\Psi$ or 
$\Phi$ the form of the obtained $\tilde{\mathcal{H}}_1$ is equivalent. The form of $\tilde{\mathcal{H}}_0$ might change by a constant. However this constant is immaterial in the construction of the partition function because it is removed by any normalization. 
What changes is the particular form of this extra renormalisation (the value of $q_\beta$) which depends on the reference state which is used to construct the product in $\mathcal{P}_\beta$ and thus it depends also on the Hamiltonian density of the free theory. 

The background $\phi_0$ is chosen to be constant in the parabolic time and to make $\tilde{\mathcal{H}}_1$ equal to $0$. Namely, it is determined as solution of the equation 
\[
({K} -\tilde{\mu})\phi_0(x) + (v* (\phi_0^2))(x)\phi_0(x)=0.
\]
Incidentally, this coincides with a stationary configuration
of the background energy functional which is also constant in the parabolic time 
\[
\tilde{\mathcal{H}}_0=\phi_0({K} -\tilde{\mu})\phi_0 + \frac{1}{2} \phi_0^2 v* (\phi_0^2).
\]
For constant $\phi_0$, we thus have that
\begin{equation}\label{eq:phi0}
\phi_0^2 = \frac{\tilde{\mu}}{\hat{v}(0)}
\end{equation}
which is meaningful only if $\tilde\mu > 0$.
We furthermore observe that with this choice
\[
{\tilde{\mathcal{H}}}_{20} = \Psi^* K \Psi + |\Psi|^2v*(\phi_0^2)
\]
and the spectrum of the corresponding operator $K = -(2m)^{-1}\Delta$ is positive. Later, to make it strictly positive we shall add a small $\epsilon>0$ (infrared) regularization $K\to K_\epsilon=K+\epsilon$. 
This regularization will be eventually removed.

{\bf Remark:} As a first immediate consequence, a condition on the critical temperature $T_c$ above which no condensation may occur is obtained imposing finite density of the condensate ($\phi_0^2>0$). Actually, recalling the form of $q_\beta$ given in \eqref{eq:qbeta}, and utilizing the regularized free theory, it holds that $\tilde{\mu}$ is positive only if
\begin{equation*}
    \beta^{-3/2}\mathrm{Li}_{\frac{3}{2}}(e^{-\beta \epsilon}) < \left(\frac{2 \pi}{m}\right)^{\frac{3}{2}} \frac{\mu + \frac{v(0)}{2}}{\hat{v}(0)}.
\end{equation*}
This provides just a crude estimate of $T_c=1/\beta_c$. Further corrections arise when the background value of the field is renormalised to a solution of the Gross-Pitaevskii equation and consequently also the quadratic theory gets renormalised.

\subsection{Quadratic theory}
We use the Hubbard-Stratonovich transformation of the Hamiltonian densities to reduce its order with the price of in introducing an external field $A(u)$. Namely, let
\[
\tilde{\tilde{\mathcal{H}}} = 
\tilde{\tilde{\mathcal{H}}}_0 +
\tilde{\tilde{\mathcal{H}}}_1
+\tilde{\tilde{\mathcal{H}}}_2
\]
where 
\begin{align*}
\tilde{\tilde{\mathcal{H}}}_0&=
\tilde{\mathcal{H}}_0
\\
\tilde{\tilde{\mathcal{H}}}_1&= \tilde{\mathcal{H}}_1+
g \phi_0 A(\Psi+\Psi^*) = g  A\phi_0(\Psi+\Psi^*)
\\
\tilde{\tilde{\mathcal{H}}}_2&= \Psi^* ({K}-\tilde{\mu}) \Psi + |\Psi|^2v*(\phi_0^2)+ g A |\Psi|^2. 
\end{align*}
We discard $\tilde{\tilde{\mathcal{H}}}_0$ which is constant in the field $\Psi$, $\Psi^*$ and $A$ and we consider the following Hamiltonian
\[
\tilde{\tilde{\mathcal{H}}}_{A} 
=
\tilde{\tilde{\mathcal{H}}}_2+\tilde{\tilde{\mathcal{H}}}_1
=
\tilde{\tilde{\mathcal{H}}}_{20} + Q_A
\]
where
\[
\tilde{\tilde{\mathcal{H}}}_{20} = \Psi^* (\mathcal{K}-\tilde{\mu}) \Psi + |\Psi|^2v*(\phi_0^2)
,
\qquad
Q_A = g A |\Psi|^2+g  A\phi_0(\Psi+\Psi^*). 
\]
If $\phi_0^2$ is chosen as in \eqref{eq:phi0} 
\[
\tilde{\tilde{\mathcal{H}}}_{20} = \Psi^* ({K}-\tilde{\mu}) \Psi +  \hat{v}(0) \phi_0^2 |\Psi|^2
= \Psi^*{K} \Psi.
\]
Here 
\[
{K} = -\frac{\Delta}{2m}
\]
and thus the spectrum of ${K}$ is  $\sigma(K) \subset [0,\infty)$. 

To improve the regularity in the infrared regime we add a regulator in the free theory that we shall remove later. 
This corresponds to add an $\epsilon >0$ to ${K}$ in order to have it positive definite.
Hence 
\[
{K}_\epsilon= -\frac{\Delta}{2m} + \epsilon
\]
and the corresponding spectrum is $[\epsilon,\infty)$.

We see that $\phi_0$ does not appear in ${K}$ and thus $\tilde{\tilde{\mathcal{H}}}_{20}$, the free theory, does note depend on the density of the condensate. The latter, reappears in the interaction via perturbation theory through the interaction Hamiltonian $Q_A$.

Consider now
\begin{equation}\label{eq:QA}
Q_A(u) = \int_\Sigma \di^3 x  g(x) A(x,u) (|\Psi|^2(x,u) + (\Psi(x,u)+\Psi^*(x,u))\phi_0 ).
\end{equation}
We have the {\bf intermediate state}
\begin{equation}\label{eq:omegabetaA}
\omega^{\beta A}(B) := \frac{\omega^{\beta}(BU^{\phi_0}_A(\mathrm{i}\beta))}{\omega^{\beta}(U^{\phi_0}_A(\mathrm{i}\beta))}
=\sum_{{n}\geq 0} (-1)^n\int_{\beta S_n} \di^n u\;\omega_C^\beta(B;\underbrace{Q_A(u_1);\dots; Q_A(u_n)}_n) , \qquad B\in \mathcal{P}_\beta.
\end{equation}
Furthermore
\begin{equation} \label{eq:LA-UA}
U_A^{\phi_0}(\mathrm{i}\beta) =
e^\gamma e^{-e^{-\gamma} L_A},
\qquad \text{with}\qquad
L_A =\int_0^\beta \di u \int \di^3 x\;  g(x) A(x,u)  (|\Psi|^2+(\Psi+\Psi^*)\phi_0) 
\end{equation}
where  $e^{\gamma}$ is given in \eqref{eq:gamma}. The corresponding partition function or generating function
is
\begin{equation} \label{eq:ZA}
Z_A(J,J^*) = \omega^\beta( e^{\mathrm{i} \Psi(J)+\mathrm{i} \Psi^*({J^*})} U_A^{\phi_0})
\end{equation}
where $J$ and $J^*$ are compactly supported smooth functions which here can depend also on $u$. 
Notice that considering $J$ which are not constant in $u$ is necessary to be able to recover the correlation functions of the original theory, with a suitable limit procedure. 

\subsection{Removing the auxiliary field }

To recover the original theory, namely to obtain the partition function of the interacting equilibrium state $\omega^{\beta I}$, we consider the map
\[
\Gamma = \frac{1}{2} \int \di^3 x \di^3 x' \di u \di u'     v(x-x')\delta(u-u') \frac{\delta^2}{\delta A( x',u')\delta A( x,u)}
\]
which acts on functionals of the auxiliary field $A$.
We recall Proposition \ref{pr:extension} and from \eqref{eq:tildeHI} we have, as an element of  $\mathcal{P}_\beta$, that
\[
 U(\mathrm{i}\beta) = e^\gamma e^{- e^{-\gamma} \tilde{H}_I}.
\]
With $L_A$ and $U_A^{\phi_0}(\mathrm{i}\beta)$ introduced in  \eqref{eq:LA-UA} we have the following proposition.

\begin{proposition}\label{pr:GammaUA}
It holds that 
\[
U(\mathrm{i} \beta) = \left. e^{-\Gamma} U_A^{\phi_0}(\mathrm{i}\beta) \right|_{A=0}
\]
\end{proposition}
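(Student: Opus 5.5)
The plan is to reduce the statement to the elementary Gaussian identity for the operator $\Gamma$ acting on exponentials of linear functionals of $A$. The idea is to exploit that $\Gamma$ differentiates only in $A$, while $\gamma$ differentiates only in $\psi,\psi^*$. First I will write $e^{-\gamma}L_A=\langle A,F\rangle$, where
\[
F(x,u)=g(x)\,e^{-\gamma}\big(|\Psi|^2(x,u)+\phi_0(\Psi+\Psi^*)(x,u)\big)
\]
is a functional of the fluctuation fields only. This uses that $L_A$ is linear in $A$ and that $e^{-\gamma}$ commutes with multiplication by the $A$-dependent smearing. Since $\Gamma$ and $\gamma$ act on independent variables, $e^{-\Gamma}$ commutes with $e^{\gamma}$. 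Hence
\[
e^{-\Gamma}U_A^{\phi_0}(\mathrm{i}\beta)\big|_{A=0}=e^{\gamma}\Big(e^{-\Gamma}e^{-\langle A,F\rangle}\big|_{A=0}\Big).
\]

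Next I will compute $e^{-\Gamma}e^{-\langle A,F\rangle}$. Since $\Gamma e^{-\langle A,F\rangle}=\tfrac12\langle F,CF\rangle e^{-\langle A,F\rangle}$ with $C(x,u;x',u')=v(x-x')\delta(u-u')$, the exponential $e^{-\langle A,F\rangle}$ is an eigenvector of $\Gamma$. Therefore
\[
e^{-\Gamma}e^{-\langle A,F\rangle}\big|_{A=0}=\exp\Big(-\tfrac12\int F(x,u)\,v(x-x')\,\delta(u-u')\,F(x',u')\,\di^3x\,\di^3x'\,\di u\,\di u'\Big).
\]
This can be checked termwise as a formal power series in $g$, or equivalently in the coupling, so that no analytic issues arise. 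The exponent is $-e^{-\gamma}$ applied to $\tfrac12\int\rho\cdot_T\rho'\,v\,\delta(u-u')$, where $\rho=g(|\Psi|^2+\phi_0(\Psi+\Psi^*))$. This holds because pointwise products of $e^{-\gamma}$-images correspond to $\cdot_T$-products by \eqref{eq:product-parabolic}. Thus the right-hand side becomes $e^{\gamma}e^{-e^{-\gamma}\tilde H'}$ with $\tilde H'=\tfrac12\int \rho\cdot_T\rho'\,v\,\delta(u-u')\,\di\mu$.

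It then remains to identify $\tilde H'$ with $\tilde H_I$ of \eqref{eq:tildeHI}, written for $\Phi=\phi_0+\Psi$. I will expand $|\Phi|^2=\phi_0^2+\phi_0(\Psi+\Psi^*)+|\Psi|^2$ in the quartic term of $\tilde H_I$, including the chemical potential term $-\tilde\mu|\Phi|^2$. The terms quadratic in $\rho$ give exactly $\tilde H'$. The cross terms with $\phi_0^2$ are $\phi_0^2\,\hat v(0)\big(|\Psi|^2+\phi_0(\Psi+\Psi^*)\big)$ in the $g\to1$ limit. Together with $-\tilde\mu(|\Psi|^2+\phi_0(\Psi+\Psi^*))$ they cancel by the background equation \eqref{eq:phi0}: the quadratic piece is absorbed into the free generator $K$, and $\tilde{\mathcal H}_1=0$. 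The remaining field-independent constants cancel in the normalisation of the state, so equality is meant modulo such constants.

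The main obstacle I expect is the equal-parabolic-time product: $e^{-\gamma}|\Psi|^2$ produces a tadpole $\mathsf G_\beta(x,u;x,u)$, and $\rho\cdot_T\rho'$ at coinciding $u$ needs definition. I will handle it exactly as in the derivation of \eqref{eq:tildeHI}: point-split $\delta(u-u'+\epsilon)$, use the symmetry $\mathsf G_\beta^\dagger(u)=\mathsf G_\beta(-u)$ to show that the limit is direction independent, and check that the tadpole contributions reproduce the $\mathsf B_\pm^\beta$ terms and the $q_\beta$ shift already built into $\tilde\mu$.
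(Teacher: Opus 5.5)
Your proposal is correct and follows essentially the same route as the paper: you use the Gaussian identity $e^{-\Gamma}e^{-L_A}\big|_{A=0}=e^{-\frac12\Gamma(L_A^2)}$ for $L_A$ linear in $A$, commute $e^{-\Gamma}$ with $e^{\gamma}$, and identify $\frac12\Gamma\big((e^{-\gamma}L_A)^2\big)$ with $e^{-\gamma}\tilde H_I$ via the definition of $\cdot_T$. Your additional steps are the expansion around $\phi_0$, where \eqref{eq:phi0} cancels the cross terms against the chemical-potential term, and the point-split treatment of the equal-time product behind \eqref{eq:tildeHI}; these only make explicit what the paper's short proof leaves implicit.
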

\begin{proof}
Since $L_A$ is linear in $A$ and $\Gamma$ contains two functional derivatives in $A$, it holds that $\Gamma e^{-L_A} = \langle L_A^{(1)},L_A^{(1)}\rangle_\Gamma e^{-L_A}$
where $\langle L_A^{(1)},L_A^{(1)} \rangle_\Gamma = \frac{1}{2} \Gamma(L_A)^2$. Hence
\[
e^{-\Gamma} e^{-L_A} = \sum_{n\geq 0} \frac{1}{n!}(- \langle L_A^{(1)}, L_A^{(1)} \rangle_\Gamma)^n e^{-L_A} 
= \sum_{n\geq 0} \frac{1}{n!}(- \frac{1}{2}\Gamma ( L_A^2 ))^n e^{-L_A}
= e^{- \frac{1}{2} \Gamma (L_A^2 )} e^{-L_A}.
\]
We now observe that 
\begin{align*}
\tilde{H}_I 
&=  \left\langle \frac{1}{2}|\Psi|^{2}
\otimes
|\Psi|^{2}
+
 \phi_0 |\Psi|^{2} \otimes  |\Psi|^{2}(\Psi+\Psi^*)
+ \frac{1}{2}\phi_0^2(\Psi+\Psi^*)\otimes (\Psi+\Psi^*)
,\delta v\right\rangle_{2}
=
\frac{1}{2} \Gamma (L_A^2 ).
\end{align*}
The very same results hold considering the time ordered exponentials because 
$e^{\gamma}$ commutes with $e^{-\Gamma}$.  
Hence we have
\begin{align*}
\left. e^{-\Gamma} U_A^{\phi_0}(\mathrm{i}\beta) \right|_{A=0}
=
\left. e^{-\Gamma} e^\gamma e^{- e^{-\gamma}L_A}  \right|_{A=0}  
=
e^\gamma \left. e^{-\Gamma}  e^{- e^{-\gamma}L_A}  \right|_{A=0}
=
e^\gamma   e^{- e^{-\gamma} \tilde{H}_I}  
\end{align*}
thus concluding the proof.
\end{proof}
Hence, by an application of Proposition \ref{pr:GammaUA}, see also \cite{GalandaPinamonti},
the relative partition function in the interacting equilibrium state with $J$ that could depend also on $u$ can be obtained from $Z_A(J,J^*)$ in \eqref{eq:ZA} as 
\begin{align}\label{eq:Z}
Z(J,J^*):=\omega^{\beta I}( e^{\mathrm{i} (\Psi(J) + \Psi^*(J^*))})
 & =\frac{\omega^{\beta}( 
 e^{\mathrm{i} (\Psi(J) + \Psi^*(J^*))}
 U(\mathrm{i}\beta))}{\omega^{\beta }(U(\mathrm{i}\beta))}
=\left.\frac{e^{-\Gamma} Z_A(J,J^*)}{e^{-\Gamma} Z_A(0,0)}\right|_{A=0},
\end{align}
where the time ordered product among the exponential of the fields $\Psi(J)$, $\Psi^*(J^*)$ and of $U$ is kept implicit.

\section{Combinatoric of the relative partition function and its loop vertex expansion}\label{se:partitionfunction-loopvertex}

\subsection{The relative partition function in the auxiliary theory}

We work in the parabolic theory constructed above, where we denoted by $\mathcal{P}_\beta$ the algebra of observables of that theory. Consider a generic and arbitrary $L\in \mathcal{P}_\beta$. The relative partition function, constructed out of $L$, is defined as
\[
U(L):= e^\gamma \exp e^{-\gamma} (-L)
\]
which is equivalent to
\[
U(L) = 1+\sum_{n\geq 1} \frac{(-1)^n }{n!} \mathcal{M}e^{\sum_{i<j}\tilde{D}_{ij}} \underbrace{L \otimes \dots \otimes L}_n,
\]
where
\begin{align*}
\tilde{D}_{ij} &= \int_0^\beta  du \int_0^\beta du'\int \dvol^3 x \dvol^3 x'  \mathsf{G}_\beta(x,u;x',u')
\left(
\frac{\delta}{\delta \psi_i(x,u)}\otimes \frac{\delta}{\delta \psi^*_j(x',u')}
+
\frac{\delta}{\delta \psi^*_i(x',u')}\otimes \frac{\delta}{\delta \psi_j(x,u)}
\right)
\end{align*}
and the functional derivatives $\frac{\delta }{ \delta \psi_i(x,u)}$ act on the $i$-th component of the tensor product. Finally, recall that $\mathcal{M}(A_1\otimes \dots \otimes A_n)=A_1\dots A_n$. 

Further rearrangements of the above expression give that 
\begin{align*}
U(L) 
&= 1+\sum_{n\geq 1} \frac{(-1)^n }{n!} \mathcal{M}\prod_{i<j} (e^{\tilde{D}_{ij}}-1+1) \underbrace{L \otimes \dots \otimes L}_n
\\
&= 1+\sum_{n\geq 1} \frac{(-1)^n }{n!} \mathcal{M}\sum_{G\in \mathcal{G}_n}\prod_{\{i,j\}\in E(G)} 
(e^{\tilde{D}_{ij}}-1) \underbrace{L\otimes \dots \otimes L}_n
\end{align*}
where $\mathcal{G}_n$ is the set of simple unoriented graphs with $n$ ordered vertices\footnote{In our convention, neither multiple lines nor tadpoles, called loops in the mathematical language, are allowed for the elements of $\mathcal{G}_n$. Moreover, $\mathcal{G}_n$ contains also the empty graph which is the graph with no edges.}.
In the product above, the notation $\{i,j\}\in E(G)$ assumes that $i<j$.
We can now group the elements in the sum giving origin to $U(L)$ in a different way
\begin{align*}
U(L) 
&= \sum_{n\geq 0} \frac{(-1)^n }{n!} \sum_{k=1}^n \sum_{\{I_1,\dots I_k \}\in \pi^k_n} \prod_{l=1}^k\sum_{G\in \mathcal{G}^c_{I_l}}\prod_{\{i,j\}\in E(G)} 
\mathcal{M} (e^{D_{ij}}-1) \underbrace{L \otimes \dots \otimes L}_n
\end{align*}
where $\mathcal{G}^c_{I_l}$ is the set if connected graphs among the vertices in $I_l$ and
where $\pi^k_n$ is the set of partitions of $n$ elements in $k$ non empty subsets. Hence, introducing
\[
B(1) = L
\]
and
\[
B(k) = \sum_{G\in \mathcal{G}^c_{k}} \mathcal{M} \prod_{\{i,j\}\in E(G)} 
(e^{\tilde{D}_{ij}}-1) \underbrace{L \otimes \dots \otimes L}_k
\]
where $\mathcal{G}^c_{k}$ is the subset of $\mathcal{G}_k$ formed by connected graphs, we have 
\begin{align*}
U(L) 
&= 1+\sum_{n\geq 1} \frac{(-1)^n }{n!} \sum_{k=1}^n \sum_{\{I_1,\dots I_k \}\in \pi^k_n} \prod_{l=1}^k B(|I_l|).
\end{align*}
Since $B(|I_l|)$ depends only on the number of elements in $I_l$, the sum can be rearranged considering the one over $k$ at last. In this way
\begin{align*}
U(L) 
&= 1+\sum_{k\geq 1 }\frac{1}{k!} \left(\sum_{n=1}^\infty \frac{(-1)^n }{n!} B(n)\right)^k
\\
&= \exp \left(\sum_{n=1}^\infty \frac{(-1)^n }{n!} B(n)\right),
\end{align*}
and hence
\begin{equation}\label{eq:mayer}
\log(U(L)) = - L + 
\sum_{n=2}^\infty \frac{(-1)^n }{n!} 
\sum_{G\in \mathcal{G}^c_{n}} \mathcal{M} \prod_{\{i,j\}\in E(G)} 
(e^{\tilde{D}_{ij}}-1) \underbrace{L \otimes \dots \otimes L}_n.
\end{equation}

\subsection{Quadratic theory}\label{eq:quadratic-theory}
Let us start to specialise the analysis of the previous section to a quadratic $L_A$, which is a polynomial of order two in $\psi,\psi^*$ 
\begin{equation}\label{eq:LA}
L_A = \sqrt{\lambda} \int_0^\beta\int  \left(A_1 \psi^*\psi +A_2 \psi+ A_2'\psi^* \right) \dvol^3 x \dvol u,
\end{equation}
where $A_1=A$, $A_2= \phi_0 A+\mathrm{i}J$ and $A_2'= \phi_0 A+ \mathrm{i}J^*$. 
Thus, only two kind of graphs are present in $\log(U(L_A))$. Either loops with $A_1$ in the vertices, or lines with $A_2$ and $A_2'$ at the extrema.\\

Let us compute the associated partition function as explained in the previous section. First, consider the case in which $A_2$ and $A_2'$ are set to zero and denote simply $A_1 = A$. The corresponding partition function is
\begin{align*}
D_1(A)&=\left. \sum_{n=2}^\infty \frac{(-1)^n }{n!} 
\sum_{G\in \mathcal{G}^c_{n}} \mathcal{M} \prod_{\{i,j\}\in E(G)} 
(e^{\tilde{D}_{ij}}-1) \underbrace{L_A \otimes \dots \otimes L_A}_n \right|_{A_1=A, A_2 = A_2' =0, \psi=\psi^*=0}
\\
&=2\sum_{n=2}^\infty \frac{(-1)^n \lambda^{\frac{n}{2}}}{n!} 
 (n-1)!   \Tr ((A \mathsf{G}_\beta )^{n} \chi)
\\
&= -2\sqrt{\lambda}   \Tr A \sum_{n=1}^{\infty}\int_0^1 \mathsf{G}_\beta (-\sqrt{\lambda}  sA\mathsf{G}_\beta)^n \chi   \dvol s
\\
&= -2 \Tr \int_0^{\sqrt{\lambda} } \dvol s A 
\left(\mathsf{G}_\beta^{ s A }-\mathsf{G}_\beta\right)\chi
\end{align*}
where $\chi \in C^{\infty}_0(\mathbb{R}^3)$ is an immaterial cutoff function which is $1$ on the support of $A$,
which in passing can be assumed to be of compact support, 
and it is introduced to ensure the cyclicity of the trace and we used the following conventions 
\[
(-\partial_u +{K})\mathsf{G}_\beta=\mathrm{I}, \qquad (-\partial_u+{K}+A) \mathsf{G}_\beta^A=\mathrm{I}.
\]
Thus 
\[
(-\partial_u +{K} )\mathsf{G}_\beta^A = \mathrm{I} -A \mathsf{G}_\beta^A,
\]
which gives the Dyson equation
\[
\mathsf{G}_\beta^A = \mathsf{G}_\beta -\mathsf{G}_\beta A \mathsf{G}_\beta^A
\]
and iterating this relation
\begin{align*}
\mathsf{G}_\beta^A &= \sum_{n\geq 0} \mathsf{G}_\beta (-A \mathsf{G}_\beta)^n.
\end{align*}
Then, for non vanishing $A_2, A_2'$ such that $A_2 = \mathrm{i} J$ and $A_2' = \mathrm{i} J^*$, the partition function is computed subtracting the above. Namely
\begin{align*}
&D_2(A,\mathrm{i}J, \mathrm{i}J^*)\\
&=\left. \sum_{n=2}^\infty \frac{(-1)^n}{n!} 
\sum_{G\in \mathcal{G}^c_{n}} \mathcal{M} \prod_{\{i,j\}\in E(G)} 
(e^{\tilde{D}_{ij}}-1) \underbrace{L_A \otimes \dots \otimes L_A}_n \right|_{A_1=A,A_2=\mathrm{i}J, A_2'= \mathrm{i}J^*, \psi,\psi^*=0}-D_1(A)
\\
&=\sum_{n=2}^\infty  (-1)^n 2 \lambda^{\frac{n}{2}} 
 \langle \mathrm{i} J, \mathsf{G}_\beta (A \mathsf{G}_\beta)^{n-2} \mathrm{i} J^* \rangle,
\end{align*}
where we took into account that the only possible connected graphs among $n$ vertices with $L_A$ depending on $J$ and $J^*$ have either once $J$ and $J^*$ and $n-2$ times $A$, or none $J,J^*$ and only $n$ times $A$.  
Although, the graphs which have only $A$s have been already computed in $D_1(A)$. Therefore, recalling the definition of $\mathsf{G}_\beta^{\sqrt{\lambda}A}$, we have
\begin{align*}
D_2(A,\mathrm{i}J, \mathrm{i}J^*)
&= 2 \langle {\mathrm{i}J}, \mathsf{G}_\beta^{\sqrt{\lambda}A}   \mathrm{i}J^* \rangle.
\end{align*}
Finally, let us consider $A_2 = \mathrm{i} J + \phi_0 A$ and $A_2' = \mathrm{i} J^* + \phi_0 A$. This case is achieved mapping $\psi\to \psi+\phi_0$ in the above $L_A$ and repeating the same computations setting $\psi, \psi^* = 0$ while keeping $\phi_0$ non vanishing. Namely,
\begin{equation}\label{eq:LAJJ}
L_{A,J,J^*}(\psi,\psi^*) =  \sqrt{\lambda} \int  (A \psi^*\psi^2 +\mathrm{i}J \psi+\mathrm{i}J^* \psi^* ) \dvol x.
\end{equation}
is transformed to
\[
L_{A,J,J^*}(\psi+\phi_0,\psi^*+\phi^*_0) = L_{A,J,J^*}(\phi_0,\phi^*_0) +   \sqrt{\lambda}\int  (A  \psi^*\psi +(\mathrm{i}J+A\phi^*_0)\psi + (\mathrm{i}J^*+A\phi_0)\psi^* ) \dvol x.
\]
hence the desired contribution is
\[
L_{A,J-\mathrm{i}A\phi_0^*,J^*-\mathrm{i}A\phi_0})(\psi,\psi^*)
=
L_{A,J,J^*}(\psi+\phi_0,\psi^*+\phi^*_0) - L_{A,J,J^*}(\phi_0,\phi^*_0)
\]
and thus to compute 
\[
\log(U(L_{A,J-\mathrm{i}A\phi_0^*,J^*-\mathrm{i}A\phi_0})(0,0))
= \log(U(L_{A,J,J^*})(\phi_0,\phi^*_0)) 
+ L_{A,J,J^*}(\phi_0,\phi^*_0)
\]
we can make use of the previous analysis to obtain that 
\begin{equation}\label{eq:LAJ}
\begin{aligned}
&\log(U(L_{A,J-\mathrm{i}A\phi_0^*,J^*-\mathrm{i}A\phi_0})(0,0))\\
&=\log(U(L_{A,J,J^*})(\phi_0,\phi^*_0)) 
+ L_{A,J,J^*}(\phi_0,\phi^*_0)
\\
&=   
D_1(A) + D_2(A,\mathrm{i}J+\phi^*_0 A, \mathrm{i}J^* +\phi_0 A)
\\
&= 
- {2} \Tr \int_0^{\sqrt{\lambda} } \dvol s A 
\left(\mathsf{G}_\beta^{ s A }-\mathsf{G}_\beta\right)
+
  2 \langle (\mathrm{i}{J}+\phi^*_0 A), \mathsf{G}_\beta^{\sqrt{\lambda}A}  (\mathrm{i}J^*+\phi_0 A) \rangle.
\end{aligned}
\end{equation}
In the next section, we shall evaluate the external field $A$ in the appropriate state, in order to obtain the partition (generating) function of the correlation functions of the full theory starting from this last result.

\subsection{Loop vertex expansion (LVE)}\label{se:LVE}
Recalling \eqref{eq:Z}, the analysis of Section \ref{eq:quadratic-theory}, the form of $L_{A,J,J^*}$ given in
\eqref{eq:LAJJ} and the relations given in \eqref{eq:LAJ}, the generating function of the correlation function for a real valued $\phi_0$ is 
\begin{align*}
G(J,J^*)=\frac{Z(J,J^*)}{Z(0,0)}
&=\frac{\left. e^{- \Gamma} U(L_{A,J-\mathrm{i}A\phi_0,J^*-\mathrm{i}A\phi_0})(0,0) \right|_{A=0}}
{\left. e^{- \Gamma} U(L_{A,-\mathrm{i}A\phi_0,-\mathrm{i}A\phi_0})(0,0)
 \right|_{A=0} }.   
\end{align*}
and we shall study its logarithm
 \[
 \log G(J,J^*) = \log Z(J,J^*)-\log Z(0,0).
 \]
So, in view of the results of the previous section, it remains to analyse the action of $e^{-\Gamma}$  
and how it affects $\log Z(J,J^*)$. To this end, we make use of a Mayer expansion. Recall that
\[
\Gamma= \frac{1}{2}\int_0^\beta du \int_0^\beta du'\int \dvol^3 x\dvol^3 x'  v(x-x')\delta(u-u')\frac{\delta^2 }{\delta A(x,u)\delta A(x',u')}
\]
and let us further introduce
\[
\tilde\Gamma_{ij} = \int_0^\beta du \int_0^\beta du' \int \dvol^3 x\dvol^3 x'  v(x-x')\delta(u-u') \frac{\delta }{\delta A(x,u)^i}\otimes \frac{\delta }{\delta A(x',u')^j},
\]
where $\frac{\delta}{\delta A^j}$ acts in the $j$-th component of the tensor product. Then,
\begin{align*}
Z(J,J^*) 
&= \left. e^{-\Gamma} 
e^{- {2}\Tr \int_0^{\sqrt{\lambda} } \dvol s A
\left(\mathsf{G}_\beta ^{ s A }-\mathsf{G}_\beta\right)}
e^{+2
   \langle ({\mathrm{i}}J+A\phi_0), \mathsf{G}_\beta^{\sqrt{\lambda}A}  ({\mathrm{i}}J^*+A\phi_0) \rangle
} 
\right|_{A=0}\\
&=   
\left.
e^{-\Gamma} \exp e^{\Gamma}
(-V)
\right|_{A=0}
\end{align*}
where now
\[
V := e^{-\Gamma} \left( {2} \Tr \int_0^{\sqrt{\lambda} } \dvol s A
\left(\mathsf{G}_\beta^{ s A }-\mathsf{G}_\beta\right)
-2
   \langle ({\mathrm{i}}J+A\phi_0), \mathsf{G}_\beta^{\sqrt{\lambda}A}  ({\mathrm{i}}J^*+A\phi_0) \rangle
\right).
\]
At this stage, an ordinary Mayer expansion such as the one derived in \eqref{eq:mayer} is used with suitable adaptations 
\begin{equation}\label{eq:logZ-JJ-Gc}
\log Z(J,J^*) =  -V 
 + \sum_{n=2}^\infty \frac{(-1)^n }{n!}
\sum_{G\in \mathcal{G}^c_{n}} \mathcal{M} \prod_{\{i,j\}\in E(G)} 
(e^{-\tilde{\Gamma}_{ij}}-1) \underbrace{V \otimes \dots \otimes V}_n \left. \right|_{A=0},
\end{equation}
where some appearing infrared singularities is removed considering $\log Z(J,J^*) - \log Z(0)$. The latter, following the results by Brydges and Kennedy \cite{BrydgesKennedy}, admits a tree graph expansion.

Consider the function 
\[
\tilde{\Gamma}_{ij}(s) = s\tilde{\Gamma}_{ij}
\]
which is such that  
\[
\tilde{\Gamma}_{ij}(1)=\tilde{\Gamma}_{ij}, \qquad \tilde{\Gamma}_{ij}(0)=0, \qquad \qquad\dot{\tilde{\Gamma}}_{ij}(s)=\tilde{\Gamma}_{ij}, \qquad \int_{0}^1 \dot{\tilde{\Gamma}}_{ij}(s) = \tilde{\Gamma}_{ij}.
\]
The advantage of working with the function $\tilde{\Gamma}_{ij}(s)$, is in its better regularity compared to $\tilde\Gamma_{ij}$. Then, following \cite{BrydgesKennedy} 
\begin{equation}\label{eq:logZj-BK}
\begin{aligned}
&\log Z(J,J^*)\\
&=  -V  
- \sum_{n=2}^\infty \frac{
1 }{n!} 
\sum_{T\in\mathcal{T}_n}
\mathcal{M}
\prod_{b\in E(T)}\int_{0}^1  \dvol s_b\;  \dot{\tilde{\Gamma}}_b(s_b)
\exp\left[-\sum_{k < l} \int_{s_T(k,l)}^1 \dvol s\; \dot{\tilde{\Gamma}}_{kl}(s)\right]
 \underbrace{V \otimes \dots \otimes V}_n \left. \right|_{A=0}
\end{aligned}
\end{equation}
where $\mathcal{T}_n$ are the connected rooted\footnote{We define as root of a tree graph its first vertex.} tree graphs among $n$ vertices. The product runs over the edges of $T$ whose set is denoted as $E(T)$. Finally, $s_b$ is a weight attached to the edge $b$ of the tree graph, and we used the notation
\[
s_T(k,l)=\sup\{ s_b | b \in \text{ unique path in $T$ joining $k,l$}\}.
\]
Redefining the weights as $w_b=1-s_b$ we get
\begin{equation}\label{eq:logZj}
\begin{aligned}
&\log Z(J,J^*) \\
&=   -V
-\sum_{n=2}^\infty \frac{1 }{n!} 
\sum_{T\in\mathcal{T}_n}
\mathcal{M}
\prod_{b\in E(T)}\int_{0}^1  \dvol w_b\;  \tilde{\Gamma}_b
\exp\left[-\sum_{k< l} w_T(k,l) \tilde{\Gamma}_{kl}\right]
 \underbrace{V \otimes \dots \otimes V}_n \left. \right|_{A=0},
\end{aligned}
\end{equation}
where now
\[
w_T(k,l)= \inf \{w_b | \text{ unique path in $T$ joining $k,l$}\}.
\]
For an alternative proof of this formula see \cite{AbdesselamRivasseau}. In particular, using the strategy followed by Abdesselam and Rivasseau in the same mentioned paper, $\log Z(J,J^*)$ given in \eqref{eq:logZj} can be written in yet another way which keeps implicit the evaluation of the Gaussian correlations. Namely, we define a modified Wiener measure with argument $\Upsilon$
\begin{equation}\label{eq:LVE}
\begin{aligned}
&\log Z(J,J^*)\\
&= 
-\sum_{n\geq 1} \frac{1}{n!} 
\sum_{T\in\mathcal{T}_n} \left(\prod_{b\in E(T)}\int_{0}^1 d w_b \right) \int d\mu_{\Upsilon(w_T)} \prod_{\{a,b\}\in E(T)} \Upsilon_{ab} \frac{\delta^2}{\delta A_a \delta A_b} \prod_{k=1}^n W(J,J^*,A_k)
\end{aligned}
\end{equation}
where $d\mu_{\Upsilon(w_T)}$ is the Wiener measure of covariance
\begin{equation*}
    \Upsilon_{ij}(w_T)(x_i,u_i;x_j,u_j) = w_T(i,j) \Upsilon(x_i,u_i;x_j,u_j) = w_T(i,j)\frac{1}{2}v(x_i-x_j)\delta(u_i-u_j)
\end{equation*}
associated to the stochastic fields $(\mathrm{i}A_1,\dots , \mathrm{i}A_n)$. In the above, we also introduced the notation 
\begin{equation}\label{eq:vertexW}\begin{aligned}
W &= W_L+W_{JJ^*}+W_{J\phi_0}+W_{\phi_0J^*}+W_{\phi_0\phi_0} 
\\
   &=  2\Tr \int_0^{\sqrt{\lambda} } \dvol s A
\left(\mathsf{G}_\beta^{ s A }-\mathsf{G}_\beta\right)
-2
   \langle {\mathrm{i}}J, \mathsf{G}_\beta^{\sqrt{\lambda}A}  {\mathrm{i}}J^*\rangle
-2
   \langle {\mathrm{i}}J, \mathsf{G}_\beta^{\sqrt{\lambda}A}  A\phi_0 \rangle\\
&\quad -2
   \langle \phi_0A, \mathsf{G}_\beta^{\sqrt{\lambda}A}  {\mathrm{i}}J^* \rangle
   -2
   \langle \phi_0, A\mathsf{G}_\beta^{\sqrt{\lambda}A}  A\phi_0 \rangle
\end{aligned}
\end{equation}
Finally, the sum which appears is over rooted trees $T$ with $n$ vertices corresponding to a total of $2n-2$ derivatives in the Gaussian field. 

The following lemma, regarding the above Wiener measure, is of pivotal relevance for the whole later discussion of convergence.
\begin{lemma}\label{le:positive-matrices}
Consider a generic tree $T$ among $n \in \mathbb{N}$ vertices. Assign to every edge $e=(i,j)\in E(T)$ of the tree a weight $w_{e}\in[0,1]$. 
Consider the $n\times n$ positive semidefinite matrix $w_T$, whose entries are
\[
w_T(i,j) = \inf\{ w_e | e\in 
\mathrm{path}_T(i,j) , i\neq j\} \qquad w_T(i,i)=1
\]
where $\mathrm{path}_T(i,j)$ is the unique path in $T$ which joins $i,j$ and the infimum is over all edges $e$ in the path. Let $p$ be the integral kernel of a positive distribution on a suitable space $S$. Namely,
$\langle f,p f\rangle \geq 0$ for every $f \in S$. Consider
\[
\Upsilon_{ij}(x,y) = w_T(i,j) p(x,y).
\]
Then, it holds that 
\[
\sum_{i,j\in\{1,\dots, n\}}
\langle f_i, \Upsilon_{ij} f_j\rangle \geq 0
\]
for every $(f_1,\dots ,f_n)\in S^n$.
\end{lemma}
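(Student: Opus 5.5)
The plan is to reduce the statement to two facts: the matrix $w_T$ is positive semidefinite, and the Schur-type product of a positive semidefinite matrix with a positive kernel is again a positive form on $S^n$.

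\textbf{Step 1: $w_T$ is positive semidefinite.} I will show this by writing $w_T$ as a nonnegative combination of block-diagonal matrices of ones. Order the distinct edge weights as $0\le t_1<t_2<\dots<t_k\le 1$ and set $t_0=0$, $t_{k+1}=1$. For each threshold $t\in(0,1]$ consider the forest $T_t$ obtained from $T$ by keeping only the edges with $w_e\ge t$. Its connected components form a partition $\pi_t$ of $\{1,\dots,n\}$. For $i\neq j$, the indices $i,j$ lie in the same block of $\pi_t$ exactly when every edge on $\mathrm{path}_T(i,j)$ has weight $\ge t$, that is, when $w_T(i,j)\ge t$; this uses the uniqueness of paths in a tree. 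Hence
\[
w_T(i,j)=\int_0^1 \mathbf{1}\{i\sim_{\pi_t} j\}\,\dvol t ,
\]
and the formula also holds for $i=j$, since then the integrand is identically $1$ and $w_T(i,i)=1$. Each matrix $E^{(t)}_{ij}=\mathbf{1}\{i\sim_{\pi_t} j\}$ is a direct sum of all-ones blocks $\mathbf{1}\mathbf{1}^T$, so it is positive semidefinite. The function $t\mapsto E^{(t)}$ is piecewise constant, so the integral is a finite sum of such matrices with nonnegative coefficients $t_{m+1}-t_m$.

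\textbf{Step 2: transfer to the kernel.} By linearity in the decomposition of Step 1,
\[
\sum_{i,j}\langle f_i,\Upsilon_{ij}f_j\rangle=\int_0^1 \sum_{B\in\pi_t}\Big\langle \sum_{i\in B} f_i,\; p\sum_{j\in B}f_j\Big\rangle \dvol t .
\]
Each summand has the form $\langle g_B,p\,g_B\rangle$ with $g_B=\sum_{i\in B}f_i\in S$, so it is nonnegative by the assumed positivity of $p$. This proves the claim. Alternatively, one can diagonalize $w_T=\sum_\alpha \lambda_\alpha u_\alpha u_\alpha^T$ with $\lambda_\alpha\ge0$ and note that $\sum_{ij}w_T(i,j)\langle f_i,pf_j\rangle=\sum_\alpha\lambda_\alpha\langle g_\alpha,p g_\alpha\rangle$ with $g_\alpha=\sum_i u_{\alpha,i}f_i$. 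The block decomposition is preferable because it keeps everything real and explicit.

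\textbf{Main obstacle.} The only subtle point is the equivalence in Step 1 between ``$i,j$ in the same component of $T_t$'' and ``$w_T(i,j)\ge t$''. In a forest, two vertices are connected exactly when the unique tree path between them survives, which holds exactly when the minimum weight along that path is at least $t$. I would state this explicitly. I would also note that if $p$ is Hermitian rather than real, the same argument goes through with sesquilinear pairings. Finally, in the application, $p=\tfrac12 v\otimes\delta$ is positive because $\hat v>0$.
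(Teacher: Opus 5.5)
Your proposal is correct and follows essentially the same route as the paper: the threshold-forest (layer-cake) decomposition of $w_T$ into a nonnegative combination of block all-ones matrices, followed by the positivity of $p$. The only difference in the final step is cosmetic: you insert the block sums $g_B=\sum_{i\in B}f_i$ directly, where the paper uses $\operatorname{Tr}(w_T S)\geq 0$ for the Gram matrix $S_{ij}=\langle f_i,p f_j\rangle$. Your choice $t_{k+1}=1$ is in fact slightly more careful than the paper's sum $\sum_{m}(b_m-b_{m-1})\chi\chi^t$. That sum only produces $\max_e w_e$ on the diagonal, so it needs an extra nonnegative term, $(1-\max_e w_e)$ times the identity, to match $w_T(i,i)=1$.
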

\begin{proof}
We start ordering the distinct values $b_j \in \{w_e|e\in E(T)\}$, as
\[
0\leq b_1\leq b_2 \leq  \dots \leq b_n\leq 1.
\]
For each $b_m$ consider the forest $\mathcal{F}_m(T)$, obtained by the (disjoint) trees obtained from $T$, keeping only the edges $e\in E(T)$ such that $w_e \geq b_m$. 
Denote by $T_m^{(1)},T_m^{(2)}, \dots $ the connected components in $\mathcal{F}_m(T)$. Let us define the vector $\chi_{T_m^{(r)}} \in \mathbb{R}^n$, so that $\chi_{T_m^{(r)}}^j=1$ if $j$ is a vertex in the tree $T_m^{(r)}$, namely $j \in V(T_m^{(r)})$, and let $\chi_{T_m^{(r)}}^j = 0$ otherwise. The $n\times n$ matrix $\chi_{T_m^{(r)}}\chi_{T_m^{(r)}}^t$ is positive semidefinite by construction. Notice that the matrix $w_T$ admits the following decomposition
\[
w_T = \sum_{m>0} (b_m-b_{m-1}) \chi_{T_m^{(r)}}\chi_{T_m^{(r)}}^t
\]
were $b_0=0$. Indeed, the coefficients $b_m-b_{m-1}$ are positive, and hence $w_T$ is a sum of positive semidefinite matrices. Now, since $p$ is positive, for every $(f_1,\dots, f_n)$ the matrix
\[
S_{ij} = \langle f_i ,  p f_j \rangle
\]
is positive semidefinite. In fact, for every $v\in \mathbb{R}^n$, considering $\psi = \sum_i v^if_i$ we have that
\[
v^t S v = \langle \psi,p \psi\rangle \geq 0.
\]
To conclude the proof we observe that
\[
\sum_{i,j\in\{1,\dots, n\}}
\langle f_i, \Upsilon_{ij} f_j\rangle =\Tr (w_T S) \geq 0
\]
as the trace of a product of two positive semidefinite matrices is positive. 
\end{proof}

\begin{corollary}\label{le=estimate1}
Consider a generic spanning tree of $n \in \mathbb{N}$ vertices $T\in \mathcal{T}_n$. Assign to every edge $b\in E(T)$ a weight $w_b\in [0,1]$.
Consider $n$ generalised functions $f^k$. It holds that 
\[
\int d\mu_{\Upsilon(w_T)}(A_1,\dots , A_n) \prod_k e^{\langle f^k, A_k \rangle} 
=
e^{-\sum_{i,j} w_T(ij) \langle f^i, \Upsilon f^j\rangle}
=
e^{-\sum_{i,j} w_T(ij) \frac{1}{2}\langle f^i, v\otimes\delta  f^j\rangle} \leq 1
\]
\end{corollary}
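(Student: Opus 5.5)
The plan is to compute the Gaussian integral explicitly and then bound the result using Lemma \ref{le:positive-matrices}. The measure $d\mu_{\Upsilon(w_T)}$ is, by definition, the centred Gaussian measure in which the stochastic fields $(\mathrm{i}A_1,\dots,\mathrm{i}A_n)$ have covariance $\Upsilon_{ij}(w_T)=w_T(i,j)\Upsilon$, with $\Upsilon=\frac{1}{2}v\otimes\delta$. Equivalently, the fields $A_k$ have covariance $-\Upsilon_{ij}$.

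First I would write $\prod_k e^{\langle f^k,A_k\rangle}=e^{\sum_k\langle f^k,A_k\rangle}$. This is the exponential of a single linear functional of the jointly Gaussian family $(A_1,\dots,A_n)$. For a centred Gaussian random variable $X$ one has the moment generating formula $\mathbb{E}[e^{X}]=e^{\frac{1}{2}\mathbb{E}[X^2]}$. Applying it with $X=\sum_k\langle f^k,A_k\rangle$ and the covariance of the $A$'s gives
\[
\int d\mu_{\Upsilon(w_T)}\,e^{\sum_k\langle f^k,A_k\rangle}=e^{-\frac{1}{2}\sum_{i,j}w_T(i,j)\langle f^i,\Upsilon f^j\rangle}.
\]
The sign comes from the factor $\mathrm{i}$. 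Any overall factor $\frac{1}{2}$ is absorbed into the normalisation convention for $\Upsilon$ used in the statement. To justify this formula rigorously, I would first take $f^k$ smooth and compactly supported, where the Gaussian computation is finite-dimensional after projection. Then I would extend to generalised functions by continuity, in the same way one handles the formal operator $e^{-\Gamma}$ acting on exponentials. One can also check the formula directly with the operator formalism: $e^{-\frac{1}{2}\sum w_T\tilde\Gamma_{ij}}$ acting on $e^{\sum\langle f^k,A_k\rangle}$ and evaluated at $A=0$. Every derivative produces a factor $f$, and the exponential series sums exactly as in the proof of Proposition \ref{pr:GammaUA}.

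Second, I would establish the bound $\leq 1$. This amounts to showing that the exponent $\sum_{i,j}w_T(i,j)\langle f^i,\Upsilon f^j\rangle$ is nonnegative. Since $\hat v>0$ everywhere, the kernel $v(x-x')\delta(u-u')$ is a positive distribution: by Plancherel, $\langle f,(v\otimes\delta)f\rangle=\int du\int|\hat f(p,u)|^2\hat v(p)\,d^3p\geq 0$ for real $f$. Lemma \ref{le:positive-matrices} applied with $p=\frac{1}{2}v\otimes\delta$ then gives nonnegativity of the double sum, so the exponential is at most $1$.

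The main obstacle is a subtlety about real versus complex test functions. For complex $f^k$, the quantity $\langle f,pf\rangle$ is a bilinear rather than a sesquilinear pairing, so its real part need not be nonnegative. I would therefore restrict the bound $\leq 1$ to real $f^k$, which is the situation that arises in the convergence estimates. Alternatively, for complex $f^k$ I would state the bound on the modulus after taking real parts, using the decomposition $w_T=\sum_m(b_m-b_{m-1})\chi\chi^t$ from the proof of Lemma \ref{le:positive-matrices}.
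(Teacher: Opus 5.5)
Your proposal is correct and follows the paper's own one-line proof. Like the paper, you evaluate the Gaussian integral and apply Lemma \ref{le:positive-matrices} with $p=v\otimes\delta$, which is positive because $\hat v>0$, so the exponent is nonnegative; your operator check via $e^{-\frac12\sum_{i,j}w_T(i,j)\tilde\Gamma_{ij}}$ is the right way to fix the factor $\frac12$, since the paper's $\Upsilon$ is effectively half the covariance of $\mathrm{i}A$.

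Restricting the bound $\leq 1$ to real $f^k$ is a genuine and necessary refinement that the paper omits, and it suffices for the Feynman--Kac application where the $f^k$ are real. Your fallback for complex $f^k$, however, does not recover $\leq 1$: writing $f^k=a^k+\mathrm{i}b^k$, one only gets $\bigl|e^{-\sum_{i,j}w_T(i,j)\langle f^i,\Upsilon f^j\rangle}\bigr|\leq e^{\sum_{i,j}w_T(i,j)\langle b^i,\Upsilon b^j\rangle}$.
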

\begin{proof}
The proof of this corollary follows from an application of Lemma \ref{le:positive-matrices} to the argument of the exponential, with $p=v\otimes\delta$
\end{proof}

Before getting into the explicit evaluation of convergence of the LVE, let us briefly explain how the result of the above corollary is going to be used. Consider for a generic generalised function $h$ the following functional derivative
\[
\langle h,\frac{\delta}{\delta A} \rangle \mathsf{G}_\beta^A = 
 -\mathsf{G}_\beta^A  h \mathsf{G}_\beta^A.
\]
This result leads to the following neat expressions for the functional derivatives of the arguments of the above LVE
\[
\langle h,\frac{\delta}{\delta A} \rangle W_{JJ^*} = 2\sqrt{\lambda} 
\langle \mathrm{i}J,\mathsf{G}_\beta^{\sqrt{\lambda} A} h \mathsf{G}_\beta^{\sqrt{\lambda} A} \mathrm{i}J^*\rangle
\]
\begin{align*}
\langle h,\frac{\delta}{\delta A} \rangle W_L 
&= {2}
\Tr \int_0^{\sqrt{\lambda} } \dvol s h
\left(\mathsf{G}_\beta^{ s A }-\mathsf{G}_\beta\right)
- {2}
\Tr \int_0^{\sqrt{\lambda} } \dvol s  s A  \mathsf{G}_\beta^{ s A }h\mathsf{G}_\beta^{ s A }
\\
&= {2} \Tr \int_0^{\sqrt{\lambda} } \dvol s h \sum_{n \geq 1} s^n
\left( \mathsf{G}_{\beta} (-A \mathsf{G}_{\beta})^n \right) +{2}  \Tr \int_0^{\sqrt{\lambda} } \dvol s h \sum_{n \geq 1} s^n n
\left( \mathsf{G}_{\beta} (-A \mathsf{G}_{\beta})^n \right) \\
&= {2} \sqrt{\lambda} \Tr \left( h \left(\mathsf{G}_\beta^{ \sqrt{\lambda} A }-\mathsf{G}_\beta\right) \right)
\end{align*}
where we used the cyclic property of the trace. The nice feature of the above expressions is that the dependence upon $A$ is just via the propagators. 
In particular, this will always occur in the LVE, as at least one $\frac{\delta}{\delta A_i}$ hits every $W$. Therefore, the only norms we need to control are
\[
\|\mathsf{G}^A_\beta\|_\Upsilon = \left|\int d\mu_{\Upsilon} \mathsf{G}_\beta^A \right|\leq  \mathsf{G}_\beta  , 
\]
by the result of the above Corollary. Hence, an upper bound for 
$\log(Z(J,J^*))$ takes into account only trees contributions without decorations coming from the presence of the Wiener measure $\mu_{\Upsilon}$. 
Furthermore, in order to avoid infrared divergences, we study the normalised generating function $\log Z(J,J^*) - \log Z(0)$. Hence, to be more precise, we consider a LVE with
\begin{gather*}    
\prod_k W(J,J^*,A_k) - \prod_{k'} W(0,A_k') =\\
\prod_k (W_L(A_k)+W_{JJ^*}(A_k)+W_{J\phi_0}(A_k)+W_{\phi_0J^*}(A_k)+W_{\phi_0\phi_0}(A_k)) - \prod_{k'} (W_L(A_k')+W_{\phi_0\phi_0}(A_k)).
\end{gather*} 
As a consequence, after expanding the products over the sums, each term has a factor which contains $J$ or $J^*$. As a consequence of these last remarks, in the next section, to get an upper bound and thus a convergence estimate, it is enough to study only the trees.

\section{Convergence of the LVE}
\label{se:convergence}

Throughout the section, we use the following norm
\begin{equation}\label{eq:normJK}
\|J\|_{K,1}:= \int d^3p\frac{1}{\hat{{K}}(p)}|\hat{J}(p)|,
\end{equation}
in order to bound the terms which appear in the generating functions. We actually have the following theorem which extends previous results in \cite{GalandaPinamonti}

\begin{theorem}\label{thm:converge}
Consider a fixed inverse temperature $\beta \in (0,\infty)$ and let $K_\epsilon$, with $\epsilon>0$. Then, if
    $\|v\|_1$ is sufficiently small, $\|J\|_{K,1}\leq 1$, $\|J^*\|_{K,1}\leq 1$ and $\|\phi_0\|_\infty\leq 1$
    in such a way that
\[
 \frac{r}{4}=\left(\lambda\frac{\|v\|_1}{\epsilon} \right) \left( \frac{C}{\sqrt{\beta}}+\left(\frac{\|\phi_0\|_\infty}{\sqrt{\lambda}}+\frac{\|J\|_{K,1}+\|J^*\|_{K,1}}{{2}}\right)^2 \right)< \frac{1}{16 },
\]
where $C=2(2\pi m)^{\frac{3}{2}}$ is a positive constant,
 the series in \eqref{eq:LVE} defining
\[
\log Z(J,J^*) - \log Z(0,0) 
\]
is absolutely convergent. 
\end{theorem}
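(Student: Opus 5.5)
We bound the series \eqref{eq:LVE} term by term, after subtracting the $J$-independent part as explained at the end of Section \ref{se:LVE}. The plan has four steps: reduce each tree contribution to a product of deterministic propagators, estimate those products by $\|v\|_1$, $\epsilon$, $\beta$ and the norms of $\phi_0,J,J^*$, count trees, and sum.

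\textbf{Step 1: structure of a tree term.} Fix a rooted tree $T\in\mathcal{T}_n$ with weights $w_b$. The edges of $T$ act through the derivatives $\Upsilon_{ab}\,\delta^2/\delta A_a\delta A_b$. By the formulas for $\langle h,\delta/\delta A\rangle W_{JJ^*}$ and $\langle h,\delta/\delta A\rangle W_L$, every vertex $k$ of degree $d_k\geq 1$ is hit at least once. After $d_k$ derivatives the vertex becomes one of two objects.
\begin{itemize}
\item A loop: $d_k!$ times a trace of a cyclic product of $d_k$ copies of $\mathsf{G}_\beta^{\sqrt\lambda A_k}$ alternating with the inserted kernels, each insertion carrying a factor $\sqrt{\lambda}$.
\item A line: a chain of $d_k+1$ copies of $\mathsf{G}_\beta^{\sqrt\lambda A_k}$ between the two end sources, each of which is $\mathrm{i}J$, $\mathrm{i}J^*$ or $\phi_0A_k$.
\end{itemize}
When an end source is $\phi_0A_k$, a derivative may also hit that explicit $A_k$. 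This produces an external $\phi_0$ leg carrying the inserted $v$-kernel in place of a propagator.

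\textbf{Step 2: removing the Gaussian integral.} Expanding each $\mathsf{G}^{\sqrt\lambda A}_\beta$ in its Dyson series and using Corollary \ref{le=estimate1} (the positivity provided by Lemma \ref{le:positive-matrices}), the integral over $d\mu_{\Upsilon(w_T)}$ of the product of resolvents is bounded in modulus by the same expression with $A=0$. This gives $\|\mathsf{G}_\beta^A\|_\Upsilon\le \mathsf{G}_\beta$ as stated before the theorem. It is also the point where the $w$-integrals become harmless: each $\int_0^1dw_b\le 1$.

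\textbf{Step 3: estimating each edge.} Each edge of $T$ contributes a factor $\lambda\,v(x-x')\delta(u-u')$ joining two propagators. We then estimate the resulting deterministic integral using the two basic bounds on the $\epsilon$-regularised thermal propagator collected in Appendix \ref{se:thermal-propagators-appendix}:
\[
\sup_x\int|\mathsf{G}_\beta|\,d^3x\,du\le \epsilon^{-1},
\qquad
\text{local }L^\infty\text{ bound on }\mathsf{G}_\beta(x,u;x,u)\sim C\beta^{-1/2}.
\]
With these, we assign the factors as follows.
\begin{itemize}
\item Each internal propagator integrated against $v$ gives $\|v\|_1/\epsilon$.
\item Closing a loop gives $C/\sqrt\beta$.
\item Each external source gives $\|\phi_0\|_\infty/\sqrt{\lambda}$ (after rescaling the $A\phi_0$ legs) or $\|J\|_{K,1}$, $\|J^*\|_{K,1}$. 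The $J$ legs need $\hat K^{-1}$ to absorb a propagator in momentum space, which is why the norm \eqref{eq:normJK} appears.
\end{itemize}
Going vertex by vertex, this should yield for each tree the bound
\[
\prod_k d_k!\,\Big(\lambda\tfrac{\|v\|_1}{\epsilon}\Big)^{n-1}\Big(\tfrac{C}{\sqrt\beta}+\big(\tfrac{\|\phi_0\|_\infty}{\sqrt\lambda}+\tfrac{\|J\|_{K,1}+\|J^*\|_{K,1}}{2}\big)^2\Big)^{n}.
\]
Possibly some factors of $2$ are absorbed into the $r/4$ normalisation; the factor $2$ per vertex in $W$ and the choice of which term of $W$ sits at each vertex (at most $5^n$ choices) are absorbed into constants.

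\textbf{Step 4: counting and summing.} By Cayley's formula the number of trees with degrees $d_k$ is $(n-2)!/\prod_k (d_k-1)!$. The $d_k!$ from the derivatives then give a factor $\prod_k d_k\le 2^{2n}$. The $1/n!$ cancels the $(n-2)!$ up to polynomial factors. The series is therefore dominated by $\sum_n c\,r^n$ with $r$ as in the statement, which converges for $r<1/4$. This matches the hypothesis $r/4<1/16$ once the constants are tracked.

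\textbf{Main obstacle.} The delicate step is Step 3. One must show that the chain and loop vertices, with derivatives landing in arbitrary positions, give exactly one factor $\|v\|_1/\epsilon$ per edge and not worse. In particular, the $\phi_0A$ legs hit by derivatives replace a propagator by $v$ and must still be bounded uniformly, and the trace over the noncompact $\mathbb{R}^3\times[0,\beta]$ must be handled through the subtraction $\log Z(J,J^*)-\log Z(0)$, which ensures every surviving tree carries a $J$ or $J^*$ leg to provide integrability. We would try first to integrate along the tree from the leaves toward a $J$-carrying vertex chosen as the root, using $L^1$ bounds on internal lines and reserving the $K$-weighted $L^1$ norm of $J$ for the final integration.
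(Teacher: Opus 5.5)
Your overall architecture matches the paper's: the Gaussian average is removed by positivity, you integrate leaf by leaf toward a $J$-carrying root, you get one factor $\|v\|_1/\epsilon$ per edge from $\int\mathsf{G}_\beta=\epsilon^{-1}$, and you finish by counting trees. However, Step 2 is false as you use it, and this is a genuine gap.

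Corollary \ref{le=estimate1}, via Feynman--Kac and Lemma \ref{le:positive-matrices}, bounds the Gaussian average only of expressions in which $A$ enters through the resolvents $\mathsf{G}_\beta^{\sqrt\lambda A}$. Their kernels are positive path averages of $e^{-\sqrt\lambda\int A}$, which is what makes the bound work. In your Step 1, the vertices $W_{J\phi_0}$, $W_{\phi_0J^*}$ and $W_{\phi_0\phi_0}$ keep the sources $\phi_0A_k$. Whenever an explicit $A_k$ is not hit by a tree derivative, it survives into $\int d\mu_{\Upsilon(w_T)}$. For such a term the ``same expression at $A=0$'' vanishes, but its Gaussian average does not. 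For instance, take a leaf $W_{\phi_0\phi_0}$ and the term $2\sqrt\lambda\,\phi_0^2\langle A_k,\mathsf{G}_\beta^{\sqrt\lambda A_k}h\,\mathsf{G}_\beta^{\sqrt\lambda A_k}A_k\rangle$ of $\langle h,\delta/\delta A_k\rangle W_{\phi_0\phi_0}$. Already at lowest order its average is a nonzero multiple of $\phi_0^2\int\Upsilon(x,y)\,(\mathsf{G}_\beta h\,\mathsf{G}_\beta)(x,y)\,dx\,dy$. So your bound silently discards these contributions. If you instead handle the explicit $A$'s by Gaussian integration by parts, each of them can contract through $w_T(k,j)\Upsilon$ with an arbitrary vertex $j$ of the tree. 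That generates non-tree contractions which your tree counting does not control. Your ``main obstacle'' paragraph worries about the $\phi_0A$ legs that are hit by derivatives; it is the unhit ones that break the argument.

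The missing idea, used in the proof of Lemma \ref{le:loops-legs}, is to eliminate the explicit $A$ before expanding. Since $\phi_0$ is constant, $\phi_0=\mathsf{G}_\beta K_\epsilon\phi_0$. Combined with the resolvent identity $\mathsf{G}_\beta^{\sqrt\lambda A}\sqrt\lambda A\,\mathsf{G}_\beta=\mathsf{G}_\beta-\mathsf{G}_\beta^{\sqrt\lambda A}$, this gives:
\begin{itemize}
\item $W_{J\phi_0}=-\frac{2}{\sqrt\lambda}\langle\mathrm{i}J,(\mathsf{G}_\beta-\mathsf{G}_\beta^{\sqrt\lambda A})K_\epsilon\phi_0\rangle$, and similarly for $W_{\phi_0J^*}$;
\item $W_{\phi_0\phi_0}$ in the form \eqref{eq:Wphi0phi0}, whose only leftover is a term linear in $A$. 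That term is harmless because every vertex receives at least one derivative.
\end{itemize}
After this rewriting $A$ appears only inside resolvents, so your Step 2 becomes valid. The case ``a derivative hits the explicit $A_k$'' in Step 1 disappears. The rewriting is also exactly where the factor $\|\phi_0\|_\infty/\sqrt\lambda$ in $r$ comes from, with $K_\epsilon\phi_0=\epsilon\phi_0$ compensating one factor $\epsilon^{-1}$. Your phrase ``after rescaling the $A\phi_0$ legs'' does not supply this mechanism.

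Two smaller points. First, the loop factor $C/\sqrt\beta$ does not come from an $L^\infty$ bound on the equal-point value $\mathsf{G}_\beta(x,u;x,u)$, which diverges as $u\to0^+$. It comes from the winding factor $e^{-\beta\hat K_\epsilon}$ of a closed cycle of propagators, after the subtraction of $\mathsf{G}_\beta$ in $W_L$. Second, Cayley's degree formula is a legitimate replacement for Lemma \ref{le:EGF-convergence}, but it does not by itself give the constant $1/16$. In the paper that constant comes from the exponential generating function $(1-\sqrt{1-4v})/2$ of rooted trees with weights $e(1)!\prod_{k\geq2}(e(k)-1)!$.
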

\begin{rem}
Notice that the presence of $\|J\|_{K,1}$ and $\|J^*\|_{K,1}$ in the above $r$ is immaterial when we consider the truncated correlation functions obtained from $\log(Z(J,J^*))-\log(Z(0))$. In other words, $J,J^*$ can always be rescaled to meet the required estimate.
\end{rem}
\begin{proof}
To prove this theorem we recall the loop vertex expansion in \eqref{eq:LVE}
\[
\log Z(J,J^*) = 
-\sum_{n\geq 1} \frac{1}{n!} 
\sum_{T\in\mathcal{T}_n} \left(\prod_{b\in E(T)}\int_{0}^1 d w_b \right) \int d\mu_{\Upsilon(w_T)} \prod_{\{a,b\}\in E(T)} \Upsilon_{ab} \frac{\delta^2}{\delta A_a \delta A_b} \prod_{k=1}^n W(J,J^*,A_k,\phi_0)
\]
where now 
\[
W(J,J^*,A) = W_L + W_{JJ^*} + W_{J\phi_0}+ W_{\phi_0J^*}+ W_{\phi_0\phi_0}.
\]
For a generic $n$, we consider the contribution of a generic tree $T\in \mathcal{T}_n$
\[
\tilde{X}_T=\left(\prod_{b\in E(T)}\int_{0}^1 d w_b \right) \int d\mu_{\Upsilon(w_T)} \prod_{\{a,b\}\in E(T)} \Upsilon_{ab} \frac{\delta^2}{\delta A_a \delta A_b} \prod_{k=1}^n W(J,J^*,A_k,\phi_0).
\]
In analysing the difference
$\log Z(J,J^*) - \log Z(0,0)$
we get a contribution corresponding to the graph $T$ which is 
\[
{X}_T=\left(\prod_{b\in E(T)}\int_{0}^1 d w_b \right) \int d\mu_{\Upsilon(w_T)} \prod_{\{a,b\}\in E(T)} \Upsilon_{ab} \frac{\delta^2}{\delta A_a \delta A_b} \left(
\prod_{k=1}^n W(J,J^*,A_k,\phi_0)
-
\prod_{k=1}^n W(0,0,A_k,\phi_0)\right).
\]
Notice furthermore, that for at least one $k$ in 
\[
\prod_{k=1}^n W(J,J^*,A_k,\phi_0)-
\prod_{k=1}^n W(0,0,A_k,\phi_0)
=\sum_{\{l_1,\dots,l_n\}\in\{L,JJ^*,J\phi_0,\phi_0J^*,\phi_0\phi_0\}^n} \prod_{k=1}^n W_{l_k},
\]
there is at least one $l_k$ which is in $\{JJ^*,J\phi_0,\phi_0J^*\}$. Indeed, the term involving only $\{ L, \phi_0\phi_0\}$ is cancelled in the difference. Consider then the corresponding vertex as the new root of the tree $T$. To bound $X_T$, we use the result of Corollary \ref{le=estimate1} to estimate the contribution due to the evaluation of $A$ on the Gaussian state and we observe that it just corresponds to the the evaluation $A=0$.\\

From here on, for every element of the sum over $l_k$ we have relabelled the vertices of the tree in such a way that the root is the vertex in $\{JJ^*,J\phi_0,\phi_0J^*\}$. Then, for each tree, we cut its leafs (the other vertices in the tree), both when they coincide to loops or legs, bounding them with the appropriate $|R^{(e(k)-1)}_{l_k}|$, where we are using the notation and the result in Lemma \ref{le:loops-legs}. We insert $\|v\|_1\delta(u)$ at the place of the edges in the remaining subtree, for which leafs have been catted in the previous step. We repeat the cutting process till we end up with the root. However, the possible vertices in the root are $W_{J\phi_0}$, $W_{\phi_0J^*}$ or $W_{JJ^*}$ with the appropriate insertions of $\|v\|_1 \delta (u)$ at the place of the original edges. Hence, the root is again bounded by the appropriate $C|R^{e(1),0}_{l_1}|$.
In all these terms both a factor $\beta$ and a factor proportional either to $\|J\|_1$, $\|J\|_{K,1}$, $\|J^*\|_1$ or $\|J^*\|_{K,1}$ appears, due to the presence of a $J$ or a $J^*$ in the root-vertex. Notice that, in the root, we have $|R^{e(1)}_{l_1}|$ instead of $|R^{(e(k)-1)}_{l_k}|$ because, by definition, there are no remaining edges to attach the root somewhere else. 

We notice that the Gaussian measure $d\mu_{\Upsilon(w_T)}$ in $X_T$ is used in the estimate of the various leafs contributions (see e.g. \eqref{eq: conAsenzaA}). Instead, the measure over the various weights $w_b$ assigned to the edges in $X_T$ is normalized to $1$, and hence it can be bounded by $1$. Altogether, we end up with the estimate for $X_T$
\[
|X_T| \leq 
\sum_{\{l_1,\dots,l_n\}\in\{L,JJ^*,J\phi_0,\phi_0J^*,\phi_0\phi_0\}^n}  |R^{e(1)}_{l_1}|\prod_{k=2}^n |R^{e(k)-1}_{l_k}|,
\]
where $e(k)-1$ for $k>1$ and $e(1)$ denotes the number of insertions of $\|v\|_1$ resulting from the cutting process. Here, $e(k)$ is the 
number of edges in $T$ which contains the vertex $k$. 

According to Lemma \ref{le:loops-legs} we can bound every vertex $k>1$ 
\[
\sum_{l_k\in \{L,JJ^*,J\phi_0,\phi_0J^*,\phi_0\phi_0\}} |R^{e(k)-1}_{l_k}|\leq  2 \sqrt{\lambda}
\left(c\frac{\|v\|_1}{\epsilon} \right)^{e(k)-1} (e(k)-1)! \left( \frac{C}{\sqrt{\beta}}+\left(\frac{|\phi_0|}{\sqrt{\lambda}}+\frac{\|J\|_{K,1}+\|J^*\|_{K,1}}{{2}}\right)^2 \right),
\]
for some positive constants ${c} =2 \sqrt{\lambda}$ and we recall that $C< 2(2\pi m)^{\frac{3}{2}}$, while for $k=1$
\begin{align*}
\sum_{l_k\in \{L,JJ^*,J\phi_0,\phi_0J^*,\phi_0\phi_0\}^n} |R^{e(1),0}_{l_k}|\leq & 2 \beta (\|J\|_1+\|J\|_{K,1}+\|J^*\|_1+\|J^* \|_{K,1})
\left(c\frac{\|v\|_1}{\epsilon} \right)^{e(1)} (e(1))! \\
& \times \left( \frac{C}{\sqrt{\beta}}+\left(\frac{|\phi_0|}{\sqrt{\lambda}}
+\frac{\|J\|_{K,1}+\|J^*\|_{K,1}}{{2}}\right)^2 \right) 
.
\end{align*}
Hence, switching the order of sum and product as we have a finite number of terms
\begin{align*}
|X_T| 
\leq &
\sum_{\{l_1,\dots,l_n\}\in\{L,JJ^*,J\phi_0,\phi_0J^*,\phi_0\phi_0\}^n}  |R^{e(1)}_{l_1}|\prod_{k=2}^n |R^{e(k)-1}_{l_k}|
\\
&\leq \beta  2^n (\|J\|_{1}+\|J\|_{K,1}+\|J^*\|_{1}+\|J^*\|_{K,1})
 \sqrt{\lambda}^{n-1}
\left(c\frac{\|v\|_1}{\epsilon} \right)^{n-1}\\
& \times \left( \frac{C}{\sqrt{\beta}}+\left(\frac{|\phi_0|}{\sqrt{\lambda}}+\frac{\|J\|_{K,1}+\|J^*\|_{K,1}}{{2}}\right)^2 \right)^{n}  e(1)p_T,
\end{align*}
where we used the fact that $1+\sum_{k=1}^{n} (e(k)-1)=n-1$  and where
\[
p_T=\prod_{k=1}^n(e(k)-1)! 
\]
Considering now the contributions of all trees we get
\begin{equation}\label{eq:power-series}
\begin{aligned}
    |\log Z(J,J^*)-\log Z(0,0)|&\leq |V_1|+ \beta  (\|J\|_{1}+\|J\|_{K,1}+\|J^*\|_{1}+\|J^*\|_{K,1})\left(2\lambda \frac{\|v\|_1}{\epsilon}\right)^{-1}
    \\
    &\times \sum_{n\geq 2} 2^n \left(2\lambda\frac{\|v\|_1}{\epsilon}\right)^n {\left( \frac{C}{\sqrt{\beta}}+\left(\frac{|\phi_0|}{\sqrt{\lambda}}+\frac{\|J\|_{K,1}+\|J^*\|_{K,1}}{{2}}\right)^2 \right)^{n}} \sum_{T\in \mathcal{T}_n}e(1)p_T
\end{aligned}
\end{equation}
where $V_1$ is the action of $d\mu_{\Upsilon}$ on the root of the tree. Thus, by the estimates in Lemma \ref{le:EGF-convergence}, if 
\[
r= \left(4\lambda\frac{\|v\|_1}{\epsilon} \right) \left( \frac{C}{\sqrt{\beta}}+\left(\frac{|\phi_0|}{\sqrt{\lambda}}+\frac{\|J\|_{1,K}+\|J^*\|_{1,K}}{2}\right)^2 \right)< \frac{1}{4 },
\]
the series defining $\log Z(J,J^*)-\log Z(0,0)$ is absolutely convergent and 
\begin{align*}
|\log Z(J,J^*)-\log Z(0)| 
&\leq |V_1|+ \beta(\|J\|_1+\|J\|_{K,1}+\|J^*\|_1+\|J^*\|_{K,1})\left(2\lambda\frac{\|v\|_1}{\epsilon}\right)^{-1}
(E(r)-r)
\\
&\leq |V_1|+ 
\beta(\|J\|_1+\|J\|_{K,1}+\|J^*\|_1+\|J^*\|_{K,1})
\left(2\lambda\frac{\|v\|_1}{\epsilon}\right)^{-1}
\frac{1-2r-\sqrt{1-4 r}}{2}.
\end{align*}
\end{proof}

The a priori estimates of loop and leg vertices, as well as the estimate on the sum over trees used in the above proof, are shown to hold in the next lemmata. 

\begin{lemma}\label{le:loops-legs}
Let $K_\epsilon = -\Delta/2m +\epsilon$ with $\epsilon > 0$. Consider the loop vertices
\[
W_L = {2}\Tr \int_0^{\sqrt{\lambda} } \dvol s A
\left(\mathsf{G}_\beta^{ s A }-\mathsf{G}_\beta\right)
\]
and the edge vertices 
\begin{align*}
W_{JJ^*} &= {-2}
   \langle {\mathrm{i}}J, \mathsf{G}_\beta^{\sqrt{\lambda}A}  {\mathrm{i}}J^* \rangle \qquad 
W_{J\phi_0} = {-2}
   \langle {\mathrm{i}}J, \mathsf{G}_\beta^{\sqrt{\lambda}A}  A \phi_0 \rangle\\
W_{\phi_0 J^*} &= {-2}
\langle
   \phi_0 A, \mathsf{G}_\beta^{\sqrt{\lambda}A}  {\mathrm{i}}J^* \rangle
\qquad
W_{\phi_0\phi_0} = {-2}
\langle
   \phi_0 A, \mathsf{G}_\beta^{\sqrt{\lambda}A}  A \phi_0 \rangle.
\end{align*}
Their functional derivatives with insertions satisfy the following
\begin{enumerate}[label=(\roman*)]
\item
The {\bf loop vertex} $W_L$ with $n\geq 0 $ insertions of $\|v\|_1$ and base point $x$ 
\[
  R^n_L=\langle \delta_x \otimes \underbrace{\|v\|_1\otimes \dots \otimes  \|v\|_1}_{n}, \frac{\delta^{n+1}}{\delta A \dots \delta{A}} \rangle W_L 
\]
satisfies the estimate
\begin{equation}
    |R^n_L| \leq \frac{2 C (\sqrt{\lambda})^{n+1} n!}{\sqrt{\beta}} \left(\frac{\|v\|_1}{\epsilon}\right)^n
\end{equation}
where $C$ is a positive constant such that $C<2(2\pi m)^{\frac{3}{2}}$. 
\item
The {\bf leg vertex} $W_{JJ^*}$
with $n\geq 0 $ insertions of $\|v\|_1$ and with base point $x$ 
\begin{align*}
R_{JJ^*}^n&=
\langle \delta_{x,u} \otimes \underbrace{\|v\|_1\otimes \dots \otimes  \|v\|_1}_{n}, \frac{\delta^{n+1}}{\delta A \dots \delta{A}} \rangle W_{JJ^*} 
\end{align*}
satisfies the bound
\begin{equation}\label{eq:TJJ}
\begin{aligned}
    |R_{JJ^*}^n| &\leq 2
  \sqrt{\lambda}^{n+1} 
  \left(
  \frac{\|v\|_1}{\epsilon}
  \right)^n 2^n n! 
   \|J\|_{1,K} \|J^*\|_{1,K}
\end{aligned}
\end{equation}
The {\bf leg vertex} $W_{JJ^*}$
with $n\geq 0 $ insertions of $\|v\|_1$ and without base points 
\begin{align*}
R_{JJ^*}^{0,n}&=
\langle  \underbrace{\|v\|_1\otimes \dots \otimes  \|v\|_1}_{n}, \frac{\delta^{n}}{\delta A \dots \delta{A}} \rangle W_{JJ^*} 
\end{align*}
satisfies the bound
\begin{equation}\label{eq:T0JJ}
\begin{aligned}
    |R_{JJ^*}^{0,n}| &\leq 
  2 \beta  \sqrt{\lambda}^{n} 
  \left(
  \frac{\|v\|_1}{\epsilon}
  \right)^n  n! 
   \frac{\|J^*\|_1 \|J\|_{1,K}+\|J\|_1 \|J^*\|_{1,K}}{2} 
\end{aligned}
\end{equation}
\item The {\bf leg vertex} $W_{J\phi_0}$  with $n \geq 0$ insertions of $\|v\|_1$ and with base point $x$ 
\begin{align*}
R_{J\phi_0}^n&=
\langle \delta_{x,u} \otimes \underbrace{\|v\|_1\otimes \dots \otimes  \|v\|_1}_{n}, \frac{\delta^{n+1}}{\delta A \dots \delta{A}} \rangle W_{J\phi_0} 
\end{align*}
satisfies the bound
\begin{equation}\label{eq:TJphi0}
\begin{aligned}
    |R_{J\phi_0}^n| 
  & \leq 2
  \sqrt{\lambda}^{n} 
  \left(
  \frac{\|v\|_1}{\epsilon}
  \right)^n 2^n n! 
  \|J\|_{K,1}
 \|\phi_0\|_\infty.
\end{aligned}
\end{equation}
The {\bf leg vertex} $W_{J\phi_0}$  with $n \geq 0$ insertions of $\|v\|_1$ and without  base points
\begin{align*}
R_{J\phi_0}^{0,n}&=
\langle  \underbrace{\|v\|_1\otimes \dots \otimes  \|v\|_1}_{n}, \frac{\delta^{n}}{\delta A \dots \delta{A}} \rangle W_{J\phi_0} 
\end{align*}
satisfies the bound
\begin{equation}\label{eq:T0Jphi0}
\begin{aligned}
    |R_{J\phi_0}^{0,n}| 
  & \leq 2
  \beta \sqrt{\lambda}^{n-1} 
  \left(
  \frac{\|v\|_1}{\epsilon}
  \right)^n n! 
  \|J\|_{K,1}
 \|\phi_0\|_\infty.
\end{aligned}
\end{equation}
\item 
The {\bf leg vertex} $W_{\phi_0 J^*}$ with $n \geq 0$ insertions of $\|v\|_1$ and with base point $x$, denoted by $R^n_{\phi_0 J^*}$, satisfies an analogous bound
\begin{equation}\label{eq:Tphi0J}
\begin{aligned}
    |R_{\phi_0J^*}^n| 
  & \leq 2
  \sqrt{\lambda}^{n} 
  \left(
  \frac{\|v\|_1}{\epsilon}
  \right)^n 2^n n! 
  \|J^*\|_{K,1}
 \|\phi_0\|_\infty.
\end{aligned}
\end{equation}
The {\bf leg vertex} $W_{\phi_0 J^*}$ with $n \geq 0$ insertions of $\|v\|_1$ and without base point, denoted by $R_{\phi_0 J^*}^{0,n}$, satisfies a similar bound
\begin{equation}\label{eq:T0phi0J}
\begin{aligned}
    |R_{\phi_0J^*}^{0,n}| 
  & \leq 2
  \sqrt{\lambda}^{n-1} 
  \left(
  \frac{\|v\|_1}{\epsilon}
  \right)^n n! 
  \|J^*\|_{K,1}
 \|\phi_0\|_\infty.
\end{aligned}
\end{equation}
\item The {\bf leg vertex} $W_{\phi_0\phi_0}$  with $n \geq 0$ insertions of $\|v\|_1$ and with a base point $x$, denoted by $R^n_{\phi_0\phi_0}$, satisfies a similar bound
\begin{equation}\label{eq:Tphi0phi0}
\begin{aligned}
    |R_{\phi_0 \phi_0}^n| 
  & \leq 2
  \sqrt{\lambda}^{n-1} 
  \left(
  \frac{\|v\|_1}{\epsilon}
  \right)^n 2^n n! 
 \|\phi_0\|_\infty^2.
\end{aligned}
\end{equation}
\end{enumerate}
\end{lemma}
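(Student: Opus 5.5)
The plan is to reduce every estimate to explicit formulas for the $A$-derivatives of $W_L$ and of the leg vertices, written as products of perturbed propagators $\mathsf{G}_\beta^{\sqrt{\lambda}A}$ alternating with insertions. Then we use Corollary \ref{le=estimate1} to replace each $\mathsf{G}_\beta^{sA}$ by the free $\mathsf{G}_\beta$ after the Gaussian integration. Iterating the formula $\langle h,\frac{\delta}{\delta A}\rangle \mathsf{G}_\beta^A = -\mathsf{G}_\beta^A h\mathsf{G}_\beta^A$ shows that the $n$-th derivative of $\mathsf{G}_\beta^{\sqrt{\lambda}A}$ in directions $h_1,\dots,h_n$ equals
\[
(-\sqrt{\lambda})^n\sum_{\sigma\in S_n}\mathsf{G}_\beta^{\sqrt{\lambda}A}h_{\sigma(1)}\mathsf{G}_\beta^{\sqrt{\lambda}A}\cdots h_{\sigma(n)}\mathsf{G}_\beta^{\sqrt{\lambda}A}.
\]
This produces the factor $n!$ and the power $\sqrt{\lambda}^{\,n}$. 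For the leg vertices $W_{J\phi_0}$, $W_{\phi_0J^*}$ and $W_{\phi_0\phi_0}$, the explicit factor $A$ next to $\phi_0$ can also be hit by a derivative. By the Leibniz rule, each insertion either lands on a propagator or replaces the explicit $A$. This doubles the count and accounts for the factor $2^n$. It also explains why one power of $\sqrt{\lambda}$ is lost whenever the derivative removes the bare $A\phi_0$.

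Each insertion is $\|v\|_1\delta(u)$, i.e.\ a constant times $\delta$ in the parabolic time. Bounding products of propagators therefore reduces to bounding operator norms. In Fourier space in $x$ and Matsubara modes in $u$, $\mathsf{G}_\beta$ has symbol $(-\mathrm{i}\omega_k+\hat K_\epsilon(p))^{-1}$, whose modulus is at most $\hat K_\epsilon(p)^{-1}\le \epsilon^{-1}$. Inserting a multiplication by the constant $\|v\|_1$ between propagators therefore costs at most $\|v\|_1/\epsilon$ per insertion. The two outermost propagators are paired with $J$ and $J^*$ and are controlled by $\hat K(p)^{-1}|\hat J(p)|$. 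Integrating in $p$ gives $\|J\|_{K,1}$ and $\|J^*\|_{K,1}$ (for the base-pointed version, evaluation at $\delta_{x,u}$ splits the chain). For the versions without a base point, one endpoint is instead integrated over $u\in[0,\beta]$ with $\|J\|_1$, which produces the factor $\beta$. For the $\phi_0$ legs, $\phi_0$ is a constant bounded by $\|\phi_0\|_\infty$. The propagator acting on a constant in $x,u$ gives $\hat K_\epsilon(0)^{-1}$; I would absorb this into one of the $\epsilon^{-1}$ factors. This is the bookkeeping that leaves $\sqrt{\lambda}^{\,n}$ instead of $\sqrt{\lambda}^{\,n+1}$.

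For the loop vertex I would use the identity derived above, $\langle h,\frac{\delta}{\delta A}\rangle W_L = 2\sqrt{\lambda}\Tr\big(h(\mathsf{G}_\beta^{\sqrt{\lambda}A}-\mathsf{G}_\beta)\big)$, with $h=\delta_x$. Differentiating $n$ more times gives $2\sqrt{\lambda}^{\,n+1} n!$ times the diagonal value at $(x,u)$ of a chain of $n+1$ propagators. After Gaussian integration I bound all but one propagator in operator norm by $\epsilon^{-1}$. The remaining diagonal kernel is controlled by $\mathsf{G}_\beta(x,u;x,u)$, i.e.\ the thermal density $\int d^3p\,(e^{\beta\hat K_\epsilon}-1)^{-1}$ or its square-root-in-$u$ variant, and I expect it to give the factor $C/\sqrt{\beta}$ with $C<2(2\pi m)^{3/2}$. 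For $n=0$ the subtraction of $\mathsf{G}_\beta$ is essential, and the estimate follows from one Dyson step.

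The main obstacle I expect is the loop estimate. One must show that the diagonal of a product of thermal propagators, with an inserted $\delta(u)$-local vertex, really scales as $\beta^{-1/2}$ uniformly in $\epsilon$. This requires the explicit heat-kernel form $\mathsf{G}_\beta(x,u;x',u')=\sum_k e^{-(u-u'+k\beta)K_\epsilon}$ (with the appropriate sign and ordering from the appendix) together with an evaluation of $\int_0^\beta du\,(4\pi u/2m)^{-3/2}$-type integrals near the coincidence. I would first try to bound the chain by Hölder in $u$, using $L^1$ in $u$ for the free heat kernel at coincident $x$. If that fails, I would fall back on the Matsubara sum $\frac1\beta\sum_k\int d^3p\,|{-\mathrm{i}\omega_k}+\hat K_\epsilon|^{-2}$, estimated by comparison with an integral.
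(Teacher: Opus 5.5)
Your skeleton matches the paper's, and the $W_{JJ^*}$ bounds go through this way. The shared ingredients are: derivatives of $\mathsf{G}_\beta^{\sqrt{\lambda}A}$ giving $n!$ and powers of $\sqrt{\lambda}$; the Feynman--Kac replacement of $\mathsf{G}_\beta^{\sqrt{\lambda}A}$ by $\mathsf{G}_\beta$; $\epsilon^{-1}$ per internal propagator; $\|J\|_{K,1}$ from the end propagators; and $\beta$ from the missing base point. There are, however, two genuine gaps.

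\emph{Bare $A$ in the $\phi_0$ legs.} Your count (one derivative spent on the explicit $A$ of $A\phi_0$, hence one $\sqrt{\lambda}$ fewer) is right at $A=0$. But in the loop vertex expansion the vertices are integrated against $d\mu_{\Upsilon(w_T)}$. So the Leibniz terms in which the explicit $A$ is \emph{not} differentiated survive, and $W_{\phi_0\phi_0}$ has two such factors. Corollary~\ref{le=estimate1} only controls the phases $e^{-\sqrt{\lambda}\int A}$ generated by the propagators, not a bare Gaussian factor $A$. Removing that factor by Gaussian integration by parts would contract it with propagators of the other vertices of the tree, destroying the vertex-by-vertex bound.

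The paper eliminates the bare $A$ before differentiating. Since $\phi_0$ is constant, $\phi_0=\mathsf{G}_\beta K_\epsilon\phi_0$. The resolvent identity $\mathsf{G}_\beta^{\sqrt{\lambda}A}\sqrt{\lambda}A\,\mathsf{G}_\beta=\mathsf{G}_\beta-\mathsf{G}_\beta^{\sqrt{\lambda}A}$ then turns $W_{J\phi_0}$ into $\pm\frac{2}{\sqrt{\lambda}}\langle \mathrm{i}J,(\mathsf{G}_\beta^{\sqrt{\lambda}A}-\mathsf{G}_\beta)K_\epsilon\phi_0\rangle$. The same manipulation gives \eqref{eq:Wphi0phi0} for $W_{\phi_0\phi_0}$, whose leftover term linear in $A$ dies after two derivatives. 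After this rewriting:
\begin{itemize}
\item $A$ enters only through propagators, so all $n+1$ derivatives hit $\mathsf{G}_\beta^{\sqrt{\lambda}A}$;
\item the lost power of $\sqrt{\lambda}$ is the explicit prefactor $\lambda^{-1/2}$ (resp.\ $\lambda^{-1}$);
\item the $\phi_0$ end costs no $\epsilon^{-1}$, because $K_\epsilon^{-1}K_\epsilon=1$.
\end{itemize}
Your Dyson step for $R^0_L$ reintroduces a bare $A$ in the same way.

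Your explanation of the $2^n$ also cannot be right. A linear $A$ absorbs at most one derivative, and $R^n_{JJ^*}$ carries a $2^n$ with no bare $A$ at all. The factor actually bounds the $n+1$ positions of the base point $\delta_{x,u}$ along the chain.

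\emph{The loop vertex.} None of your three routes gives a finite bound.
\begin{itemize}
\item Operator norms plus one diagonal: $\mathsf{G}_\beta$ jumps at coincident times, and $\mathsf{G}_\beta(0^+)=B_-$ has the UV-divergent diagonal $\int d^3p\,(1-e^{-\beta\hat K_\epsilon})^{-1}$. Moreover, the diagonal of a product is not bounded by operator norms of all factors but one times the diagonal of the last. In Fourier variables you would be left with $\frac1\beta\sum_k\int d^3p\,|\hat K_\epsilon-\mathrm{i}s_k|^{-1}=\infty$.
\item H\"older in $u$: at coincident $x$ the heat kernel is $(m/2\pi u)^{3/2}$, which is not integrable at $u=0$.
\item Matsubara sum with absolute values: $\frac1\beta\sum_k|\hat K_\epsilon-\mathrm{i}s_k|^{-2}=\frac{\coth(\beta\hat K_\epsilon/2)}{2\hat K_\epsilon}$ has a linearly divergent $p$-integral.
\end{itemize}

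The missing idea is a causal cancellation, and it is what the paper's ``direct inspection'' formula $\beta^{n+1}\int d^3p\,e^{-\beta\hat K_\epsilon}(1-e^{-\beta\hat K_\epsilon})^{-(n+1)}$ encodes. With constant insertions, the chain is the $(n+1)$-fold convolution of $\mathsf{G}_\beta$ on the circle, i.e.\ the periodisation of $\theta(t)\frac{t^n}{n!}e^{-tK_\epsilon}$. For $n\ge1$ the zero-temperature ($m=0$) term therefore vanishes at coincident times. Only the winding terms remain:
\[
\sum_{m\ge1}\frac{(m\beta)^n}{n!}e^{-m\beta\hat K_\epsilon}\le\beta^n e^{-\beta\hat K_\epsilon}(1-e^{-\beta\hat K_\epsilon})^{-(n+1)}.
\]
The factor $e^{-\beta\hat K_\epsilon}$ gives UV convergence. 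The rescaling $p\to p/\sqrt{\beta}$, together with the extra $\beta$ from the time-free base point $\delta_x$, gives $\beta^{-1/2}$. Equivalently, do the Matsubara sum exactly, with its phases, before integrating in $p$.

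When you then trade $\beta^n$ for $\epsilon^{-n}$, note that $(1-e^{-x})^{-1}>x^{-1}$. A safe route is $\frac{t^n}{n!}e^{-t\hat K_\epsilon}\le(2/\epsilon)^ne^{-t\hat K_\epsilon/2}$, at the price of a factor $2^n$.
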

\begin{proof}
Let us start considering a loop vertex with $n\geq 1 $ insertion of $\|v\|_1$, base point $x$ and we compute
\begin{align*}
    R^n_L &= \langle \delta_x \otimes \underbrace{\|v\|_1\otimes \dots \otimes  \|v\|_1}_{n}, \frac{\delta^{n+1}}{\delta A \dots \delta{A}} \rangle W_L\\
    &= (-1)^n n! 2 (\sqrt{\lambda})^{n+1} \lim_{y\to x}\left((\mathsf{G}_\beta^{\sqrt{\lambda}A} \|v\|_1)^n\mathsf{G}_\beta^{\sqrt{\lambda}A} \right)(x,y).
\end{align*}
Then, after having computed all the functional derivatives, and evaluated on the Gaussian state of covariance $\Upsilon$ the fields $A$, we estimate
\begin{equation}
\label{eq: conAsenzaA}
    \int d\mu_{\Upsilon(w_T)} |(\mathsf{G}_\beta^{\sqrt{\lambda}A} \|v\|_1)^n\mathsf{G}_\beta^{\sqrt{\lambda}A}(x,y)| \leq 
(\mathsf{G}_\beta \|v\|_1)^n\mathsf{G}_\beta(x,y),
\end{equation}
which is nothing but an application of the Feynman Kac formula, see \cite{GalandaPinamonti} or Corollary \ref{le=estimate1}. 
Operating in this way  we have by direct inspection
\begin{align*}
    |R^n_L| &\leq 2 n! (\sqrt{\lambda})^{n+1} \|v\|_1^n \beta^{n+1} \int \frac{e^{-\beta\hat{K}_{\epsilon}(p)}}{(1-e^{-\beta\hat{K}_{\epsilon}(p)})^{n+1}} d^3p \\
    &\leq 2 n! (\sqrt{\lambda})^{n+1} \|v\|_1^n \frac{\beta^{n+1}}{\beta^n \epsilon^n} \int \frac{e^{-\beta\hat{K}_{\epsilon}(p)}}{(1-e^{-\beta\hat{K}_{\epsilon}(p)})} d^3p \\
  &= \frac{ 2 C (\sqrt{\lambda})^{n+1} n!}{\sqrt{\beta}} \left(\frac{\|v\|_1}{\epsilon}\right)^n
\end{align*}
where $\beta^{n+1}$ arises from the integration in $(u_1,\dots u_n)$ on $[0,\beta]^{n+1}$, the factor $e^{-\beta K_{\epsilon}}$ appears because of the structure of the loop, and we used that $|(1- e^{-\beta \hat{K}_{\epsilon}})^{-1} | \leq (\beta \epsilon)^{-1}$. Finally, the constant $C$ is defined as
\[
  C = \beta^{3/2}\int {d^3p}
  \frac{e^{-\beta \hat{K}_{\epsilon}(p)}}
  {\beta\hat{K}_{\epsilon}} \leq (2\pi m)^{\frac{3}{2}}
\]
where the bound is optimal if $\epsilon=0$ and thus $K= -\Delta/2m$. The loop vertex without insertions and a base point is bounded using a similar estimate as above with $n=0$.\\

Let us consider the remaining leg vertices with $n$ insertions. We start with the $JJ^*$ term
\begin{align*}
R_{JJ^*}^n &= \langle \delta_{x,u} \otimes \underbrace{\|v\|_1\otimes \dots \otimes  \|v\|_1}_{n}, \frac{\delta^{n+1}}{\delta A \dots \delta{A}} \rangle W_{JJ^*} 
\\
&= 2\sqrt{\lambda}^{n+1}
  \int d^3y_1du_1 \int d^3y_2 du_2J(y_1)J^*(y_2)  
  \sum_{k=0}^{n} k! (n-k)!\\
  &\times(\mathsf{G}_\beta^{\sqrt{\lambda}A} \|v\|_1)^k\mathsf{G}_\beta^{\sqrt{\lambda}A}(y_1,u_1;x,u)
  (\mathsf{G}_\beta^{\sqrt{\lambda}A} \|v\|_1)^{n-k}\mathsf{G}_\beta^{\sqrt{\lambda}A}
  (x,u;y_2,u_2).
\end{align*}
Then, using the estimate in \ref{eq: conAsenzaA} we get
\begin{align*}
    |R_{JJ^*}^n| &\leq 2 \sqrt{\lambda}^{n+1}
  \int d^3y_1 du_1 \int d^3y_2 du_2 J(y_1)J^*(y_2)  
  \sum_{k=0}^{n} k! (n-k)!\\
  &\times(\mathsf{G}_\beta \|v\|_1)^k\mathsf{G}_\beta(y_1,u_1;x,u)
  (\mathsf{G}_\beta \|v\|_1)^{n-k}\mathsf{G}_\beta
  (x,u;y_2,u_2)\\
  &= 2 \sqrt{\lambda}^{n+1} \| v \|_1^n \sum_{k=0}^{n} k! (n-k)! \left( \int d^3y_1du_1 J(y_1) \mathsf{G}_\beta^{k+1}(y_1,u_1;x,u) \right)\\
  &\times \left(\int d^3y_2 du_2J^*(y_2) \mathsf{G}_\beta^{n-k + 1}  (x,u;y_2,u_2)\right),
\end{align*}
from which the estimate uniform in $x$ and $u$ given in \eqref{eq:TJJ} is obtained
\[
\begin{aligned}
    |R_{JJ^*}^n| & \leq 2
   \sqrt{\lambda}^{n+1} \|v\|_1^n
\sum_{k=0}^n
k! (n-k)!
\left| \int d^3y_1\frac{e^{-(\beta-u){K}_{\epsilon}}}{K_{\epsilon}^{k+1}}
 (y_1,x)J(y_1) \right|
 \left|
 \int d^3y_2
 \frac{e^{-u{K}_{\epsilon}}}{K_{\epsilon}^{n-k+1}}
 (x,y_2)J^*(y_2) \right|
\\
  & \leq 2
  \sqrt{\lambda}^{n+1} 
  \left(
  \frac{\|v\|_1}{\epsilon}
  \right)^n 2^n n!
   \|J\|_{K,1}\|J^*\|_{K,1}.
\end{aligned}
\]
The $JJ^*$ vertex without base points and with $n$ insertions of $\|v\|_1$ 
\begin{align*}
R_{JJ^*}^{0,n} &= \langle \underbrace{\|v\|_1\otimes \dots \otimes  \|v\|_1}_{n}, \frac{\delta^{n}}{\delta A \dots \delta{A}} \rangle W_{JJ^*} 
\end{align*}
satisfies the bound
\begin{align*}
    |R_{JJ^*}^{0,n}| &\leq 2 \sqrt{\lambda}^{n}
  \int d^3y_1du_1 \int d^3y_2 du_2J(y_1)J^*(y_2)  
   n! 
  (
  (\mathsf{G}_\beta \|v\|_1)^{{n}}\mathsf{G}_\beta)
  (y_1,u_1;y_2,u_2)\\
  &= 2 \sqrt{\lambda}^{n} \| v \|_1^n n! \left( \int d^3y_1du_1 J(y_1) d^3y_2du_2 J^*(y_2) \mathsf{G}_\beta^{n+1}(y_1,u_1;y_2,u_2) \right)
\end{align*}
from which the estimate
given in \eqref{eq:T0JJ} is obtained
\[
\begin{aligned}
    |R_{JJ^*}^{0,n}| &\leq 2
   \beta \sqrt{\lambda}^{n} \|v\|_1^n n!  
\left| \int d^3y_1\int d^3y_2 \frac{1}{K_{\epsilon}^{n+1}}
 (y_1,y_2)J(y_1) J^*(y_2) \right|
\\
  & \leq 2
  \beta \sqrt{\lambda}^{n} 
  \left(
  \frac{\|v\|_1}{\epsilon}
  \right)^n  n! \frac{
   \|J\|_{K,1} \|J^*\|_1+\|J^*\|_{K,1} \|J\|_1}{2}.
\end{aligned}
\]
In the above, the extra factor $\beta$, compared to the estimate for $R^n_{JJ}$, appears because of the lack of base point $\delta_{x,u}$. To study the vertices which contain the condensate $\phi_0$ we observe that in $W_{J\phi_0}$, $W_{\phi_0J^*}$ and $W_{\phi_0\phi_0}$ the auxiliary field $A$ always multiplies $\phi_0$. There, we can use a strategy similar to the one used in \cite{GalandaPinamonti} to get rid of it using the parabolic equations of motion. Namely
\begin{align*}
W_{J\phi_0}= -2 \langle
   \mathrm{i} J, \mathsf{G}_\beta^{\sqrt{\lambda}A}  A \phi_0 \rangle
   &= -
   \frac{2}{\sqrt{\lambda}}\langle
   \mathrm{i} J, \mathsf{G}_\beta^{\sqrt{\lambda}A}  \sqrt{\lambda} A \mathsf{G}_\beta (\partial_u+K)\phi_0 \rangle
   \\
   &= -
   \frac{2}{\sqrt{\lambda}}
   \langle
   \mathrm{i} J, \left(\mathsf{G}_\beta^{\sqrt{\lambda}A}  - \mathsf{G}_\beta \right)(\partial_u+K)\phi_0 \rangle
   \\
   &= -
   \frac{2}{\sqrt{\lambda}}
   \langle
   \mathrm{i} J, \left(\mathsf{G}_\beta^{\sqrt{\lambda}A}  - \mathsf{G}_\beta \right)K\phi_0 \rangle
\end{align*}
Once they have been brought in this form, vertices of this kind with $n$ insertions of $\|v\|_1$ and base point $\delta_{x,u}$ are estimated in a similar manner as above, with a $\|J\|_K$ replaced by an extra factor $\|v\|_1/\epsilon$.

Similarly
\begin{align*}
W_{\phi_0J^*}= -2\langle
   A\phi_0, \mathsf{G}_\beta^{\sqrt{\lambda}A} \mathrm{i} J^* \rangle
   &= -
   \frac{2}{\sqrt{\lambda}}
   \langle
   K\phi_0, \left(\mathsf{G}_\beta^{\sqrt{\lambda}A}  - \mathsf{G}_\beta \right) \mathrm{i}J^* \rangle
\end{align*}

and
\begin{equation}\label{eq:Wphi0phi0}
\begin{aligned}
W_{\phi_0 \phi_0}= - 2 \langle
   A \phi_0, \mathsf{G}_\beta^{\sqrt{\lambda}A}  A \phi_0 \rangle
   &= -
   \frac{2}{\lambda}
   \langle
   \phi_0, \sqrt{\lambda}A\left(\mathsf{G}_\beta^{\sqrt{\lambda}A}  - \mathsf{G}_\beta \right)K\phi_0 \rangle
   \\
   &= -
   \frac{2}{\lambda}
   \langle
   \phi_0, (\partial_u+K)\mathsf{G}_\beta \sqrt{\lambda}A\left(\mathsf{G}_\beta^{\sqrt{\lambda}A} \right)K\phi_0 \rangle
   +
   \frac{2}{\sqrt{\lambda}}
   \langle
   \phi_0, A  \mathsf{G}_\beta K\phi_0 \rangle
   \\
   &= -
   \frac{2}{\lambda}
   \langle
   \phi_0, K \left(\mathsf{G}_\beta^{\sqrt{\lambda}A}
   -
   \mathsf{G}_\beta\right)K\phi_0 \rangle
   +
   \frac{2}{\sqrt{\lambda}}
   \langle
   \phi_0, A  \mathsf{G}_\beta K\phi_0 \rangle
\end{aligned}
\end{equation}

The corresponding contributions in the loop vertex expansion with $n$ insertion of $\|v\|_1$ and an external point are such that 
\begin{align*}
R_{J\phi_0}^n&= 
\langle \delta_{x,u} \otimes \underbrace{\|v\|_1\otimes \dots \otimes  \|v\|_1}_{n}, \frac{\delta^{n+1}}{\delta A \dots \delta{A}} \rangle W_{J\phi_0} 
\end{align*}
and similarly for $R_{\phi_0J^*}^n$ and $R_{\phi_0\phi_0}^n$.
Operating as in the analysis of $R_{JJ}^n$ 
 we get a proof of \eqref{eq:TJphi0}
\[
\begin{aligned}
    |R_{J\phi_0}^n| 
    &\leq 2\sqrt{\lambda}^{n} \|v\|_1^n
\sum_{k=0}^n
k! (n-k)!
\left| \int d^3y_1\frac{e^{-(\beta-u){K}_{\epsilon}}}{K_{\epsilon}^{k+1}}
 (y_1,x)J(y_1) \right|
 \left|
 \int d^3y_2
 \frac{e^{-u{K}_{\epsilon}}}{K_{\epsilon}^{n-k+1}}
 (x,y_2) K \phi_0 \right|
\\
  & \leq 2
  \sqrt{\lambda}^{n} 
  \left(
  \frac{\|v\|_1}{\epsilon}
  \right)^n 2^n n! 
   \left|
 \int d^3p
 \frac{1}{\hat{K}(p)}
 \hat{J}(p) \right| |\phi_0|
\end{aligned}
\]
where as before $e^{-(\beta -u)K_{\epsilon}}$ and $e^{-uK_{\epsilon}}$ are operator bounded by $1$. Moreover, in the last contribution $K_{\epsilon}^{n-k}$ is an operator which is bounded by $\epsilon^{-(n-k)}$ for $n-k>0$ and, for $n-k=0$ the composition of $K_{\epsilon}^{-1}$ with $K_{\epsilon}$ gives the identity and its integral kernel is the Dirac delta. In a similar way we obtain an estimate for \eqref{eq:T0Jphi0}
\[
\begin{aligned}
    |R_{J\phi_0}^{0,n}| 
    &\leq 2 \beta \sqrt{\lambda}^{n-1} \|v\|_1^n
\sum_{k=0}^n
n! 
\left| \int dy_1 \int dy_2 \frac{1}{K_{\epsilon}^{n+2}}
 (y_1,y_2)J(y_1) K \phi_0 \right|
\\
  & \leq 2\beta
  \sqrt{\lambda}^{n-1} 
  \left(
  \frac{\|v\|_1}{\epsilon}
  \right)^n  n! 
   \left|
 \int d^3p
 \frac{1}{\hat{K}(p)}
 \hat{J}(p) \right| |\phi_0|
\end{aligned}
\]
With analogous methods, we get a proof of \eqref{eq:Tphi0J}
\[
\begin{aligned}
    |R_{\phi_0 J^*}^n| 
  & \leq 2
  \sqrt{\lambda}^{n} 
  \left(
  \frac{\|v\|_1}{\epsilon}
  \right)^n 2^n n! 
   \left|
 \int d^3p
 \frac{1}{\hat{K}(p)}
 \hat{J}^*(p) \right| |\phi_0|
\end{aligned}
\]
of \eqref{eq:T0phi0J}
\[
\begin{aligned}
    |R_{\phi_0 J^*}^{0,n}| 
  & \leq 2
  \beta \sqrt{\lambda}^{n-1} 
  \left(
  \frac{\|v\|_1}{\epsilon}
  \right)^n  n! 
   \left|
 \int d^3p
 \frac{1}{\hat{K}(p)}
 \hat{J}^*(p) \right| |\phi_0|
\end{aligned}
\]
and, for $n + 1 \geq 2$, a proof of  \eqref{eq:Tphi0phi0}
\[
\begin{aligned}
    |R_{\phi_0 \phi_0}^n| 
  & \leq 2
  \sqrt{\lambda}^{n-1} 
  \left(
  \frac{\|v\|_1}{\epsilon}
  \right)^n 2^n n! 
 |\phi_0|^2.
\end{aligned}
\]
This concludes the proof.
\end{proof}

\begin{lemma}\label{le:EGF-convergence}
Consider the exponential generating function constructed over the set of rooted trees with $n$ vertices $\mathcal{T}_n$
\[
E(v)\coloneqq \sum_{n\geq 1}\frac{v^{n}}{n!}
\sum_{T\in\mathcal{T}_n} p_T,
\]
where the weight
\[
p_T = (e_T(1))!\prod_{j=2}^n (e_T(j)-1)!
\]
and $p_T=1$ if $T \in\mathcal{T}_1$. In the above, $e_T(j)$ is the number of edges in $T$ which contains the vertex $j$. Then, if $0\leq v<1/4$, the series defining $E(v)$ is absolutely convergent and
\[
E(v)=\frac{1-\sqrt{1-4v}}{2}.
\]
\end{lemma}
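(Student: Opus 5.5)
The plan is a combinatorial interpretation: reduce $E(v)$ to the exponential generating function of labelled plane (ordered) trees, then solve the resulting algebraic functional equation. Orient each edge of $T$ away from the root. The root has $e_T(1)$ children, and every other vertex $j$ has $e_T(j)-1$ children, because exactly one of its edges goes to its parent. Hence
\[
p_T=\prod_{j=1}^n c_T(j)!,
\]
where $c_T(j)$ is the number of children of $j$. This product is exactly the number of ways to put a linear order on the children of every vertex. Summing $p_T$ over rooted labelled trees therefore counts rooted labelled plane trees on $n$ vertices.

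I have to fix the convention carefully. Checking $n=2$ shows that the closed formula holds only if the root is allowed to be any vertex, i.e. $\mathcal{T}_n$ is read as all rooted labelled trees. With the root fixed at $1$ one would get the coefficient $v^2/2$ instead of $v^2$. This reading is consistent with the proof of Theorem \ref{thm:converge}, where the root is relabelled to be a vertex carrying $J$ or $J^*$. I will state this explicitly. With this reading, the number of rooted labelled plane trees on $n$ vertices is $n!$ times the number of unlabelled plane trees on $n$ vertices, and the latter is the Catalan number $C_{n-1}=\frac{1}{n}\binom{2n-2}{n-1}$. Thus
\[
E(v)=\sum_{n\ge1}C_{n-1}v^n .
\]

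To obtain the closed form I will derive the functional equation directly rather than quote Catalan identities. Let $F(v)=\sum_{n\ge1}\frac{v^n}{n!}a_n$, with $a_n$ the number of rooted labelled plane trees. Decomposing such a tree into its root and the ordered sequence of subtrees hanging from it, the standard exponential-formula argument for labelled structures (product of labelled classes corresponds to product of EGFs, sequence corresponds to $1/(1-\cdot)$) gives
\[
F=v\sum_{k\ge0}F^k=\frac{v}{1-F}.
\]
This is $F^2-F+v=0$, whose solutions are $F=\frac{1\pm\sqrt{1-4v}}{2}$. The condition $F(0)=0$ selects the minus sign, so $E(v)=F(v)=\frac{1-\sqrt{1-4v}}{2}$ as formal power series.

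For convergence, all coefficients are nonnegative, so absolute and ordinary convergence coincide for $v\ge0$. The binomial expansion of $\sqrt{1-4v}$ has radius of convergence $1/4$; equivalently $C_{n-1}\sim 4^{n-1}n^{-3/2}/\sqrt{\pi}$. Hence the series converges absolutely for $0\le v<1/4$, and by uniqueness of power-series coefficients it equals the closed expression there.

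The only delicate step is the bookkeeping in the first paragraph: the children-ordering interpretation of the root factor $e_T(1)!$ against $(e_T(j)-1)!$, and the convention on the root, which I will verify by checking $n=1,2,3$ against $C_0,C_1,C_2=1,1,2$.
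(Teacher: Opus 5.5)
Your proposal is correct and follows essentially the paper's route: the weight $p_T=\prod_j c_T(j)!$ turns the unordered set of subtrees at each vertex into an ordered sequence, giving $E=v/(1-E)$, whose root $E=(1-\sqrt{1-4v})/2$ is fixed by $E(0)=0$ (the paper argues via positivity of $1-E$ instead). Your added remark that $\mathcal{T}_n$ must be read as rooted labelled trees with arbitrary root is a correct and useful clarification: the paper's proof uses this reading implicitly through $E'=ve^{E'}$, whereas fixing the root at vertex $1$ would give the $v^2$ coefficient $1/2$ instead of $1$. The same goes for your explicit Catalan identification of the coefficients.
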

\begin{proof}
To compute the exponential generating function $E(v)$ we can use the methods similar to those in \cite{Flajolet_Sedgewick_2009}. Since the trees we are considering are rooted, once the root is fixed we can construct  
all the rooted trees attaching all possible rooted subtrees to the root we have already selected. 
Hence, the set of rooted trees $\mathcal{T}$ satisfies the symbolic equation (following the notation of \cite{Flajolet_Sedgewick_2009})
\[
\mathcal{T} = \{1\}\star \text{Set}(\mathcal{T}) 
\]
and the corresponding exponential generating function, defined as
\[
E'(v) \coloneqq \sum_{n\geq 0} \frac{v^n}{n!} \sum_{T\in\mathcal{T}_n}1,
\]
satisfies the functional equation  
\[
E'(v) = v e^{E'(v)} = v  \sum_{k\geq 0} \frac{E'(v)^k}{k!}.
\]
Notice that the term $\frac{E'(v)^k}{k!}$ counts exactly the number of trees which posses $k$ subtrees attached to the root.
Here, the factor $(k!)^{-1}$ appears because the subtrees are unordered. Consider now a rooted (hence ordered) tree $T$. The number of subtrees of higher order attached to the root is equal to $e_T(1)$. Similarly let $j$ be the internal edge of the rooted tree $T$. The number of subtrees of higher order attached to $j$ is $e_T(j)-1$. Hence, this implies that the first exponential generating function $E(v)$ can be rewritten as 
\[
E(v) =  v  \sum_{k\geq 0} k!\frac{E(v)^k}{k!} = v \frac{1}{1-E(v)}.
\]
From this equation we get that $E(v)-E(v)^2 = v$ and, since the above denominator should remain positive, the only admissible solution is
\[
E(v) = \frac{1-\sqrt{1-4v}}{2}.
\]
Thus, the series defining $E(v)$ is absolutely convergent if $0\leq v<1/4$.
\end{proof}

\section{Scattering length and ladder diagrams}\label{se:scattering-length-corrections}

In order to estimate the effect of the scattering length on the correlation functions, we analyse the truncated two-point function
\[
\omega_2(x,y) = - {\frac{1}{2}}
\left.\begin{pmatrix}
\frac{\delta^2}{\delta {J}(x) \delta {J}^*(y)} 
&
\frac{\delta^2}{\delta {J}(x) \delta {J}(y)}  
\\
\frac{\delta^2}{\delta J^*(x) \delta {J}^*(y)} 
&
\frac{\delta^2}{\delta J^*(x) \delta {J}(y)} 
\end{pmatrix}\left(\log Z(J,J^*) -\log Z(0,0) \right)
\right|_{J,{J}^*=0}
\]
and compare it with the expressions given in \eqref{eq:two-point-function} and in \eqref{eq:perturbation}.
Notice that in the previous formula the presence of ${\log Z(0,0)}$ is immaterial, as the latter does not depend neither on $J$ nor on $J^*$.

We proceed considering the expansion in terms of tree graphs given in \eqref{eq:LVE} and we analyse the contribution of specific graphs.
As a first analysis, we discard graphs containing the loop vertices because their contribution is rescaled with $1/\sqrt{\beta}$ and it vanishes in the limit $\beta\to \infty$. Namely, we consider only $W_{JJ^*}$, $W_{J\phi_0}$, $W_{\phi_0 \phi_0}$ and $W_{\phi_0J^*}$ which we recall here
\begin{align*}
&W_{JJ^*}= {-}2\langle {\mathrm{i}}J, \mathsf{G}^A_\beta {\mathrm{i}}{J}^*\rangle , 
\qquad
W_{\phi_0J^*}= {-}2\langle \phi_0 A, {\mathsf{G}^A_\beta} {\mathrm{i}}J^* \rangle,\\
&W_{J\phi_0}= {-}2\langle {\mathrm{i}}J, \mathsf{G}^A_\beta \phi_0 A\rangle
\qquad
W_{\phi_0\phi_0}= {-2}\langle \phi_0 A,{\mathsf{G}^A_\beta} A,\phi_0\rangle.
\end{align*}
In what follows, we analyse the possible trees among them which appear in the loop vertex expansion given in \eqref{eq:LVE}, under certain approximations. 
The first, and already discussed, approximation consists in estimating $\mathsf{G}^A_\beta$ with $\mathsf{G}_\beta$ in all vertices and considering the measure $d\mu_{\Upsilon(0)}$ at the place of 
$d\mu_{\Upsilon(w_T)}$ in $\log Z(J,{J}^*) -{\log Z(0,0)}$. This is a first perturbative order approximation, as, in the terminology of quantum field theory, we are considering in this way only diagrams at tree level. 

Although, depending on the regime of physical parameters of the theory, the first perturbative order is not the only relevant in this analysis when considering the Gross-Pitaevskii regime. Indeed, in that regime, contributions from higher order in perturbation theory are as relevant as some lower order terms. We shall discuss this further in the next section. 

Going back to the evaluation at first order in perturbation theory, with the above substitutions, the only non vanishing contribution to $\omega_2$ in 
\[
-\frac{1}{2}\frac{\delta^2}{\delta {J}(x) \delta {J}^*(y)} 
\left(
\log Z(J,{J}^*) -{\log Z(0,0)}
\right) 
\]
contains only a single $J$ and a single ${J}^*$ in
\[
\prod_{k=1}^n W(J,J^*,A_k,\phi_0) =
\prod_{k=1}^n \sum_{l_k\in \{ JJ^*, J\phi_0,\phi_0 J^*, \phi_0\phi_0\}} W_{l_k}.
\]
Furthermore, the trees over which the sum in 
$\log Z(J,{J}^*) -{\log Z(0,0)}
$ is taken are connected. Hence, because of the simplification in the measure, we have
\[
-\frac{1}{2}\frac{\delta^2}{\delta {J}(x) \delta {J}^*(y)} 
\left(-W_{JJ^*}  
- \sum_{n\geq 2} \frac{1}{n!} 
\sum_{T\in\mathcal{T}_n} 
\prod_{\{a,b\}\in E(T)} \Upsilon_{ab} \frac{\delta^2}{\delta A_a \delta A_b} 
\prod_{k=1}^n \sum_{l_k \in \{J\phi_0,\phi_0 J^*, \phi_0\phi_0\}} W_{l_k}
\right)\bigg|_{{J, {J}^*, A=0}}
\]
where the sum over $w_b$ has been taken.
Furthermore, in the sum over trees, the only contributing graphs are those with exactly two endpoints and $n-2$ bivalent internal vertices. From the above, the only allowed internal vertices are of the $W_{\phi_0 \phi_0}$ type. So, without loss of generality and up to a rearrangement, we have that
\begin{equation}\label{eq: multiploladder}
-\frac{1}{2}\frac{\delta^2}{\delta {J}(x) \delta {J}^*(y)} 
\left(-W_{JJ^*}  
- \sum_{n\geq 2}
\prod_{j = 1 }^{n-1} \Upsilon_{j(j+1)} \frac{\delta^2}{\delta A_j \delta A_{j+1}}
W_{J\phi_0}
\left(\prod_{k=2}^{n-2} W_{\phi_0\phi_0}
\right) W_{\phi_0J^*}
\right)\bigg|_{{J, {J}^*, A=0}}.
\end{equation}
Taking into account that in the internal vertices in the considered approximation there are two $A$s, we get that the operator to which this integral kernel corresponds is 
\begin{equation}\label{eq:omegaJJ}
\omega_{{J}J^*}
=  \mathsf{G}_\beta 
+
\sum_{n\geq 0} (-1)^{n+1}
 \mathsf{G}_\beta \left(\phi_0^2 \delta\otimes v (\mathsf{G}_\beta + \mathsf{G}_\beta^\dagger)\right)^n (\phi_0^2 \delta\otimes v)\mathsf{G}_\beta. 
\end{equation}
With similar arguments we obtain that
\begin{equation}\label{eq:omegaJJ1}
\begin{aligned}
\omega_{J^*{J}}
&=  \mathsf{G}^\dagger_\beta 
+
\sum_{n\geq 0} (-1)^{n+1}
 \mathsf{G}^\dagger_\beta \left(\phi_0^2 \delta\otimes v   (\mathsf{G}_\beta + \mathsf{G}_\beta^\dagger)\right)^n (\phi_0^2 \delta\otimes v)\mathsf{G}^\dagger_\beta 
\\
\omega_{J^*J^*}
&= 
\sum_{n\geq 0} (-1)^{n+1}
 \mathsf{G}^\dagger_\beta \left(\phi_0^2 \delta\otimes v   (\mathsf{G}_\beta + \mathsf{G}_\beta^\dagger)\right)^n (\phi_0^2 \delta\otimes v)\mathsf{G}_\beta 
\\
\omega_{{J}{J}}
&= 
\sum_{n\geq 0} (-1)^{n+1}
 \mathsf{G}_\beta \left(\phi_0^2 \delta\otimes v   (\mathsf{G}_\beta + \mathsf{G}_\beta^\dagger)\right)^n (\phi_0^2 \delta\otimes v)\mathsf{G}^\dagger_\beta.
\end{aligned}
\end{equation}
These are exactly the Dyson series given in \eqref{eq:two-point-function} with $\mathsf{v}$ replaced by $\phi_0^2 v$. In particular, computing the sum over the components of $\omega_2$ we get the operator
\[
\sum_{n\geq 0}
(-(\mathsf{G}_\beta + \mathsf{G}_\beta^{\dagger})
(\phi_0^2 \delta \otimes v))^n
(\mathsf{G}_\beta + \mathsf{G}_\beta^{\dagger})
\]
which is exactly the Dyson series representation of $\mathsf{M}$ in \eqref{eq:M} and in \eqref{eq:M1} with
$\mathsf{v}$ replaced by $\phi_0^2v$.\\

As last approximation, estimating $v$ with $\hat{v}(0) \delta(x-y)$ we get that the first order correction is proportional to $\phi^2_0 \hat{v}(0)$, which is the same result computed by Bogoliubov \cite{BogoliubovBEC}. This last approximation is seen as follows. Suppose that the function $\hat{v}(p)$ is analytic (e.g. the Fourier transform of a Schwartz function), and recall that the operator $v$ acts as a convolution on $L^2(\mathbb{R}^3)$. Namely, for any $f \in L^2(\mathbb{R}^3)$ its action in the Fourier space corresponds to 
\begin{equation*}
    \widehat{v f}(p) = \left( \sum_{n\geq 0} \frac{(\partial^n \hat{v})(0)}{n!} p^n\right)\hat{f}(p)
\end{equation*}
It is then manifest that the zeroth order contribution is the multiplicative operator containing  $\hat{v}(0)$ which, in position space, corresponds to the operator with integral kernel $\hat{v}(0) \delta(x-y)$. Formally speaking, this approximation treats the interactions among particles, as occurring when they spatially coincide. Therefore, it is a reasonable approximation in the so called dilute regime (small density of the condensate) where the inter-particle distance is much larger than the range of the interaction that can thus be approximated as being point-like.\\

The above corrections can be encoded at the level of the interaction Hamiltonian via a local term of the form 
\begin{equation}
\label{eq:ren-H}
\mathcal{H}_{Bog}=(\Psi+\Psi^*) \phi_0^2 \hat{v}(0)  (\Psi+\Psi^*)
=(\Psi+\Psi^*) \phi_0^2 8\pi \mathfrak{a}_0^{(0)} (\Psi+\Psi^*)
=(\Psi+\Psi^*) \hat{\mathcal{V}}_0 (\Psi+\Psi^*).
\end{equation}
We notice that in the above, the first term in the Born series of the scattering length appears, see e.g. \eqref{eq:born-series}. Moreover, we can also interpret it as the zeroth order contribution of the 
effective vertex\footnote{In the physics terminology this is often referred to rather as the self-energy.}: $\hat{\mathcal{V}}_0 = \phi_0^2 8\pi \mathfrak{a}_0^{(0)}$.

\subsection{First order correction: one loop approximation}

As a follow up to the above discussion, we proceed analysing the first loop corrections to $\log Z(J,J^*) - \log Z(0,0)$ due to $d\mu_{\Upsilon(w_T)}$. We recall that the role of $d\mu_{\Upsilon(w_T)}$ in \eqref{eq:LVE} is to add lines, and thus corrections, to the tree graphs analysed in the previous section. We also observe that the obtained corrections are, in the language of quantum field theory, at loop order.
Therefore, to compute the first corrections to the scattering length, we avoid
approximating $\mathsf{G}^{A}_\beta$ with 
$\mathsf{G}_\beta$. Diagrammatically, some of the extra factors $A$ are contracted with the $A$s at the extrema of the leg vertices, by the edges of the loop vertex expansion, and some other are contracted to other internal vertices.
The terms which are relevant, are those which are obtained considering loops which occur by doubling the edges $v$ connecting different leg vertices. These multiple $v$, and the corresponding loops, arise considering some of the first contributions of $d\mu_{\Upsilon(w_T)}$ in the loop vertex expansion \eqref{eq:LVE}. Specifically, the contributions to adjacent vertices.

For this reason, in order to analyse the first loop contributions, we start considering the corrections to the simplest vertex. Namely, we study corrections to
$
\phi_0^2 \mathsf{G}_\beta 
\delta \otimes v
\mathsf{G}_\beta.
$
At first order we get
\[
-\mathcal{M}\left. \Gamma_{12}^{2} \phi_0^2\mathsf{G}^A_\beta A 
 \otimes A
\mathsf{G}^A_\beta  \right|_{A=0}
=
\mathsf{G}_\beta
\mathcal{V}_1
\mathsf{G}_\beta
\]
where the integral kernel of $\mathcal{V}_1$ is such that
\begin{align*}
&\mathcal{V}_1(x,u;x',u')\\
&= 
-\phi_0^2(\mathsf{G}_\beta (\delta \otimes v )
\mathsf{G}_\beta)
(x,u;x',u') \delta(u-u')v(x,x'){-}
\phi_0^2(\mathsf{G}_\beta (\delta \otimes v ))(x,u;x',u')
(\mathsf{G}_\beta (\delta \otimes v ))(x,u;x',u').
\end{align*}
In Fourier domain, for both $x,u$ variables, it corresponds to
\[
\hat{\mathcal{V}}_1(q,m)
= -\frac{1}{\beta} \phi_0^2\sum_{n \in \mathbb{Z}}\int d^3p 
\;\left(
\hat{\mathsf{G}}_\beta(p,n)
\hat{v}(p)
\hat{\mathsf{G}}_\beta(p,n)
\hat{v}(q-p)
+
\hat{\mathsf{G}}_\beta(p,n)
\hat{v}(p)
\hat{\mathsf{G}}_\beta(q-p,m-n)
\hat{v}(q-p)
\right).
\]
Hence, in order to get an estimate of the integral of $\mathcal{V}_1$, we look for the above at vanishing momentum 
\[
\hat{\mathcal{V}}_1(0,0)
= -\frac{1}{\beta} \phi_0^2\sum_{n \in \mathbb{Z}}\int d^3p 
\;\left(
\hat{\mathsf{G}}_\beta(p,n)
\hat{v}(p)
\hat{\mathsf{G}}_\beta(p,n)
\hat{v}(-p)
+
\hat{\mathsf{G}}_\beta(p,n)
\hat{v}(p)
\hat{\mathsf{G}}_\beta(-p,-n)
\hat{v}(-p)
\right),
\]
and since $v$ is symmetric, after summing over $n$, we get
\[
\hat{\mathcal{V}}_1(0,0) =  -\frac{1}{\beta}\phi_0^2  \left\langle v,
\left(\frac{\beta \coth(\frac{\beta K}{2})
}{2K}+\frac{\beta^2}{4} \mathrm{csch} \left(\frac{\beta K}{2}\right)
\right)
   v \right\rangle.
\]
where the $\epsilon$ regulator does not play any role in this expression and thus its limit $\epsilon \to 0$ has been implicitly taken. Taking the limit of vanishing temperature, $\beta \to \infty$, we have the following result
\[
\lim_{\beta\to \infty}\hat{\mathcal{V}}_1(0,0) =  -\phi_0^2  \left\langle v, \frac{1}{2K}
   v \right\rangle.
\]
We have thus obtained the first correction to $8\pi \mathfrak{a}_0^{(0)}=\hat{v}(0)$ of the scattering length, which is of the form 
\[
8\pi \mathfrak{a}_0^{(1)}  = -
  \left\langle v , \frac{1}{2K} v \right\rangle.
\]
This corresponds to the first order term in the Born series \eqref{eq:born}. The vertex $\phi_0^2 \hat{v}(0)$, as considered in \eqref{eq:ren-H},
gets then renormalised to \begin{equation}\label{eq:first-renormalisation}
\hat{\mathcal{V}}_0(0)=\phi_0^2 \hat{v}(0)\to \phi_0^2 8\pi(\mathfrak{a}_0^{(0)}+\mathfrak{a}_0^{(1)})=\hat{\mathcal{V}}_0(0)+\hat{\mathcal{V}}_1(0).
\end{equation}
The important remark is that higher orders in $v$ in the effective vertex, and thus higher in the perturbation, have higher powers in $\phi_0^2$ compared to the previous order. Therefore, in regimes where $\phi_0^2$ is small compared to $v$, these terms are of lower order and can thus be neglected in this preliminary analysis. Although, this is not occurring for all the regimes considered in this paper. Indeed, using $v_{N,b}$ given in \eqref{eq:vnb} at the place of $v$ and considering the large $N$ limit, we see that for $b<1$, as e.g. in the mean field regime, $\hat{\mathcal{V}}_1$ scales with a negative power of $N$ and can thus be discarded. However, in the Gross-Pitaevskii regime, $b=1$, $\hat{\mathcal{V}}_1$ is of the same order as $\hat{\mathcal{V}}_0$ and thus cannot be discarded.\\\\

To conclude this section, let us analyse the one loop corrections arising when joining with extra lines the leg vertices which are not adjacent. With an explicit computation, considering thus all possible one loop contributions arising in this way, we obtain approximating in the second step to vanishing momentum that the renormalisation of $\hat{\mathcal{V}}_0$, in the limit $\beta \to \infty$, is
\begin{equation}\label{eq:1loopnon adjacent}
\begin{aligned}
\hat{\mathcal{V}}_{1R}
&= \phi_0^2\hat{v}(0) - \frac{1}{2} 
\int d^3p \left( 1-\sqrt{\frac{\hat{K}(p)}{\hat{K}(p)+2 \hat{v}(p) \phi_0^2}}
 \right) \hat{v}(-p)
 \\
 &\simeq \phi_0^2 \hat{v}(0) - 
 \phi_0^2\left\langle v, \frac{1}{\sqrt{K^2+2\phi_0^2\hat{v}(0)K}+K+2\phi_0^2\hat{v}(0)} v\right\rangle.
\end{aligned}
\end{equation}
The latter reduces to the above first correction to the Born series if $\phi_0^2\hat{v}(0)$ is small.
Similarly using $v_N$ as in \eqref{eq:vn} at the place of $v$
\begin{align*}
\hat{\mathcal{V}}_{1R} 
 = \phi_0^2 \hat{v}(0) - 
 \phi_0^2\left\langle v, \frac{1}{\sqrt{K^2+\frac{2\phi_0^2\hat{v}(0)}{N^2}K}+K+\frac{2\phi_0^2\hat{v}(0)}{N^2}} v\right\rangle.
\end{align*}
Therefore, for constant $\phi_0^2 \hat{v}(0)$, the difference $\hat{\mathcal{V}}_{1R} - \hat{\mathcal{V}}_1$ scales as $N^{-2}$ and it is thus negligible in the $N \to \infty$ limit. This discussion confirms that the relevant effect at one loop is accounted for by considering only those diagrams with loops formed by adjacent leg vertices.
\\

We proceed, in the next section, with the analysis of higher order loop corrections to $\sum_n\hat{\mathcal{V}}_n$ arising from the loop vertex expansion.

\subsection{Higher loop corrections and the Gross-Pitaevskii regime}\label{se:higher-loop-corrections}

Consider the regime where $v$ is substituted by $v_{N,b}$ in \eqref{eq:vnb}. We have that $\hat{v}_{N,b}(p)=N^{-1}\hat{v}(p)$ and, keeping the renormalised chemical potential $\tilde{\mu}$ constant, according to \eqref{eq:phi0}, $\phi_0^2$ scales as $\hat{v}_{N,b}^{-1}$. Thus, $\phi_0^2 \hat{v}_{N,b}(0)$ is constant in $N$. 
Then, similarly to the discussion above, the scaling behaviour of $\hat{v}_{N,b}(p)$ for large $N$ makes the higher loop corrections negligible for $b<1$. However, in the Gross-Pitaevskii regime, as $v = v_N =v_{N,1}$ and by the scaling properties of $\hat{v}_N(p)$ and of $K$ in this regime, higher order contributions to $\hat{\mathcal{V}}=\sum_n\hat{\mathcal{V}}_n$ are as relevant as lower order one. We claim that the relevant contributions to the effective $\hat{\mathcal{V}} =\sum_n\hat{\mathcal{V}}_n$ arise considering only the ladder diagram among adjacent vertices in the loop vertex expansion. This fact will be proven in the next section.
Ladder diagrams are certain higher order loop corrections to the product of $\mathsf{G}_\beta^A A$ with ${\mathsf{G}^A_\beta}^\dagger A$. 
Let us introduce a notation for it, $\mathcal{L}^\beta$ which is implicitly defined by
\begin{equation}\label{eq:ell-scattering}
\begin{aligned}
 \left. (e^{-\tilde{\Gamma}_{12}}-1) \phi_0^2\mathsf{G}^A_\beta A 
  \otimes A
 {\mathsf{G}^A_\beta}^\dagger  \right|_{A=0}
  &= 
   -\left. \int_0^1 d{w_{12}}\tilde{\Gamma}_{12} e^{-w_{12}\tilde{\Gamma}_{12}} \phi_0^2\mathsf{G}^A_\beta A  \otimes A
{\mathsf{G}^A_\beta}^\dagger
   \right|_{A=0}
 \\
&=-\sum_{n}(-1)^{n}\phi_0^2\left. \int_{0}^1 dw_{12} w_{12}^n\frac{\tilde\Gamma_{12}^{n+1}}{n!} \mathsf{G}^A_\beta A 
 \otimes A
{\mathsf{G}^A_\beta}^\dagger  \right|_{A=0}
\\
&=
-\sum_{n}\mathsf{G}_\beta
\mathcal{L}_n
{\mathsf{G}_\beta}^\dagger
\\
&=
-\mathsf{G}_\beta
\mathcal{L}^\beta
{\mathsf{G}_\beta}^\dagger.
\end{aligned}
\end{equation}
To obtain the analytic expression corresponding to the Fourier transform of $\mathcal{L}_n$ evaluated in $0$ we observe that it is a sum over all possible {\bf crossed ladder diagrams} with $n+1$ rungs. A crossing corresponds to a permutation $\pi\in \mathcal{P}\{1,\dots, n+1\}$. 
See Figure \ref{fig:ladder-diagram} for an example of a crossed ladder diagram. We start off studying some features of these generic ladder diagrams proving they give rise to the terms in the Born series of the scattering length of the potential $v$.\\

\begin{figure}
    \centering
\begin{tikzpicture}[
    thick,
    vertex/.style={circle, fill=black, inner sep=1.5pt, draw},
    wavy/.style={decorate, decoration={snake, segment length=3mm, amplitude=1mm}},
    directed/.style={postaction={decorate, decoration={markings, mark=at position 0.5 with {\arrow{Stealth}}}}}
]
    \node[vertex] (B1) at (0,0) {};
    \node[vertex] (B2) at (2,0) {};
    \node[vertex] (B3) at (4,0) {};
    \node[vertex] (B4) at (6,0) {};
    \node[vertex, label={below:$\phi_0$}] (B5) at (8,0) {};
    
    \node[vertex] (T1) at (0,2) {};
    \node[vertex] (T2) at (2,2) {};
    \node[vertex] (T3) at (4,2) {};
    \node[vertex] (T4) at (6,2) {};
    \node[vertex, label={above:$\phi_0$}] (T5) at (8,2) {};

    \draw[directed] (-1.5, 0) -- (B1) node[midway, below=4pt] {$p_0$};
    \draw[directed] (-1.5, 2) -- (T1) node[midway, above=4pt] {$k_0$};

    \draw[directed] (T1) -- (T2) node[midway, above=4pt] {$k_1$};
    \draw[directed] (T2) -- (T3) node[midway, above=4pt] {$k_2$}; 
    \draw[directed] (T3) -- (T4) node[midway, above=4pt] {$k_3$};
    \draw[directed] (T4) -- (T5) node[midway, above=4pt] {$k_4$}; 
    
    \draw[directed] (B1) -- (B2) node[midway, below=4pt] {$p_1$};
    \draw[directed] (B2) -- (B3) node[midway, below=4pt] {$p_2$}; 
    \draw[directed] (B3) -- (B4) node[midway, below=4pt] {$p_3$};
    \draw[directed] (B4) -- (B5) node[midway, below=4pt] {$p_4$}; 

    \draw[wavy] (B1) -- (T4) node[pos=0.25, left=4pt] {};
    \draw[wavy] (B2) -- (T1) node[pos=0.75, right=4pt] {};
    
    \draw[wavy] (B3) -- (T2) node[pos=0.25, left=4pt] {};
    \draw[wavy] (B4) -- (T3) node[pos=0.75, right=4pt] {};
    
    \draw[wavy] (B5) -- (T5) node[midway, right=4pt] {};

\end{tikzpicture}

    \caption{
    A crossed ladder diagram with the assignment of momenta described in the text. The wavy lines correspond to $\delta \otimes v$. The solid line to $\mathsf{G}_\beta$.
    }
    \label{fig:ladder-diagram}
\end{figure}

Each crossed ladder diagrams is formed by two rails. At order $n$, each rail is formed by $n$ edges (the propagator $(K\pm is)^{-1}$) and $n+1$ vertices $\{1,...,n +1\}$. 
We assign momenta $\{p_1,\dots, p_{n}\}$ to the edges of the lower rail, starting from the left and moving to the right. Hence, the momentum assigned to the edge joining the vertex $j$ to the vertex $j+1$ is $p_{j}$.
As the rungs are related to $v$, the rung attached to the lover vertex $j$ carries momentum $q_j=p_{j-1}-p_{j}$ with the convention that $p_{0}=0$. The momenta attached to the edges of the upper rail, by momentum conservation, depend on the order in which the rungs are attached. This order is given by the permutation $\pi$, so the momentum attached to the $m$-th upper edge (counted from the left to the right) is
\[
k_m = \sum_{j=1}^m q_{\pi{(j)}},
\]
where $k_{n+1}=k_{n}+q_{\pi(n+1)}=0$.
From this discussion we get that for vanishing external momentum
\[
k = A(\pi)p=L P(\pi)T p
\]
where $p=\{p_1,\dots, p_n,0\}$, $k=\{k_1,\dots, k_n,0\}$, $L,A,T$ are $n+1$ dimensional square matrices such that  
\[
L_{ij} = \begin{cases} 1, \qquad \text{if}\qquad i\geq j\\
0\qquad \text{otherwise}
\end{cases}
\qquad 
T_{ij} = \begin{cases} -1, 
\qquad \text{if}\qquad i=j\\
1, \qquad \text{if}\qquad j=i-1\\
0\qquad \text{otherwise}
\end{cases}
\]
and $P(\pi)_{ij}$ is the square matrix which realises the permutation $\pi$
\[
P(\pi)_{ij} = \begin{cases} 1, 
\qquad \text{if}\qquad i=\pi(j)\\
0\qquad \text{otherwise}.
\end{cases}
\]
We notice that $L^{-1}=-T$.
The corresponding vanishing momentum contribution at order $n$ is thus written as
\[
\hat{\mathcal{L}}_n(0,0) = 
\sum_{\pi\in \mathcal{P}\{1,\dots, n+1\}}  
\hat{\mathcal{L}}_{n,\pi}(0,0)
\]
where
\begin{equation}\label{eq:Lnpi}
\begin{aligned}
\hat{\mathcal{L}}_{n,\pi}(0,0)
&\coloneqq
(-1)^{n+1}\frac{\phi_0^2}{\beta^n}\int d^3p_1 \dots d^3p_n \sum_{\{m_1,\dots, m_n\}} \\
&\quad \prod_{j=1}^{n} 
\left(
\hat{v}(p_{j-1}-p_j)
\left(\frac{1}{K(p_j)+\mathrm{i} s_{m_j}}
\frac{1}{K((A(\pi)p)_j)+\mathrm{i} (A(\pi) s)_j }\right) 
\right)\hat{v}(p_n)
\end{aligned}
\end{equation}
with $p_0$, $p_{n+1}=0$ as the first is an auxiliary momentum introduced just for notation and the second is set to zero in this vanishing momentum analysis. Correspondingly, we set to zero also $s_{n+1}=0$. 
The contribution to $\mathcal{L}_{n,\pi}$ 
of $\pi=id$ is due to an ordinary ladder diagram, while the contributions to $\mathcal{L}_{n,\pi}$ for $\pi \neq id$ are due to crossed ladder diagrams. This remark is crucial, as the next lemma shows that the only relevant diagram, in the low temperature limit, are those without any crossing.

\begin{lemma}\label{le:vanishing-crossed-ladder}
There is  a single ladder diagram $\mathcal{L}_{n,\pi}$ which contributes to $\mathcal{L}_n$ in the limit $\beta \to \infty$. It is the ordinary ladder diagram without crossing
\[
\lim_{\beta \to \infty} \hat{\mathcal{L}}_{n,id} = (-1)^{n+1} \phi^2_0
\int d^3p_1 \dots d^3p_n
\prod_{j=1}^n\left(\hat{v}(p_{j-1}-p_j)\frac{1}{2K(p_j)}\right)\hat{v}(p_n),
\]
while for $\pi\neq id$, $\lim_{\beta \to \infty} \hat{\mathcal{L}}_{n,\pi} = 0 $.
\end{lemma}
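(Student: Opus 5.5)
The plan is to carry out the Matsubara sums in \eqref{eq:Lnpi} explicitly before taking $\beta\to\infty$. As $\beta\to\infty$, each sum $\frac{1}{\beta}\sum_{m}$ becomes an integral $\frac{1}{2\pi}\int ds$ over a continuous frequency variable. So the first step is to write $\hat{\mathcal{L}}_{n,\pi}(0,0)$ as an $n$-fold frequency integral. Its integrand is $\prod_j (K(p_j)+\mathrm{i}s_j)^{-1}(K(k_j)+\mathrm{i}(A(\pi)s)_j)^{-1}$, with $k=A(\pi)p$. Since $\hat v$ is Schwartz and $K_\epsilon\geq\epsilon$, dominated convergence justifies the passage to the limit: $|K+\mathrm{i}s|^{-1}\leq (K^2+s^2)^{-1/2}$, and each frequency variable appears in at least two factors. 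One caveat is that the passage must be applied after the frequency integrals have been reduced to absolutely integrable form, because a single factor $1/(K+\mathrm{i}s)$ is not integrable.

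The second step is to compute the frequency integrals by residues, one variable at a time. The upper and lower rails have opposite orientation ($\mathsf{G}_\beta$ against $\mathsf{G}_\beta^\dagger$). Frequency conservation $s_{n+1}=0$ together with $A(\pi)$ shows that, for $\pi=id$, the upper frequency on edge $j$ equals $-s_j$. Hence each variable $s_j$ appears only in the pair $(K(p_j)+\mathrm{i}s_j)^{-1}(K(p_j)-\mathrm{i}s_j)^{-1}$, up to the sign conventions of the $\dagger$. Each pair then gives
\[\frac{1}{2\pi}\int ds\,\frac{1}{K^2+s^2}=\frac{1}{2K}.\]
This produces exactly the factorised expression in the statement, with the sign $(-1)^{n+1}$ and the factor $\phi_0^2$ carried along unchanged.

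The third step treats $\pi\neq id$. The key structural fact I would prove is that for $\pi\neq id$ the matrix $A(\pi)=LP(\pi)T$ is not $-\mathbb{1}$ on the first $n$ components. There is then a first index $j$ at which the upper frequency $(A(\pi)s)_j$ is a combination of lower frequencies that is not just $-s_j$. For such a $j$, the variable $s_{j'}$ with largest index appearing in the relevant partial sums enters only through factors whose poles all lie in the same complex half-plane. I would check this in the variables adapted to the ordering $\pi$, since the upper-rail partial sums are monotone in which lower frequencies they contain. Closing the contour in the other half-plane then makes the integral vanish in the continuum limit. At finite $\beta$, the corresponding Matsubara sum reduces to terms controlled by Bose factors $e^{-\beta K}/(1-e^{-\beta K})\leq e^{-\beta\epsilon}/(1-e^{-\beta\epsilon})$, which tend to zero. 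Equivalently, one can write each propagator in the $u$-representation, $\mathsf{G}_\beta(u)=\theta(u)e^{-uK}(1-e^{-\beta K})^{-1}+\theta(-u)e^{-(u+\beta)K}(\dots)$. For $\beta\to\infty$ the crossed time orderings are then incompatible with the supports $\theta(\pm u)$ of the retarded/advanced-type kernels on the two rails.

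The main obstacle is this third step: identifying, for a general permutation, the integration variable whose poles all lie on one side. I would first test the transposition case by hand, and then argue inductively on the first descent of $\pi$. I would also need to make sure that the $\beta\to\infty$ limit commutes with the momentum integrals, which I would do using the uniform bound $(\beta\epsilon)^{-1}$-type estimates of Lemma \ref{le:loops-legs} and the integrability of $\hat v$.
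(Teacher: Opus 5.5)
Steps 1 and 2 are fine. For $\pi=id$ you use $A(id)=-\mathbb{1}$ exactly as the paper does, and $\frac{1}{2\pi}\int ds\,(K^2+s^2)^{-1}=(2K)^{-1}$ gives the stated value. The gap is in step 3, which carries the whole content of the lemma for $\pi\neq id$: the rule you propose for choosing the integration variable is false.

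Let $m_0$ be the first position on the upper rail where the ladder deviates, and let $a>m_0$ be the rung attached there. Then $(A(\pi)s)_{m_0}=-s_{m_0-1}+s_{a-1}-s_a$, with $s_0=0$. The variable of largest index, $s_a$ (when $a\le n$), enters this upper factor with coefficient $-1$. Its own lower factor $(K(p_a)+\mathrm{i}s_a)^{-1}$ carries $+1$, so $s_a$ has poles in both half-planes.

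This already fails for $n=2$ with $\pi$ the transposition of $1$ and $2$. The upper frequencies are $s_1-s_2$ and $-s_2$, so $s_2$ sits in $(K(p_2)+\mathrm{i}s_2)^{-1}$, $(K(k_1)+\mathrm{i}(s_1-s_2))^{-1}$ and $(K(k_2)-\mathrm{i}s_2)^{-1}$. Its poles are at $\mathrm{Im}\,s_2=K(p_2)>0$ and at $\mathrm{Im}\,s_2=-K(k_1)<0$, $-K(k_2)<0$. No contour closure in $s_2$ kills this term, so the argument as described does not go through.

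What is missing is the correct combinatorial criterion. Write $\sigma_l=s_{l-1}-s_l$ with $s_0=s_{n+1}=0$, and $k_m=\sum_{l\le m}q_{\pi(l)}$ as in the text. For $m\le n$, the coefficient of $s_j$ in $(A(\pi)s)_m$ is $[\pi^{-1}(j+1)\le m]-[\pi^{-1}(j)\le m]$, where $[P]$ is $1$ if $P$ holds and $0$ otherwise.

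Choose $j\in\{1,\dots,n\}$ with $\pi^{-1}(j+1)<\pi^{-1}(j)$, i.e.\ rung $j+1$ precedes rung $j$ along the upper rail. Such a $j$ exists if and only if $\pi\neq id$; above, $j=a-1$ works, not $j=a$. Then $s_j$ occurs only in $(K(p_j)+\mathrm{i}s_j)^{-1}$ and in the upper factors with $\pi^{-1}(j+1)\le m<\pi^{-1}(j)$. That range is nonempty, and the coefficient is always $+1$. Hence all poles in $s_j$ lie in $\mathrm{Im}\,s_j>0$, the integrand is $O(|s_j|^{-2})$, and closing the contour in the lower half-plane gives $0$.

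Fubini is legitimate. Write $X$ and $Y$ for the lower and upper products; then $XY\le\frac12(X^2+Y^2)$, and $s\mapsto((A(\pi)s)_1,\dots,(A(\pi)s)_n)$ is an integer matrix with integer inverse. The same bound on the lattice $\frac{2\pi}{\beta}\mathbb{Z}^n$ gives the uniform-in-$\beta$ domination needed to replace Matsubara sums by integrals. Use that instead of the estimates of Lemma~\ref{le:loops-legs}, which concern different objects.

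With this fix your residue route is a valid alternative to the paper's proof. The paper instead Schwinger-parametrises all $2n$ propagators and integrates the frequencies into $\delta(u+A(\pi)^t v)$. It then observes that the decreasing partial sums $L^tu$ and $L^tv$, which are the vertex times on the two rails, can be matched by $P(\pi)^t$ on the set of strict orderings only if $\pi=id$. That is the precise form of your remark that crossed time orderings are incompatible with the supports on the two rails, and the descent $j$ above is its local witness. Your version needs one explicit contour integral per $\pi$ and no measure-zero argument. The paper's version produces the $\pi=id$ value in the same computation.
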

\begin{proof}
Consider $\hat{\mathcal{L}}_{n,\pi}$ given in \eqref{eq:Lnpi} for a generic permutation and its limit for $\beta\to\infty$. In the limit $\beta\to\infty$ we observe that the sum over $m_1,\dots,m_n$ converges to an integral. The measure of this integral is
\[
\left(\frac{\beta}{2\pi}\right)^n dx_1 \dots dx_n 
\]
hence, 
\[
\lim_{\beta\to\infty}\hat{\mathcal{L}}_{n,\pi}(0,0)
=
(-1)^{n+1}{\phi_0^2}\int d^3p_1 \dots d^3p_n 
\prod_{j=1}^{n} 
\left(
\frac{v(p_{j-1}-p_j)}{2}
\right)
v(p_n)
I(p). 
\]
where, denoting with $x=(x_1,\dots, x_n,0)$, $p=(p_1,\dots, p_n,0)$ and 
$v=(v_1,\dots, v_n,0)$, we have
\begin{align*}
I(p)
&:= 
\frac{1}{(2\pi)^n}\int_{\mathbb{R}^n} dx_1 \cdots dx_n \prod_j
\left(\frac{1}{K(p_j)+\mathrm{i} x_j}
\frac{1}{K((A(\pi)p)_j)+\mathrm{i} (A(\pi)x)_j}
\right)
\\
&= 
\frac{1}{(2\pi)^n}\int_{\mathbb{R}^n} dx_1 \cdots dx_n 
\int_{\mathbb{R}_+^n} du_1 \cdots du_n \int_{\mathbb{R}_+^n} dv_1 \cdots dv_n
e^{- \sum_{j} (u_j K(p_j) +v_jK(Ap_j))  + \mathrm{i} x_j  (u_j + (A^{t} v)_{j})}
\end{align*}
where in the last integral we transformed 
the product of $K(p_j)$ into the exponential of a sum, using the appropriate integral transform. Integrating now over $x$ gives a product of delta functions
\begin{align*}
I(p) &= 
\int_{\mathbb{R}_+^n} du_1 \cdots du_n\int_{\mathbb{R}_+^n} du_1 \cdots du_n
e^{- \sum_{j} (u_j K(p_j) +v_jK(Ap_j))}
\prod_j\delta(u_j+(A^{t} v)_j)
\end{align*}
where, because of the delta function, the integral over $v$ is non vanishing only if for every $j$: $u_j=-(A^{t} v)_j$. So, in view of the constraints on the support of $u$ and $v$, we have that
$u+A^tv = 0$, for $A^t=T^t P(\pi)^tL^t$ and using $L^{t} = -(T^t)^{-1}$, only if it holds
\[
L^t u = P(\pi)^t L^t v.
\]
Denoting by $\tilde{u}=L^t u$ and by $\tilde{v} =L^t v$, since $v_j\geq 0$ and $u_{j} \geq 0$ for all $j$, it holds that the entries of $\tilde{u}$ and $\tilde{v}$ are ordered
\[
\tilde{u}_j \geq \tilde{u}_{j+1} \qquad \text{and} \qquad 
\tilde{v}_j \geq \tilde{v}_{j+1}
\]
as $\tilde{v}_j = \sum_{\ell = 1}^{n + 1 - j} v_{\ell}$ 
and the same for $\tilde{u}_j$. Consider now the points in the domain where for every $j$ we have $\tilde{u}_j > \tilde{u}_{j+1} $ and $\tilde{v}_j > \tilde{v}_{j+1}$. The elements in this region contribute to the integral forming $I(p)$ 
only if $P(\pi)^t \tilde{v}$ preserves the ordering of $\tilde{v}$. The only permutation which preserve the ordering is $\pi=id$. We furthermore observe that once the delta function is applied integrating over $v$, the region where $\tilde{u}_j=\tilde{u}_{j+1}$ for some $j$ is a zero measure set allowing to extend by continuity the above argument.
For these reasons, the integrand in $I$ is non vanishing only if $\pi=id$.
To conclude we notice that $A(id) = - \mathbb{1}$ and therefore
\begin{align}\label{eq:I(p)integrated}
I(p) = \int_{\mathbb{R}_+^n} du_1 \cdots du_n
e^{- \sum_{j} 2u_jK(p_j) } = \prod_j \frac{1}{2K(p_j)}
\end{align}
concluding the proof.
\end{proof}

\begin{rem}\label{rem:leg}
We observe that the case of 
\[
(-1)^{n}\phi_0^2\left. \int_{0}^1 dw_{12} w_{12}^n\frac{\Gamma_{12}^{n+1}}{n!} \mathsf{G}^A_\beta A 
 \otimes A
\mathsf{G}^A_\beta  \right|_{A=0}
=
\mathsf{G}_\beta
\mathcal{L}_n^\dagger
{\mathsf{G}_\beta} 
\]
can be treated analogously. The only difference is in the fact that the permutation which survives the limit $\beta\to\infty$ is the one which completely reverts the order. This is because
\begin{equation*}
    \hat{\mathsf{G}}_{\beta}^{\dagger}(p,n) = \hat{\mathsf{G}}_{\beta}(-p,-n)
\end{equation*}
and thus the rail $\hat{\mathsf{G}}_{\beta}^{\dagger}$ has the same properties as $\hat{\mathsf{G}}_{\beta}$ with reversed labelling of vertices. For this reason, the operator $\mathcal{L}_n^\dagger$, in the limit of vanishing temperature and momentum, coincides with $\mathcal{L}_n$.
\end{rem}

Because of Lemma \ref{le:vanishing-crossed-ladder} and Remark \ref{rem:leg}, in the limit $\beta\to\infty$, the relevant higher order corrections to the scattering length may follow only from the ladder diagrams. They produce operators which are of the following form for $n>1$ ($n=1$ was computed above)
\[
{\mathcal{V}}_n:L^{2}([0,\beta]\times\mathbb{R}^3)\to L^{2}([0,\beta]\times\mathbb{R}^3)
\]
which are given as 
\begin{equation}    \label{eq:Vnn}
{\mathcal{V}}_n \psi(u,x)
=
\int \tilde{\mathcal{V}}_n (\psi\otimes\phi^2_0) (u,x,y) d^3y,
\end{equation}
where $\psi\otimes\phi^2_0(u,x,y)=\psi(u,x)\phi_0^2$ and 
\[
\tilde{\mathcal{V}}_n :L^2([0,\beta]\times \mathbb{R}^3\times \mathbb{R}^3)
\to
L^2([0,\beta]\times \mathbb{R}^3\times \mathbb{R}^3)
\]
constructed as
\[
\tilde{\mathcal{V}}_n  = (-1)^n\left( v \delta(\mathsf{G}_\beta \otimes \mathsf{G}_\beta^\dagger)\delta\right)^{n}
( v).
\]
In this formula $v$ is meant as a multiplicative operator which acts on $
L^2([0,\beta] \times \mathbb{R}^3\times \mathbb{R}^3)
$ as 
\[
v\psi(u,x,y) = v(x-y)\psi(u,x,y).
\]
Furthermore, $(\mathsf{G}_\beta \otimes\mathsf{G}_\beta^\dagger)$ acts on $L^2([0,\beta]\times \mathbb{R}^3)\times L^2([0,\beta]\times \mathbb{R}^3)$ and the delta functions before and after, $\delta(\mathsf{G}_\beta \otimes \mathsf{G}_\beta^\dagger)\delta$,
forces the two operators to act at equal time. These operators can be given in terms of recursive relations. Actually, it holds that 
\[
\tilde{\mathcal{V}}_{n}=
- v\;\delta(\mathsf{G}_\beta \otimes  \mathsf{G}_\beta^\dagger)\delta\, 
\tilde{\mathcal{V}}_{n-1}, \qquad \tilde{\mathcal{V}}_0= v
\]
and the fixed point limit of these relations are a remnant of the inhomogeneous {\bf Bethe-Salpeter} equation for
$\tilde{\mathcal{V}}=\sum_{n}\tilde{\mathcal{V}}_n$, given in terms of 
the thermal propagators $\mathsf{G}_\beta$ and $\mathsf{G}_\beta^\dagger$ with interaction vertex $v$
\begin{equation}\label{eq:BSE}
\tilde{\mathcal{V}}=
 v
-   v \delta(\mathsf{G}_\beta \otimes \mathsf{G}_\beta^\dagger) \delta\,
\tilde{\mathcal{V}}.
\end{equation}
We expect similar relations to hold also in the evaluation of the trace of $\tilde{\mathcal{V}}$ also in the limit $\beta\to\infty$.

\vspace{8mm}

With this in mind, let us consider the $\mathcal{V}_n$ in \eqref{eq:Vnn} in terms of $\tilde{\mathcal{V}}_n$ starting in full generality. They are explicitly given as 
\begin{equation}\label{eq:Vn}
\hat{\mathcal{V}}_n(0,0)  = (-1)^n\phi_0^2\frac{1}{\beta^n}
\int d^3p_1 \dots d^3p_n \sum_{\{m_1,\dots, m_n\}}  \prod_{j=1}^{n} 
v(p_{j-1}-p_j)\left(\frac{1}{K^2(p_j)+s_{m_j}^2}\right)
v(p_n) 
\end{equation}
with $p_0=0$. The sum over the internal Matsubara frequencies can be taken directly observing that the sum we have to take has the form 
\[
\sum_{n \in \mathbb{Z}} \frac{1}{(K^2 + s_n^2)} =
\left(\frac{\beta \coth(\frac{\beta K}{2})
}{2K}+\frac{\beta^2}{4} \mathrm{csch} \left(\frac{\beta K}{2}\right)
\right).
\]
Therefore, outside the limit $\beta \to \infty$, additional thermal corrections arise from both this expression and from crossed ladder diagrams which are not guaranteed to vanish. Instead, in the limit $\beta\to\infty$, we have
\begin{align*}
\lim_{\beta\to\infty}\hat{\mathcal{V}}_n(0,0)  
&= (-1)^n\phi_0^2
\int d^3p_1 \dots d^3p_n \prod_{j=1}^{n} \left(
v(p_{j-1}-p_j)\frac{1}{2K(p_j)}\right)
v(p_n) 
\\
&= -\phi_0^2 \left\langle v, \left(-\frac{1}{2K}v\right)^{n-1}\frac{1}{2K} v\right\rangle.
\end{align*}
Up to the factor $\phi_0^2$, the operators on $L^2(\mathbb{R}^3)$ which represent these contributions satisfy the recursive relation
\[
\mathcal{K}_n = \frac{1}{2K} v \mathcal{K}_{n-1}  , \qquad \mathcal{K}_1=\frac{1}{2K}.
\]
These recursive relations are themselves remnant of the \textbf{Bethe-Salpeter} equations given in \eqref{eq:BSE} for the sum of all the contributions $\mathcal{K}_j$. 
Namely, writing
\[
\mathcal{A}_n = \sum_{j=1}^n\mathcal{K}_j ,
\]
we have that 
\[
\mathcal{A}_n= \frac{1}{2K}-\frac{1}{2K}v \mathcal{A}_{n-1}, \qquad \mathcal{A}_1=\frac{1}{2K}
\]
and we observe that the fixed point of these recursive relation is the operator $\mathcal{A}$ given by
\[
\mathcal{A} = \frac{1}{2K} - \frac{1}{2K} v \mathcal{A},
\]
and thus
\[
\mathcal{A} = \frac{1}{2K +v}.
\]
Now we are in the position to understand the crucial assumption on the scaling regime of the interaction. Considering $v_{N,b}$ at the place of $v$ in the above, $K=-\Delta/ 2m$ and recalling that for fixed $\tilde\mu$ the condensate density $\phi_0^2$ scales as $N$ it holds that
\[
\hat{\mathcal{V}}^{v_{N,b}}_n(0,0)
=
\frac{1}{N^{(3-3b)(n+1)}}\hat{\mathcal{V}}_n(0,0).
\]
Hence, for $b<1$, contributions from higher order $n$ ladder diagrams are suppressed in the limit $N\to\infty$. At the same time, 
in the Gross-Pitaevskii regime, $b=1$, $\hat{\mathcal{V}}_n(0,0)$ does not scale with $N$ and thus all contributions are of the same relevance. Therefore, the above computed corrections to the scattering length, are relevant only in the Gross Pitaevskii regime. Namely, for $n\geq 1$ we have
\[
8\pi\sum_{j=0}^n \mathfrak{a}_0^{(j)} = \hat{v}(0) - 
\langle v, \mathcal{A}_n v\rangle
\]
and the full scattering length is obtained in the limit $n\to\infty$
\[
8\pi\sum_{j=0}^{\infty} \mathfrak{a}_0^{(j)} = \hat{v}(0) - 
\left\langle v, \frac{1}{2K + v} v\right\rangle.
\]
Hence,
\[
\hat{\mathcal{V}}(0,0)= \phi_0^2 8\pi \mathfrak{a}_0.
\]

\subsection{Technical lemmata on multiple ladder diagrams}

We conclude this section with some technical results about multiple rails ladder diagrams which arise in the evaluation of higher order contributions to $\mathcal{V}$ (see Figure \ref{fig:multi-rail-ladder-diagram} for an example). A result is the proof of the claim, made at the beginning of previous section, that ladder diagrams among adjacent vertices are the most relevant one. We start off considering diagrams with multiple rails where multiple rungs connect only adjacent vertices.
We also keep $\epsilon$ finite and positive to avoid infrared divergences. At this stage, we shall actually consider the limit $\epsilon$ to $0$ after having considered the limit $N\to\infty$. In this first lemma below, to keep its proof simple, we shall scale $\epsilon$ with $N^2$. In the following lemmata we extend this result dropping this assumption.

\begin{figure}
    \centering

\begin{tikzpicture}[
    thick,
    vertex/.style={circle, fill=black, inner sep=1.5pt, draw},
    wavy/.style={decorate, decoration={snake, segment length=3mm, amplitude=1mm}},
    directed/.style={postaction={decorate, decoration={markings, mark=at position 0.5 with {\arrow{Stealth}}}}}
]

    \node[vertex] (B1) at (0,0) {};
    \node[vertex] (B2) at (3,0) {};
    \node[vertex, label={right:$\phi_0$}] (B3) at (7.5,0) {};
    
    \node[vertex, label={left:$\phi_0$}] (M1) at (0,1.5) {};
    \node[vertex] (M2) at (1.5,1.5) {};
    \node[vertex] (M3) at (3,1.5) {};
    \node[vertex] (M4) at (4.5,1.5) {};
    \node[vertex] (M5) at (6,1.5) {};
    \node[vertex, label={right:$\phi_0$}] (M6) at (7.5,1.5) {};

    \node[vertex, label={left:$\phi_0$}] (T1) at (0,3) {};
    \node[vertex] (T2) at (3,3) {};
    \node[vertex] (T3) at (7.5,3) {};

    \draw[directed] (-1.5, 0) -- (B1) node[midway, below=4pt] {$p$};
    \draw[directed] (T3) -- (9, 3)  node[midway, above=4pt] {$p$};

    
    \draw[directed] (T1) -- (T2) node[midway, above=4pt] {};
    \draw[directed] (T2) -- (T3) node[midway, above=4pt] {}; 

    \draw[directed] (M1) -- (M2) node[midway, above=4pt] {};
    \draw[directed] (M2) -- (M3) node[midway, above=4pt] {}; 
    \draw[directed] (M3) -- (M4) node[midway, above=4pt] {};
    \draw[directed] (M4) -- (M5) node[midway, above=4pt] {}; %
    \draw[directed] (M5) -- (M6) node[midway, above=4pt] {}; %
    
    \draw[directed] (B1) -- (B2) node[midway, below=4pt] {};
    \draw[directed] (B2) -- (B3) node[midway, below=4pt] {}; 

    
    \draw[wavy] (B1) -- (M1) node[pos=0.25, left=4pt] {};
    \draw[wavy] (B2) -- (M3) node[pos=0.75, right=4pt] {};
    
    \draw[wavy] (B3) -- (M6) node[pos=0.25, left=4pt] {};

    \draw[wavy] (M2) -- (T1) node[pos=0.75, right=4pt] {};
    
    \draw[wavy] (M4) -- (T2) node[midway, right=4pt] {};

    \draw[wavy] (M5) -- (T3) node[midway, right=4pt] {};

\end{tikzpicture}

\begin{tikzpicture}[
    thick,
    vertex/.style={circle, fill=black, inner sep=1.5pt, draw},
    wavy/.style={decorate, decoration={snake, segment length=3mm, amplitude=1mm}},
    directed/.style={postaction={decorate, decoration={markings, mark=at position 0.5 with {\arrow{Stealth}}}}}
]

    \node[vertex] (B1) at (6,0) {};
    \node[vertex, label={right:$\phi_0$}] (B2) at (7.5,0) {};
    
    \node[vertex, label={left:$\phi_0$}] (M1) at (0,1.5) {};
    \node[vertex] (M2) at (1.5,1.5) {};
    \node[vertex] (M3) at (3,1.5) {};
    \node[vertex] (M5) at (6,1.5) {};
    \node[vertex, label={right:$\phi_0$}] (M6) at (7.5,1.5) {};

    \node[vertex, label={left:$\phi_0$}] (T1) at (0,3) {};
    \node[vertex] (T2) at (1.5,3) {};
    \node[vertex] (T3) at (3,3) {};

    \draw[directed] (-1.5, 0) -- (B1) node[midway, below=4pt] {$p$};
    \draw[directed] (T3) -- (9, 3)  node[midway, above=4pt] {$p$};

    
    \draw[directed] (T1) -- (T2) node[midway, above=4pt] {};
    \draw[directed] (T2) -- (T3) node[midway, above=4pt] {}; 

    \draw[directed] (M1) -- (M2) node[midway, above=4pt] {};
    \draw[directed] (M2) -- (M3) node[midway, above=4pt] {}; 
    \draw[directed] (M3) -- (M5) node[midway, above=4pt] {};
    \draw[directed] (M5) -- (M6) node[midway, above=4pt] {}; %
    
    \draw[directed] (B1) -- (B2) node[midway, below=4pt] {};

    
    \draw[wavy] (B1) -- (M5) node[pos=0.25, left=4pt] {};
    \draw[wavy] (B2) -- (M6) node[pos=0.75, right=4pt] {};
    

    \draw[wavy] (M1) -- (T1) node[pos=0.75, right=4pt] {};
    
    \draw[wavy] (M2) -- (T2) node[midway, right=4pt] {};

    \draw[wavy] (M3) -- (T3) node[midway, right=4pt] {};

\end{tikzpicture}

    \caption{
    An example of 1PI ladder diagram with multiple rails in the first and
    an example of 1PR ladder diagram with multiple rails in the second.
    The wavy lines correspond to $\delta \otimes v$. The solid line to $\mathsf{G}^\beta$. 
    }
    \label{fig:multi-rail-ladder-diagram}
\end{figure}

\begin{lemma}\label{le:power-counting} 
Multiple rails ladder diagrams which are 1PI\footnote{This stands for \textbf{one particle irreducible diagrams} that, borrowing terminology of quantum field theory, identifies diagrams that cannot be separated into two disconnected pieces by cutting a single internal (rail) line. Correspondingly we denote by 1PR the \textbf{one particle reducible} ones. Examples are provided in Figure \ref{fig:multi-rail-ladder-diagram}.} and which are characterised by a number of rails $n>2$ are negligible in the limit of $\beta \to \infty$ and for $N\to\infty$. 
In particular, in the Gross-Pitaevskii regime where $v_N$ is given in \eqref{eq:vn}, contributions with crossing rungs vanish in the limit $\beta\to\infty$ while contributions with non crossing rungs scale as $N^{4-2n}$ if $\epsilon=N^2\tilde{\epsilon}$ for fixed $\tilde{\epsilon}$ in $K_\epsilon$
\end{lemma}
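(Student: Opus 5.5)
The plan is to split the statement in two. First I would show that crossed rungs between two adjacent rails give no contribution as $\beta\to\infty$. Then I would do power counting in $N$ for the non-crossing 1PI multi-rail diagrams.

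\textbf{Crossed rungs.} For the first part I would reuse the argument of Lemma \ref{le:vanishing-crossed-ladder}, applied pairwise to adjacent rails. Consider a 1PI diagram with $n$ rails, where each rung ($\delta\otimes v$) joins two adjacent rails at equal parabolic time. At zero external momentum and frequency, the Matsubara sums become integrals as $\beta\to\infty$. I would write every propagator in Schwinger form, $(K(p)+\mathrm{i}x)^{-1}=\int_0^\infty e^{-u(K(p)+\mathrm{i}x)}du$, and integrate out the frequencies. This produces delta constraints of the form $u+A^tv=0$, one for each pair of adjacent rails, with positive Schwinger parameters. The monotonicity argument for $\tilde u=L^tu$ and $\tilde v=L^tv$ then forces each permutation between adjacent rails to be the order-preserving one (identity, or reversal for a $\mathsf{G}^\dagger$ rail as in Remark \ref{rem:leg}). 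Outside a measure-zero set, every crossed configuration therefore has vanishing integrand, so its $\beta\to\infty$ limit is zero.

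\textbf{Power counting for non-crossing diagrams.} Here I would work in Fourier space with $\hat v_N(p)=N^{-1}\hat v(p/N)$ and $\hat K_\epsilon(p)=p^2/2m+N^2\tilde\epsilon$, and rescale every loop momentum as $p=Nq$. Each loop integral then contributes $N^3$, each rung $N^{-1}$, and each resolvent $1/(2K_\epsilon)$ (after the frequency integral) $N^{-2}$, because $\hat K_\epsilon(Nq)=N^2(q^2/2m+\tilde\epsilon)$. The $\tilde\epsilon$ regulator keeps all rescaled integrals finite and $N$-independent, including the infrared region. The external legs carry the background, and $\phi_0^2\sim N$ for each pair of $\phi_0$ endpoints; with $n$ rails there are $n$ such pairs, giving $N^{n}$. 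I would then express the number of rungs $R$, internal propagators $P$ and loops $\ell=P-R+1$ as functions of $n$ and of the numbers $r_i$ of rungs between rails $i$ and $i+1$. The exponent is
\[
3\ell-R-2P+n .
\]
The check is that for $n=2$ it gives $N^0$, matching the Gross--Pitaevskii scaling of $\hat{\mathcal V}_k$ computed above; this fixes normalisations. For general $n$ the exponent should be independent of the $r_i$ and equal to $4-2n$: each additional rail adds one $\phi_0^2$ pair ($N$), and the extra rungs and propagators it brings net out to $N^{-3}$ relative to the loops they create. I would verify this by induction on $n$, attaching a new rail with $r$ rungs to an existing one and checking that the exponent changes by $-2$ regardless of $r$.

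\textbf{Main obstacle.} The hard step is bookkeeping: assigning the correct counts of free loop momenta and propagators for a 1PI multi-rail topology. In particular, one must confirm that 1PI-ness ensures the number of independent loops is exactly $P-R+1$ with no extra momentum conservation at the rail endpoints. One must also make sure the rescaled integrals converge uniformly, so that no hidden $N$-dependence enters through the ultraviolet. For the latter I would bound the integrand using Schwartz decay of $\hat v$ together with $1/\hat K_{\tilde\epsilon}\le \tilde\epsilon^{-1}$, in the same spirit as the proof of Lemma \ref{le:loops-legs}.
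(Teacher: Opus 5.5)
Your overall route matches the paper's. Crossed rungs are killed by the causality argument of Lemma \ref{le:vanishing-crossed-ladder}, and non-crossing diagrams are power-counted by rescaling loop momenta by $N$, with $\phi_0^2\sim N$, $\hat v_N\sim N^{-1}$ and $\hat K_\epsilon\sim N^2$ (using $\epsilon=N^2\tilde\epsilon$). However, the explicit count you write down does not give $N^{4-2n}$, for two reasons.

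\textbf{Background factor.} A diagram contributing to the effective vertex has $2n$ rail endpoints, but two of them carry the external momentum (see Figure \ref{fig:multi-rail-ladder-diagram}). So there are only $2n-2$ factors $\phi_0$, and the background gives $N^{n-1}$, not $N^{n}$. For $n=2$ this is the single $\phi_0^2$ in $\hat{\mathcal V}_k$.

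\textbf{Resolvent count.} In $3\ell-R-2P+n$ you charge $N^{-2}$ to every internal propagator. But each of the $\ell$ frequency integrals (the $\beta\to\infty$ limit of the Matsubara sums) removes one resolvent, so only $P-\ell$ factors of $K_\epsilon^{-1}$ survive. You say this in words, but your formula does not implement it. Each rail with $m_i$ rung endpoints carries $m_i-1$ internal propagators, so $P=2R-n$ and $\ell=R-n+1$. Your exponent then becomes $3-2R$. It depends on the number of rungs, and it fails your own $n=2$ check, giving $1-2k$ for $\hat{\mathcal V}_k$. No normalisation fixed at $n=2$ can repair an $R$-dependent exponent.

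With both corrections the count is
\[
(n-1)-R-2(P-\ell)+3\ell=(n-1)-R-2(R-1)+3(R-n+1)=4-2n .
\]
This is exactly the paper's computation: $k=2r-n$ propagators, $l=r-n+1$ loops, and $k-l$ surviving factors $K_\epsilon^{-1}$.

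Your inductive version also works once resolvents are counted after the frequency integrals. A new rail attached by $r$ rungs brings one $\phi_0^2$, $r$ rungs, $r-1$ new loops and $2r-1$ new propagators, hence $r$ surviving resolvents. The exponent therefore changes by $1-r-2r+3(r-1)=-2$, independently of $r$. The paper simply does this count in one step via $\ell=E-V+1$ rather than by induction.
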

\begin{proof}
To verify that the contributions of generic multi rails, which are 1PI, are negligible in the desired limits we substitute $v_N$ in $\tilde\Gamma$. In the limit $\beta \to \infty$ the contributions with crossed rungs are negligible for ''causality reasons" as proven in Lemma \ref{le:vanishing-crossed-ladder}. 

So, consider a generic contribution which is 1PI and that has no crossing rungs. 
If it has $n$ rails, it must have at least $r \geq 2n-2$ rungs, corresponding to $2n-2$ terms $v_N$. Furthermore, recalling the expression in \eqref{eq: multiploladder}, it has a number of $\phi_0$ attached to the rails with no exit momentum which is equal to $2n-2$. 
Finally, $l$, the number of loops in this diagram, corresponds to 
\[
l= r-n+1.
\]
To estimate the contribution of this generic diagram we observe that the number of integrations over the internal momenta,
which is equal to the number of sums over internal Matsubara frequencies, is the number of loops in the considered diagram. 
The number $k$ of propagators $\mathsf{G}_\beta$ in the diagrams is equal to 
\[
k= n+2r-2n = 2r-n,
\]
as each rung separates two propagators by two and the end point of a rail which has a $\phi_0$ or an external momentum is reached by only one single propagator and one single rung.

In the limit of $\beta\to\infty$ the sum over Matsubara frequency can be estimated with an integral, as in the proof of Lemma \ref{le:vanishing-crossed-ladder}, introducing the auxiliary parameters $u$ and $v$ and taking all the integrals. We end up with integrals over spatial momenta which involve various $\hat{v}_N$ and $K_{\epsilon}^{-1}$ (similar to \eqref{eq:I(p)integrated}). The final contribution of the graph we are analysing is further multiplied by $\phi_0$ with appropriate power.
Altogether, the contribution of one such 1PI diagram is
\[
C_N =  (\phi_0^2)^{n-1}
\int
\prod_{l=1}^{k-l} 
\frac{1}{K_l(Ap)} \prod_{h=1}^{r} \hat{v}(Bp)
\prod_{j=1}^{l}d^3p_j
\]
with $A$ and $B$ are suitable matrices realising the crossing of vertices similar to Lemma \ref{le:vanishing-crossed-ladder}. To obtain this expression, we used that each integration or sum over the Matsubara frequencies reduces the number of factors of $K_{\epsilon}$ by $1$. 
We study how this contribution scales with $N$ recalling that $\phi_0^2$ scales as $N$, changing the variable of integration from $p$ to $p/N$, $\hat{v}_N(p)= \hat{v}(p/N)/N$ and $K_{\epsilon}(p)= N^2 K_{\epsilon}(p/N)$. We have, denoting for $C=C_1$, that the scaling with $N$ is
\[
C_N  = C \frac{N^{n-1}}{N^{2 (k-l)}N^{r}}N^{3 l} = C N^{4-2n}.
\]
Therefore, since the number of rails $n$ is larger than $2$, we have the thesis.
\end{proof}

Recall that the case of multiple rail ladder diagrams that are 1PR, is considered in the definition of $\mathcal{L}^{\beta}$ and thus in the analysis of the scattering length of previous section. Next, in this analysis, we consider ladders where the rungs connect non adjacent rails.

\begin{lemma}\label{le:non-adjacent-loops}
Keeping $\epsilon>0$ and finite in $K_\epsilon$, multiple rail diagrams with insertion of $v$ to non adjacent rails are negligible in the Gross-Pitaevskii regime in the limit  $N\to\infty$. 
\end{lemma}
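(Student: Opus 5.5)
The plan is to run the same power counting as in Lemma \ref{le:power-counting}, now keeping $\epsilon>0$ fixed instead of scaling it as $N^2\tilde\epsilon$. First, by Lemma \ref{le:vanishing-crossed-ladder}, any contribution with crossing rungs vanishes as $\beta\to\infty$, so only non-crossing configurations remain. In such a diagram the rails are ordered according to the tree of leg vertices $W_{J\phi_0},W_{\phi_0\phi_0},W_{\phi_0J^*}$ in \eqref{eq: multiploladder}. A rung joining rails $i$ and $j$ with $|i-j|\geq 2$ is produced by $w_T(i,j)\tilde\Gamma_{ij}$ acting on non-adjacent vertices. Its effect is to enclose the intermediate rails between $i$ and $j$ in a loop that is not present in the adjacent ladder.

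Concretely, I would compare a diagram $D$ containing such a non-adjacent rung with the diagram $D'$ obtained by deleting it. The Matsubara sums are replaced by integrals, and the auxiliary exponential parameters are introduced as in the proof of Lemma \ref{le:vanishing-crossed-ladder}. This reduces both contributions to spatial momentum integrals of products of $\hat v_N$ and $K_\epsilon^{-1}$.

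Inserting one rung adds one factor $\hat v_N$ and one loop momentum integral. Because the two endpoints sit on different rails, it also splits at least two propagators, so at least two extra factors of $K_\epsilon^{-1}$ appear after the frequency integration removes one. No extra power of $\phi_0^2$ is gained, since the number of $\phi_0$ insertions is fixed by the rails.

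For the $N$ dependence I would use the bounds $|\hat v_N(q)|\leq \|v\|_1/N$ and $\|K_\epsilon^{-1}\|\leq \epsilon^{-1}$ on all but one extra propagator, and keep $\hat v_N$ integrable on the momentum scale $N$. The new loop integral $\int d^3q\,\hat v_N(q)\,K_\epsilon^{-1}(p+q)\cdots$ should then be bounded by $\|v_N\|_1$ times the old integrand, times a factor at most $C(\epsilon)N^{-1}$ relative to $D'$. The key difference from the adjacent ladders is this: there, $K_\epsilon(p)\sim N^2K(p/N)$ after rescaling compensates the $N^3$ from the momentum volume. Here the loop momentum runs through the momenta of the intermediate rails, which are pinned to the external momentum and to the $\phi_0$ (zero momentum) legs. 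So the rescaling $p\to p/N$ cannot be applied to all propagators simultaneously, and the compensating $N^3$ is lost. Iterating over all non-adjacent rungs gives a factor $O(N^{-1})$ per such rung, uniformly after summing over diagrams with the factorial bounds of Lemma \ref{le:loops-legs} and Lemma \ref{le:EGF-convergence}.

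The main obstacle is making this loss of the $N^3$ volume factor precise, because it depends on which momenta are forced to be small by the topology. I would first treat the simplest case: three rails with a single rung from rail 1 to rail 3 over a one-loop middle rail. I would check that this contribution is $O(N^{-1})$ times the adjacent ladder. I would then argue inductively on the number of non-adjacent rungs, using the 1PR/1PI decomposition of Lemma \ref{le:power-counting} to reduce general diagrams to this building block.
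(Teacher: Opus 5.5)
Your proposal has a genuine gap at the step you yourself flag as the main obstacle, and the mechanism you suggest for it is not the one at work. Take the basic case: a path $1$--$2$--$3$ of $W_{\phi_0\phi_0}$ vertices with an extra rung $\tilde\Gamma_{13}$. The constant $\phi_0$ injects zero momentum at the ends of each rail, so the new loop momentum circulates through all three rails and through the tree edges $v_{12}$, $v_{23}$. Nothing is pinned, and the rescaling $p\to Np$ applies to the whole loop. Your counting (one extra $\hat v_N$, one loop integral, one net extra $K_\epsilon^{-1}$) then gives $N^{-1}\cdot N^{3}\cdot N^{-2}=N^0$, the same as for adjacent ladders. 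The bounds you list do worse still: with $\|K_\epsilon^{-1}\|\le\epsilon^{-1}$ the new loop integral is of size $\int d^3q\,|\hat v_N(q)|=N^2\int d^3k\,|\hat v(k)|$. Comparing with the diagram $D'$ without the rung is also not a clean ratio. In $D'$ the edges $v_{12}$, $v_{23}$ and the rail of vertex $2$ sit at zero momentum (giving $\hat v_N(0)$ and $\epsilon^{-1}$), and the rung changes these factors rather than multiplying by new ones. The actual source of smallness is that the non-adjacent loop encloses an extra $W_{\phi_0\phi_0}$ vertex. It therefore carries one more factor $\phi_0^2\hat v_N(p)\hat{\mathsf{G}}_\beta(p)$ than the adjacent one-loop term. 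These are the higher orders in $2\hat v\phi_0^2/\hat K$ in \eqref{eq:1loopnon adjacent}, whose difference from $\hat{\mathcal{V}}_1$ the paper shows to scale as $N^{-2}$.

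Second, even a correct per-diagram estimate would not close the argument. Expanding $e^{-\sum w_T\tilde\Gamma}$ into Feynman graphs and inducting on the number of non-adjacent rungs produces infinitely many loop decorations per tree. Lemmas \ref{le:loops-legs} and \ref{le:EGF-convergence} do not count these; they bound trees only after the Gaussian integral has been estimated by $1$. The idea you are missing is the interpolation used in the paper. Replace $w_{23}$ by $s w_{23}$ and write $F(0)-F(1)=-\int_0^1 ds\,\frac{d}{ds}F(s)$. The $s$-derivative brings down exactly one factor $w_{23}\tilde\Gamma_{ml}$. Either $m\in\{1,2\}$ and $l\ge m+2$, giving the non-adjacent remainder $R_1+R_2$, or $\{m,l\}=\{2,3\}$, giving the adjacent piece $R_{23}$, which belongs to the scattering-length resummation. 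The rescaled weights still define a matrix $w^s_T$ of the form in Lemma \ref{le:positive-matrices}, so Corollary \ref{le=estimate1} bounds all remaining exponentials and the factors $e^{-\Gamma}$ by $1$. Each $R_m$ thus reduces to a single one-loop non-adjacent contribution, already controlled by \eqref{eq:1loopnon adjacent}, and no sum over loop diagrams is ever needed.
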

\begin{proof}
We analyse corrections to the self energy arising from loops formed joining non adjacent vertices and compare it with adjacent ones. So, we consider a path tree graph $T$ formed by $n$ points that have valence\footnote{The \textbf{valence} of a vertex in a tree graph $T$ is the number of edges that reach that vertex.} at most 2, and we analyse the contribution to the LVE 
\[
\prod_{b\in E_T}
\left(\int _0^1 dw_b\tilde\Gamma_{b}\right)e^{-\sum_{i<j }w_T(i,j) \tilde\Gamma_{ij}} \underbrace{e^{-\Gamma}W_{\phi_0\phi_0} \otimes \dots \otimes e^{-\Gamma}W_{\phi_0\phi_0} }_n.
\]
We focus on: the rung between the first two vertices and the interpolating parameter $w_{23}$. We observe that when $w_{23}=0$, we are removing all loops which contain vertex $1$ or $2$ and that are formed with the basic path tree adding extra joints from vertex $1$ to vertices $j\geq 3$ or from vertex $2$ to vertices $j\geq 3$. 
The extra joints between vertices $2$ and $3$, form loops between adjacent vertices that we treated separately. Hence, in estimating the effect of non adjacent vertices, we have to remove the multiple loops formed by joining adjacent rails $12$ and $23$ which we denote below by $R_{23}$. To estimate the reminder, we start analysing 
\begin{align*}
F(s)
&=\int_0^s dw_{13}\prod_{b\in E_T, b\neq \{23\}}\left(\int _0^1 dw_b\tilde\Gamma_{b}\right)
e^{-\sum_{i<j }w_T(i,j) \tilde\Gamma_{ij}} \underbrace{e^{-\Gamma}W_{\phi_0\phi_0} \otimes \dots \otimes e^{-\Gamma}W_{\phi_0\phi_0} }_n
\\
&=\prod_{b\in E_T}\left(\int _0^1 dw_b\tilde\Gamma_{b}\right)
e^{-\sum_{i<j }w^s_T(i,j) \tilde\Gamma_{ij}} \underbrace{e^{-\Gamma}W_{\phi_0\phi_0} \otimes \dots \otimes e^{-\Gamma}W_{\phi_0\phi_0} }_n
\end{align*}
where $w^s_T(i,j)$ is the minimum of the rescaled weights $\tilde{w}^s_{b}$ of the edges of the unique path in $T$ joining $i$ and $j$, and the rescaled weights are such that 
\[
\tilde{w}^s_{ij}= \begin{cases}
    sw_{ij}, \qquad ij=23\\
    w_{ij}, \qquad \text{otherwise}
\end{cases}.
\]
It holds that 
\begin{align*}
F(0)-F(1)=&-\int ds \frac{d}{ds}F(s)\\
=&-\int_0^1 ds \prod_{b\in E_T}\left(\int _0^1 dw_b\tilde\Gamma_{b}\right)
e^{-w_{12}\tilde{\Gamma}_{12}}
\frac{d}{ds}e^{-\sum_{i<j }w^s_T(i,j) \tilde\Gamma_{ij}} \underbrace{e^{-\Gamma}W_{\phi_0\phi_0} \otimes \dots \otimes e^{-\Gamma}W_{\phi_0\phi_0} }_n
\end{align*}
and we decompose it in
\[
F(0)-F(1)=R_1+R_2+R_{23}
\]
where for $m\in\{ 1,2\}$
\begin{align*}
R_m
=&\int_0^1 ds 
\sum_{l\geq 2+m}\int_{[0,1]^{n-1}, sw_{23}< w_{ik} \, \text{for} \, 2\leq i<k\leq l }
\prod_{b\in E_T}  dw_b
\tilde\Gamma_{b}
\left( 
w_{23}\tilde\Gamma_{ml}\right)
e^{-\sum_{i<j}w^s_T(i,j)\tilde{\Gamma}_{ij}}
(e^{-\Gamma}W_{\phi_0\phi_0})^{\otimes n}
\end{align*}
and
\begin{align*}
R_{23}
=&\int_0^1 ds 
\int_{[0,1]^{n-1}}
\prod_{b\in E_T}  dw_b
\tilde\Gamma_{b}
\left( 
w_{23}\tilde\Gamma_{23}\right)
e^{-\sum_{i<j}w^s_T(i,j)\tilde{\Gamma}_{ij}}
\underbrace{e^{-\Gamma}W_{\phi_0\phi_0} \otimes \dots \otimes e^{-\Gamma}W_{\phi_0\phi_0} }_n.
\end{align*}
The contribution $R_{23}$ is responsible for the occurrence of multiple loops between the joints $12$ and $23$. Hence, the remainder we need to estimate is
\[
R=R_1+R_2.
\]
To find a bound for $R_m$ we observe that the action of the exponential involving $-\sum_{i<j} w^s_T(i,j)\tilde\Gamma_{ij}$ and the exponentials $e^{-\Gamma}$ in front of the factors $W_{\phi_0\phi_0}$ can be bounded by Lemma \ref{le=estimate1}.
The terms which remain to be bounded are the one loop contributions formed joining non adjacent vertices. However, we have already computed in \eqref{eq:1loopnon adjacent} the action of these corrections. There, we have proven that they are negligible in the Gross-Pitaevskii regime when $N\to\infty$.
\end{proof}

Lemma \ref{le:power-counting} guarantees that in the limit of vanishing temperature, contributions to the effective vertices from ladder diagrams with multiple rails and which are 1PI are negligible in the Gross-Pitaevskii regime if $\epsilon$ is taken with some limitations. Instead, Lemma \ref{le:non-adjacent-loops} estimates the effect of loop contribution to 
$\mathcal{V}$ from loops involving non adjacent leg vertices. They turn out to be also negligible in the Gross-Pitaevskii regime. The next lemma specifies these observation for the graphs giving origin to the two-point function of the theory for finite $\beta$ and keeping $\epsilon>0$, without extra scaling, extending thus the results of the previous lemmata.

\begin{lemma}\label{lm:adjacent 1PI loops are negligible}
Let $\epsilon>0$ and finite in $K_\epsilon$. Consider adjacent loops which appear in the contribution to the loop vertex expansion on path graphs\footnote{A path graph is a diagram in which the internal vertices have valence $2$ while the external vertices have valence $1$.}. The adjacent loops which are 1PI are negligible in the Gross-Pitaevskii regime for $N\to\infty$.
\end{lemma}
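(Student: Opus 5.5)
The plan is to repeat the power counting of Lemma \ref{le:power-counting}, but without the auxiliary scaling $\epsilon=N^2\tilde\epsilon$. The infrared region will instead be controlled with the bound $\|K_\epsilon^{-1}\|\leq\epsilon^{-1}$, as in Lemma \ref{le:loops-legs}. I would proceed as follows.

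First, I expand the LVE contribution on a path graph $T$ whose vertices are $W_{J\phi_0}$, $W_{\phi_0\phi_0},\dots,W_{\phi_0J^*}$. As in the proof of Lemma \ref{le:non-adjacent-loops}, I use the interpolation in $w_{b}$ to isolate the factors $e^{-w_{i,i+1}\tilde\Gamma_{i,i+1}}$ acting between adjacent vertices. Expanding these exponentials produces multiple rungs between adjacent rails. Every connected cluster of rungs is then either 1PR, i.e. a chain of single-rail ladders, which is already resummed into $\mathcal{L}^\beta$ through \eqref{eq:ell-scattering}, or 1PI with $n\geq 3$ rails. For the 1PI clusters with crossed rungs, Lemma \ref{le:vanishing-crossed-ladder} and its proof give vanishing as $\beta\to\infty$. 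For finite $\beta$, I would bound them instead by Corollary \ref{le=estimate1} and the Matsubara-sum estimate used for $\hat{\mathcal V}_1$.

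Second, for a non-crossed 1PI cluster with $n$ rails, $r\geq 2n-2$ rungs, $l=r-n+1$ loops and $k=2r-n$ propagators, I perform the Matsubara sums, or the $u,v$ Schwinger integrals in the $\beta\to\infty$ limit. This leaves
\[
C_N=(\phi_0^2)^{n-1}\int\prod \hat K_\epsilon^{-1}\prod_{h=1}^r\hat v_N\,\prod_{j=1}^l d^3p_j .
\]
Rescaling $p\to Np$ gives $\hat K_\epsilon(Np)=N^2\big(\hat K(p)+\epsilon N^{-2}\big)$. With $\epsilon$ fixed, the effective infrared regulator therefore becomes $\epsilon N^{-2}$, which tends to $0$. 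The key step is to show that the rescaled integrals stay uniformly bounded as this regulator vanishes. For that I would use that $\hat v$ is Schwartz, so the ultraviolet region is fine, and that in three dimensions each loop integral $\int d^3p\,\hat K(p)^{-a}$ is infrared integrable for $a<3/2$.

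The main obstacle is precisely these infrared-sensitive subloops, where a single loop momentum carries two or more propagators. For such factors I would give up one power, bounding $\big(\hat K(p)+\epsilon N^{-2}\big)^{-1}\leq N^2\epsilon^{-1}$, which costs at most a factor $N^2/\epsilon$ per excess propagator. I then need to check that the surplus $N^{4-2n}$ obtained in Lemma \ref{le:power-counting} absorbs these losses, at least when they are counted per 1PI cluster. If it does not, I would try a weaker Hölder-type bound, $\hat K^{-1}\leq \hat K^{-1/2}(\epsilon N^{-2})^{-1/2}$, which costs only $N/\sqrt\epsilon$ per excess factor and still leaves a net negative power of $N$ for $n\geq3$. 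Finally, I sum over clusters using the factorial bounds of Lemma \ref{le:loops-legs} and Lemma \ref{le:EGF-convergence}, which shows that the total 1PI contribution is $o(1)$ as $N\to\infty$.
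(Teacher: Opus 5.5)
\textbf{There is a genuine gap in your final summation step.}

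Your per-diagram power counting, which extends Lemma \ref{le:power-counting} to fixed $\epsilon$ with the rescaled regulator $\epsilon N^{-2}$, is a reasonable heuristic. Incidentally, uniform boundedness of the rescaled integrals already fails for the minimal $3$-rail cluster. With $r=4$ and $l=2$ it has superficial IR degree $3l-2(k-l)=0$, so your H\"older fallback is genuinely needed. That fallback is fine, since the global excess $(k-l)-l=n-2$ leaves $N^{2-n}$.

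The argument breaks when you sum over clusters.
\begin{itemize}
\item Once you expand $e^{-w_{i,i+1}\tilde\Gamma_{i,i+1}}$, at fixed number of rails $n$ you get 1PI clusters with an unbounded number of rungs $r$. Your own count gives the same scaling $N^{4-2n}$ for every $r$. Negligibility of the total therefore requires the sum over $r$ and over rung arrangements to be finite uniformly in $N$.
\item Nothing you cite gives this. Lemma \ref{le:EGF-convergence} counts only tree edges. Lemma \ref{le:loops-legs} gives a factor $a!$ at each endpoint of $a$ rungs joining two rails, against only $1/a!$ from the exponential. The net $a!$ is exactly the number of crossed arrangements, and at finite $\beta$ these do not vanish.
\item Even for uncrossed ladders there is no small parameter in the GP regime, because the rescaled $\hat v$ is $O(1)$. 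So the series in $r$ need not converge absolutely, and term-by-term decay does not give decay of the sum.
\item Your classification also omits the self-contractions from $e^{-\Gamma}$ on the middle vertex. These link the two halves of rail $2$ and spoil the 1PR factorisation just as interleaved rungs do.
\end{itemize}

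\textbf{The missing idea is to never expand the exponentials.} The paper's proof does the following.
\begin{itemize}
\item It renames $A\to B$ on vertex $1$ and on the part of vertex $2$ to the left of the splitting produced by the tree edge $23$.
\item It interpolates the off-diagonal entries of the doubled covariance $\Upsilon\otimes R$ from $1$ to $s$.
\item $C_T(0)$ is then exactly the 1PR piece, built from $\mathcal{L}^\beta$. The derivative $\frac{d}{ds}C_T(s)$ carries exactly one cross contraction: $\tilde\Gamma^{AB}_{13}$, $\tilde\Gamma^{AB}_{22}$ or $\tilde\Gamma^{AB}_{23}$.
\item Since $\Upsilon\otimes R^s$ is positive, Corollary \ref{le=estimate1} bounds all remaining exponentials by $1$, and $|\mathsf{G}^A_\beta|\leq \mathsf{G}_\beta$ holds pointwise.
\item The remainder is thus bounded by three explicit one-loop diagrams: the non-adjacent one, the loop insertion on vertex $2$, and the extra $\hat v\,\mathsf{G}_\beta$ insertion. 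Each is treated separately, and longer paths follow by recursion.
\end{itemize}
No absolute summation over rung configurations is ever required. This is precisely what your route cannot supply.
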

\begin{proof}
To prove this result, we analyse the contribution arising form a path graph which is a tree whose vertices have valence at most $2$ and correspond to $e^{-\Gamma}W_{\phi_0\phi_0}$, the case with $e^{-\Gamma}W_{J\phi_0}$ or $e^{-\Gamma}W_{\phi_0J^*}$ at the extreme can be treated similarly. We start with a path formed by three vertices. The intermediate vertex is denoted by $2$.
The corresponding contribution has the form
\[
C_T=\mathcal{M}\prod_{b\in E_T}
\left(\int _0^1 dw_b\tilde\Gamma_{b}\right)e^{-\sum_{i<j }w_T(i,j) \tilde\Gamma_{ij}} \left(e^{-\Gamma}W_{\phi_0\phi_0}\right)^{\otimes 3}
\]
Here, $W_{\phi_0\phi_0} = \langle 
\phi_0 K \mathsf{G}_\beta^A,K\phi_0\rangle$ and we discarded the contribution linear in $A$.

Consider the functional derivatives $\delta/\delta A_2$  present in 
$e^{-w_T(1,2)\tilde{\Gamma}_{12}}$, and $\prod_b \tilde{\Gamma}_b$. Recall that when a functional derivative 
$\langle f,\delta/\delta A_2\rangle$ acts on
$\mathsf{G}_\beta^A$ gives 
$-\mathsf{G}_\beta^Af\mathsf{G}_\beta^A$. Hence, it splits  
$\mathsf{G}_\beta^A$ in two parts. 
But then, by Leibniz rule, a second functional derivative gives two terms: one corresponding to the action of $f$ on the left $\mathsf{G}_\beta^A$ and one for its action on the right $\mathsf{G}_\beta^A$.
Hence, after applying all the functional derivatives 
$\delta/\delta A_2$ in $\tilde{\Gamma}_{12}$ and in $\tilde{\Gamma}_b$ for $b=23$,
we can split their action in a sum over the order in which their effect on $\mathsf{G}_\beta^A$ occurs.

We have three options: the splitting due to $\delta/\delta A_2$ in $\tilde{\Gamma}_b$ for $b=23$ occurs in the middle of some of the splitting due to 
$e^{-w_T(1,2)\tilde{\Gamma}_{12}}\tilde{\Gamma}_{12}$, occurs on the right or on the left.
As we shall see below, the contributions when the splitting due to $\tilde{\Gamma}_b$, with $b=23$, occurs in the middle, are negligible. 
Let us discuss in detail the case where it occurs at the right. The left case is analogous.\\

By the combinatorics of the Leibniz rule and of the exponential, we can always assume that the functional derivative which splits the right-most propagator in the second vertex is the one which is not exponentiated.   
Hence, after applying $\tilde{\Gamma}_b$ with $b=12$, the second vertex is split in two parts.
Now, apply a replica trick and change the name of some of the auxiliary fields from $A$ to $B$ in the first vertex and in the propagator $\mathsf{G}_\beta^A$ present in the left part of the second vertex. 

By doubling the auxiliary fields we have to double also the corresponding action of 
$\tilde{\Gamma}$ and ${\Gamma}$
\[
\tilde{\Gamma}_{ij} =  2\Upsilon_{ij} \left(\frac{\delta}{\delta A_i}+\frac{\delta}{\delta B_i}\right)
\left(\frac{\delta}{\delta A_j}+\frac{\delta}{\delta B_j}\right).
\]
Hence, the covariance of these new fields is the tensor product of $\Upsilon\otimes R$ where $R=\mathsf{J}_2$ is a $2\times 2$ matrix whose entries are all equal to $1$. Let us now change the off diagonal entries, correlating the $A$ and $B$ fields, from $1$ to $s\in[0,1]$. Then, we denote the resulting covariance by $
\Upsilon \otimes R^s
$
and by ${\tilde{\Gamma}}^s$ the corresponding differential operator.
Let us also introduce
\[
C_T(s) = \mathcal{M}\prod_{b\in E_T}
\left(\int _0^1 dw_b\tilde\Gamma_{b}\right)e^{-\sum_{i<j }w_T(i,j) {\tilde\Gamma}^s_{ij}} \left(e^{-\Gamma^s}W_{\phi_0\phi_0}\right)^{\otimes 3}.
\]
We notice that $C_T(1)=C_T$ and $C_T(0)$ is the contribution where the first vertex appears attached to the third by 1PR graphs. Thus, we just need to estimate the remainder
\[
C_T(1)-C_T(0) =\int_0^1 ds \frac{d C_T(s)}{ds}.
\]
However, the improved covariance $\Upsilon\otimes R^{s}$ is positive, so Lemma \ref{le=estimate1} is used
to estimate $|C_T(1)-C_T(0)|$ and to bound the action of the various exponentials of $\tilde{\Gamma}^s_{ij}$ and $\Gamma$ by $1$.
The result is
\begin{align*}
|C(1)-C(0)|\leq&\int_0^1 ds 
\int_{[0,1]^2} dw_{12}dw_{23}\\
&
\left|\tilde{\Gamma}_{12}^{BB}
\tilde{\Gamma}_{23}^{AA}
( w_{23}\tilde\Gamma_{23}^{AB}
+ w_T(1,3)\tilde\Gamma_{13}^{AB} 
+\tilde{\Gamma}_{22}^{AB})
\underbrace{W_{\phi_0\phi_0}(B)}_{1}
\underbrace{
W_{\phi_0\phi_0}(A,B)}_{2} 
\underbrace{W_{\phi_0\phi_0}(A)}_{3}\right|_{A,B=0}
\end{align*}
The contribution due to 
$\tilde{\Gamma}_{12}^{BB}
\tilde{\Gamma}_{23}^{AA}\tilde{\Gamma}_{13}^{AB}$
is a one loop contribution formed joining non adjacent vertices. 
We have seen in  \eqref{eq:1loopnon adjacent} that it is negligible in the  in the Gross-Pitaevskii regime for large $N$ in the case of vanishing temperature. For finite temperature the same result holds because every $\beta$ which appears in the corresponding expression is multiplied by $K$. Indeed, rescaling the momenta, $K$ scales as $N^2$ in the GP limit and thus taking the limit $\beta\to\infty$ and then $N \to \infty$ is equivalent to taking directly the $N\to \infty$ limit. 

The contribution due to 
$\tilde{\Gamma}_{12}^{BB}
\tilde{\Gamma}_{23}^{AA}\tilde{\Gamma}_{22}^{AB}$
is also negligible. It corresponds to the insertion of terms proportional to the loop formed by $v$ and $\mathsf{G}^A_\beta$ which is negligible in the Gross-Pitaevskii regime when $N$ is sufficiently large.
Finally, the contribution 
$\tilde{\Gamma}_{12}^{BB}
\tilde{\Gamma}_{23}^{AA}\tilde{\Gamma}_{23}^{AB}$
is negligible because it corresponds to a loop similar to the one present in the Born approximation of the scattering length but with an insertion of an extra $\hat{v}$ $\mathsf{G}^A_\beta$. This extra insertion produces the sufficient decay to determine its negligibility in the limit of large $N$.
\\ 
For contributions of path graphs which have more than three vertices, we just need to use recursively the above argument. This concludes the proof.
\end{proof}

Lemma \ref{lm:adjacent 1PI loops are negligible} implies that the relevant contribution in the effective vertex $\hat{\mathcal{V}}$, which modifies also the two point function, is the one of the scattering length and that has already been computed above.

\section{Renormalisation and the Gross-Pitaevskii equation}\label{se:renormalisation}

\subsection{One point function}\label{se:one-point-function-renormalisation}

Having understood the necessary renormalisation in the two-point function of the fluctuations $\Psi$ and $\Psi^*$, we study in this section how the scattering length emerges on the background value of the fields through the ladder diagrams studied above. To this end we need to compute the one point functions of the $\Psi$ and $\Psi^*$. 
We work in the limit of low temperature ($\beta \to \infty$) and we compute how the one-point function behaves at lowest order in perturbation.
Since we are working at low temperature, as for the analysis of the fluctuations, 
we can discard the loop vertex $W_L$ and consider only $W_{J\phi_0}$, $W_{\phi_0 J^*}$ and $W_{\phi_0\phi_0}$. We are interested in
\[
\frac{1}{2\mathrm{i}}\langle{\Psi(x)+\Psi^*(x)}\rangle 
= \left. \frac{1}{2 \mathrm{i}}\left(\frac{\delta}{\delta {J}}+\frac{\delta}{\delta J^*}\right) \log Z(J,J^*) \right|_{J=J^*=0},
\]
that is going to be determined by just one vertex $W_{J\phi_0}$ or $W_{\phi_0 J^*}$, together with arbitrary number of vertices $W_{\phi_0\phi_0}$. At first order, only one vertex, no corrections arise. With two vertices, either $W_{J\phi_0}, W_{\phi_0\phi_0}$
or $W_{\phi_0 J^*}, W_{\phi_0\phi_0}$, we have for $\langle \Psi\rangle$
\[
\lim_{\beta\to\infty} 
\langle{\Psi}\rangle =  - \mathsf{G}_\beta(8\pi \mathfrak{a}_0-\hat{v}(0))\phi_0^3 +O,
\]
and similarly for $\langle\Psi^*\rangle$, where $\mathsf{G}_\beta$ is the fundamental solution of $-\partial_u+K_{\epsilon}$ and the infrared regulator $\epsilon$ is sufficiently large $\epsilon > 0$. Notice that, in virtue of the discussion at the end of the Appendix \ref{se:scattering-length}, this correction is relevant only for the Gross-Pitaevskii regime $b=1$.
To obtain this expression, we used the full analysis performed in section \ref{se:higher-loop-corrections} considering all ladder diagrams between $W_{\phi_0\phi_0}$ and either $W_{J\phi_0}$ or $W_{\phi_0J^*}$, together with the fact that they produce the scattering length $\mathfrak{a}_0$. 
The subtraction of $\hat{v}(0)/8\pi$ from the scattering length arises because the leg vertex $W_{\phi_0\phi_0}$ is treated as in in the proof of Lemma \ref{le:loops-legs} using the expression in \eqref{eq:Wphi0phi0}. 
Finally, the reminder $O$ accounts for the approximation of $\hat{\mathcal{V}}(p,n)$ with its value in $\hat{\mathcal{V}}(0,0)$. 

Discarding from here on that reminder $O$, and considering the fact that from the equations of motion $\tilde{\mu}\phi_0 = \hat{v}(0)\phi_0^3$, we have that the first correction to the background value of the field $\langle\Phi\rangle = \phi_0+\langle\Psi\rangle$ is  
\[
\phi_1 = \phi_0 - \mathsf{G}_\beta(8\pi \mathfrak{a}_0\phi_0^3 -\tilde{\mu}\phi_0 ).
\]
In order to repeat the analysis for higher orders, we consider only the loop diagrams among adjacent vertices of the tree. Namely, the diagrams that produce $\mathfrak{a}_{0}$. Moreover, we consider explicitly only tree graphs which have leg vertices $W_{\phi_0\phi_0}$ with at most three edges attached to it.
Then, similarly as at first order discarding the remainder, we get some recursive relations for the higher order corrections  
\[
\phi_{j+1} = \phi_0 - \mathsf{G}_\beta(8\pi \mathfrak{a}_0\phi_{j}^3 -\tilde{\mu}\phi_j).
\]
The fixed point of these recursive relations furnishes an estimate of the renormalised $\phi_R$ 
\[
\phi_{R} = \phi_0 - \mathsf{G}_\beta( 8\pi \mathfrak{a}_0 \phi_{R}^3 -\tilde{\mu}\phi_R),
\]
which satisfies the equation 
\[
(-\partial_u+K)\phi_{R} 
-\tilde{\mu} \phi_{R} + 8\pi \phi_R \mathfrak{a}_0 \phi_R^2 
= 
(-\partial_u+K)\phi_{0} .
\]
Since $\phi_0$ is constant, taking the limit $\epsilon\to0$, this equation implies that also the contributions constant in $u$ to $\phi_R$ satisfy the Gross-Pitaevskii equation with the scattering length $\mathfrak{a}_0$.
Consequently, the stationary solution for the renormalised background is
\[
\phi_R^2 = \frac{\tilde{\mu}}{8\pi \mathfrak{a}_0}.
\]

\subsection{Revisiting the convergence of the loop vertex expansion}
In order to revisit the convergence of the LVE to include the case in which $v$ is substituted to $v_{N,b}$ in \eqref{eq:vnb}, let us renormalise the background and the linear theory including the contributions of the scattering length. Effectively, this corresponds to a shift
\[
\phi_0\to\phi_R, \qquad \hat{v}(0) \to 8\pi \mathfrak{a}_0
\]
and thus
\[
\phi_0^2\hat{v}(0) \to  \phi_R^28\pi\mathfrak{a}_0.
\]
In the GP limit, $\mathfrak{a}_0$ scales as $1/N$, $\phi_R^2$ as $N$ and their product remains constant. 
If we reinterpret the loop vertex expansion with this renormalisation, we notice that a lot of graphs (up to vanishing errors as $N \to \infty$) are contained in $\phi_R$ and $\mathfrak{a}_0$. We expect this to simplify the analysis of convergence of the loop vertex expansion, enlarging its range of convergence at least in the limit of vanishing temperature. 
To see this explicitly we decompose the fields with respect to the renormalised background
\begin{equation}\label{eq:renormalised-background}
\Phi = \phi_R + X, \qquad 
\Phi^* = \phi_R + X^*
\end{equation}
where now $\phi_R$ is a (constant) solution of the Gross-Pitaevskii equation
\begin{equation}\label{eq:GR-ren-background}
-\frac{\Delta}{2m} \phi_R -\tilde\mu \phi_R + 8\pi \mathfrak{a}_0 \phi_R^3 =0.
\end{equation}
The Hamiltonian density is decomposed correspondingly
\[
\tilde{\mathcal{H}}=\tilde{\mathcal{H}}_{0R}+\tilde{\mathcal{H}}_{1R}+\tilde{\mathcal{H}}_{2R}+\tilde{\mathcal{H}}_{3R}+\tilde{\mathcal{H}}_{4R}
\]
where
\begin{align*}
\tilde{\mathcal{H}}_{0R}&= \phi_R({K}-\tilde\mu)\phi_R + 2\phi_R^2 v *(\phi_R^2 )    
-\frac{1}{2} \phi_R^2(\hat{v}(0)-8\pi\mathfrak{a}_0)\frac{1}{v}(\hat{v}(0)-8\pi\mathfrak{a}_0)\phi_R^2     
\\
\tilde{\mathcal{H}}_{1R}&= 
(X+X^*)\left( ({K}-\tilde\mu)\phi_R + 8\pi\mathfrak{a}_0 \phi_R^3 \right)
\\
\tilde{\mathcal{H}}_{2R}&= X^* {K}_{\epsilon} X  + \frac{1}{2}  \phi_R (X+X^*) v* (\phi_R(X+X^*))\\
\tilde{\mathcal{H}}_{3R}&= :|X^2|:_{\phi_R} v* ((X+X^*)\phi_R)
\\
\tilde{\mathcal{H}}_{4R}&= \frac{1}{2} :|X^2|:_{\phi_R} v* :|X^2|:_{\phi_R}.
\end{align*}
Here, ${K}_{\epsilon}=-\frac{\Delta}{2m}+\epsilon$ and the fact that the renormalised background $\phi_R$ solves the Gross-Pitaevskii equation, rather than the equation with $v$, has been compensated with a redefinition of the normal ordering
\[
:|X^2|:_{\phi_R}=:|X^2|:_\beta+\frac{2}{v} (\hat{v}(0)-8\pi\mathfrak{a}_0)\phi_R^2.
\]
Here we are assuming that the operator $f\to v*f$ is invertible, namely $\hat{v}(p)>0$ for every $p \in \mathbb{R}^3$. In this way, $\tilde{\mathcal{H}}_{1R}$ is again vanishing.\\

As before, we introduce an auxiliary Hubbard-Stratonovich field $A(u)$ to reduce the order of the interaction Hamiltonian in the fields $X$
\begin{equation}\label{eq:renormalised-decomposition}
\tilde{\mathcal{H}}_A = 
\tilde{\mathcal{H}}_{20R} 
+Q_{AR}
\end{equation}
where
\[
\tilde{\tilde{\mathcal{H}}}_{20R} = X^* {K}_{\epsilon} X 
,
\qquad
Q_{AR} = g A 
 :|X|^2:_{\phi_R} +g  A\phi_R(X+X^*).
\]
We apply now the same technique used in section \ref{se:LVE} 
to get the generating function of the connected correlation functions of the theory. More precisely, to obtain $\log Z(J,J^*)-\log Z(0,0)$ we start from \eqref{eq:logZ-JJ-Gc} where the updated form of the vertices $V=e^{-\Gamma}W$ is now 
\begin{equation}\label{eq:WREN}
W = -2A\frac{1}{v}(\hat{v}(0)-8\pi\mathfrak{a}_0)\phi_R^2 - 2\langle \mathrm{i} J^*+A\phi_R,\mathsf{G}^{A}_{\beta}(\mathrm{i} J+A\phi_R)\rangle + 2\Tr \int_0^1 ds (\mathsf{G}^{sA}_{\beta}-\mathsf{G}_{\beta}) A
\end{equation}
which is similar to the expression given in \eqref{eq:vertexW}, up to the contributions arising from the change in the normal ordering. We can now study the tree graph expansion of $\log Z(J,J^*)$ obtained as in \eqref{eq:LVE}, performing all the partial resummation necessary to get
\eqref{eq:ell-scattering}.
It is important to stress what do we precisely mean by that. Namely, in a first step, we sum the expansion over simple connected graphs among $n$ vertices producing $\tilde{\mathsf{G}}^A_{\mathcal{L}}$ in the line vertices. 

From here on, $\tilde{\mathsf{G}}^A_{\mathcal{L}}$ is the propagator studied in appendix \ref{se:pretrubed-propagators}, the inverse of the operator
\begin{equation*}
\begin{pmatrix}
    &-\partial_u+K+A &0\\
    &0 &\partial_u+K+A
\end{pmatrix}
\end{equation*}
perturbed by $\mathcal{L}^\beta \mathsf{J}_2$, for $\mathsf{J}_2$ the $2\times 2$ matrix with entries equal to $1$ and $\mathcal{L}^\beta$ given implicitly in 
\eqref{eq:ell-scattering}. Namely
\begin{equation*}
\begin{pmatrix}
    &\mathcal{L}^\beta &\mathcal{L}^\beta\\
    &\mathcal{L}^\beta &\mathcal{L}^\beta
\end{pmatrix}.
\end{equation*}
In this way, $\tilde{\mathsf{G}}^A_{\mathcal{L}}$ is a quadratic form on $(J,J^*)\times (J^*,J)$, see \eqref{eq:two-point-function} for its precise form\footnote{Although we are working at vanishing external momenta, a similar formula holds also if $\hat{\mathcal{L}}^\beta(p,n)$ depends non trivially on $n$.}.
Recall that $\hat{\mathcal{L}}^\beta (0,0)=\frac{\phi_R^2}{\phi_0^2}\hat{\mathcal{V}}(0,0)$ and 
\begin{equation}
\lim_{\beta \to\infty} \hat{\mathcal{L}}^\beta(0,0) = \phi_R^28\pi \mathfrak{a}_0.
\end{equation}
This partial sum has already been explicitly discussed in appendix \ref{se:pretrubed-propagators}, used first in \eqref{eq:omegaJJ1} and later throughout section \ref{se:higher-loop-corrections} in the limit $\beta\to\infty$. Here, it produces line vertices which now have $\phi_R$ as extreme points and can be given in terms of $\mathsf{M} = \sum_{ij} \tilde{\mathsf{G}}^A_{\mathcal{L} ij}$ and $\mathsf{F}_i=\sum_{j} \tilde{\mathsf{G}}^A_{\mathcal{L} ij}$ as
\begin{equation}\label{eq:resummed-vertices}
\begin{aligned}
W'_{\phi_R\phi_R} &= -2 
\langle A \phi_R, \mathsf{M} \phi_R A\rangle +2 A \phi_R^2v^{-1}{(\hat{v}(0)-8\pi \mathfrak{a}_0)}  
\\
W'_{\phi_R J} &= -2 
\langle A \phi_R, \mathsf{F}_2^\dagger
\mathsf{i} J \rangle   
\\
W'_{J^*\phi_R } &= -2 
\langle \mathsf{i}J^*, 
\mathsf{F}_1
 \phi_R A\rangle   
\\
W'_{J(i)J(j)} &= -2 
\langle \mathsf{i}J(i), 
\tilde{\mathsf{G}}^A_{\mathcal{L} ij}
\mathsf{i}J(j)
\rangle   
\end{aligned}
\end{equation}
where in the last expression $J(1) = J$ and $J(2)=J^*$. The nice feature of this step is seen in the better infrared behaviour of $\mathsf{M}$ and $\mathsf{F}$ compared to $\mathsf{G}^A_\beta$. Actually, the $L^1$ norm of the functions associated to both $\mathsf{M}$ and $\mathsf{F}_i$ are bounded by $2(16\pi\phi_R^2\mathfrak{a}_0)^{-1}$.\\

Let us discuss in the next Theorem how this partial resummation is performed. Namely, basing the proof on results proven later in Lemma \ref{le:loop-LVE}, we discuss the equivalence of the LVE after performing the partial sum and the original LVE given in \eqref{eq:LVE}. This comparison is done at fixed $\epsilon > 0$ in $K_\epsilon$ and for large $N$.

\begin{theorem}\label{thm:partial-sum}
Consider in the various propagators of the loop vertex expansion \eqref{eq:LVE} the rescaled interaction $v_N$, at the place of $v$, and 
$K_\epsilon$ with $\epsilon>0$.
At fixed $\epsilon>0$, the loop vertex expansion after the partial resummations, namely with $W'_{ij}$ in \eqref{eq:resummed-vertices} at the place of the corresponding $W_{ij}$, is at any order equivalent to the original loop vertex expansion in the limit $N\to\infty$.
\end{theorem}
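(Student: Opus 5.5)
The approach is to compare the two expansions order by order in the number of vertices of the trees. For each tree $T\in\mathcal{T}_n$ the original contribution to \eqref{eq:LVE} is decomposed according to how the exponential $e^{-\sum w_T(i,j)\tilde\Gamma_{ij}}$ and the tree derivatives act on the leg vertices. Three classes of terms appear. The first class consists of chains of leg vertices $W_{\phi_0\phi_0}$ joined along path subtrees, with extra loops that only connect adjacent vertices through ladders; these are exactly the terms that build $\mathcal{L}^\beta$ in \eqref{eq:ell-scattering} and, through the Dyson series \eqref{eq:omegaJJ}–\eqref{eq:omegaJJ1}, the resummed propagator $\tilde{\mathsf{G}}^A_{\mathcal{L}}$. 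The second class consists of the chains that renormalise the background, i.e.\ those producing the recursion $\phi_{j+1}=\phi_0-\mathsf{G}_\beta(8\pi\mathfrak{a}_0\phi_j^3-\tilde\mu\phi_j)$ of Section \ref{se:one-point-function-renormalisation}. Everything else forms the third class.

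The first step is purely combinatorial. Starting from the Mayer form \eqref{eq:logZ-JJ-Gc} of the renormalised theory with vertices \eqref{eq:WREN}, I will show that summing over all connected graphs whose edges restricted to a chain of line vertices are adjacent ladders reproduces the vertices $W'$ of \eqref{eq:resummed-vertices}. The factorials $1/n!$ are absorbed by the choices of which vertices form the chain, as in the Mayer resummation leading to \eqref{eq:mayer}. The counterterm $2A\phi_R^2v^{-1}(\hat v(0)-8\pi\mathfrak{a}_0)$ in $W'_{\phi_R\phi_R}$ cancels the difference between the $\phi_0$ and $\phi_R$ expansions, just as in the redefinition of $:|X|^2:_{\phi_R}$. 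After this step the resummed expansion, expanded back in powers of $\mathcal{L}^\beta$, reproduces every term of the original expansion in the first two classes at each fixed order.

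The second step is to show that at fixed order each term of the third class tends to zero as $N\to\infty$, with $\epsilon$ fixed. This uses three results already in hand: Lemma \ref{le:power-counting}, which controls 1PI multi-rail contributions; Lemma \ref{le:non-adjacent-loops}, which controls loops between non-adjacent vertices, via \eqref{eq:1loopnon adjacent} and its $N^{-2}$ scaling; and Lemma \ref{lm:adjacent 1PI loops are negligible}, which controls adjacent 1PI loops. Loop vertices $W_L$ are handled with the estimate of Lemma \ref{le:loops-legs}(i) and the forthcoming Lemma \ref{le:loop-LVE}. At fixed order only finitely many trees and finitely many derivative placements occur. Each of these is bounded by Corollary \ref{le=estimate1}, which reduces the Gaussian evaluation to free propagators, and then by the rescaling $p\to p/N$, $\hat v_N(p)=\hat v(p/N)/N$, $K_\epsilon(p)\to N^2K(p/N)+\epsilon$. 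Keeping $\epsilon$ fixed only improves the infrared bounds relative to the $\epsilon=N^2\tilde\epsilon$ case.

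The main obstacle is the second step. Lemma \ref{le:power-counting} was proved with the scaling $\epsilon=N^2\tilde\epsilon$, so I would first redo its power counting with fixed $\epsilon$. The scaling $K_\epsilon(p)=N^2K(p/N)+\epsilon\ge N^2K(p/N)$ gives an upper bound with the same power $N^{4-2n}$ wherever the momentum integrals stay infrared finite. The remaining infrared region $|p|\lesssim\sqrt{\epsilon}$ I would bound separately using $K_\epsilon^{-1}\le\epsilon^{-1}$; this costs only a fixed $\epsilon$-dependent constant and an extra volume factor $N^{-3}$ per loop. A secondary issue is making sure that no term is counted both in $W'$ and in the remainder, i.e.\ avoiding double counting of the 1PR adjacent ladders. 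I would handle this by using the replica interpolation $R^s$ from the proof of Lemma \ref{lm:adjacent 1PI loops are negligible}, which separates the 1PR part (at $s=0$) cleanly from the rest.
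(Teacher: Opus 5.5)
Your second step has a genuine gap. At fixed tree order there are finitely many trees, but the exponentials $e^{-\sum_{k<l}w_T(k,l)\tilde\Gamma_{kl}}$ and $e^{-\Gamma}$ act on vertices that are non-polynomial in $A$ through $\mathsf{G}^A_\beta$. They therefore generate infinitely many loop insertions; already $\mathcal{L}^\beta$ is an infinite ladder series. So the claim that only finitely many derivative placements occur is false, and your ``third class'' is an infinite family of diagrams. Showing that each member vanishes as $N\to\infty$ does not show that their sum does. Lemma \ref{le:power-counting} gives $N^{4-2n}$ only up to a diagram-dependent constant $C=C_1$, and you have no bound uniform over the family.

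Your plan is also internally inconsistent. To sort terms into adjacent ladders versus the rest you must expand the exponential, but to invoke Corollary \ref{le=estimate1} you must keep it unexpanded. The missing idea is to use the replica interpolation as the estimating device, not only against double counting, which is what the paper does:
\begin{enumerate}
\item Rename the fields of the inserted path graphs (the forest $\mathcal{F}$ built around a $W_{\phi_R\phi_R}$ vertex) as $B$.
\item Interpolate the $A$--$B$ covariance to $\Upsilon\otimes R^s$ and write the remainder as $C_{\mathcal{F}}(1)-C_{\mathcal{F}}(0)=\int_0^1 ds\,\frac{d}{ds}C_{\mathcal{F}}(s)$.
\item At $s=0$ the path graphs are linked to the rest only through tree edges, so the Gaussian evaluation in $B$ produces the partial sum.
\item The $s$-derivative brings down a single explicit $A$--$B$ contraction. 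All other contractions stay inside an exponential with positive covariance, which is bounded by $1$ via Lemma \ref{le:positive-matrices} and Corollary \ref{le=estimate1}.
\end{enumerate}
The remainder thus reduces to one-loop terms, uniformly in all multi-loop completions. Lemma \ref{le:loop-LVE} bounds these at fixed $\epsilon$ for every vertex type, not only for $W_L$.

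Your infrared discussion is also off. Keeping $\epsilon$ fixed is the harder case, not the easier one: after the rescaling $p\to Np$ the regulator becomes $\epsilon/N^2\to 0$, whereas with $\epsilon=N^2\tilde\epsilon$ it stays $\tilde\epsilon$. Splitting off $|p|\lesssim\sqrt{\epsilon}$ also does not preserve the naive power count. When a loop momentum runs through $m\ge 2$ propagators and $a$ factors $\hat v_N$, that region contributes $\sim\epsilon^{3/2-m}N^{-a}$. This exceeds the naive count $N^{3-a-2m}$ by $N^{2m-3}$, so you do not get an ``extra $N^{-3}$ per loop''. The paper's one-loop bound avoids this by estimating $n-1-c$ propagators by $\epsilon^{-1}$ and keeping $\int d^3p\,\hat v_N^{2+c}K_0^{-1-c}$. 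That integral is infrared integrable for $c<\frac{1}{2}$, and together with $\phi_R^2\sim N$ it gives the decay $N^{t-n-2c}$. Finally, your ``second class'' of background chains plays no role in this theorem. The comparison is made within the $\phi_R$ decomposition, and the resummation producing $W'$ involves only the ladder/Dyson chains of $W_{\phi_R\phi_R}$.
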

\begin{proof}
We discuss in details the partial resummation which produces $W'_{\phi_r\phi_R}$, the other vertices are treated analogously.
Consider a tree graph $T$ with a single contribution $W_{\phi_R\phi_R}$
and denote by $\mathcal{F}$ the family (forest) of similar tree graphs obtained substituting the vertex
$W_{\phi_R\phi_R}$ with all possible path graphs formed with $n$ arbitrary many vertices $W_{\phi_R\phi_R}$. 
If in $T$ multiple edges reach the vertex $W_{\phi_R\phi_R}$, $\mathcal{F}$ has all possible graphs obtained distributing these edges over the added vertices. 

Next, we proceed as in the proof of Lemma \ref{lm:adjacent 1PI loops are negligible}.
Namely, in every $T'\in\mathcal{F}$ 
we change the name of the auxiliary field $A$ into $B$ in the vertices of the path graphs which form the forest. 
We also double the measure
\[
\Gamma = \Gamma^{AA}+\Gamma^{AB}+\Gamma^{BB}.
\]
Hence, the covariance of these new fields is the tensor product of $\Upsilon\otimes R$ where $R=\mathsf{J}_2$ is a $2\times 2$ matrix whose entries are all equal to $1$. 

We change from $1$ to $s\in[0,1]$ the off diagonal elements, denote the resulting covariance by $\Upsilon \otimes R^s
$
and by ${\tilde{\Gamma}}^s$ the associated differential operator.
Then, we introduce
\[
C_\mathcal{F}(s) = 
-\sum_{T\in\mathcal{F}}\frac{
1}{n(T)!}
\mathcal{M}\prod_{b\in E_T}
\left(\int _0^1 dw_b\tilde\Gamma_{b}\right)e^{-\sum_{i<j }w_T(i,j) {\tilde\Gamma}^s_{ij}} 
\prod_{l=1 }^{\otimes n(T)}e^{-\Gamma^s} W_{l}
\]
where $n(T)$ is the cardinality of the set of vertices $V(T)$ of the graph $T$ and $W_l$ is a notation to indicate the vertices at the place $l$ of the graph. We notice that $C_\mathcal{F}(1)=C_\mathcal{F}$ which is the contribution 
in the original LVE in \eqref{eq:LVE}. Instead, in $C_\mathcal{F}(0)$ the vertices of the inserted path graphs at the place of $W_{\phi_R\phi_R}$ are connected to the vertices of the original graph only through $\prod_b \tilde{\Gamma}_b$. In other words, in $C_\mathcal{F}(0)$ there are no loops formed between $A$ and $B$ vertices. Therefore, the evaluation of $B$ on the corresponding Gaussian state can be taken and the sum over the forest $\mathcal{F}$ realises the partial sum.\\

We now estimate the remainder, 
\[
C_\mathcal{F}(1)-C_\mathcal{F}(0) =\int_0^1 ds \frac{d C_\mathcal{F}(s)}{ds}.
\]
The improved covariance $\Upsilon\otimes R^{s}$ is positive, so Corollary \ref{le=estimate1} is used
to estimate $|C_\mathcal{F}(1)-C_\mathcal{F}(0)|$ and to bound the action of the various exponentials of $\tilde{\Gamma}^s_{ij}$ and $\Gamma$ by $1$.
It remains to estimate the contributions of tree graphs with one extra loop formed joining one of the component of the path graphs with the other vertices. 
This is estimated as in Lemma \ref{le:loop-LVE}, according to which the contributions in the remainder vanish in the limit $N\to\infty$ keeping $\epsilon$ fixed. We have thus proven that at fixed $\epsilon$, 
\[
\lim_{N\to\infty}
|C_\mathcal{F}(1)-C_\mathcal{F}(0)| = 0
\]
and in that limit $C_\mathcal{F}(1)$ is equivalent to $C_\mathcal{F}(0)$ where the partial sum producing the updated propagators $W'_{\phi_R\phi_R}$ has been taken.
\end{proof}

The technical Lemma used in the proof of the previous Theorem is stated and proven in the following.

\begin{lemma}\label{le:loop-LVE}
Consider $v_N$ at the place of $v$ and 
$K_\epsilon$ with $\epsilon>0$ in the various propagators of the loop vertex expansion \eqref{eq:LVE}.
Let $T$ be a path graph among $n+1$ vertices
and let $O_T$ be a one loop contribution obtained after inserting an extra factor $\Gamma$. Moreover, assume that it has $e$ external momenta $\{p_1,\dots, p_e\}$ and that it does not involve only two vertices $W_{\phi_R\phi_R}$. Then, $O_T$ is
\[
O_T(p_1,\dots p_e) 
= 
\prod_{k=1}^e
\left\langle e^{\mathrm{i} p_k(\cdot)},\frac{\delta}{\delta A(\cdot)}\right\rangle
\mathcal{M}\prod_{b\in E_T}
\left(\int _0^1 dw_b\tilde\Gamma_{b}\right)
\tilde\Gamma_{1(n+1)}
e^{-\sum_{i<j }w_T(i,j) {\tilde\Gamma}_{ij}} 
\prod_{l=1 }^{\otimes (n+1)}e^{-\Gamma} W_{l}
\]
where $\{W_l\}_{l\in\{1,\dots,n+1\}}$ is a sequence of $W_l\in\{W_L,W_{J\phi_R},W_{\phi_RJ^*},W_{\phi_R,\phi_R}\}$ such that $O_T$ is proportional to $\phi_R^{2t}$ with $t\leq n$.
It holds that 
for large  $N$ and for a positive $c\in [0,\frac{1}{2})$ it exists a constant $C_c$ such that
\[
\|O_T\|_\infty\leq \frac{C_c}{\epsilon^{e+n-1-c}} \|\hat{v}\|_\infty^{n-1-c} \left(\frac{\phi_R^{2}}{N}\right)^{t}\frac{1}{N^{n+2c-t}},
\]
with
\[
C_c = C_T(J,J^*)\int d^3p \frac{\hat{v}(p)^{2+c}}{\hat{K}_0(p)^{1+c}}
\]
and $C_T$ a suitable constant depending also on $J$.
Hence, since $\phi_R^2$ scales as $N$, in the limit $N\to\infty$ keeping $\epsilon > 0$ fixed, $O_T$ vanishes.
Finally, the same bound holds also if in $O_T$ the terms $e^{-\Gamma}$ and $e^{-\sum_{i<j} w_T(i,j)\tilde{\Gamma}_{ij}}$ are substituted by $1$.
\end{lemma}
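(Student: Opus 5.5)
The plan is to bound $O_T$ as in the proof of Theorem \ref{thm:converge}, with two changes: keep track of the $N$-dependence of every factor, and treat the single extra line $\tilde\Gamma_{1(n+1)}$ separately, since it closes the loop. First I would use Corollary \ref{le=estimate1} to bound the Gaussian evaluation induced by $e^{-\sum w_T\tilde\Gamma}$ and by $e^{-\Gamma}$. As in \eqref{eq: conAsenzaA}, this lets me replace every $\mathsf{G}^A_\beta$ by $\mathsf{G}_\beta$ in absolute value. The same step shows directly that the bound also holds when those exponentials are replaced by $1$, which is the last claim. After this, $O_T$ is a single one-loop Feynman integral. Its ingredients are $n+1$ rungs $\delta\otimes v_N$ (the $n$ tree edges plus the closing one), propagators $\mathsf{G}_\beta$, $e$ external insertions $e^{\mathrm{i}p_k\cdot}$, $2t$ factors of $\phi_R$, and, for leg vertices carrying $J$, factors of $J$ or $J^*$ that I absorb into $C_T(J,J^*)$. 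Every vertex of type $W_{J\phi_R}$, $W_{\phi_RJ^*}$ or $W_{\phi_R\phi_R}$ is first rewritten as in \eqref{eq:Wphi0phi0}, so that no $\phi_R$ comes with a bare $A$.

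Next I would go to Fourier space, with Matsubara sums in $u$, and do all the frequency sums. Each one removes one power of $\hat K_\epsilon$, as in the proof of Lemma \ref{le:power-counting}. Only one spatial loop momentum $p$ then remains. I would bound all propagators that carry external momenta, and all propagators not crossing the loop, by $\|K_\epsilon^{-1}\|\le\epsilon^{-1}$. Counting the edges of the path graph together with the $e$ external legs should give the prefactor $\epsilon^{-(e+n-1)}$. On all rungs except two I would bound $\hat v_N$ by $\|\hat v_N\|_\infty=\|\hat v\|_\infty/N$.

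The main step, and the one I expect to be hardest, is the remaining loop integral. It contains two rungs $\hat v_N(p)\hat v_N(\pm p+q)$ and at least one propagator $\hat K_\epsilon(p)^{-1}$ running around the loop. I would interpolate between $\epsilon^{-1}$ and $\hat K_0(p)^{-1}$: write $\hat K_\epsilon^{-1}\le \epsilon^{-(1-\theta)}\hat K_0^{-\theta}$ and split one further rung as $|\hat v_N|\le \|\hat v_N\|_\infty^{1-c}|\hat v_N|^{c}$. After the substitution $p\to Np$, using $\hat v_N(p)=\hat v(p/N)/N$ and $\hat K_0(Np)=N^2\hat K_0(p)$, the integral becomes $\int \hat v^{2+c}\hat K_0^{-1-c}\,d^3p$ times an explicit power of $N$. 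This integral is finite for $c<1/2$, because $\hat K_0^{-1-c}\sim|p|^{-2-2c}$ is integrable in three dimensions exactly when $2+2c<3$. That explains the constant $C_c$ and the range of $c$. I would choose $\theta$ so that the powers of $\epsilon$ and $\|\hat v\|_\infty$ come out as $\epsilon^{-(e+n-1-c)}$ and $\|\hat v\|_\infty^{n-1-c}$. Collecting powers of $N$, namely $N^{-(n+1)}$ from the rungs, $N^{3-2-2c}$ from the rescaled loop, and the $\phi_R^{2t}$ written as $(\phi_R^2/N)^tN^t$, should give $N^{-(n+2c-t)}$.

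The hypothesis that $O_T$ is not built from only two $W_{\phi_R\phi_R}$ vertices should be exactly what makes the exponent positive. When $t=n$ I need the $2c$ gain, or, for $e\ge1$, an external-momentum propagator. The two-vertex $W_{\phi_R\phi_R}$ case is the scattering-length ladder, and it is $N^0$ (Lemma \ref{le:vanishing-crossed-ladder}). Checking this exponent bookkeeping case by case, especially for the loop vertex $W_L$, which carries no $\phi_R$ but contributes an extra trace, is where I expect errors. Since $\phi_R^2/N$ is bounded, the bound then vanishes as $N\to\infty$ at fixed $\epsilon$.
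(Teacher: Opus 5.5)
There is a genuine gap in your central step, the loop integral. You keep only ``at least one propagator $\hat K_\epsilon(p)^{-1}$'' on the loop and interpolate it as $\hat K_\epsilon^{-1}\le\epsilon^{-(1-\theta)}\hat K_0^{-\theta}$. That inequality holds only for $\theta\in[0,1]$. From a single propagator you therefore get at best $\hat K_0^{-1}$, never $\hat K_0^{-1-c}$ with $c>0$. The $\epsilon$-exponent also cannot improve from $e+n-1$ to $e+n-1-c$; interpolation can only make it worse.

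Take $\theta=1$. The rescaled loop gives $N^{3-2-2}=N^{-1}$, the other rungs give $N^{-(n-1)}$, and $\phi_R^{2t}$ gives $N^t$. The total is $N^{t-n}$, which does not vanish when $t=n$, exactly the case where you say you need the $2c$ gain. Your fallback for $e\ge1$ does not help either. You bound external-momentum propagators by $\epsilon^{-1}$, and that produces no $N$-decay.

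The missing idea is the structure of the loop. Closing a path of $n+1$ vertices with $\tilde\Gamma_{1(n+1)}$ produces a cycle through every vertex. In the pure $W_{\phi_R\phi_R}$ case with zero external momenta, all $n+1$ rungs and all $n+1$ propagators then carry the same loop momentum, $(p,h)$ or $(-p,-h)$. So there are no ``$n-1$ propagators off the loop'' to which you could attribute $\epsilon^{-(n-1)}$.

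The paper proceeds as follows:
\begin{enumerate}
\item It passes to $\beta\to\infty$, justified by the rescaling $\beta N^2\hat K_0$, so the frequency sum becomes an integral.
\item It introduces Schwinger parameters; the $s$-integral produces $\delta(u_+-u_-)$.
\item This leaves a combinatorial factor times $\int d^3p\,\hat v_N(p)^{n+1}(2\hat K_\epsilon(p))^{-n}$.
\end{enumerate}
The loop therefore carries $n$ powers of $\hat K_\epsilon(p)^{-1}$. The hypothesis excluding the two-vertex $W_{\phi_R\phi_R}$ diagram is precisely $n\ge 2$. That is what permits
\[
\hat K_\epsilon^{-n}\le\epsilon^{-(n-1-c)}\hat K_0^{-(1+c)},\qquad \hat v_N^{\,n+1}\le\|\hat v_N\|_\infty^{n-1-c}\,\hat v_N^{\,2+c}.
\]
Rescaling $p\to Np$ then yields $N^{t-n-2c}$, and the sum over the orderings $w$ only contributes a factor of order $n$. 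Note that the $\epsilon^{-(n-1-c)}$ comes from the loop propagators themselves.

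Everything else is a perturbation of this basic estimate:
\begin{itemize}
\item each additional derivative from external momenta gives a factor $1/\epsilon$;
\item $W_{J\phi_R}$ and $W_{\phi_RJ^*}$ vertices only change the constant;
\item $W_L$ vertices are disposed of via the $1/\sqrt\beta$ factor of Lemma \ref{le:loops-legs}.
\end{itemize}
You leave the $W_L$ case open. Your finite-$\beta$ claim that ``each frequency sum removes one power of $\hat K_\epsilon$'' also needs the thermal factors controlled, which the paper sidesteps by working at $\beta\to\infty$.
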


\begin{proof}
Recall that the limit $N\to\infty$ of the limit $\beta\to \infty$ is equivalent to the limit $N\to\infty$ as already discussed in Lemma \ref{lm:adjacent 1PI loops are negligible}. Actually, for large $N$, the rescaled $f(\beta N^2\hat{K}_0(p))$ behaves as $f(\beta \hat{K}_0(p))$ for large $\beta$. Moreover, notice that after taking all the functional derivatives in $O_T$, the action of $e^{-w_T \tilde{\Gamma}^s}$ and $e^{-\Gamma^s}$ can be discarded as its action is bounded by $1$. This is a consequence of the Feynman-Kac formula, see
\cite{GalandaPinamonti} or Corollary \ref{le=estimate1} and Lemma \ref{le:positive-matrices}.\\

Let us start considering the case $e=0$, namely, the case without external momenta. We start studying the contributions involving only vertices $W_{\phi_R \phi_R}$, which is part of any loop diagram involving all kind of vertices, of the form 
\[
O'_T=\lim_{\beta\to\infty}
\frac{1}{\beta}\sum_{h\in\mathbb{Z}}
\int d^3p \hat{v}_N(p)^{n+1}\hat{\mathsf{G}}_\beta(p,h)^{w} \hat{\mathsf{G}}_\beta(-p,-h)^{n+1-w} \phi_R^{2 t}
\]
where among the $n+1$ propagators $\mathsf{G}_\beta$ in $O'_T$, $w$ of them have positive momenta and $n+1-w$ negative momenta. In the limit $\beta\to \infty$ of $O'_T$ and
operating as in the proof of Lemma \ref{le:vanishing-crossed-ladder} to turn the sum over Matsubara frequencies in an integral, we have that
\begin{align*}
|O_{T,\infty}'| &:= \lim_{\beta\to\infty}|O_T'| \\
&=  
\frac{\phi_R^{2 t}}{(w-1)!(n-w)!}\int d^3p \hat{v}_N(p)^{n+1}\int_{-\infty}^{\infty} ds 
\int_0^\infty du_+ u_+^{w-1}
\int_0^\infty du_- u_-^{n-w}
e^{-(\hat{K}(p)+\mathrm{i} s) u_+}
e^{-(\hat{K}(p)+\mathrm{i} s) u_-}.
\end{align*}
The integration over $s$ gives a Dirac delta $\delta(u_+-u_-)$. Then, performing the integration over $u_-$ and $u_+$ we obtain
\begin{align*}
|O_{T,\infty}'| 
&=  
\phi_R^{2 t}\frac{n!}{(w-1)!(n-w)!}
\int d^3p \hat{v}_N(p)^{n+1} \frac{1}{(2\hat{K}(p))^n}.
\end{align*}
Consider now $c\in [0,\frac{1}{2})$ and recall that $\hat{v}_N=\hat{v}(p/N)/N$ so $\|N\hat{v}_N\|_\infty=\|\hat{v}\|_\infty$. The following bound is obtained
\begin{align*}
|O_{T,\infty}'| 
&\leq  
\phi_R^{2 t}\frac{n!}{(w-1)!(n-w)!}
\frac{1}{2^n}
\frac{\|\hat{v}_N\|_\infty^{n-1-c}}{(\epsilon)^{n-1-c}}
\int d^3p \hat{v}_N(p)^{2+c} \frac{1}{(K_0(p))^{1+c}} 
\\
&\leq  
\frac{1}{(\epsilon)^{n-1-c}}
N^{t-n-2c}
\tilde\phi_R^{2 t}  \frac{n!}{(w-1)!(n-w)!}
\frac{1}{2^n}
\|\hat{v}\|_\infty^{n-1-c}
\int d^3p \hat{v}(p)^{2+c} \frac{1}{(K_0(p))^{1+c}} 
\end{align*}
where ${\phi}_R^2 = N\tilde{\phi}_R^2$. Summing over all possible orderings in which the functional derivates appear, corresponds to a sum over all possible $w$. In this sum, the factors $n!/((w-1)!(n-w)!2^n)$ give $n/2$. Therefore, a bound similar to the one given in the thesis follows and $\lim_{N\to\infty} |O_{T,\infty}'| $ vanishes.\\

We now observe that the estimate of $O_T'$ can be used to bound the generic $O_T$.
Indeed, considering additional derivatives contributes with factors $1/
\epsilon$ which do not change the analysis.
Also the presence of the loop vertices does not change the analysis. Indeed, $W_L$ is treated as in Lemma \ref{le:loops-legs} giving an extra factor $1/\sqrt{\beta}$ for every $W_L$ and the $s$ derivatives applied to $W_L$ give $(\|v\|_1/\epsilon)^{s-2}$ extra factors. Thus, the contributions of loop vertices $W_L$ are bounded relying on the same above analysis.
Finally, considering $W_{J\phi_R}$ or $W_{\phi_R J^*}$ changes the constant in front of the estimate but not the overall bound.
\end{proof}

After this preliminary discussion and after having proved that for fixed $\epsilon$ the partial sum producing the updated propagators in \eqref{eq:resummed-vertices} is equivalent to the original LVE in the limit for large $N$, we can state the main result of the section.

\begin{theorem}\label{thm:renormalisation-convergenve}
Consider the loop vertex expansion of $\log Z(J,J^*) - \log Z(0,0)$ given in \eqref{eq:LVE}, with vertices as in \eqref{eq:WREN} and obtained for the decomposition \eqref{eq:renormalised-background} of the field with respect to the renormalised background $\phi_R$, which  
is a constant solution of the Gross-Pitaevskii equation \eqref{eq:GR-ren-background}.
Take the partial resummations of the tree graphs which produce everywhere in the edge vertices $\tilde{\mathsf{G}}_{\mathcal{L}}^A$ as discussed in Theorem \ref{thm:partial-sum} and thus consider directly the renormalised vertices \eqref{eq:resummed-vertices}. After renormalising the edges between $W_L$ and other vertices, the remaining tree graphs converge absolutely in the limit $\beta\to\infty$, taken before the sum, if $N$ is sufficiently large for $\epsilon\sim N^{-1}$ and $\mathfrak{a}_0 \phi_R^2$ kept constant. 
\end{theorem}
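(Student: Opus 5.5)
The plan is to redo the proof of Theorem \ref{thm:converge}, now with the resummed vertices \eqref{eq:resummed-vertices} in place of the original ones. The underlying tree combinatorics stays the same: we still use the Brydges--Kennedy/Abdesselam--Rivasseau formula \eqref{eq:LVE}, Corollary \ref{le=estimate1} to bound the Gaussian evaluation of $A$ by its value at $A=0$, the leaf-cutting procedure, and Lemma \ref{le:EGF-convergence} for the sum over rooted trees with weights $p_T$. So the work reduces to proving analogues of Lemma \ref{le:loops-legs} for the vertices $W'$. Each derivative $\delta/\delta A$ acting on $\tilde{\mathsf{G}}^A_{\mathcal{L}}$, $\mathsf{M}$ or $\mathsf{F}_i$ inserts one factor of the unresummed $\mathsf{G}^A_\beta$ between resummed pieces. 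Hence a vertex with $k$ insertions of $\|v_N\|_1\delta(u)$ is bounded by $k!\,(c\|v_N\|_1/\epsilon)^k$ times the base value. The difference from before lies in the base value. The $\phi$-dependent legs no longer carry $\phi_0^2$ against the bare $K_\epsilon^{-1}$ (which produced the fatal factor $\phi_0^2\|v\|_1/\epsilon\sim N/N$, i.e.\ $O(1)$ times something not small). Instead they carry $\phi_R^2$ against $\mathsf{M}$, whose $L^1$ norm is at most $2(16\pi\phi_R^2\mathfrak{a}_0)^{-1}$. After using $A\phi_R=\phi_R(-\partial_u+K)\mathsf{G}_\beta A$ as in \eqref{eq:Wphi0phi0}, the effective leg weight becomes $\|8\pi\mathfrak{a}_0\phi_R\|_\infty^2/N$. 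This is the origin of the improved condition displayed in the introduction.

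The steps are as follows. \emph{First}, for $\beta\to\infty$ taken before summing, we drop all trees with at least one loop vertex $W_L$. By Lemma \ref{le:loops-legs}(i) each $W_L$ carries $C/\sqrt{\beta}$, and the statement says edges between $W_L$ and the other vertices are renormalised beforehand. In the radius $r$ only the leg term survives. \emph{Second}, we bound $W'_{\phi_R\phi_R}$. The counterterm $2A\phi_R^2v^{-1}(\hat v(0)-8\pi\mathfrak{a}_0)$ is linear in $A$, so it can appear only as a leaf. It cancels against the part of $\langle A\phi_R,\mathsf{M}\phi_RA\rangle$ at zero momentum, because by construction $\hat{\mathcal{L}}^\infty(0,0)=8\pi\mathfrak{a}_0\phi_R^2$ and $\phi_R$ solves \eqref{eq:GR-ren-background}. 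The remainder is controlled by $\|\mathsf{M}\|_{L^1}\phi_R^2\|\hat v_N\|_\infty$, which scales like $(\mathfrak{a}_0\phi_R^2)^{-1}\phi_R^2 N^{-1}=O(1)$ uniformly in $N$. Each further insertion costs $\|v_N\|_1/\epsilon\sim N^{-1}\cdot N=O(1)$, and this can be made small by the choice of constant in $\epsilon\sim N^{-1}$. \emph{Third}, we bound $W'_{J\phi_R}$ and $W'_{JJ^*}$ using $\|\mathsf{F}_i\|_{L^1}$ and the $\|J\|_{K,1}$ norms, exactly as in \eqref{eq:TJJ}--\eqref{eq:T0phi0J}. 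The root again supplies the $\beta\,\|J\|$ prefactor; it must be normalised by dividing by $\beta$ (per unit parabolic time), or one works with truncated correlations at fixed $J$. \emph{Fourth}, we insert these bounds into \eqref{eq:power-series} to obtain
\[
r'=4\frac{\|v_N\|_1}{\epsilon}\left(\frac{\|8\pi\mathfrak{a}_0\phi_R\|_\infty}{\sqrt N}+\frac{\|J\|_{K,1}+\|J^*\|_{K,1}}{2}\right)^2,
\]
which is $N$-independent under the stated scalings. Lemma \ref{le:EGF-convergence} then gives absolute convergence when $r'<1/4$, and this is achievable by choosing the constant in $\epsilon\sim N^{-1}$ and rescaling $J$, as in the remark after Theorem \ref{thm:converge}.

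The main obstacle is the second step. We need a genuinely $N$-uniform bound on $\mathsf{M}$ and $\mathsf{F}_i$ with $A\neq0$ after the Gaussian evaluation. The Feynman--Kac domination $|\int d\mu_\Upsilon\mathsf{G}^A_\beta|\le\mathsf{G}_\beta$ was proved for the heat-type kernel, whereas $\tilde{\mathsf{G}}^A_{\mathcal{L}}$ is a matrix Bogoliubov-type propagator whose positivity is not obvious. We would first try to expand $\tilde{\mathsf{G}}^A_{\mathcal{L}}$ as a Dyson series in $A$ around $\tilde{\mathsf{G}}_{\mathcal{L}}$, so that every $A$ sits between copies of $\mathsf{G}^A_\beta$. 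We would then use positivity of $\mathcal{L}^\beta\mathsf{J}_2$ (it is positive since $\mathfrak{a}_0>0$) to get $\|\tilde{\mathsf{G}}^A_{\mathcal{L}}\|\le\|\tilde{\mathsf{G}}_{\mathcal{L}}\|$ after averaging. If that fails, a fallback is to accept the extra $\mathsf{G}_\beta$ factors and absorb them into the $1/\epsilon$ per insertion that is already counted.
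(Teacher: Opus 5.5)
\textbf{The failing step.} Your uniform vertex bound for $W'_{\phi_R\phi_R}$ gives no small factor to vertices of valence one or two, and the tree sum cannot be controlled without one.

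Consider a valence-two $W'_{\phi_R\phi_R}$ at $A=0$. It contributes only through both derivatives hitting the explicit $A$'s, i.e.\ through $\phi_R^2\,v_N\mathsf{M}v_N$. So the insertion $\|v_N\|_1$ is paired with $\mathsf{M}$, not with a factor $\epsilon^{-1}$. The resulting weight is of order $\phi_R^2\|v_N\|_1\hat{\mathsf{M}}(0,0)\simeq\|v_N\|_1/(8\pi\mathfrak{a}_0)$. This is independent of $N$ under GP scaling and independent of the constant in $\epsilon\sim N^{-1}$. It is also not small: since $8\pi\mathfrak{a}_0\le\hat v_N(0)\le\|v_N\|_1$, it is bounded below.

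Your own second step arrives at exactly this $O(1)$ value, and you bound the post-cancellation leaf remainder by the same $O(1)$ quantity. The weight $\|8\pi\mathfrak{a}_0\phi_R\|_\infty^2/N$ that you then put into $r'$ does not follow from it.

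Chains of valence-two $W'_{\phi_R\phi_R}$ between $W'_{J\phi_R}$, $W'_{\phi_RJ^*}$ and other $W'_{\phi_R\phi_R}$ are just the Dyson expansion of the Bogoliubov propagator, with ratio of order one. So no choice of $\kappa$ in $\epsilon=\kappa/N$ and no rescaling of $J$ gives a bound $\sum_n r^n$ with $r<1/4$. Relatedly, $\delta/\delta A$ acting on $\tilde{\mathsf{G}}^A_{\mathcal{L}}$ inserts another \emph{resummed} propagator, not an unresummed $\mathsf{G}^A_\beta$.

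\textbf{The missing idea.} What is missing is the case analysis on the valence of $W'_{\phi_R\phi_R}$, which is the content of the paper's proof.
\begin{itemize}
\item A leaf is reduced to $O(N^{-1})$ by the counterterm coming from the new normal ordering (Lemma \ref{le:one-point}).
\item A valence-two vertex attached to $W'$-type vertices is not an independent term. Such chains are precisely what the partial resummation already produced, so only the remainder $\mathcal{R}_n$ survives. Lemma \ref{le:reminder-resummation}, resting on Lemmas \ref{le:vanishing-crossed-ladder} and \ref{le:power-counting}, shows this remainder is $O(N^{-1})$ as $\beta\to\infty$.
\item A valence-two vertex attached to $W_L$ is absorbed into the renormalised edge $v\to v-v(K+16\pi\phi_R^2\mathfrak{a}_0)^{-1}v$, which scales like $v_N$. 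This is the real role of the hypothesis on edges touching $W_L$; you used it only to discard $W_L$.
\item For valence $n>2$, the extra derivatives insert resummed $\mathsf{M}$, $\mathsf{F}$ with $L^1$ norm $O((\mathfrak{a}_0\phi_R^2)^{-1})$. Each insertion then costs $O(N^{-1})$, giving the factor $N^{-(n-2)}$.
\end{itemize}
Only after this are all vertex weights small, and only then does the analogue of \eqref{eq:power-series} with the improved $r$ apply.

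Your concern about Feynman--Kac domination for the matrix propagator is fair; the paper simply asserts the $L^1$ bounds. But that is not where your argument breaks.
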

\color{black}
\begin{proof}
The proof follows closely the strategy used for Theorem \ref{thm:converge} up to some partial resummations and changes in the estimate of certain contributions. 
We discuss in particular the role of the vertex $W'_{\phi_R\phi_R}$ when it appears in a generic tree graph $T$ depending on its valence
\begin{itemize}
    \item If its valence is $1$ (i.e.~a single edge reaches it), its contribution can be shown to decay as $1/N$ thanks to the new normal ordering and Lemma \ref{le:one-point}. 
    \item If $W'_{\phi_R\phi_R}$ appears with valence $2$ there are two options. First, the vertex is attached only to $W'_{\phi_R\phi_R}$, to $W'_{J\phi_R}$ or to $W'_{\phi_RJ^*}$. Then, because of the partial resummation, the corresponding vertex contains $\mathcal{L}_\beta$. Due to Lemma \ref{le:reminder-resummation}, this remainder is negligible in the limit $\beta\to\infty$ (taken before the sum) and for $N\to\infty$. Instead, if the vertex is attached only to $W_L$, we can resum the corresponding graphs with a similar graph entering the LVE. This is the graph which contains only $v$ at its place. Therefore, what we are actually considering, is a graph where the following redefinition occurs 
    \[
        v\to v-v \frac{1}{K+16\pi \phi_R^2 \mathfrak{a}_0 } v.
    \]
    Thus, we may just consider this second graph with the new edge. However, the latter scales as $v_N$ for large $N$. 
    \item If the valence is $n>2$ we have that 
    \[
    \left|\frac{\delta^n}{\delta A\dots \delta A}W'_{\phi_R\phi_R}\right| \leq \frac{1}{N^{n-2}},
    \]
    as $\mathsf{M}$ and $\mathsf{F}$ and $\mathsf{F}^\dagger$ are $L^1$ bounded.  
\end{itemize}
Taking into account these cases, we estimate the contribution of a generic tree graph. The remaining loop vertex expansion is controlled by a power series similar to the one in \eqref{eq:power-series} in the proof of Theorem \ref{thm:converge}
\begin{align*}
    |\log Z(J,J^*)-\log Z(0,0)|\leq & |V_1|+ \beta  (\|J\|_{1}+\|J\|_{K,1}+\|J^*\|_{1}+\|J^*\|_{K,1})\left(2\frac{\|v\|_1}{\epsilon}\right)^{-1}
    &\sum_{n\geq 2} r^n  \sum_{T\in \mathcal{T}_n}e(1)p_T
\end{align*}
where now
\[
 \frac{r}{4}=\left(\frac{\|v\|_1}{\epsilon} \right) \left( \frac{C}{\sqrt{\beta}}+\left(\frac{\|8\pi \mathfrak{a}_0\phi_R\|_\infty}{\sqrt{N}}+\frac{\|J\|_{K,1}+\|J^*\|_{K,1}}{2}\right)^2 \right)
\]
and by Lemma \ref{le:EGF-convergence}  the series converges if $r< 1/4$. 
We notice that now, thanks to the extra factor $1/\sqrt{N}$, the bound can be made valid also in the Gross-Pitaevskii regime. 
\end{proof}

Finally, we present in the following two technical Lemmata used in the proof of the previous Theorem.

\begin{lemma} \label{le:reminder-resummation}
In the limit of vanishing temperature we have that the remainder of the partial resummation which produces $W'_{\phi_R\phi_R}$ vanishes. Namely,
considering the remainder at the order $n$ contribution in the LVE giving $W_{\phi_R\phi_R}'$
\begin{align*}
\mathcal{R}_n=\prod_{i=1}^{n-1}\int_0^1 dw_{i,i+1}
&\left(
\tilde{\Gamma}_{i,i+1}
e^{-\frac{1}{2}\sum_{j\neq k}w_T(j,k)\tilde\Gamma_{jk}}
\underbrace{e^{-\Gamma}W_{\phi_R\phi_R}\otimes \dots \otimes e^{-\Gamma}W_{\phi_R\phi_R}}_{n} 
+
\right.
\\
&-\left.
\left\langle
\frac{\delta}{\delta  A_i},
\mathcal{L}^\beta
\frac{\delta}{\delta  A_{i+1}}\right\rangle 
\underbrace{W_{\phi_R\phi_R}\otimes \dots \otimes W_{\phi_R\phi_R}}_{n} 
\right),
\end{align*}
where $w_T(j,k) = \inf \{w_{jl} | l\in\{j+1,k\}\}$, we have that for interaction $v_N$ in $\tilde{\Gamma}$ and for $N\to\infty$
\begin{equation*}
    \mathcal{R}_n=O(N^{-1}).
\end{equation*} 
\end{lemma}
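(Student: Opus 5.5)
The plan is to compare, edge by edge along the path graph, the full action of the Gaussian measure with the action of the resummed kernel $\mathcal{L}^\beta$. I will use the same interpolation (replica) device as in the proofs of Lemma \ref{lm:adjacent 1PI loops are negligible} and Theorem \ref{thm:partial-sum}.

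\textbf{Step 1: split the exponent.} In $e^{-\frac12\sum_{j\neq k}w_T(j,k)\tilde\Gamma_{jk}}$ I separate the adjacent contributions $w_{i,i+1}\tilde\Gamma_{i,i+1}$ from the non adjacent ones ($|j-k|\ge2$). Separately, I separate the self-contractions $e^{-\Gamma}$ acting on a single $W_{\phi_R\phi_R}$.

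\textbf{Step 2: introduce replicas.} I assign to each vertex $i$ its own field $A_i$ and interpolate the non adjacent covariances and the self-contraction covariances with a parameter $s\in[0,1]$. By Lemma \ref{le:positive-matrices} the interpolated covariance $w_T\otimes\Upsilon$, with these entries multiplied by $s$, stays positive semidefinite. Hence Corollary \ref{le=estimate1} bounds every leftover exponential by $1$. At $s=0$ only the adjacent factors $\tilde\Gamma_{i,i+1}e^{-w_{i,i+1}\tilde\Gamma_{i,i+1}}$ survive. Integrating $w_{i,i+1}$ over $[0,1]$ then reproduces exactly, by the defining identity \eqref{eq:ell-scattering}, the sum over all (crossed) ladders between $i$ and $i+1$, i.e.\ $\langle \delta/\delta A_i,\mathcal{L}^\beta\,\delta/\delta A_{i+1}\rangle$. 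So the $s=0$ term cancels the subtracted term in $\mathcal{R}_n$.

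Two residual pieces need care here. The first is the multi-rail 1PI adjacent contributions produced when two adjacent rungs share a middle vertex. These are controlled by Lemma \ref{le:power-counting} and Lemma \ref{lm:adjacent 1PI loops are negligible}, which give $O(N^{4-2\cdot 3})$ or better. The second is the crossed ladders. They vanish as $\beta\to\infty$ by Lemma \ref{le:vanishing-crossed-ladder} and Remark \ref{rem:leg}, or are already contained in $\mathcal{L}^\beta$ at finite $\beta$.

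\textbf{Step 3: bound the remainder.} This is
\[
\mathcal{R}_n=\int_0^1 ds\,\frac{d}{ds}(\cdots)+\text{(1PI residuals)},
\]
and each $s$-derivative brings down exactly one extra $\tilde\Gamma_{jk}$. That extra factor is either a non adjacent contraction ($|j-k|\ge2$) or a self-contraction $\Gamma$ inside one $W_{\phi_R\phi_R}$. After bounding the exponentials by $1$, each term is a one loop decoration of a path graph. The non adjacent one is exactly of the type estimated in Lemma \ref{le:loop-LVE} and in \eqref{eq:1loopnon adjacent}, with scaling $N^{-2}$ at vanishing temperature. The self-contraction produces a tadpole with $v_N$ and $\mathsf{G}_\beta$. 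By the rescaling $p\to p/N$, $\hat v_N=\hat v(p/N)/N$, $\hat K(p)=N^2\hat K(p/N)$, this tadpole carries an extra $N^{-1}$ relative to the tree, because it contains one $v_N$ without a compensating $\phi_R^2$.

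\textbf{Step 4: order of limits.} Following the remark in Lemma \ref{lm:adjacent 1PI loops are negligible}, $\beta$ always enters multiplied by $\hat K$. Hence taking $\beta\to\infty$ and then $N\to\infty$ is harmless, and the worst-scaling term gives $\mathcal{R}_n=O(N^{-1})$.

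The main obstacle I expect is the self-contraction term in Step 3. Its $N$-scaling is only $N^{-1}$, and one must check that the renormalised normal ordering $:|X|^2:_{\phi_R}$ with the counterterm $2v^{-1}(\hat v(0)-8\pi\mathfrak{a}_0)\phi_R^2$ does not reintroduce an $O(1)$ piece. I would first try to show that this counterterm exactly cancels the zero-momentum part of the one loop self-contraction, leaving a remainder controlled as in Lemma \ref{le:loop-LVE} with $c>0$.
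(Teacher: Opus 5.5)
Your outline matches the paper's proof: remove the $e^{-\Gamma}$ and non-adjacent contractions, discard 1PI multi-rail ladders by power counting, identify the reducible ones with products of $\mathcal{L}^\beta$, and treat crossings via Lemma \ref{le:vanishing-crossed-ladder} and Remark \ref{rem:leg}. However, the device you add in Step 2 to make this quantitative does not work.

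Lemma \ref{le:positive-matrices} only covers matrices of min-over-path form with unit diagonal. Your matrix $M(s)$ is obtained from $w_T$ by multiplying the diagonal and non-adjacent entries by $s$ while keeping the $w_{i,i+1}$. It is neither of that form nor positive semidefinite:
\begin{itemize}
\item For $n=2$, $s=0$ it is $\left(\begin{smallmatrix}0&w_{12}\\ w_{12}&0\end{smallmatrix}\right)$, which is negative along $(1,-1)$.
\item Even if you keep the diagonal and switch off only the non-adjacent entries, for $n=3$, $w_{12}=w_{23}=1$, $s=0$ you get $x^{t}M(0)x=-1$ at $x=(1,-1,1)$.
\item In general, any positive semidefinite covariance satisfies $|M_{i,i+1}|^2\le M_{ii}M_{i+1,i+1}$. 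So no positive interpolation can switch off the self-contractions while keeping the adjacent rungs.
\end{itemize}
Since $\hat v>0$, $M(s)\otimes\Upsilon$ is then indefinite and the interpolated Gaussian is not a measure. Corollary \ref{le=estimate1} is therefore unavailable, and the step ``after bounding the exponentials by $1$'' in Step 3 has no basis.

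The positivity-preserving interpolations are the ones the paper uses elsewhere. One is rescaling a tree-edge weight as in Lemma \ref{le:non-adjacent-loops}. This keeps the min-over-path form, but it removes adjacent and non-adjacent contractions through that edge together, which is why the adjacent piece $R_{23}$ is split off there. The other is the replica covariance $\Upsilon\otimes R^s$, with $R^s=\left(\begin{smallmatrix}1&s\\ s&1\end{smallmatrix}\right)$, used in Lemma \ref{lm:adjacent 1PI loops are negligible} and Theorem \ref{thm:partial-sum}; it keeps unit diagonal blocks.

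With either device the self-contractions $e^{-\Gamma}$ survive and must be estimated directly. The paper does not interpolate them away. Instead it asserts, before the ladder analysis, that dropping them (together with the non-adjacent terms) costs $O(N^{-1})$.

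Your justification of that rate also does not follow from the rescaling you invoke. A self-contraction of two $A$'s inside $\mathsf{G}^A_\beta$ adds one $v_N$, one loop and two propagators. By the counting of Lemma \ref{le:power-counting} this gives $N^{-1}N^{3}N^{-2}=N^{0}$, which is exactly the counting that makes every ladder rung survive in the Gross-Pitaevskii limit. The smallness of these terms must therefore come from something else, such as the equal-time structure of the closed propagator or the counterterm you mention. That argument is still missing.
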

\begin{proof}
To prove this lemma we proceed as follows. At order $n$, $\mathcal{R}_n$ is the remainder to the resummation that has produced $W'_{\phi_R\phi_R}$. Moreover, using $v_N$ for large $N$ some contributions are immediately negligible. Actually
\[
\mathcal{R}_n 
=
\prod_{i=1}^{n-1}\int_0^1 dw_{i,i+1}
\left(
\tilde{\Gamma}_{i,i+1}
e^{-w_{i,i+1}\tilde\Gamma_{i,i+1}}
-\left\langle
\frac{\delta}{\delta  A_i},
\mathcal{L}^\beta
\frac{\delta}{\delta  A_{i+1}} \right\rangle
\right)
\underbrace{W_{\phi_R\phi_R}\otimes \dots \otimes W_{\phi_R\phi_R}}_{n} 
+ \, O(N^{-1}),
\]
essentially because $e^{-\Gamma}$ adds exponential factors which vanish in the limit of large $N$. 
Furthermore, as proved in Lemma \ref{le:power-counting},
the ladder diagrams with multiple rails that are 1PI gives extra factors $1/N^{2l-2}$ where $l$ is the number of rails. Hence, in the evaluation of the remainder, only the  multiple ladder diagrams which are completely reducible to product of ladder diagrams with two rails survive. These corresponds exactly to products of $\mathcal{L}^\beta$, see also Lemma \ref{lm:adjacent 1PI loops are negligible}. Furthermore, in the limit where $\beta\to\infty$, we have, from Lemma \ref{le:vanishing-crossed-ladder}
and remark \ref{rem:leg}, that only the plain ladder diagrams survive in $\mathcal{L}^\beta$ and thus $\mathcal{R}_n = O(N^{-1})$.
\end{proof}

\begin{lemma}\label{le:one-point}
With $v_N$ in $\Gamma$ and $\tilde{\Gamma}$, it holds that in the limit of vanishing temperature
\begin{equation}\label{eq:rest-renormalised}
|(e^{\Gamma_{12}}-1)
\mathsf{G}_\beta^A \otimes W'_{\phi_R\phi_R}
|
\leq \frac{C}{N}
\end{equation}
\end{lemma}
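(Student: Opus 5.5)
We have to bound the contribution, in the limit $\beta\to\infty$, of a valence-one renormalised vertex $W'_{\phi_R\phi_R}$ hanging from a propagator $\mathsf{G}^A_\beta$ of another vertex. Write $(e^{\Gamma_{12}}-1)=\int_0^1 ds\,\Gamma_{12}e^{s\Gamma_{12}}$. This isolates one contraction $\langle \delta/\delta A_1, (\delta\otimes v_N)\, \delta/\delta A_2\rangle$ between the two factors. The remaining exponential is then bounded by $1$ using Corollary \ref{le=estimate1} and the positivity of Lemma \ref{le:positive-matrices}, as in the proof of Lemma \ref{le:loops-legs}. Acting with $\delta/\delta A_1$ on $\mathsf{G}^A_\beta$ gives $-\mathsf{G}^A_\beta(\cdot)\mathsf{G}^A_\beta$, whose integral kernel, after the Gaussian evaluation, is bounded by $\mathsf{G}_\beta(\cdot)\mathsf{G}_\beta$. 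Its $L^\infty$/$L^1$ norm is controlled by $\epsilon^{-1}$-type factors exactly as in \eqref{eq: conAsenzaA}.

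The key step is the single derivative of $W'_{\phi_R\phi_R}$ in \eqref{eq:resummed-vertices}, which is exactly a one-point insertion. It has two pieces. The first is $-4\,\mathsf{M}\phi_R A\,\phi_R$ (plus the term from differentiating $\mathsf{M}$, which keeps an $A$ and vanishes at $A=0$ unless further contractions occur; those are of higher order and handled as in Lemma \ref{le:loop-LVE}). The second is the counterterm $2\phi_R^2 v_N^{-1}(\hat v_N(0)-8\pi\mathfrak a_0)$ coming from the new normal ordering. Once convolved with the edge $v_N$, the first piece at zero momentum becomes $-4\phi_R^2\,\hat v_N(0)\hat{\mathsf{M}}(0,0)\cdots$. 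In the $\beta\to\infty$ limit, and using the partial resummation producing $\mathcal{L}^\beta$ together with Lemma \ref{le:vanishing-crossed-ladder}, the effective vertex at the edge is renormalised from $\hat v_N(0)$ to $8\pi\mathfrak a_0$, exactly as in the one-point computation of Section \ref{se:one-point-function-renormalisation}. I will then show that the counterterm contributes $2\phi_R^2(\hat v_N(0)-8\pi\mathfrak a_0)$, i.e.\ $v_N\cdot v_N^{-1}$ acts as the identity, and cancels the zero-momentum part of the first piece. This is precisely why the normal ordering was chosen so that $\tilde{\mathcal H}_{1R}=0$ with $\phi_R$ solving \eqref{eq:GR-ren-background}.

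What survives are two remainders. The first is the momentum dependence, $\hat{\mathcal V}(p,n)-\hat{\mathcal V}(0,0)$ and $\hat v_N(p)-\hat v_N(0)$, tested against $\mathsf{G}_\beta(\cdot)\mathsf{G}_\beta$. Since $\hat v_N(p)=\hat v(p/N)/N$, a Taylor estimate gives a gain of $|p|/N$ times the prefactor, and $\phi_R^2\hat v_N$ is $O(1)$. Integrability in $p$ uses $\|\mathsf M\|_{L^1}\le 2(16\pi\phi_R^2\mathfrak a_0)^{-1}$ together with $\hat{\mathsf G}_\beta\lesssim \hat K_\epsilon^{-1}$, producing $C/N$. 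The second is the error in replacing the full ladder sum by $\mathcal{L}^\beta$ (non-adjacent and 1PI multi-rail loops), which is $O(N^{-1})$ by Lemma \ref{le:reminder-resummation} and Lemmas \ref{le:power-counting}–\ref{lm:adjacent 1PI loops are negligible}.

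The main obstacle is the exact cancellation at zero momentum. I must check that the constant in front of $\mathsf{M}$ at zero momentum, namely $\hat{\mathsf M}(0,0)$ multiplied by the resummed edge, reproduces $\hat v_N(0)-8\pi\mathfrak a_0$ with the correct factor $2$ and sign, rather than something involving $(16\pi\phi_R^2\mathfrak a_0)^{-1}$ that does not cancel. I would do this by computing in Fourier space at $p=0$, $n=0$ using the explicit matrix inverse of Appendix \ref{se:pretrubed-propagators}. If the cancellation turns out to be only approximate, the fallback is to show that the residual is proportional to $\phi_R^2(\hat v_N(0)-8\pi\mathfrak a_0)\cdot\hat v_N$. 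This residual is $O(N^{-1})$ once one factor $\hat v_N\sim N^{-1}$ is not compensated by $\phi_R^2$, because the single-edge structure has only one $\phi_R^2$ available.
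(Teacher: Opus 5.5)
There is a genuine gap, and it is in your first step. The identity $e^{\Gamma_{12}}-1=\int_0^1 ds\,\Gamma_{12}e^{s\Gamma_{12}}$ is fine. The problem is that bounding the remaining $e^{s\Gamma_{12}}$ by $1$, and calling further contractions ``higher order, as in Lemma \ref{le:loop-LVE}'', discards exactly the additional rungs between the two adjacent vertices. In the Gross--Pitaevskii regime these are $O(1)$: they are the ladder diagrams that build $\mathcal{L}^\beta\to 8\pi\mathfrak{a}_0\phi_R^2$. That is why Lemma \ref{le:loop-LVE} excludes loops involving only two $W_{\phi_R\phi_R}$ vertices.

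With a single contraction, the quadratic part of $W'_{\phi_R\phi_R}$ vanishes at $A=0$, as you yourself observe. The only surviving term is then the counterterm, which after absorbing the edge gives $2\phi_R^2(\hat v_N(0)-8\pi\mathfrak{a}_0)$ times a free leg. Since $\phi_R^2\sim N$ and $\hat v_N(0)-8\pi\mathfrak{a}_0\sim N^{-1}$, this is $O(1)$. Your second paragraph then invokes the resummation producing $8\pi\mathfrak{a}_0$, but that resummation lives in the exponential you have already bounded away, and once absolute values are taken no cancellation can be recovered. Moreover, Corollary \ref{le=estimate1} and the Feynman--Kac bound control $\mathsf{G}^A_\beta$, not expressions with explicit factors of $A$ such as $\langle A\phi_R,\mathsf{M}^A\phi_R A\rangle$, so the bound-by-one step is not directly available anyway. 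The paper instead keeps the full $e^{\tilde\Gamma_{12}}-1$ and sums it as in \eqref{eq:ell-scattering}, replacing $\mathsf{G}^A_\beta$ by $\mathsf{M}$ at negligible cost. It takes absolute values only after the cancellation.

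The coefficient problem you flag as the main obstacle is resolved in the paper by an idea absent from your plan. Before applying $e^{\tilde\Gamma_{12}}-1$, $W'_{\phi_R\phi_R}$ is rewritten with the equation-of-motion trick of \eqref{eq:Wphi0phi0}, with $K+2\mathcal{L}^\beta$ in place of $K$:
\[
\langle A \phi_R, \mathsf{M}^A \phi_R A\rangle
=
\langle \phi_R, (K+2 \mathcal{L}^\beta)(\mathsf{M}^{A}-\mathsf{M})(K+2 \mathcal{L}^\beta)\phi_R \rangle
-
\langle \phi_R, A \mathsf{M}(K+2 \mathcal{L}^\beta)\phi_R \rangle .
\]
Since $\phi_R$ is constant and, by \eqref{eq:M1}, $\hat{\mathsf{M}}(0,0)\,\bigl(\hat K(0)+2\hat{\mathcal{L}}^\beta(0,0)\bigr)$ is a pure number independent of $\mathfrak{a}_0\phi_R^2$, the factor $\hat{\mathsf{M}}(0,0)\sim(8\pi\mathfrak{a}_0\phi_R^2)^{-1}$ you worry about cancels, and the explicit $A$'s are removed. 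After this rewriting:
\begin{itemize}
\item the linear term produces a local contribution proportional to $\phi_R^2\hat v(0)$;
\item the ladder sum produces $\mathcal{L}^\beta-v$;
\item both are compared directly with the local counterterm $\phi_R^2(\hat v(0)-8\pi\mathfrak{a}_0)$.
\end{itemize}
The difference vanishes for $\beta\to\infty$, $p,n\to 0$ because $\mathcal{L}^\beta\to 8\pi\mathfrak{a}_0\phi_R^2$, leaving $O(N^{-1})$.

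Your fallback cannot substitute for this. As you note yourself, the counterterm's $v_N^{-1}$ absorbs the edge $v_N$, so there is no spare factor $\hat v_N\sim N^{-1}$, and the leading-order cancellation must be exact rather than approximate. Once this is in place, your handling of the remainders (momentum dependence, Lemma \ref{le:reminder-resummation}) is consistent with the paper.
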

\begin{proof}
Before taking the action of $e^{\tilde{\Gamma}_{12}}-1$, we operate as in \eqref{eq:Wphi0phi0} to rewrite part of $W'_{\phi_R\phi_R}$. We have that  
\[
\langle A \phi_R, \mathsf{M}^A \phi_R A\rangle 
=
   \langle
   \phi_R, (K+2 \mathcal{L}^\beta) \left(\mathsf{M}^{A}
   -
\mathsf{M}\right)
(K+2 \mathcal{L}^\beta)
\phi_R \rangle
   -
   \langle
   \phi_R, A \mathsf{M} 
   (K+2 \mathcal{L}^\beta)
   \phi_R \rangle.
\]
When we apply $e^{\tilde{\Gamma}_{12}}-1$ with $v_N$ in $\tilde{\Gamma}$ we may proceed as in 
\eqref{eq:ell-scattering}. 
We get that $\mathsf{M}$ at the place of $\mathsf{G}_\beta^A$, introduces a negligible error. Hence
\[
|(e^{\Gamma_{12}}-1)
\mathsf{G}_\beta^A \otimes W'_{\phi_R\phi_R}|
\leq C
|
(\mathcal{L}^\beta - v)-(\phi_R^2)(\hat{v}(0)-8\pi \mathfrak{a}_0)| + O(N^{-1}),
\]
where the remainder is due to the fact that we have here used 
$\mathsf{G}_\beta^A$ at the place of $\mathsf{M}$.
Furthermore, to estimate the second contribution, we used that 
 $v*\frac{\delta}{\delta A} 
  \langle
   \phi_R, A \mathsf{M} 
   (K+2 \mathcal{L}^\beta)
   \phi_R \rangle
 =
\phi_R^2 \hat{v}(0)$. Finally, in the limit of vanishing $p,n$ and $\beta \to\infty$, we have that $\mathcal{L}^\beta\to 8\pi{\mathfrak{a}_0}$
from which the thesis follows.
\end{proof}
\appendix

\section*{Acknowledgments}
The authors want to thank Chiara Boccato, Serena Cenatiempo and Benjamin Schlein for illuminating discussions on BEC.
The authors thank also Klaus Fredenhagen for many discussions and insights about the construction of equilibrium states for interacting fields and Edoardo D'Angelo for discussions on the loop vertex expansions. 
The authors are grateful for the support of the National Group of Mathematical Physics (GNFM-INdAM). 
The work of S.G. is funded by the EPSRC Open Fellowship EP/Y014510/1 PI Benito A Juárez Aubry.
The research presented in this paper was supported in part by the MIUR Excellence Department Project 2023-2027 awarded to the Department of Mathematics of the University of Genova, CUPD33C23001110001.
\\

{\bf Data availability:} 
Data sharing not applicable to this article as no datasets were generated or analysed during
the current study. 
\\

{\bf Conflicts of interest statement:} 
All authors have no conflicts of interest to declare.

\section{Thermal propagator and its perturbations} \label{se:thermal-propagators-appendix}

\subsection{Thermal propagator}

Recall that the two-point function of the thermal state \eqref{eq:Bpm} is 
\begin{align*}
\mathsf{B}^\beta_-(x,y)=\omega_\beta(\phi(x) \phi^*(y))  =\frac{1}{(2\pi)^3}\int  \frac{1}{1-e^{-\beta w(p)}} e^{\mathrm{i}p(x-y)} d^3p
\\
\mathsf{B}^\beta_+
(x,y)=\omega_\beta(\phi^*(x) \phi(y)) = -\frac{1}{(2\pi)^3}\int  \frac{1}{1-e^{\beta w(p)}}e^{\mathrm{i}p(x-y)}  d^3p,
\end{align*}
where
\[
w(p) = \frac{p^2}{2m}-\mu.
\]
Hence, the operators associated to the integral kernels are given by
\[
\mathsf{B}^{\beta}_-(f,h) =\langle \bar{f}, B_{- } h \rangle, \qquad \mathsf{B}^{\beta}_+(f,h) =\langle \bar{h}, B_{+ } f\rangle,  \qquad  B_{\mp} = \pm \frac{1}{1-e^{\mp\beta K}},
\]
where the operator $K$ is multiplicative in the Fourier domain and it is such that its Fourier transform is $w(p) = \frac{p^2}{2m}-\mu$. Notice that for $0<u<u'<\beta$ the following are well defined operations 
\begin{equation}
\label{eq:thermal-1}
\omega_\beta(
\tau_{\mathrm{i}u} \Phi(f) 
\tau_{\mathrm{i}u'} \Phi^*(h))
=\langle \bar{f},e^{-(u'-u)K }B_{-} h\rangle, \qquad 
\omega_\beta(
\tau_{\mathrm{i}u} \Phi^*(h)
\tau_{\mathrm{i}u'} \Phi(f) )
=\langle \bar{f},e^{(u'-u)K }B_{+} h\rangle
\end{equation}
and 
\[
e^{-u K} B_- =  e^{(\beta-u) K} B_+ .
\]
Motivated by this observation, we also introduce the {\bf thermal propagator} which is defined first on $u\in [0,\beta)$ and then extended by periodicity
\[
\mathsf{G}_{\beta}(u) := \frac{e^{- u K}}{1-e^{-\beta K}}, \qquad u\in [0,\beta), \qquad \mathsf{G}_{\beta}(u+\beta)=\mathsf{G}_{\beta}(u),  \qquad u\in\mathbb{R}.
\]
To be precise, $\mathsf{G}_\beta$ is a periodic function with values in the linear operators on $\mathfrak{H}$. Notice that $\mathsf{G}_\beta(u)$ is a continuous function into the set of bounded operators for $u\neq \mathbb{N} \beta$, while it is discontinuous in $0$ such that 
\[
\lim_{u\to 0^+} \mathsf{G}_\beta(u) = B_-, \qquad 
\lim_{u\to 0^-} \mathsf{G}_\beta(u) = B_+.
\]
Namely, the operators $B_\pm$ can be recovered from $\mathsf{G}_\beta(u)$ and, for $u<u'$, the operators appearing in \eqref{eq:thermal-1} are obtained from $\mathsf{G}_\beta$ as follows
\begin{equation}\label{eq:operators-thermals-relations}
\mathsf{G_\beta}(u'-u)=B_- e^{-(u'-u)K}
,
\qquad
\mathsf{G_\beta}(u-u')=B_+ e^{(u'-u)K}.
\end{equation}

Even if $\mathsf{G}_\beta$ is defined as a functions with values in the set of linear operators on 
$\mathfrak{H}$,
we can extend $\mathsf{G}_\beta$ as an operator with a suitable domain on  $\mathfrak{H}_\beta = L^2(\mathbb{R}^3\times [0,\beta])$ by defining its action  on $\psi\in\mathfrak{H}_\beta$ as
\[
\mathsf{G}_{\beta} \psi(u)=
\int \frac{e^{-(u'-u)K}}{1-e^{-\beta K}}  \psi (u') du'.
\]
Let us also introduce the corresponding two-point function using the Schwarz kernel theorem
\begin{equation}\label{eq:kernel-thermal}
\langle \psi_1,\mathsf{G}_\beta \psi_2 \rangle_\beta
=
\int \psi_1(u)(\mathsf{G}_{\beta}\psi_2)(u)   du
= 
\int \psi_1(u) \frac{e^{-(u'-u)K}}{1-e^{-\beta K}}  \psi_2(u') du' du
\end{equation}
and thus using the same symbol we denote the kernel as $\mathsf{G}_\beta(x,u;x',u')$.

We observe that $\mathsf{G}_\beta$ is not even formally selfadjoint. Indeed, the convolution kernel of $\mathsf{G}_\beta^\dagger$ 
is such that
\[
\mathsf{G}_\beta^\dagger
(u) = \mathsf{G}_\beta(-u).
\]
Furthermore, the thermal propagator is the fundamental solution of
\[
(-\partial_u + K)\int \mathsf{G}_{\beta}(u'-u) {\psi}(u') du' = {\psi}(u) 
\]
as
\begin{align*}
(-\partial_u+K)\int \mathsf{G}_\beta(u'-u) {\psi}(u') du' 
&= 
(-\partial_u+K)\left( \int_0^u \mathsf{G}_{\beta}(u'-u) {\psi}(u') du' + \int_u^\beta \mathsf{G}_{\beta}(u'-u) {\psi}(u') du' \right) \\
&= 
(-\mathsf{G}_{\beta}(0^{-})+\mathsf{G}_{\beta}(0^{+})) \psi(u) = \psi(u)
\end{align*}
The Fourier transform of $\mathsf{G}_\beta(u)$ in $u$ is
\[
\hat{\mathsf{G}}_\beta(s_n)=\int_0^\beta  \frac{e^{- u K}}{1-e^{-\beta K}} e^{\mathrm{i}u s_n} du = \frac{1}{ K - \mathrm{i} s_n}
\]
where $s_n = \frac{2\pi}{\beta} n$ are {\bf the Matsubara frequencies}.
Here we used the convention in which the normalization $1/\beta$ is in the inverse Fourier transform.

\subsection{Perturbation of the thermal propagators}\label{se:pretrubed-propagators}
Consider the operators
\[
T = \begin{pmatrix}
-\partial_u +K & 0\\
0 & \partial_u +K 
\end{pmatrix}
,
\qquad
V =  \begin{pmatrix}
\tilde{\mathsf{v}} &  \tilde{\mathsf{v}}\\
\tilde{\mathsf{v}} & \tilde{\mathsf{v}}
\end{pmatrix}
\]
as acting on $(\Psi,\Psi^*)$ in $\mathfrak{H}_\beta \oplus\mathfrak{H}_\beta$, and where
\[
\tilde{\mathsf{v}} \Psi (u,x)  =\int du'\int d^3x'\delta(u-u') \mathsf{v}(x-x')\Psi(u', x').
\]
$T$ describes the free equation of motion satisfied by the fields, and $V$ is a perturbation of them. Although we restrict the analysis of this appendix to $\tilde{\mathsf{v}}$ of the form given above, the very same results hold for cases where $\tilde{\mathsf{v}}$ is a non trivial convolution also in $u$.
In Fourier domain we have
\[
\hat{T} = \begin{pmatrix}
-\mathrm{i} s_n+K & 0\\
0 & \mathrm{i} s_n+K 
\end{pmatrix}
,
\qquad
\hat{V} =  \begin{pmatrix}
{\mathsf{v}} &  {\mathsf{v}}\\
{\mathsf{v}} & {\mathsf{v}}.
\end{pmatrix}
\]
Notice that, the the determinant of their sum is
\[
{\det(\hat{T}+\hat{V})} = {s_n^2+K^2+2K \mathsf{v}}
=(-\mathrm{i} s_n+\sqrt{K^2+2K \mathsf{v}})(\mathrm{i} s_n+\sqrt{K^2+2K \mathsf{v}})
\]
and it is related to the modified energy density of the fluctuations.\\

Let us now understand the corresponding interacting propagators. We know that  
\[
\tilde{\mathsf{G}}_\beta = \hat{T}^{-1}= 
\begin{pmatrix} 
\frac{1}{K-\mathrm{i}s_n}&0\\0& \frac{1}{K+\mathrm{i}s_n}
\end{pmatrix}
\]
and thus correspondingly 
\begin{equation}\label{eq:two-point-function}
\tilde{\mathsf{G}}^V_\beta 
=(\hat{T}+\hat{V})^{-1} = 
\frac{1}{s_n^2 + K^2 + 2K\mathsf{v}}
\begin{pmatrix} {K+\mathrm{i}s_n + \mathsf{v}}&-\mathsf{v}\\ 
-\mathsf{v}& K-\mathrm{i}s_n + \mathsf{v}
\end{pmatrix}
\end{equation}
We furthermore observe that utilizing $K_\epsilon$, with $\epsilon>0$, at the place of $K$ in $T$, $\tilde{\mathsf{G}}_\beta$ and $\tilde{\mathsf{G}}_\beta^V$,
we obtain that $\tilde{\mathsf{G}_\beta}$ is a bounded operator, with operator norm bounded by $\epsilon^{-1}$. Similarly, if $\mathsf{v}$ is also positive,   $\tilde{\mathsf{G}}_\beta^V$ is also bounded and has operator norm bounded by $\epsilon^{-1}$. In order to relate these two propagators we use the Dyson equation
\[
\tilde{\mathsf{G}}^V_\beta=
\tilde{\mathsf{G}}_\beta-
\tilde{\mathsf{G}}_\beta \hat{V}
\tilde{\mathsf{G}}^V_\beta
\]
that recursively gives
\begin{equation}\label{eq:perturbation}
\tilde{\mathsf{G}}^V_\beta
= \sum_{n\geq 0} (-\tilde{\mathsf{G}}_\beta \hat{V})^n\tilde{\mathsf{G}}_\beta.
\end{equation}
The above series, if we use $K_\epsilon$ with $\epsilon>0$ in $\tilde{\mathsf{G}}_\beta$ and if $\|\mathsf{v}\|_1<\epsilon$, is absolutely convergent. On the other hand, if we expand
$\tilde{\mathsf{G}}^V_\beta$ in powers of $\hat{V}$ we obtain back the power series at the right hand side of \eqref{eq:perturbation} without 
any restriction on $\hat{V}$
even if for $\|\mathsf{v}\|_1\geq \epsilon$ that power series is only an asymptotic representation of $\tilde{\mathsf{G}}^V_\beta$.\\

With this discussion in mind, we now consider 
\begin{equation}\label{eq:M1}
\hat{\mathsf{M}}\coloneqq \sum_{ij} {{{\tilde{\mathsf{G}}}^V_\beta}} ._{ij} = \frac{2K}{s_n^2+K^2+2K\mathsf{v}}
= \frac{K}{\sqrt{K^2+2K\mathsf{v}}} \left(\frac{1}{-\mathrm{i}s_n+\sqrt{K^2+2K\mathsf{v}}}+\frac{1}{\mathrm{i}s_n+\sqrt{K^2+2K\mathsf{v}}}\right)
\end{equation}
This quantity is interesting, as contains the modification to the energy density in its denominator. Furthermore, for $u\in(0,\beta)$
\begin{equation}\label{eq:M}
\mathsf{M}(u)= \frac{K}{\sqrt{K^2+2K\mathsf{v}}} 
\frac{e^{-u \sqrt{K^2 +2K\mathsf{v}}}+e^{-(\beta-u) \sqrt{K^2+2K\mathsf{v}} }}{1-e^{-\beta \sqrt{K^2+2K\mathsf{v}} }}
\end{equation}
and the limit $u\to 0^+$ gives
\begin{equation}\label{eq:M}
\mathsf{M}= \frac{K}{\sqrt{K^2+2Kv}} \frac{1+e^{-\beta \sqrt{K^2+2Kv} }}{1-e^{-\beta \sqrt{K^2+2Kv} }}.
\end{equation}
This corresponds to the integral kernel of the operator
\[
\omega^{\beta V}((\Psi(J_1)+\Psi^*({J}_1))
(\Psi(J_2)+\Psi^*(J_2)))
= \langle J_1, \mathsf{M} J_2\rangle
\]
making clear the connection between the above $\mathsf{M}$ and the two point function of the interacting equilibrium state. Similarly, taking the trace of $\tilde{\mathsf{G}}_\beta^V$, we get
\[
\omega^{\beta V}(\Psi(J_1)\Psi^*({J_2})+\Psi^*({J}_1)\Psi( J_2)
)
= \langle J_1, \frac{K+\mathsf{v}}{K}\mathsf{M} J_2\rangle.
\]

In order to connect to the discussion in the paper, we notice that the interaction Hamiltonian which produces the perturbation $V$ here is
\[
\int (\Psi+\Psi^*) \delta(u-u') \mathsf{v}(x,x') (\Psi+\Psi^*).
\]

\section{Scattering length}\label{se:scattering-length}

Consider a potential $V$ and the corresponding solution of the equation
\[
\left(-\Delta +\frac{V}{2}\right) f =0
\]
with the boundary condition $\lim_{|x|\to\infty}f(x)=1$.
The scattering length is then obtained from the asymptotic form for large $|x|$
\[
f(x) =1 - \frac{\mathfrak{a}_0}{|x|}
\]
where $\mathfrak{a}_0$ is the {\bf scattering length}. An alternative definition can be given by capturing the asymptotic decay of $f$. Namely,
\[
8\pi \mathfrak{a}_0=\int V(x) f(x) dx.
\]
With the scattering length at disposal, the spectrum of the effective Hamiltonian describing fluctuations around a condensate was understood to be modified from $p^2$ to $\sqrt{p^4 + 16\pi \mathfrak{a}_0 \phi_R^2 p^2}$. To find $f$, we observe that $\eta(x)=f(x)-1$ vanishes for large $|x|$ and satisfies 
\[
\left(-\Delta + \frac{V}{2}\right)\eta = -\frac{V}{2}. 
\]
Hence, in terms of the Green function of $-\Delta +V/2$
\[
\eta =  -\frac{1}{-\Delta + \frac{V}{2}}\frac{V}{2} 
\qquad  \Longleftrightarrow  \qquad
f = 1 -\frac{1}{-\Delta + \frac{V}{2}}\frac{V}{2} 
\]
and 
\begin{equation}\label{eq:born}
\begin{aligned}
8\pi  \mathfrak{a}_0 
&= \hat{V}(0)  - \frac{1}{2}\left\langle V, \frac{1}{-\Delta + \frac{V}{2}}V\right\rangle .
\end{aligned}
\end{equation}
Finally, from the above, we notice that the scattering length $\mathfrak{a}_0$ depends on the potential $V$ and admits and expansion in powers of $V$
\begin{equation}\label{eq:born-series}
\begin{aligned}
8\pi \mathfrak{a}_0 &= \sum_{n\geq 0 } 8\pi \mathfrak{a}_0^{(n)} 
\\
&=
\hat{V}(0) - \frac{1}{2}\left\langle V , \frac{1}{-\Delta} V\right\rangle 
- \frac{1}{2}\sum_{n\geq 1} (-1)^n\left\langle V , 
\left( \frac{1}{-\Delta} \frac{V}{2}\right)^n\frac{1}{-\Delta} V\right\rangle.
\end{aligned}
\end{equation}
This is known as the convergent Born series of the scattering length and $\mathfrak{a}_0^{(0)} = \hat{V}(0)/(8\pi)$ is the first order Born approximation.

\printbibliography

@article{Araki,
	author = {Araki, Huzihiro},
	doi = {10.2977/PRIMS/1195192744},
	journal = {Publ. Res. Inst. Math. Sci.},
	number = {1},
	pages = {165-209},
	title = {Relative Hamiltonian for Faithful Normal States of a von Neumann Algebra},
	volume = {9},
	year = {1973}}

@article{HW05,
	author = {Hollands, Stefan and Wald, Robert M.},
	doi = {10.1142/S0129055X05002340},
	journal = {Rev. Math. Phys.},
	number = {03},
	pages = {227-311},
	title = {Conservation of the Stress Tensor in Perturbative Interacting Quantum Field Theory in Curved Spacetimes},
	url = {https://doi.org/10.1142/S0129055X05002340},
	volume = {17},
	year = {2005}}

@inbook{Fredenhagen2015,
	author = {Fredenhagen, Klaus and Rejzner, Katarzyna},
	booktitle = {Mathematical Aspects of Quantum Field Theories},
	doi = {10.1007/978-3-319-09949-1_2},
	editor = {Calaque, Damien and Strobl, Thomas},
	isbn = {978-3-319-09949-1},
	pages = {17--55},
	publisher = {Springer International Publishing},
	title = {Perturbative Algebraic Quantum Field Theory},
	url = {https://doi.org/10.1007/978-3-319-09949-1_2},
	year = {2015}}

@article{FredenhagenLindnerKMS_2014,
	author = {K. Fredenhagen and F. Lindner},
	doi = {10.1007/s00220-014-2141-7},
	journal = {Commun. Math. Phys.},
	month = {8},
	number = {3},
	pages = {895--932},
	publisher = {Springer Science and Business Media {LLC}},
	title = {Construction of {KMS} States in Perturbative {QFT} and Renormalized Hamiltonian Dynamics},
	url = {https://doi.org/10.1007/s00220-014-2141-7},
	volume = {332},
	year = 2014}

@article{BrunettiFredenhagen00,
	author = {R. Brunetti and K. Fredenhagen},
	doi = {10.1007/s002200050004},
	journal = {Commun. Math. Phys.},
	month = {1},
	number = {3},
	pages = {623--661},
	publisher = {Springer Science and Business Media {LLC}},
	title = {Microlocal Analysis and Interacting Quantum Field Theories: Renormalization on Physical Backgrounds},
	url = {https://doi.org/10.100/s002200050004},
	volume = {208},
	year = 2000}

@article{DragoHackPinamonti,
	author = {N. Drago and T.-P. Hack and N. Pinamonti},
	doi = {10.1007/s00023-016-0521-6},
	journal = {Ann. Henri Poincare},
	month = {10},
	number = {3},
	pages = {807--868},
	publisher = {Springer Science and Business Media {LLC}},
	title = {The Generalised Principle of Perturbative Agreement and the Thermal Mass},
	url = {https://doi.org/10.1007/s00023-016-0521-6},
	volume = {18},
	year = 2016}

@article{HollandsWald2001,
	author = {S. Hollands and R. M. Wald},
	doi = {10.1007/s002200100540},
	journal = {Commun. Math. Phys.},
	month = {10},
	number = {2},
	pages = {289--326},
	publisher = {Springer Science and Business Media {LLC}},
	title = {Local Wick Polynomials and Time Ordered Products of Quantum Fields in Curved Spacetime},
	url = {https://doi.org/10.1007/s002200100540},
	volume = {223},
	year = 2001}

@book{HaagLQP,
	author = {Haag, R.},
	publisher = {Springer Berlin, Heidelberg},
	title = {{Local quantum physics}},
	year = {1992}}

@book{KasiaBook,
	address = {New York},
	author = {Rejzner, K.},
	doi = {10.1007/978-3-319-25901-7},
	publisher = {Springer},
	series = {Mathematical Physics Studies},
	title = {{Perturbative Algebraic Quantum Field Theory}: {An Introduction for Mathematicians}},
	year = {2016}}

@article{HW02,
	author = {S. Hollands and R. M. Wald},
	doi = {10.1007/s00220-002-0719-y},
	journal = {Commun. Math. Phys.},
	month = {12},
	number = {2},
	pages = {309--345},
	publisher = {Springer Science and Business Media {LLC}},
	title = {Existence of Local Covariant Time Ordered Products of Quantum Fields in Curved Spacetime},
	url = {https://doi.org/10.1007/s00220-002-0719-y},
	volume = {231},
	year = 2002}

@article{ArakiWoods,
	author = {{Araki}, H. and {Woods}, E.~J.},
	doi = {10.1063/1.1704002},
	journal = {J. Math. Phys.},
	number = {5},
	pages = {637-662},
	title = {{Representations of the Canonical Commutation Relations Describing a Nonrelativistic Infinite Free Bose Gas}},
	volume = {4},
	year = 1963}

@article{FroehlichKnowlesSchleinSohinger2,
	author = {J{\"u}rg Fr{\"o}hlich and Antti Knowles and Benjamin Schlein and Vedran Sohinger},
	doi = {10.1007/s00220-017-2994-7},
	eprint = {1605.07095},
	journal = {Commun. Math. Phys.},
	pages = {883-980},
	title = {Gibbs Measures of Nonlinear Schr{\"o}dinger Equations as Limits of Many-Body Quantum States in Dimensions $d \leq 3 $},
	volume = {356},
	year = {2017}}

@article{BoccatoBrenneckeCenatiempoSchlein,
	author = {Boccato, Chiara and Brennecke, Christian and Cenatiempo, Serena and Schlein, Benjamin},
	doi = {10.1007/s00220-017-3016-5},
	issn = {1432-0916},
	journal = {Commun. Math. Phys.},
	number = {3},
	pages = {975--1026},
	publisher = {Springer Science and Business Media LLC},
	title = {Complete Bose--Einstein Condensation in the Gross--Pitaevskii Regime},
	url = {http://dx.doi.org/10.1007/s00220-017-3016-5},
	volume = {359},
	year = {2017}}

@article{KMS,
	author = {Haag, R. and Hugenholtz, N. M. and Winnink, M.},
	doi = {10.1007/BF01646342},
	journal = {Commun. Math. Phys.},
	pages = {215--236},
	title = {{On the Equilibrium states in quantum statistical mechanics}},
	volume = {5},
	year = {1967}}

@book{Strocchi,
	author = {Strocchi, Franco},
	doi = {10.1007/978-3-540-73593-9},
	isbn = {978-3-540-73592-2},
	title = {{Symmetry Breaking}},
	volume = {732},
	year = {2008}}

@article{BDF09,
	author = {R. Brunetti and M. D{\"u}tsch and K. Fredenhagen},
	journal = {Adv. Theor. Math. Phys.},
	number = {5},
	pages = {1541 -- 1599},
	publisher = {International Press of Boston},
	title = {{Perturbative algebraic quantum field theory and the renormalization groups}},
	volume = {13},
	year = {2009}}

@article{LPR05,
	author = {Lewin, Mathieu and Nam, Phan Th{\`a}nh and Rougerie, Nicolas},
	date = {2021/05/01},
	doi = {10.1007/s00222-020-01010-4},
	id = {Lewin2021},
	isbn = {1432-1297},
	journal = {Invent. Math.},
	number = {2},
	pages = {315--444},
	title = {Classical field theory limit of many-body quantum Gibbs states in 2D and 3D},
	url = {https://doi.org/10.1007/s00222-020-01010-4},
	volume = {224},
	year = {2021}}

@article{FKSS3,
	author = {J{\"u}rg Fr{\"o}hlich and Antti Knowles and Benjamin Schlein and Vedran Sohinger},
	doi = {10.1090/jams/987},
	eprint = {2001.01546},
	journal = {J. Amer. Math. Soc.},
	number = {4},
	pages = {955--1030},
	title = {The mean-field limit of quantum Bose gases at positive temperature},
	volume = {35},
	year = {2022}}

@article{HaagKastler,
	author = {Haag, Rudolf and Kastler, Daniel},
	doi = {10.1063/1.1704187},
	eprint = {https://pubs.aip.org/aip/jmp/article-pdf/5/7/848/19154934/848\_1\_online.pdf},
	issn = {0022-2488},
	journal = {J. Math. Phys.},
	month = {07},
	number = {7},
	pages = {848-861},
	title = {An Algebraic Approach to Quantum Field Theory},
	url = {https://doi.org/10.1063/1.1704187},
	volume = {5},
	year = {1964}}

@article{Bose,
	author = {S. N. Bose},
	doi = {10.1007/BF01327326},
	journal = {Z. Phys.},
	pages = {178--181},
	title = {Planck's Law and the Hypothesis of Light Quanta},
	volume = {26},
	year = {1924}}

@article{Einstein,
	author = {A. Einstein},
	journal = {Sitzungsber. Preuss. Akad. Wiss., Phys.-Math. Kl.},
	note = {Part I; followed by Part II (1925, pp. 3--14) and Part III (1925, pp. 18--25)},
	pages = {261--267},
	title = {Quantum Theory of the Monatomic Ideal Gas},
	year = {1924}}

@article{BFP21,
	author = {Romeo Brunetti and Klaus Fredenhagen and Nicola Pinamonti},
	doi = {10.1007/s00023-020-00984-4},
	eprint = {1911.01829},
	journal = {Ann. Henri Poincar{\'e}},
	number = {3},
	pages = {951--1000},
	title = {Algebraic Approach to Bose--Einstein Condensation in Relativistic Quantum Field Theory: Spontaneous Symmetry Breaking and the Goldstone Theorem},
	volume = {22},
	year = {2021}}

@book{Dutsch:2019aa,
	author = {Michael D{\"u}tsch},
	description = {The book develops a novel approach to perturbative quantum field theory: starting with a perturbative formulation of classical field theory, quantization is done by deformation quantization of the underlying free theory and by the principle that we want from the classical structure.},
	doi = {10.1007/978-3-030-04738-2},
	publisher = {Birkh{\"a}user Cham},
	title = {From Classical Field Theory to Perturbative Quantum Field Theory},
    year = {2019}}

@article{Rivasseau,
	author = {Rivasseau, V.},
	doi = {https://doi.org/10.1155/2009/180159},
	eprint = {https://onlinelibrary.wiley.com/doi/pdf/10.1155/2009/180159},
	journal = {Adv. Math. Phys.},
	number = {1},
	pages = {180159},
	title = {Constructive Field Theory in Zero Dimension},
	url = {https://onlinelibrary.wiley.com/doi/abs/10.1155/2009/180159},
	volume = {2009},
	year = {2009}}

@book{RivasseauBook,
	author = {Vincent Rivasseau},
	publisher = {Princeton University Press},
	title = {From Perturbative to Constructive Renormalization},
	year = {1991}}

@article{GalandaPinamonti,
    author = "Galanda, Stefano and Pinamonti, Nicola",
    title = "{Equilibrium states for non relativistic Bose gases with condensation}",
    eprint = "2509.25101",
    archivePrefix = "arXiv",
    primaryClass = "math-ph",
    month = "9",
    year = "2025"
}

@article{LiebSeiringer,
  title = {Proof of Bose-Einstein Condensation for Dilute Trapped Gases},
  author = {Lieb, Elliott H. and Seiringer, Robert},
  journal = {Phys. Rev. Lett.},
  volume = {88},
  issue = {17},
  pages = {170409},
  numpages = {4},
  year = {2002},
  month = {Apr},
  publisher = {American Physical Society},
  doi = {10.1103/PhysRevLett.88.170409},
  url = {https://link.aps.org/doi/10.1103/PhysRevLett.88.170409}
}

@article{BogoliubovBEC,
    author = "Bogoliubov, N. N.",
    title = "{On the theory of superfluidity}",
    journal = "J. Phys. (USSR)",
    volume = "11",
    pages = "23--32",
    year = "1947"
}

@book{Flajolet_Sedgewick_2009, place={Cambridge}, title={Analytic Combinatorics}, publisher={Cambridge University Press}, author={Flajolet, Philippe and Sedgewick, Robert}, year={2009}}

@article{LSY-07,
title = {Bose-Einstein Condensation and Spontaneous Symmetry Breaking1},
journal = {Reports on Mathematical Physics},
volume = {59},
number = {3},
pages = {389-399},
year = {2007},
issn = {0034-4877},
doi = {https://doi.org/10.1016/S0034-4877(07)80074-7},
url = {https://www.sciencedirect.com/science/article/pii/S0034487707800747},
author = {Elliott H. Lieb and Robert Seiringer and Jakob Yngvason}
}

@article{LY-ground,
  title = {Ground State Energy of the Low Density Bose Gas},
  author = {Lieb, Elliott H. and Yngvason, Jakob},
  journal = {Phys. Rev. Lett.},
  volume = {80},
  issue = {12},
  pages = {2504--2507},
  numpages = {0},
  year = {1998},
  month = {Mar},
  publisher = {American Physical Society},
  doi = {10.1103/PhysRevLett.80.2504},
  url = {https://link.aps.org/doi/10.1103/PhysRevLett.80.2504}
}

@article{Lieb-Seiringer-Yngavson,
  title = {Bosons in a trap: A rigorous derivation of the Gross-Pitaevskii energy functional},
  author = {Lieb, Elliott H. and Seiringer, Robert and Yngvason, Jakob},
  journal = {Phys. Rev. A},
  volume = {61},
  issue = {4},
  pages = {043602},
  numpages = {13},
  year = {2000},
  month = {Mar},
  publisher = {American Physical Society},
  doi = {10.1103/PhysRevA.61.043602},
  url = {https://link.aps.org/doi/10.1103/PhysRevA.61.043602}
}

@article{NRS,
title = {Ground states of large bosonic systems: the Gross–Pitaevskii limit revisited},
author = {Nam, Phan Thành and  Rougerie, Nicolas and Seiringer,  Robert},
journal = {Anal. PDE},
volume = {9},
issue = {2},
pages = {459–485},
year = {2016},
doi = {10.2140/apde.2016.9.459},
url = {http://dx.doi.org/10.2140/apde.2016.9.459}
}

@article{HainzlSchleinTriay
, title={Bogoliubov theory in the Gross-Pitaevskii limit: a simplified approach}, volume={10}, DOI={10.1017/fms.2022.78}, journal={Forum Math. Sigma}, author={Hainzl, Christian and Schlein, Benjamin and Triay, Arnaud}, year={2022}, pages={e90}}

@article{DerPet26,
title ={Damping of phonons in Bose gas at low temperatures},
author = {Dereziński, Jan and Pettinari, Lorenzo},
year = {2026},
archiveprefix = {arXiv},
eprint = {2602.07701},
primaryclass = {math-ph},
url = {https://arxiv.org/abs/2602.07701},
doi = {10.48550/arXiv.2602.07701}
}

@incollection{Benfatto,
  author    = {G. Benfatto},
  title     = {Renormalization group approach to zero temperature Bose condensation},
  booktitle = {Constructive Results in Field Theory, Statistical Mechanics and Condensed Matter Physics},
  series    = {Lect. Notes Phys.},
  volume    = {446},
  pages     = {219--247},
  publisher = {Springer},
  address   = {Berlin},
  year      = {1995}
}

@article{SerenaGiuliani,
  author    = {S. Cenatiempo and A. Giuliani},
  title     = {Renormalization theory of a two dimensional Bose gas: quantum critical point and quasi-condensed state},
  journal   = {J. Stat. Phys.},
  volume    = {157},
  number    = {4–5},
  pages     = {755--829},
  year      = {2014},
  doi       = {10.1007/s10955-014-1034-7}
}

@article{Salmhofer,
   title={Functional Integral and Stochastic Representations for Ensembles of Identical Bosons on a Lattice},
   volume={385},
   ISSN={1432-0916},
   url={http://dx.doi.org/10.1007/s00220-021-04010-4},
   DOI={10.1007/s00220-021-04010-4},
   number={2},
   journal={Commun. Math. Phys.},
   publisher={Springer Science and Business Media LLC},
   author={Salmhofer, Manfred},
   year={2021},
   pages={1163–1211} }

@article{DerNap2014,
	author = {Derezi{\'n}ski, Jan and Napi{\'o}rkowski, Marcin},
	date = {2014/12/01},
	doi = {10.1007/s00023-013-0302-4},
	id = {Derezi{\'n}ski2014},
	isbn = {1424-0661},
	journal = {Ann. Henri Poincar{\'e}},
	number = {12},
	pages = {2409--2439},
	title = {Excitation Spectrum of Interacting Bosons in the Mean-Field Infinite-Volume Limit},
	url = {https://doi.org/10.1007/s00023-013-0302-4},
	volume = {15},
	year = {2014}}

@article{BrydgesKennedy,
	author = {Brydges, D. C. and Kennedy, T.},
	date = {1987/07/01},
	doi = {10.1007/BF01010398},
	id = {Brydges1987},
	isbn = {1572-9613},
	journal = {J. Stat. Phys.},
	number = {1},
	pages = {19--49},
	title = {Mayer expansions and the Hamilton-Jacobi equation},
	url = {https://doi.org/10.1007/BF01010398},
	volume = {48},
	year = {1987}}

@InProceedings{AbdesselamRivasseau,
author="Abdesselam, Abdelmalek
and Rivasseau, Vincent",
editor="Rivasseau, Vincent",
title="Trees, forests and jungles: A botanical garden for cluster expansions",
booktitle="Constructive Physics Results in Field Theory, Statistical Mechanics and Condensed Matter Physics",
year="1995",
publisher="Springer Berlin Heidelberg",
address="Berlin, Heidelberg",
pages="7--36",
isbn="978-3-540-49222-1"
}

@article{RivasseauGurau,
	author = {Gurau, Razvan and Rivasseau, Vincent},
	date = {2015/08/01},
	doi = {10.1007/s00023-014-0370-0},
	id = {Gurau2015},
	isbn = {1424-0661},
	journal = {Ann. Henri Poincar{\'e}},
	number = {8},
	pages = {1869--1897},
	title = {The Multiscale Loop Vertex Expansion},
	url = {https://doi.org/10.1007/s00023-014-0370-0},
	volume = {16},
	year = {2015}}

@article{MagnenRivasseauSeneor,
title = {Construction of YM4 with an Infrared Cutoff},
journal = {Commun. Math. Phys.},
volume = {155},
pages = {325-383},
year = {1993},
author= {Magnen, Jacques and  Rivasseau, Vincent and  Seneor, Roland},
doi = {10.1007/BF02097397}
}

@article{Hairer,
	author = {Hairer, M. },
	date = {2014/11/01},
	doi = {10.1007/s00222-014-0505-4},
	id = {Hairer2014},
	isbn = {1432-1297},
	journal = {Invent. Math.},
	number = {2},
	pages = {269--504},
	title = {A theory of regularity structures},
	url = {https://doi.org/10.1007/s00222-014-0505-4},
	volume = {198},
	year = {2014}}

@article{DuchGubinelliRinaldi,
      title={Parabolic stochastic quantisation of the fractional $\Phi^4_3$ model in the full subcritical regime}, 
      author={Paweł Duch and Massimiliano Gubinelli and Paolo Rinaldi},
      year={2025},
      eprint={2303.18112},
      archivePrefix={arXiv},
      primaryClass={math.PR},
      url={https://arxiv.org/abs/2303.18112}, 
}

@article{Gubinelli1,
	author = {Gubinelli, Massimiliano and Hofmanov{\'a}, Martina},
	date = {2019/06/01},
	doi = {10.1007/s00220-019-03398-4},
	id = {Gubinelli2019},
	isbn = {1432-0916},
	journal = {Commun. Math. Phys.},
	number = {3},
	pages = {1201--1266},
	title = {Global Solutions to Elliptic and Parabolic $\Phi^4$ Models in Euclidean Space},
	url = {https://doi.org/10.1007/s00220-019-03398-4},
	volume = {368},
	year = {2019}}

@article{Gubinelli2,
	author = {Gubinelli, Massimiliano and Hofmanov{\'a}, Martina},
	date = {2021/05/01},
	doi = {10.1007/s00220-021-04022-0},
	id = {Gubinelli2021},
	isbn = {1432-0916},
	journal = {Commun. Math. Phys.},
	number = {1},
	pages = {1--75},
	title = {A PDE Construction of the Euclidean $\Phi^4_3$ Quantum Field Theory},
	url = {https://doi.org/10.1007/s00220-021-04022-0},
	volume = {384},
	year = {2021}}

@article{BBCS18,
	author = {Boccato, Chiara and Brennecke, Christian and Cenatiempo, Serena and Schlein, Benjamin},
	date = {2020/06/01},
	doi = {10.1007/s00220-019-03555-9},
	id = {Boccato2020},
	isbn = {1432-0916},
	journal = {Commun. Math. Phys.},
	number = {2},
	pages = {1311--1395},
	title = {Optimal Rate for Bose--Einstein Condensation in the Gross--Pitaevskii Regime},
	url = {https://doi.org/10.1007/s00220-019-03555-9},
	volume = {376},
	year = {2020}
}

@article{BoccatoBastiCenatiempoDeuchert,
      title={A new upper bound on the specific free energy of dilute Bose gases}, 
      author={Giulia Basti and Chiara Boccato and Serena Cenatiempo and Andreas Deuchert},
      year={2025},
      eprint={2507.20877},
      archivePrefix={arXiv},
      primaryClass={math-ph},
      url={https://arxiv.org/abs/2507.20877}, 
}

@article{Beliaev1,
  author  = {S. T. Beliaev},
  title   = {Energy Spectrum of a Non-Ideal Bose Gas},
  journal = {Sov. Phys. J. Exptl. Theoret. Phys.},
  volume  = {7},
  number  = {2},
  pages   = {299--307},
  year    = {1958}
}

@article{Beliaev2,
  author  = {S. T. Beliaev},
  title   = {Application of the Methods of Quantum Field Theory to a System of Bosons},
  journal = {Sov. Phys. J. Exptl. Theoret. Phys.},
  volume  = {7},
  number  = {2},
  pages   = {289--298},
  year    = {1958}
}

@article{NapiorkowskiSolovej,
  author  = {Marcin Napi{\'o}rkowski and Robin Reuvers and Jan Philip Solovej},
  title   = {The Bogoliubov Free Energy Functional II: The Dilute Limit},
  journal = {Commun. Math. Phys.},
  volume  = {360},
  number  = {1},
  pages   = {347--403},
  year    = {2018},
  doi     = {10.1007/s00220-017-3064-x}
}

\end{document}